\newtcolorbox{crucialassumption}{%
  colback=gray!5,
  colframe=gray!50,
  boxrule=0.5pt,
  arc=0mm 
}
\newtheorem{theorem}{Theorem}[section]
\newtheorem{corollary}[theorem]{Corollary}
\newtheorem{lemma}[theorem]{Lemma}
\newtheorem{claim}[theorem]{Claim}
\newtheorem{conjecture}[theorem]{Conjecture}
\theoremstyle{definition}
\newtheorem{definition}[theorem]{Definition}
\newtheorem{observation}[theorem]{Observation}
\newtheorem{example}[theorem]{Example}
\newtheorem{remark}[theorem]{Remark}
\newtheorem{fact}[theorem]{Fact}
\setlist[itemize]{topsep=4pt,itemsep=3pt,parsep=0pt} 
\setlist[enumerate]{topsep=4pt,itemsep=3pt,parsep=0pt} 
\Crefname{figure}{Figure}{Figures}
\renewcommand{\cal}{\mathcal}
\renewcommand{\int}{\mathrm{int}}
\newcommand{\NN}[0]{\mathrm{\mathbb{N}}}
\newcommand{\N}[0]{\mathrm{\mathbb{N}}}
\newcommand{\Oof}{\mathcal{O}}
\newcommand{\dist}{\textnormal{dist}}
\renewcommand{\phi}{\varphi}
\newcommand{\Start}{\mathrm{start}}
\newcommand{\End}{\mathrm{end}}
\newcommand{\LL}{\mathcal{L}}
\newcommand{\DD}{\mathscr{D}}
\newcommand{\XX}{\mathcal{X}}
\newcommand{\Dd}{\mathscr{D}}
\newcommand{\Pp}{\mathcal{P}}
\newcommand{\Cc}{\mathscr{C}}
\newcommand{\CC}{\mathscr{C}}
\newcommand{\MM}{\mathcal{M}}
\newcommand{\PP}{\mathcal{P}}
\newcommand{\QQ}{\mathcal{Q}}
\newcommand{\RR}{\mathcal{R}}
\renewcommand{\le}{\leqslant}
\renewcommand{\leq}{\le}
\renewcommand{\ge}{\geqslant}
\renewcommand{\geq}{\ge}
\newcommand{\trans}[1]{\mathsf{#1}}
\newcommand{\expos}{EP}
\newcommand{\textover}[3][l]{%
 \makebox[\widthof{#3}][#1]{#2}%
 }
\newcommand{\needsselfloops}{$\circlearrowleft$}
\newenvironment{claimproof}[1][\proofname]{%
  \begin{proof}[#1]%
}{%
  \end{proof}%
}
\begin{document}

\newcommand{\funding}{NM received funding from the European Research Council (ERC) with grant agreement No.\ 101126229 -- {\sc buka}.
}

\title{Existential Positive Transductions of Sparse Graphs\thanks{\funding}}
\date{}
\author{
  Nikolas M\"ahlmann \\
  \small{University of Warsaw} \\
  \small{\texttt{maehlmann@mimuw.edu.pl}}
  \and
  Sebastian Siebertz\\
  \small{University of Bremen} \\
  \small{\texttt{siebertz@uni-bremen.de}}
}
\maketitle

\begin{abstract}
    Monadic stability generalizes many tameness notions from structural graph theory such as planarity, bounded degree, bounded tree-width, and nowhere density.
    The \emph{sparsification conjecture} predicts that the (possibly dense) monadically stable graph classes are exactly those that can be logically encoded by first-order (FO) transductions in the (always sparse) nowhere dense classes.
    So far this conjecture has been verified for several special cases, such as for classes of bounded shrub-depth, and for the monadically stable fragments of bounded (linear) clique-width, twin-width, and merge-width. 

    In this work we propose the \emph{existential positive sparsification conjecture}, predicting that the more restricted co-matching-free, monadically stable classes are exactly those that can be transduced from nowhere dense classes using only existential positive FO formulas.
    While the general conjecture remains open, we verify its truth for all known special cases of the original conjecture.
    Even stronger, we find the sparse preimages as subgraphs of the dense input graphs. 

    As a key ingredient, we introduce a new combinatorial operation, called \emph{subflip}, that arises as the natural co-matching-free analog of the flip operation, which is a central tool in the characterization of monadic stability. 
    Using subflips, we characterize the co-matching-free fragment of monadic stability by appropriate strengthenings of the known flip-flatness and flipper game characterizations for monadic stability.
    
    In an attempt to generalize our results to the more expressive MSO logic, we discover (rediscover?) that on relational structures (existential) positive MSO has the same expressive power as (existential) positive FO.
\end{abstract}

\begin{picture}(0,0)
\put(422,-220)
{\hbox{\includegraphics[width=40px]{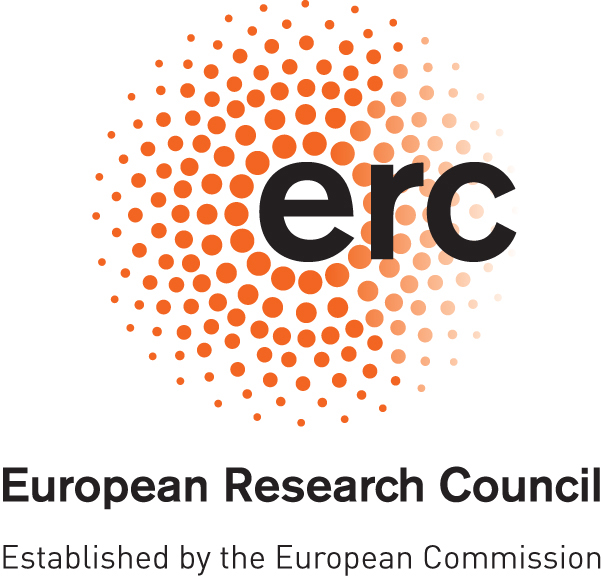}}}
\put(412,-280)
{\hbox{\includegraphics[width=60px]{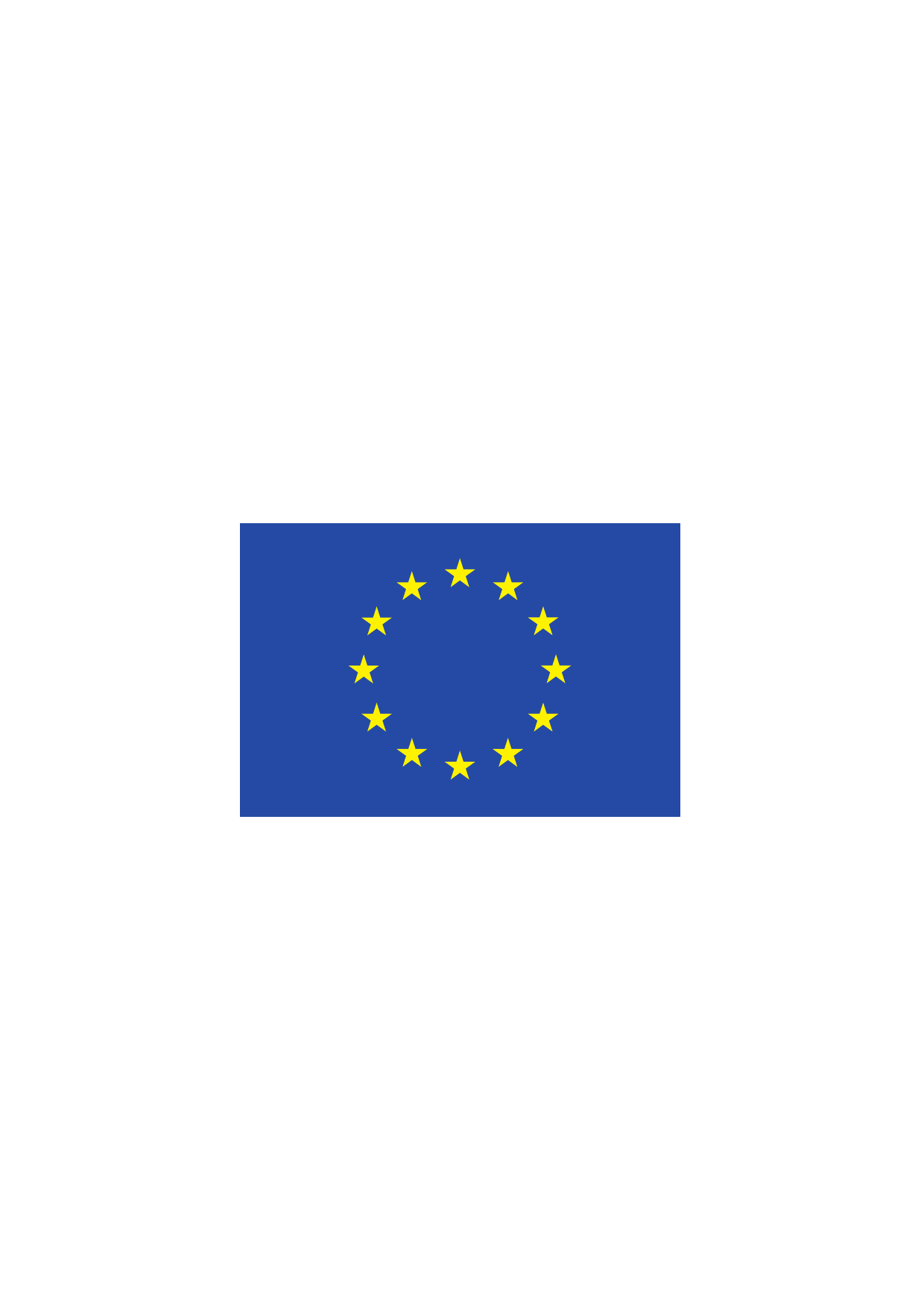}}}
\end{picture}

\newpage

\tableofcontents
\newpage

\section{Introduction}

First-order transductions have emerged as a fundamental tool in structural graph theory and finite model theory, providing a convenient way to encode one class of structures inside another using logic. 
Informally, a \emph{transduction} $\trans T_\phi$ is an operation specified by a binary first-order formula $\phi(x,y)$ over the signature of colored graphs.
It maps an input graph $G$ to the set $\trans T_\phi(G)$ containing every graph~$H$ that can be obtained through the following three-step process. (See \cref{fig:transduction} for an example.)

\newcommand{\GLeft}[1]{\textover{#1}{$G'$}}

\begin{enumerate}
    \item \GLeft{$G$} $\rightarrow$ \GLeft{$G'$} : 
    color $G$ with vertex colors appearing in $\phi$.
    
    \item \GLeft{$G'$} $\rightarrow$ \GLeft{$H'$} : 
    let $\phi$ define a new edge relation: $uv \in E(H') \Leftrightarrow G' \models \phi(u,v)$.    

    \item \GLeft{$H'$} $\rightarrow$ \GLeft{$H$} :
    choose $H$ to be any induced subgraph of $H'$ and remove the vertex colors.
\end{enumerate}

\begin{figure}[htbp]
    \centering
    \includegraphics[width = \textwidth]{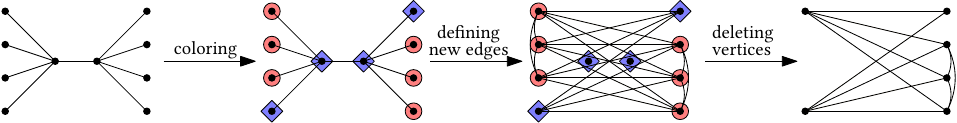}
    \caption{
    Example of a transduction. On the very left: the graph $G$. On the very right: a graph $H \in \trans T_\phi(G)$ for the formula $\phi(x,y) = (\dist(x,y) = 3) \vee (\textnormal{Red}(x) \wedge \textnormal{Red}(y))$.}
    \label{fig:transduction}
\end{figure}

Further graph transformations that can be expressed by transductions include the complement and the square operation, specified by the formulas
$\neg E(x,y)$, and $E(x,y) \vee \exists z\, \big(E(x,z) \wedge E(z,y)\big)$, respectively.
This motivates the study of properties of graph classes that are preserved under transductions. 
A popular example of such a \emph{transduction-closed} class property is bounded clique-width: for every graph class $\CC$ whose clique-width is bounded by a constant and every transduction~$\trans T$, the clique-width of the  class $\trans{T}(\CC) := 
\bigcup_{G\in\CC} \trans T(G)$ is again bounded by a (possibly larger) constant.
For every class $\DD \subseteq \trans{T}(\CC)$, we say that $\DD$ is a \emph{transduction} of $\CC$, and that $\CC$ \emph{transduces} $\DD$.
Further examples of transduction-closed class properties include bounded shrub-depth, bounded linear clique-width, bounded twin-width, and bounded merge-width.
Subsuming all previous examples, the most general non-trivial, transduction-closed class property is \emph{monadic dependence}: a class is \emph{monadically dependent} if it does not transduce the class of all graphs.
In this paper we study the relationship between three fragments of monadic dependence: 1.\ the \emph{biclique-free} (\emph{weakly sparse}) fragment,  2.\ the \emph{semi-ladder-free} fragment, and 3.\ the \emph{half-graph-free} (\emph{stable}) fragment.

\begin{restatable}{definition}{defSemiinduced}
    We call a bipartite graph $G$ on vertices $a_1 ,\ldots, a_n$ and $b_1, \ldots, b_n$
    \begin{itemize}
        \item the \emph{biclique} $K_{n,n}$ of order $n$, if $a_ib_j \in E(G)$ for all $i,j \in [n]$,
        \item the \emph{half-graph} of order $n$, if $a_ib_j \in E(G) \Leftrightarrow i \leq j $ for all $i,j \in [n]$, and
        \item the \emph{co-matching} of order $n$, if $a_ib_j \in E(G) \Leftrightarrow i \neq j $ for all $i,j \in [n]$.
    \end{itemize}
    A bipartite graph $H$ is a \emph{semi-induced subgraph} of a graph $G$, if there are $A, B \subseteq V(G)$ such that the bipartite subgraph of $G$ induced between the sides $A$ and $B$ is isomorphic to $H$.
    A graph class is \emph{biclique-free} / \emph{half-graph-free} / \emph{co-matching-free} if it does not contain arbitrarily large bicliques / half-graphs / co-matchings as semi-induced subgraphs.
    The class is \emph{semi-ladder-free} if it is both half-graph-free and co-matching-free.
    See \Cref{fig:hierarchy} for an illustration.
\end{restatable}

\begin{figure}[htbp]
    \centering
    \includegraphics[scale = 1]{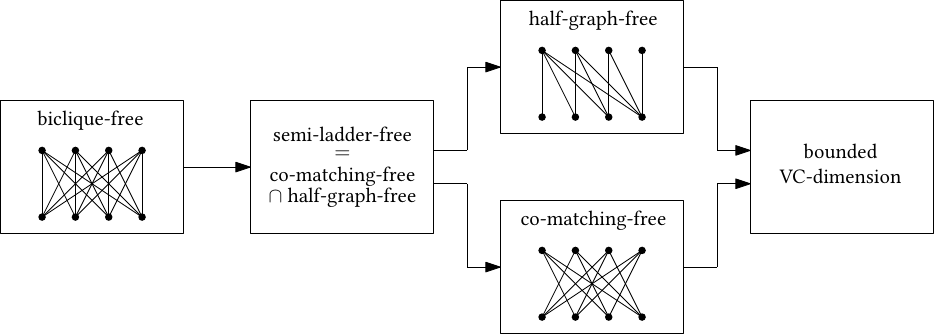}
    \caption{A hierarchy of properties of graph classes.
    An arrow $P_1 \rightarrow P_2$ between two properties means that every graph class that has property $P_1$ also has property $P_2$.}
    \label{fig:hierarchy}
\end{figure}

The \emph{biclique-free} fragment of monadic dependence (also called the \emph{weakly sparse} fragment) consists exactly of the \emph{nowhere dense} classes.
The notion of nowhere density is the main object of study in sparsity theory initiated by Nešetřil and Ossona de Mendez~\cite{nevsetvril2012sparsity}.
It generalizes many tameness notions for sparse graphs such as planarity, bounded tree-width, excluded minors, bounded degree, and bounded expansion. 
The equivalence of the combinatorial notion of nowhere density and the logical notion of monadic dependence in the biclique-free setting is one of the key results that initially connected sparsity theory and model theory.
It follows from the results of
Podewski and Ziegler~\cite{stable_graphs}, Adler and Adler~\cite{adler2014interpreting}, and Dvo\v{r}\'ak~\cite{dvovrak2018induced}. 

Generalizing nowhere density, the \emph{half-graph-free} fragment of monadic dependence consists exactly of the \emph{monadically stable} classes~\cite{nevsetvril2021rankwidth}, which are those that do not transduce the class of all half-graphs~\cite{baldwin1985second}.
While nowhere density is limited to sparse classes, monadic stability also includes dense classes like map graphs and every class of bounded shrub-depth.
Monadic stability is transduction-closed and hence, every transduction of a nowhere dense class is monadically stable.
The \emph{sparsification conjecture} predicts an even stronger link between the biclique-free and the half-graph-free fragments of monadic dependence, asserting that the reverse should hold as well.

\begin{conjecture}[see \cite{POM21}]\label{conjecture:stable-general}
    For every graph class $\CC$ the following are equivalent.
    \begin{enumerate}
        \item $\CC$ is monadically stable.
        \item $\CC$ is a transduction of a nowhere dense class.
    \end{enumerate}
\end{conjecture}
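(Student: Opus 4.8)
The plan is to separate the two implications: $(2) \Rightarrow (1)$ is routine, while $(1) \Rightarrow (2)$ carries all the content and is, in full generality, open.

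For $(2) \Rightarrow (1)$ I would argue that monadic stability is transduction-closed and contains every nowhere dense class. Indeed, a large semi-induced half-graph already contains a large semi-induced biclique (take the first half of one side against the second half of the other), so every biclique-free class is half-graph-free; hence a nowhere dense class, being biclique-free and monadically dependent, is half-graph-free and monadically dependent, i.e.\ monadically stable. Since transductions compose, if $\CC$ does not transduce the class of all half-graphs then neither does any $\DD \subseteq \trans T(\CC)$, so monadic stability passes to transductions. Combining the two facts, every transduction of a nowhere dense class is monadically stable.

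For $(1) \Rightarrow (2)$ the strategy is to reconstruct a sparse preimage by \emph{unflipping}. A flip toggling all edges between two color classes is an involution definable by a quantifier-free formula over a bounded coloring, so any bounded composition of flips is invertible by a single fixed $\FO$ transduction: if $G$ is obtained from $G'$ by such flips and $G'$ additionally carries the bounded coloring that names the flipped pairs, then the transduction $\phi(x,y) = E(x,y) \oplus \psi(x,y)$, with $\psi$ testing whether $x,y$ lie in a flipped pair, recovers $G$ from $G'$. The goal is therefore to show that every $G \in \CC$ is the image, under a bounded coloring and a bounded number of flips, of a graph $G'$ belonging to a single nowhere dense class $\CC'$ that depends only on $\CC$. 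To produce the flips I would invoke the known flip-flatness and flipper game characterizations of monadic stability: iterating the flipper's winning strategy yields, for each radius $r$, a bounded-depth decomposition of $G$ whose local pieces become $r$-sparse after undoing boundedly many flips.

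The hard part — and the reason the general conjecture is open — is to glue this local, radius-by-radius flip control into one global sparse preimage lying in a single nowhere dense class, uniformly over all $G \in \CC$. Two difficulties compound. First, the flips and colors chosen at different scales and locations must be made simultaneously reversible by one transduction of bounded size. Second, one must certify that the reconstructed $G'$ is genuinely nowhere dense — that the density of its shallow topological minors stays bounded at every fixed depth — rather than merely weakly sparse or of bounded expansion. It is exactly this global, density-controlled reassembly that the already-resolved special cases obtain from extra rigidity (a bounded-depth tree model for bounded shrub-depth, a linear contraction sequence for bounded twin-width, and so on), and that is missing for arbitrary monadically stable classes. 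I would accordingly expect progress to hinge on a strengthened flipper game characterization whose decomposition coheres across scales — the route that becomes tractable, in this paper, once one passes to the co-matching-free setting and replaces flips by subflips.
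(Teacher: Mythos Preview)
The statement is labeled a \emph{Conjecture} in the paper, not a theorem, and the paper does not prove it; it is explicitly presented as open. So there is no proof to compare against. You have correctly identified this: you prove the easy direction $(2)\Rightarrow(1)$ and then explain why $(1)\Rightarrow(2)$ remains open.

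Your argument for $(2)\Rightarrow(1)$ is correct and essentially the standard one: nowhere dense implies biclique-free and monadically dependent; biclique-free implies half-graph-free (your sub-biclique extraction from a half-graph is fine); hence nowhere dense classes are monadically stable; and monadic stability is closed under transductions because transductions compose. The paper does not spell this direction out for \cref{conjecture:stable-general} itself, but it is implicit in the surrounding discussion (monadic stability is transduction-closed, nowhere dense classes are monadically stable).

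Your discussion of why $(1)\Rightarrow(2)$ is hard is accurate and in fact anticipates the paper's actual contribution rather well. The difficulties you name --- making flips at different scales simultaneously reversible by a single bounded transduction, and certifying nowhere density of the reassembled preimage --- are precisely what the paper sidesteps by restricting to the co-matching-free setting, replacing flips by subflips (which are subgraphs, hence composable and reversible in a controlled way), and working only within classes of structurally bounded expansion via low-subflip-depth covers. Your closing remark that progress should come from ``a strengthened flipper game characterization whose decomposition coheres across scales'' is exactly the role the subflipper-rank and subflip-flatness characterizations play in the paper's restricted setting.
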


So far the sparsification conjecture has been confirmed for all half-graph-free classes $\CC$ that have bounded shrub-depth~\cite{shrubdepth,shrubdepth-journal},  linear clique-width~\cite{nevsetvril2020linear},  clique-width~\cite{nevsetvril2021rankwidth}, twin-width~\cite{gajarsky2022stable}, or merge-width~\cite{dreier2026efficient}.
In each of these cases actually something stronger holds. 
For example, every half-graph free class of bounded clique-width can be transduced from a class that is not only nowhere dense, but even has bounded tree-width~\cite{nevsetvril2021rankwidth} (which is the same as having both bounded clique-width and being biclique-free~\cite{courcelle2000upper, gurski2000tree}).
This suggests the following refinement of \Cref{conjecture:stable-general}.

\begin{conjecture}\label{conjecture:stable-refined}
    Let $\Pp$ be a non-trivial and transduction-closed class property. 
    For every graph class $\CC$, the following are equivalent. 
    \begin{enumerate}
        \item $\Cc$ is a half-graph-free class with $\Pp$.
        \item $\Cc$ is a transduction of a biclique-free class with $\Pp$. 
    \end{enumerate}
\end{conjecture}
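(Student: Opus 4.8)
The plan is to split the equivalence into a formal direction and a substantive one, isolating where genuinely new ideas are needed. The implication $(2)\Rightarrow(1)$ should follow purely from the definitions. Suppose $\CC\subseteq\trans{T}(\DD)$ for some transduction $\trans{T}$ and some biclique-free class $\DD$ with $\Pp$. Since $\Pp$ is non-trivial and transduction-closed, it refines monadic dependence, so $\DD$ is monadically dependent; a biclique-free monadically dependent class is nowhere dense by the Podewski--Ziegler/Adler--Adler/Dvo\v{r}\'ak equivalence. Every transduction of a nowhere dense class is monadically stable and hence half-graph-free, so $\trans{T}(\DD)$, and with it its subclass $\CC$, is half-graph-free. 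Transduction-closedness gives $\Pp$ for $\trans{T}(\DD)$, and passing to the subclass $\CC$ preserves it. No new combinatorics is required here.

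All the content sits in $(1)\Rightarrow(2)$, and I would approach it by reducing to a strengthened form of \cref{conjecture:stable-general}. Given a half-graph-free class $\CC$ with $\Pp$, note first that $\Pp$ forces monadic dependence, so $\CC$ is in fact monadically stable. The aim is to produce a biclique-free class $\DD$ with $\Pp$ that transduces $\CC$. The crucial realization is that the property $\Pp$ need not be engineered by hand: it transfers for free as soon as $\CC$ and $\DD$ are made \emph{transduction-equivalent}. Indeed, if $\CC$ transduces $\DD$, say $\DD\subseteq\trans{S}(\CC)$, then transduction-closedness of $\Pp$ carries $\Pp$ from $\CC$ to $\trans{S}(\CC)$ and hence to its subclass $\DD$; and the reverse transduction $\DD\to\CC$ is exactly what membership in $(2)$ demands. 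So I would aim not for a one-way sparsification $\DD\to\CC$, but for a two-way one.

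The device that produces the reverse transduction $\CC\to\DD$ is the strengthening advertised in the abstract: realize each sparse preimage as a \emph{definable subgraph} of the dense graph it encodes. Concretely, given $H\in\CC$, recolor $H$ with finitely many unary predicates marking the flip structure that witnesses its monadic stability, and then keep only the ``non-flipped'' adjacencies; the resulting subgraph $D$ should be sparse, and inverting the marked flips by an $\FO$ formula recovers $H$. Because deleting an $\FO$-definable (after finite recoloring) set of edges is itself a transduction, the assignment $H\mapsto D$ is a transduction $\CC\to\DD$, which closes the equivalence and, as above, hands $\Pp$ to $\DD$ at no cost. The sparse class $\DD$ is then nowhere dense, hence biclique-free, and it transduces $\CC$, establishing $(2)$.

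The main obstacle is that \cref{conjecture:stable-general} is itself open, so this reduction can be carried out only where a sparsification is both available and upgradable to the subgraph-preimage form. Supplying that upgrade is the real combinatorial work and, I expect, the reason a single uniform argument is out of reach: one needs a co-matching-free refinement of the flip machinery so that the edges to be deleted are definable, their deletion provably sparsifies, and the flips remain invertible. This is precisely where the new \emph{subflip} operation and the strengthened flip-flatness and flipper-game characterizations should enter. I would therefore expect the proof to proceed case by case through the known transduction-closed properties---bounded shrub-depth, (linear) clique-width, twin-width, and merge-width---establishing the two-way, subgraph-witnessed sparsification in each, after which the abstract transduction-equivalence argument uniformly transfers the ambient property $\Pp$.
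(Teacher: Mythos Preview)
The statement you are addressing is a \emph{conjecture}; the paper does not prove \cref{conjecture:stable-refined} in general, and for the specific properties listed in \cref{thm:general-conjecture-sbe} it simply cites the existing literature. So there is no ``paper's proof'' to compare against, only the known special cases.

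Your argument for $(2)\Rightarrow(1)$ is correct and is essentially the reasoning the paper spells out for the existential positive analogue in \cref{cor:preservation}.

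For $(1)\Rightarrow(2)$ your high-level reduction --- make $\CC$ transduction-\emph{equivalent} to a biclique-free $\DD$, and let transduction-closure hand $\Pp$ to $\DD$ for free --- is a valid observation. But the concrete machinery you reach for is miscalibrated. Subflips, subgraph-witnessed sparsification, and the strengthened flipper-game characterizations are the paper's tools for the \emph{semi-ladder-free} (co-matching-free) setting of \cref{conjecture:positive-refined}, not the half-graph-free setting of \cref{conjecture:stable-refined}. A half-graph-free class can contain arbitrarily large co-matchings --- the class of all co-matchings is itself half-graph-free and of bounded clique-width --- and the paper explicitly notes that on such classes subflips cannot approximate flips (a $2$-flip turns a co-matching into a matching; no small subflip does). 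So your proposed mechanism for producing the reverse transduction $\CC\to\DD$ does not apply here.

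Moreover, the known cases of \cref{conjecture:stable-refined} recorded in \cref{thm:general-conjecture-sbe} are \emph{not} verified via your transduction-equivalence route. They combine a one-way sparsification (e.g., every half-graph-free class of bounded clique-width is a transduction of a class of bounded tree-width) with the structural facts of \cref{lem:sparse-notions} (e.g., bounded tree-width implies both bounded clique-width and biclique-freeness), which directly certify that the sparse preimage $\DD$ has $\Pp$. No reverse transduction $\CC\to\DD$ is invoked, and the paper itself stresses that the subgraph-preimage phenomenon is new to the semi-ladder-free setting and absent from the general half-graph-free results.
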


Note that this conjecture implies the sparsification conjecture by choosing $\PP$ to be monadic dependence.
The solved cases of the sparsification conjecture can now be stated as follows.

\begin{theorem}\label{thm:general-conjecture-sbe}
    \cref{conjecture:stable-refined} is true for the class property \emph{bounded $\PP$} for all
    \begin{center}
        $\PP \in \{\text{shrub-depth, linear clique-width, clique-width, twin-width, merge-width}\}$.
    \end{center}
    In particular, for every half-graph-free class $\CC$, all of the following hold.
    \begin{enumerate}
        \item $\CC$ has bounded shrub-depth if and only if 
        
        $\CC$ is a transduction of a class of bounded tree-depth~\cite{shrubdepth,shrubdepth-journal}.

        \item $\CC$ has bounded linear clique-width if and only if 
        
        $\CC$ is a transduction of a class of bounded path-width~\cite{nevsetvril2020linear}.

        \item $\CC$ has bounded clique-width if and only if 
        
        $\CC$ is a transduction of a class of bounded tree-width~\cite{nevsetvril2021rankwidth}.

        \item $\CC$ has bounded twin-width if and only if 
        
        $\CC$ is a transduction of a class of bounded sparse twin-width~\cite{gajarsky2022stable}.

        \item $\CC$ has bounded merge-width if and only if 
        
        $\CC$ is a transduction of a class of bounded expansion~\cite{dreier2026efficient}.
    \end{enumerate}
\end{theorem}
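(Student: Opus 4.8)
The plan is to read the theorem as an assembly of the five cited sparsification results, glued together by two soft observations that connect the abstract formulation of \Cref{conjecture:stable-refined} to the concrete equivalences (1)--(5). First I would record the structural facts that make the abstract statement applicable: each property ``bounded $\PP$'' in the list is non-trivial, transduction-closed, and --- crucially --- a subproperty of monadic dependence, since a class of bounded shrub-depth, linear clique-width, clique-width, twin-width, or merge-width cannot transduce the class of all graphs. It is this last point that lets the biclique-free hypothesis interact with the width bound.

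For the backward direction $(2) \Rightarrow (1)$ of the conjecture I would argue purely at the level of closure properties. Suppose $\CC$ is a transduction of a biclique-free class $\DD$ of bounded $\PP$. Since bounded $\PP$ implies monadic dependence, $\DD$ is biclique-free and monadically dependent, hence nowhere dense by the Podewski--Ziegler, Adler--Adler, and Dvo\v{r}\'ak characterization \cite{stable_graphs, adler2014interpreting, dvovrak2018induced}. As every transduction of a nowhere dense class is monadically stable, $\CC$ is monadically stable and in particular half-graph-free; and since bounded $\PP$ is transduction-closed, $\CC$ also has $\PP$. Thus $\CC$ is half-graph-free with $\PP$.

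The forward direction $(1) \Rightarrow (2)$ carries all the genuine content, and here there is nothing to prove afresh: each instance is exactly one of the cited results \cite{shrubdepth, shrubdepth-journal, nevsetvril2020linear, nevsetvril2021rankwidth, gajarsky2022stable, dreier2026efficient}, each of which shows that a half-graph-free class of the respective bounded width is a transduction of a biclique-free class carrying the same width bound. This is precisely why the theorem is stated per parameter rather than uniformly: the sparse-preimage constructions are parameter-specific, and a uniform proof of $(1) \Rightarrow (2)$ for an arbitrary transduction-closed $\PP$ would be the full \Cref{conjecture:stable-refined}, which is open.

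Finally, to pass from the conjecture to the concrete equivalences (1)--(5), I would substitute for ``biclique-free class with bounded $\PP$'' its sparse combinatorial identity in each case, namely bounded tree-depth, path-width, tree-width, sparse twin-width, and bounded expansion, respectively. For clique-width this rests on the identity that biclique-free plus bounded clique-width equals bounded tree-width \cite{courcelle2000upper, gurski2000tree}, and the remaining four are the analogous facts for their parameters. The main obstacle is therefore external to this paper --- it lives in the forward direction deferred to the literature --- so the present task reduces to checking that the easy direction's closure argument applies uniformly and that each biclique-free intersection coincides with the named sparse class.
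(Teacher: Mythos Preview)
Your proposal is correct and matches the paper's approach: the paper treats this theorem as an assembly of the five cited results and does not give a self-contained proof, only noting that the ``in particular'' part follows from \Cref{lem:sparse-notions} (the biclique-free identifications you invoke in your final paragraph). Your reconstruction of the easy direction via closure properties and nowhere density is exactly the argument the paper spells out later for \Cref{cor:preservation}, so there is nothing to add.
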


The ``in particular'' part of the theorem uses the following fact. 

\begin{theorem}\label{lem:sparse-notions}
Every biclique-free class $\Cc$ with 
\begin{enumerate}
    \item bounded shrub-depth has bounded tree-depth~\cite{shrubdepth,shrubdepth-journal},
    \item bounded linear clique-width has bounded path-width~\cite{gurski2000tree},
    \item bounded clique-width has bounded tree-width~\cite{courcelle2000upper, gurski2000tree},
    \item bounded twin-width has bounded sparse twin-width~\cite{twwII},
    \item bounded merge-width has bounded expansion~\cite{merge-width}.
\end{enumerate} 
\end{theorem}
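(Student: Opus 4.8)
The plan is to treat \Cref{lem:sparse-notions} as a compilation of five independent collapse results, each asserting that within the weakly sparse (biclique-free) regime a dense width parameter coincides with its sparse counterpart. Accordingly, the bulk of the work is not a new argument but the bookkeeping of invoking the correct reference for each item: item~1 is the shrub-depth/tree-depth correspondence of~\cite{shrubdepth,shrubdepth-journal}, item~2 the linear clique-width/path-width correspondence of~\cite{gurski2000tree}, item~3 the Gurski--Wanke/Courcelle--Olariu clique-width/tree-width theorem~\cite{courcelle2000upper,gurski2000tree}, item~4 the twin-width result of~\cite{twwII}, and item~5 the merge-width result of~\cite{merge-width}. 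Since in each case I need exactly the ``biclique-free $+$ dense parameter $\Rightarrow$ sparse parameter'' half of an equivalence, the proof proper consists of matching our hypothesis to the hypothesis of each cited theorem and quoting the conclusion.

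Before resorting to pure citation, I would expose the single underlying mechanism, both as a sanity check on the references and to make the paper self-contained at the level of intuition. In each case a bound on the dense parameter supplies a hierarchical decomposition witness (a shrub-depth tree, an algebraic clique-width expression, a contraction sequence, or a merge sequence) whose \emph{local complexity} --- the number of distinct neighborhood types a part can present to the rest of the graph --- is bounded by a constant. Biclique-freeness then controls the edge density across every cut of this witness: a part that is seen uniformly by many vertices would manifest a large $K_{n,n}$ as a semi-induced subgraph, which is forbidden (recall that biclique-freeness in our sense is equivalent to excluding $K_{t,t}$ as an ordinary subgraph for a fixed $t$). Bounding cut density uniformly is precisely what upgrades a dense decomposition to a sparse one: ``few labels'' becomes ``few edges'', so bounded tree-depth, path-width, tree-width, sparse twin-width, or bounded expansion follows.

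I expect the technical heart, and the place where a genuinely self-contained proof would be hardest, to lie in items~4 and~5. For clique-width and its linear and shallow variants (items~1--3) the collapse is classical and the type-counting argument above is essentially complete. For twin-width the red-graph formalism of~\cite{twwII} requires more care, since one must argue that the contraction sequence can be chosen so that the trigraphs stay genuinely sparse, not merely that their red degree is bounded. For merge-width the notion is newer and the passage to bounded expansion in~\cite{merge-width} is the most involved, so I would rely on that reference rather than reprove it. The overall obstacle is therefore not conceptual but one of faithful translation: I would take care that ``biclique-free'' as defined here is exactly the weak-sparsity hypothesis each cited theorem uses, and that the sparse parameter named in each item matches the conclusion proved there.
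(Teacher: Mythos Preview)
Your proposal is correct and matches the paper's approach: the paper treats \Cref{lem:sparse-notions} purely as a compilation of cited results from the literature and gives no proof of its own, so your plan to invoke the appropriate reference for each item is exactly what is done. Your added paragraph of intuition about the common mechanism (bounded local types plus biclique-freeness forcing sparsity across cuts) goes beyond what the paper provides and is a welcome addition, though not strictly necessary.
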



\subsection*{Contribution 1: Existential positive sparsification}
The main object of our study is the \emph{semi-ladder-free} fragment of monadic dependence, which strictly subsumes the nowhere dense classes and is strictly contained in the monadically stable classes.
A class is semi-ladder-free if and only if it is half-graph-free and co-matching-free~\cite{fabianski2019}.
Prominent examples of semi-ladder-free monadically dependent classes are the class of map graphs and fixed powers of nowhere dense classes~\cite{fabianski2019}. 
Analogously to the sparsification conjecture, we conjecture that the semi-ladder-free monadically dependent classes are exactly the \emph{existential positive} transductions of nowhere dense classes.
More formally, a formula is \emph{positive} if it contains no negation symbol~$(\neg)$, and in particular no non-equality~$(\neq)$. A positive formula is moreover \emph{existential} if it contains no universal quantifier $(\forall)$. In this case we call it an \emph{\expos-formula}.
We call a transduction an \emph{\expos-transduction} if it is specified by an \expos-formula.

It turns out that when studying \expos-transductions it is crucial that all vertices have self-loops (i.e., every vertex is adjacent to itself). 
Intuitively, the reason for this is that otherwise the information that two vertices are connected by an edge implicitly carries the non-positive information that these vertices are non-equal. 
We will discuss this phenomenon in more detail in the next subsection and also in \Cref{sec:self-loops}. 
In the following, a \emph{reflexive graph} is a graph in which every vertex has a self-loop.
For additional clarity, we mark statements requiring this assumption with a \needsselfloops\ symbol.
On the other hand, if we do not mention reflexivity explicitly, this means we allow self-loops in our graphs, but do not require them. Sometimes we use the term \emph{partially reflexive graph}, to stress that we are working in this more relaxed setting.

\bigskip
We propose the following \emph{existential positive sparsification conjecture}.

\begin{conjecture}[\needsselfloops]\label{conjecture:positive-general}
    For every class $\CC$ of reflexive graphs, the following are equivalent.
    \begin{enumerate}
        \item $\CC$ is semi-ladder-free and monadically dependent.
        \item $\CC$ is an \expos-transduction of a nowhere dense class of reflexive graphs.
    \end{enumerate}
\end{conjecture}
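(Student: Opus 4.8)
The plan is to prove the biconditional by treating the two implications separately, after first rewriting the hypotheses in a more workable form. Since semi-ladder-freeness means half-graph-freeness together with co-matching-freeness, and since (by the characterization recalled in the introduction) a monadically dependent class is half-graph-free exactly when it is monadically stable, condition~(1) of \Cref{conjecture:positive-general} is equivalent to ``$\CC$ is co-matching-free and monadically stable.'' So the target reads: the co-matching-free fragment of monadic stability coincides with the \expos-transductions of reflexive nowhere dense classes. I will also use throughout that nowhere dense classes are monadically stable and biclique-free, and that a co-matching of order $n$ semi-induces a biclique of order roughly $n/2$, so biclique-free (in particular nowhere dense) classes are co-matching-free.

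Direction $(2)\Rightarrow(1)$ I expect to be the routine one. Monadic dependence and monadic stability (hence half-graph-freeness) are preserved by arbitrary FO-transductions, so they are preserved by \expos-transductions; since the source class is nowhere dense, these two properties of the image come for free. The only new point is co-matching-freeness, which I would isolate as a lemma: an \expos-transduction of a co-matching-free class of reflexive graphs is again co-matching-free. Here the positivity and the self-loops do the work. Put $\phi$ into its union-of-conjunctive-queries normal form; given a putative large semi-induced co-matching with $\phi(a_i,b_j)$ holding iff $i\neq j$, a Ramsey argument fixes a single conjunctive query witnessing all off-diagonal pairs, and one analyzes the witness structures using that existential positive formulas are preserved under homomorphisms. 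The diagonal non-edges $\neg\phi(a_i,b_i)$ are exactly the ``holes'' that a monotone formula cannot carve out once self-loops are present, and pushing this through yields a large co-matching (or biclique) already in the reflexive source, a contradiction. This step is precisely where reflexivity is indispensable: without self-loops an edge atom silently encodes a disequality, reintroducing the negation that would let an otherwise positive formula define the diagonal hole, which is why the conjecture is stated with the \needsselfloops\ assumption.

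Direction $(1)\Rightarrow(2)$ is the hard and interesting one, and is where the subflip machinery enters. I would follow the template that has settled the ordinary sparsification conjecture in its known cases (\Cref{conjecture:stable-general}, \Cref{conjecture:stable-refined}), replacing flips by subflips everywhere. A flip toggles (XOR-s) adjacency between two color classes and so needs negation, whereas a subflip only turns edges on (OR-s them in) and is a monotone, positive operation; the guiding principle is that in a co-matching-free class one never needs to create holes, so subflips suffice wherever flips were needed. Concretely I would (i) prove that co-matching-free monadic stability is characterized by \emph{subflip-flatness} and by a \emph{subflipper game}, strengthening the flip-flatness and flipper-game characterizations of monadic stability; (ii) use such a characterization to exhibit inside the dense graph $H$ a subgraph $G\subseteq H$ obtained by deleting exactly the subflipped edges; and (iii) show that $G$ is nowhere dense and that the deleted edges are recovered from $G$ by an \expos-formula over a bounded coloring (an $\exists$ guarded by a disjunction over pairs of color classes), so that $H$ is an \expos-transduction of $G$. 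Because subflips only add edges, deleting them leaves a genuine subgraph, which is exactly why the sparse preimage is found as a subgraph of $H$ rather than as an abstract colored gadget.

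The main obstacle is the same one that keeps \Cref{conjecture:stable-general} open: steps (ii)--(iii) of the hard direction in full generality. The characterizations from (i) provide local flatness and boundedness, but converting them into a single global \expos-transduction onto a nowhere dense class is not known even for full FO and monadic stability. I would therefore first settle the special cases in which an ambient width parameter is bounded --- shrub-depth, linear clique-width, clique-width, twin-width, and merge-width --- by taking the known FO-sparsifications underlying \Cref{thm:general-conjecture-sbe} and \Cref{lem:sparse-notions} and \emph{upgrading} them to \expos-sparsifications, using co-matching-freeness (via subflips) to eliminate every use of negation and disequality. The general conjecture would then remain open, exactly mirroring the present status of the original sparsification conjecture.
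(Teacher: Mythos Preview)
You correctly recognize that this is a conjecture: the paper establishes only the implication $(2)\Rightarrow(1)$ in full (\Cref{thm:co-matching-freeness-preserved}) and verifies the converse only for the special cases of \Cref{thm:sparsification}, leaving the general $(1)\Rightarrow(2)$ open just as you say.

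For $(2)\Rightarrow(1)$, however, your route diverges from the paper's and the decisive step is not substantiated. You isolate the lemma ``an \expos-transduction of a co-matching-free reflexive class is again co-matching-free'' and sketch a proof via the union-of-conjunctive-queries normal form, Ramsey, and homomorphism preservation. The paper does not argue this way. It goes through the \emph{monadic non-equality-property} (\Cref{thm:nep}): a nowhere dense class, being semi-ladder-free, irreflexive-clique-free, and monadically dependent, admits no \expos-formula with this property; the property transfers along \expos-transductions (\Cref{obs:transfer-non-equality}); and then \Cref{lem:nep-non-structure} forces the image to be semi-ladder-free. The engine on the structure side is the normal form of \Cref{thm:normal-form} (positive boolean combinations of $r$-clique formulas) combined with \emph{subflip-flatness} of the source, which genuinely uses monadic stability and not just co-matching-freeness. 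Your lemma assumes only co-matching-freeness of the source, and the sentence ``pushing this through yields a large co-matching (or biclique) already in the reflexive source'' is exactly the hard step and is left as an assertion; it is not clear that homomorphism preservation alone, without a flatness-type separation, closes the argument.

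Two smaller corrections. First, a subflip \emph{removes} edges (it is a subgraph of the original); what is positive is the \emph{inverse} operation recovering the dense graph from its subflip (\Cref{lem:trans-pure-flips-reflexive}). Your steps (ii)--(iii) have this right, but the earlier sentence ``a subflip only turns edges on'' is backwards. Second, the paper's sparsification for the special cases does not proceed by upgrading the existing FO-sparsifications of \Cref{thm:general-conjecture-sbe}; it builds an independent pipeline via bounded-subflip-depth covers (\Cref{thm:subflip-depth-covers}) and the tree-depth sparsification of \Cref{lem:sparsify-sfrk}, which is precisely what yields the sparse preimage as a subgraph.
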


Note that according to~\cite{nevsetvril2021rankwidth} and~\cite{fabianski2019} the first condition that $\CC$ is semi-ladder-free and monadically dependent is equivalent to the condition that $\Cc$ is co-matching-free and monadically stable. 

\medskip
Again, the statement can be refined to subproperties of monadic dependence.

\begin{conjecture}[\needsselfloops]\label{conjecture:positive-refined}
    Let $\Pp$ be a non-trivial and transduction-closed class property. For every class~$\CC$ of reflexive graphs, the following are equivalent. 
    \begin{enumerate}
        \item $\Cc$ is a semi-ladder-free class with $\Pp$.
        \item $\Cc$ is an \expos-transduction of a biclique-free class of reflexive graphs with $\PP$. 
    \end{enumerate}
\end{conjecture}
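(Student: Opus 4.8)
The two directions have very different flavors, so I would treat them separately, and I would be upfront that the \emph{general} conjecture seems beyond current reach; the realistic target is the \expos-analog of \Cref{thm:general-conjecture-sbe}, i.e.\ \Cref{conjecture:positive-refined} for the five concrete properties. For the easy direction $(2)\Rightarrow(1)$, I would split the conclusion into three parts. Preservation of $\Pp$ is immediate, since an \expos-transduction is in particular a transduction and $\Pp$ is transduction-closed. Half-graph-freeness is inherited because a biclique-free class with $\Pp$ is monadically dependent (as $\Pp$ is non-trivial and transduction-closed), hence nowhere dense, hence monadically stable, and monadic stability is transduction-closed. The only genuinely new point is \emph{co-matching-freeness}, and this is precisely where reflexivity enters: in an irreflexive graph a satisfied atom $E(x,y)$ secretly encodes $x \neq y$, which is exactly the inequality needed to carve out the diagonal of a co-matching, whereas with self-loops $E(u,u)$ holds and this smuggling disappears. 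I would make this rigorous via the existential-positive normal form $\phi(x,y)=\exists \bar z\,\bigvee_p \bigwedge(\text{atoms}_p)$: a large semi-induced co-matching in the image fixes, by pigeonhole, one disjunct $p$ witnessing a quadratic set of off-diagonal pairs, and since over reflexive graphs $\phi$ cannot separate a pair from its identification, a Ramsey/counting argument on the disjoint witness tuples forces a large biclique in the preimage, contradicting biclique-freeness. This step should be short but genuinely uses the \needsselfloops\ hypothesis.

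For the forward direction $(1)\Rightarrow(2)$ I would not attack arbitrary $\Pp$, but instead refine each established sparsification behind \Cref{thm:general-conjecture-sbe} from full \FO\ to \expos. The known proofs route monadic stability through its flip-based characterizations (flip-flatness and the flipper game), where a flip complements the adjacency between two color classes—an operation that both adds and deletes edges and is therefore inherently non-positive. The plan is to replace flips everywhere by \emph{subflips}, the co-matching-free analog that only \emph{adds} edges between color classes (turning a semi-induced bipartite graph into a complete one). Subflips are monotone and hence \expos-expressible—the positive pattern $\mathrm{Red}(x)\wedge\mathrm{Blue}(y)$ already forces an edge—so a decoding procedure built entirely from subflips yields an \expos-formula. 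Moreover, because subflips never delete edges, the sparse preimage is recovered simply by deleting the edges added by the subflips, i.e.\ as a genuine subgraph of the dense input, which is the stronger "preimage as subgraph" guarantee promised in the abstract.

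The crux is therefore to establish the \emph{subflip characterizations}: that a class is co-matching-free and monadically stable if and only if it is subflip-flat, equivalently if and only if Flipper wins the subflipper game. I would obtain these by strengthening the existing flip-flatness and flipper-game proofs, showing that in the co-matching-free regime the flips witnessing flatness can always be chosen to be subflips—intuitively, adding edges suffices and no deletions are needed, because co-matching-freeness forbids precisely the dense complement-of-matching obstructions that would otherwise require removing edges. With this subflip machinery in place, I would then re-run each of the five structural arguments (shrub-depth, linear clique-width, clique-width, twin-width, merge-width) with subflips in place of flips, in each case reusing the sparse-parameter correspondences of \Cref{lem:sparse-notions} to check that the relevant width or sparsity parameter stays bounded.

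I expect the main obstacles to be exactly these two steps. First, erecting the subflip characterizations with the same strength as their flip counterparts: the flip-flatness and flipper-game proofs are delicate, and confining all flips to monotone subflips without losing control of distance-$r$ balls is the technical heart and the place I would spend most effort. Second, threading positivity and the subgraph guarantee through five technically distinct sparsifications, since each property's argument has its own combinatorial skeleton. Finally, I would flag why the fully general \Cref{conjecture:positive-refined} stays open: even in the \FO\ world \Cref{conjecture:stable-refined} is unproven for arbitrary transduction-closed $\Pp$, so a property-agnostic \expos-sparsification is simply not available until a property-agnostic \FO\ one is.
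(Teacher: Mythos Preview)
Your proposal has the right overall architecture---split into the easy direction $(2)\Rightarrow(1)$ and the hard direction $(1)\Rightarrow(2)$, introduce a ``subflip'' operation, develop subflip-flatness and subflipper-game characterizations, and acknowledge that the general conjecture stays open---but two concrete points are wrong, and a third differs substantially from the paper's route.

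\textbf{The definition of subflip is reversed.} You describe subflips as the operation that ``only \emph{adds} edges between color classes (turning a semi-induced bipartite graph into a complete one).'' In the paper a $k$-subflip of $G$ is a $k$-flip that is simultaneously a \emph{subgraph} of $G$: it only \emph{removes} edges, namely between pairs of parts that are already fully adjacent. This is not a cosmetic mismatch. With your edge-adding definition, subflip-flatness is vacuous (adding edges can only shorten distances, never separate vertices), so the characterization you intend to prove does not even type-check. Co-matchings are the obstruction precisely because their two sides are \emph{almost} fully adjacent but not quite, so the edge-removing subflip cannot touch them; with your definition that obstruction disappears and matchings would become the problem instead. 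The positive formula $\mathrm{Red}(x)\wedge\mathrm{Blue}(y)$ is indeed how one \emph{undoes} a subflip in the $\trans{Recover}$ direction, but that is the inverse operation, not the subflip itself.

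\textbf{Reflexivity is placed in the wrong direction.} You argue that reflexivity is what makes co-matching-freeness go through in $(2)\Rightarrow(1)$. In fact the paper proves the preservation theorem (\Cref{thm:co-matching-freeness-preserved}) for \emph{partially reflexive} graphs, with no self-loop assumption; the $(2)\Rightarrow(1)$ direction of \Cref{conjecture:positive-refined} then follows immediately (\Cref{cor:preservation}). Reflexivity is needed in $(1)\Rightarrow(2)$, and for the reason dual to the one you give: when $\trans{Recover}$ re-inserts the edges removed by a subflip via a clause like $P(x)\wedge Q(y)$, on irreflexive graphs this would create unwanted self-loops unless guarded by $x\neq y$, which is not positive. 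Relatedly, your sketched ``direct biclique'' argument for preserving co-matching-freeness is not how the paper proceeds and is not obviously sound as stated; the paper instead routes through the monadic non-equality property and subflip-flatness of the (nowhere dense) source class, using the normal form of \Cref{thm:normal-form}.

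\textbf{A different route for $(1)\Rightarrow(2)$.} You plan to re-run each of the five sparsification proofs separately with subflips substituted for flips. The paper instead observes that the half-graph-free fragment of all five properties sits inside \emph{structurally bounded expansion}, proves a single sparsification lemma there via bounded-\emph{subflip-depth} covers (a notion derived from the $r=\infty$ subflipper game), and then reads off all five cases from transduction-closure plus \Cref{lem:sparse-notions}. This unified route is both shorter and what actually delivers the ``sparse preimage as subgraph'' guarantee.
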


As our first main result, we establish the implication (2.\ $\Rightarrow$ 1.) of \Cref{conjecture:positive-general}. For this direction no assumptions on self-loops are required.

\begin{restatable}{theorem}{thmCoMatchingFreenessPreserved}\label{thm:co-matching-freeness-preserved}
    Let $\CC$ be a nowhere dense class of partially reflexive graphs. 
    Every \expos-transduction of $\CC$ is semi-ladder-free and monadically dependent.
\end{restatable}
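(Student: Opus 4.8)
The plan is to split the statement into its logically independent parts and to observe that only co-matching-freeness genuinely requires the existential positive hypothesis, while monadic dependence and half-graph-freeness already follow from the general theory of transductions. First I would dispatch everything except co-matching-freeness. Since $\CC$ is nowhere dense it is monadically stable, and monadic stability is transduction-closed; as an \expos-transduction is in particular an \FO-transduction, its image is again monadically stable. A monadically stable class is by definition monadically dependent and half-graph-free (a class with arbitrarily large semi-induced half-graphs trivially transduces all half-graphs). Hence every \expos-transduction of $\CC$ is monadically dependent and half-graph-free, and it remains only to establish co-matching-freeness, for which I will use that $\phi$ is existential positive.

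The engine for co-matching-freeness is that \expos-formulas are preserved under homomorphisms of colored, partially reflexive graphs, equivalently that $\phi$ is a finite union of conjunctive queries (patterns) $P_1,\dots,P_m$, each of bounded size. Suppose towards a contradiction that the transduction contains co-matchings of unbounded order. Pulling such a co-matching back through the transduction, we obtain vertices $a_1,\dots,a_n$ and $b_1,\dots,b_n$ in a colored graph $G'\in\CC$ with $G'\models\phi(a_i,b_j)$ for all $i\neq j$ and $G'\not\models\phi(a_i,b_i)$ for every $i$. For each ordered off-diagonal pair I fix a witnessing pattern together with a homomorphism into $G'$; by Ramsey's theorem, after restricting to a large index set $I$, a single pattern $P=\exists\bar z\,\psi(x,y,\bar z)$ witnesses all off-diagonal pairs inside $I$. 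The goal is then to manufacture, for some $i\in I$, a homomorphism $P\to G'$ sending $x\mapsto a_i$ and $y\mapsto b_i$: this produces a witness for the forbidden diagonal pair and a contradiction.

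I would distinguish two cases according to the connectivity of $P$ between $x$ and $y$. If $P$ is disconnected between $x$ and $y$, then $\psi$ splits as $\psi_x(x,\bar z_1)\wedge\psi_y(y,\bar z_2)$ over disjoint variables; splicing the $x$-side of the witness for $(i,j)$ with the $y$-side of the witness for $(j,i)$ yields a single homomorphism realizing $\phi(a_i,b_i)$, with no recourse to sparsity. The hard case is when $P$ connects $x$ to $y$: then every off-diagonal witness traces a walk of bounded length $d$ in $G'$, so $\dist_{G'}(a_i,b_j)\le d$ for all $i\neq j$. If $\psi$ already contains the atom $E(x,y)$, then $E(a_i,b_j)\in E(G')$ for all $i\neq j$; after passing to a large subset on which the diagonal adjacency $E(a_i,b_i)$ is constant, $G'$ contains between $\{a_i\}$ and $\{b_i\}$ either a biclique or a co-matching of unbounded order, contradicting that a nowhere dense class is biclique-free (hence co-matching-free). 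Otherwise I would invoke uniform quasi-wideness, the standard characterization of nowhere dense classes, to force the witnesses to route through a bounded \emph{hub} set $S$, since the near-complete bipartite distance-$d$ pattern between large sets cannot survive in a nowhere dense graph unless short paths concentrate through few vertices. Pigeonholing over $S$ and over the bounded image of a separator of $x,y$ inside $P$, I obtain a common hub realization shared by an $(i,\cdot)$-witness and a $(\cdot,i)$-witness, and splicing these at the hub again produces a witness for $\phi(a_i,b_i)$.

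The main obstacle is precisely this connected case: one must control the internal connectivity of the pattern $P$, translating ``$P$ has no small $x$--$y$ separator'' into a dense substructure that nowhere density forbids, while ``$P$ has a small separator'' must be matched against a bounded hub supplied by quasi-wideness so that the splicing goes through. I expect the delicate point to be the bookkeeping — choosing the quasi-wideness parameters relative to the pattern radius $d$, and pigeonholing simultaneously over hub vertices, separator images, colors, and witness types — rather than any single conceptual step. Throughout, self-loops play no adverse role: the argument only ever asserts the \emph{presence} of atoms and reuses existing witness images, so it is insensitive to whether target vertices carry self-loops, in line with the theorem's claim that no reflexivity assumption is needed for this direction.
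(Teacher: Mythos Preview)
Your plan is sound and can be completed as you outline, but it is genuinely different from what the paper does. The paper derives the theorem in three lines as a corollary of the characterization of the monadic non-equality-property (\Cref{thm:nep}): since a nowhere dense class is irreflexive-clique-free, semi-ladder-free and monadically dependent, no \expos-formula has the monadic non-equality-property on $\CC$; this condition transfers along \expos-transductions (\Cref{obs:transfer-non-equality}); and any class that fails semi-ladder-freeness exhibits such a formula (\Cref{lem:nep-non-structure}). The heavy lifting sits in the structure direction of \Cref{thm:nep}, which is proved by passing to a subflip via subflip-flatness for tuples (\Cref{lem:sff-tuples}), rewriting $\phi$ as a positive boolean combination of $2^q$-clique formulas via the normal form (\Cref{thm:normal-form}), and then using a type-counting argument on two far-apart tuples with the same quantifier-rank-$q$ type.

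Your route --- CQ-preservation plus uniform quasi-wideness plus a homomorphism-splicing argument --- stays entirely within classical sparsity tools and never touches subflips or the NEP abstraction. One small correction: the dichotomy you set up between ``$P$ has a small $x$--$y$ separator'' and ``$P$ has no small separator'' is a red herring. The pattern $P$ has bounded size, so the right object is not a fixed minimal separator but rather $S_{i,j}:=h_{i,j}^{-1}(S)$, which is automatically an $x$--$y$ separator in $P$ (every $x$--$y$ path maps to a bounded walk that must meet $S$), and over which you pigeonhole both its identity in $2^{V(P)}$ and its image in $S$. The only genuine special case is when $E(x,y)$ is an atom of $P$ (so no separator avoids $\{x,y\}$), and you handle that directly via biclique-freeness. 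What your approach buys is an elementary, self-contained proof for the nowhere dense case; what the paper's approach buys is a modular argument that simultaneously yields the full NEP characterization and works uniformly for all semi-ladder-free, irreflexive-clique-free, monadically dependent classes.
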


As a corollary we obtain that also the implication (2.\ $\Rightarrow$ 1.) of \Cref{conjecture:positive-refined} is true.

\begin{corollary}\label{cor:preservation}
    The implication (2.\ $\Rightarrow$ 1.) of \Cref{conjecture:positive-refined} is true.
\end{corollary}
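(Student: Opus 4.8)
The plan is to reduce \Cref{cor:preservation} to \Cref{thm:co-matching-freeness-preserved} by first showing that, once a non-trivial transduction-closed property is present, biclique-freeness already forces nowhere density. Assume that $\CC$ is an \expos-transduction of a biclique-free class $\DD$ of reflexive graphs with $\Pp$; that is, $\CC \subseteq \trans{T}_\phi(\DD)$ for some \expos-formula $\phi$. I must show that $\CC$ is a semi-ladder-free class with $\Pp$, i.e.\ that it satisfies statement (1.) of \Cref{conjecture:positive-refined}.

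The conceptually crucial step is to argue that $\DD$ is nowhere dense. Since $\Pp$ is a non-trivial, transduction-closed class property and monadic dependence is, as recalled in the introduction, the most general such property, every class with $\Pp$ is monadically dependent; in particular $\DD$ is. Thus $\DD$ is biclique-free and monadically dependent, so by the equivalence of weak sparsity and nowhere density in the monadically dependent regime (Podewski--Ziegler~\cite{stable_graphs}, Adler--Adler~\cite{adler2014interpreting}, Dvo\v{r}\'ak~\cite{dvovrak2018induced}) the class $\DD$ is nowhere dense.

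Now $\DD$ is a nowhere dense class of reflexive, hence partially reflexive, graphs, so \Cref{thm:co-matching-freeness-preserved} applies: every \expos-transduction of $\DD$ is semi-ladder-free and monadically dependent. Since $\CC$ is by assumption an \expos-transduction of $\DD$, we immediately obtain that $\CC$ is semi-ladder-free. It remains to transfer $\Pp$ to $\CC$: as every \expos-transduction is in particular a transduction, $\CC$ is a transduction of the $\Pp$-class $\DD$, and since $\Pp$ is transduction-closed (which in particular entails closure under taking subclasses), $\CC$ has $\Pp$. Combining both conclusions, $\CC$ is semi-ladder-free and has $\Pp$, as required. The only genuinely delicate point of the argument is the implication ``biclique-free with $\Pp$ $\Rightarrow$ nowhere dense''; everything downstream is a direct invocation of \Cref{thm:co-matching-freeness-preserved} together with the defining closure properties of $\Pp$.
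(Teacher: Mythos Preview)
Your proof is correct and follows essentially the same approach as the paper's: you argue that $\DD$ is monadically dependent because $\PP$ is non-trivial and transduction-closed, conclude nowhere density from biclique-freeness, invoke \Cref{thm:co-matching-freeness-preserved} for semi-ladder-freeness, and use transduction-closure of $\PP$ to transfer $\PP$ to $\CC$. The additional remarks you include (reflexive implies partially reflexive, \expos-transductions are transductions, transduction-closure entails subclass-closure) are harmless elaborations of the same argument.
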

\begin{proof}
Let $\Cc$ be an \expos-transduction of a biclique-free class $\Dd$ with some non-trivial, transduction-closed class property $\Pp$. 
$\DD$ must be monadically dependent, because monadic dependence is the most general non-trivial, transduction-closed class property.
Then by biclique-freeness, $\DD$ is nowhere dense. By \Cref{thm:co-matching-freeness-preserved}, $\CC$ is semi-ladder-free. As $\PP$ is transduction-closed, $\CC$ still has $\PP$.
\end{proof}

Consider the following algorithmic application of our structural result. 
It was proved in~\cite{dreier2022combinatorial} that the \textsc{Dominating Set} and \textsc{Independent Set} problem admit a polynomial kernel (with respect to solution size~$k$) on every co-matching-free monadically stable class of graphs. 
We conclude that polynomial kernels for these problems exist on all \expos-transductions of nowhere dense classes. 

\medskip

With the implication (2.\ $\Rightarrow$ 1.) of the existential positive sparsification conjecture settled, we continue to prove the full equivalence for all special cases for which the original sparsification conjecture is known to be~true.

\begin{restatable}[\needsselfloops]{theorem}{thmSparsification}\label{thm:sparsification}
    \cref{conjecture:positive-refined} is true for the class property \emph{bounded $\PP$} for all
    \begin{center}
        $\PP \in \{\text{shrub-depth, linear clique-width, clique-width, twin-width, merge-width}\}$.
    \end{center}
    In particular, for every semi-ladder-free class $\CC$ of reflexive graphs there are \expos-transductions $\trans{Sparsify}$ and $\trans{Recover}$ such that every graph $G\in\CC$ contains a reflexive subgraph $G^*$ such that 
    \begin{center}
        $G^* \in \trans{Sparsify}(G)$ and $G \in \trans{Recover}(G^*)$,
    \end{center}
    and the class $\DD := \{G^* : G\in \CC\}$ satisfies all of the following.

    \begin{enumerate}
        \item If $\CC$ has bounded shrub-depth, then $\DD$ has bounded tree-depth.

        \item If $\CC$ has bounded linear clique-width, then $\DD$ has bounded path-width.

        \item If $\CC$ has bounded clique-width, then $\DD$ has bounded tree-width.

        \item If $\CC$ has bounded twin-width, then $\DD$ has bounded sparse twin-width.

        \item If $\CC$ has bounded merge-width, then $\DD$ has bounded expansion.
    \end{enumerate}
\end{restatable}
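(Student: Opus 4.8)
The plan is to derive the entire theorem from its ``in particular'' part, so I first reduce to constructing the two \expos-transductions together with the reflexive subgraphs $G^*$. The implication $(2 \Rightarrow 1)$ of \Cref{conjecture:positive-refined} for bounded $\PP$ is already \Cref{cor:preservation}, so only $(1 \Rightarrow 2)$ remains. Suppose I have produced $\trans{Sparsify}$, $\trans{Recover}$ and subgraphs $G^* \subseteq G$ with the five width guarantees, and set $\DD = \{G^* : G \in \CC\}$. Then $\DD$ is biclique-free, and in each regime $\DD$ carries the stated sparse parameter; since bounded tree-depth implies bounded shrub-depth, bounded path-width implies bounded linear clique-width, bounded tree-width implies bounded clique-width, bounded sparse twin-width implies bounded twin-width, and bounded expansion implies bounded merge-width, the class $\DD$ has bounded $\PP$. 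As $G \in \trans{Recover}(G^*)$ for every $G \in \CC$, the transduction $\trans{Recover}$ exhibits $\CC$ as an \expos-transduction of the biclique-free class $\DD$ with bounded $\PP$, which is precisely $(1 \Rightarrow 2)$. Hence it suffices to build the transductions and the subgraphs.

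The engine of the construction is the \emph{subflip} characterization of co-matching-free monadic stability. By \cite{nevsetvril2021rankwidth,fabianski2019} a semi-ladder-free class of reflexive graphs is co-matching-free and monadically stable, and the strengthened flip-flatness / flipper-game characterizations show that every $G \in \CC$ arises from a sparse \emph{base} graph $G^*$ by applying a bounded number of \emph{subflips} relative to a bounded vertex coloring. The decisive feature, inherited from co-matching-freeness, is that a subflip is \emph{monotone}: it only adds edges between color classes and never deletes one, in contrast to an ordinary flip, which toggles adjacency and therefore implicitly requires negation. Consequently each dense edge of $G$ is created by one of boundedly many subflips, so membership $uv \in E(G)$ becomes a finite disjunction over these subflips, each disjunct an existentially quantified positive condition locating the relevant witnesses in $G^*$ by their colors. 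This formula is existential positive, and coloring $G^*$ by the flipper strategy turns it into the required \expos-transduction $\trans{Recover}$ with $G \in \trans{Recover}(G^*)$.

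Placing $G^*$ inside $G$ and extracting it positively uses the same monotonicity. Because subflips only add edges, the base $G^*$ is edge-contained in $G$, so I may take $G^*$ to be the reflexive subgraph of $G$ on the base edges, giving $G^* \subseteq G$ as demanded. To realize $\trans{Sparsify}$, I define the edge relation of $G^*$ by a formula of the shape $E(u,v) \wedge \psi(u,v)$, where $\psi$ is the existential positive predicate (over the flipper coloring) selecting base edges; a conjunction of the atom $E(u,v)$ with a positive formula is again existential positive. Here the reflexive self-loops required by \needsselfloops\ are essential: they allow an edge to be selected positively without the atom $E(u,v)$ secretly encoding the inequality $u \neq v$, keeping both $\trans{Sparsify}$ and $\trans{Recover}$ genuinely positive.

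Finally I verify sparsity and transfer the width. The base $G^*$ is biclique-free by construction, so $\DD$ is biclique-free; and since $G$ and $G^*$ differ only by a bounded number of subflips relative to a bounded coloring, the decomposition witnessing bounded $\PP$ for $G$ transports to $G^*$, yielding in each of the five regimes the matching sparse parameter, with \Cref{lem:sparse-notions} identifying the correct sparse notion. The step I expect to be the main obstacle is exactly this monotone reconstruction: proving that co-matching-freeness lets every flip in the known decompositions be replaced by a subflip, uniformly across all five width regimes, so that no edge is ever deleted and the recovery formula stays existential positive, while simultaneously keeping $G^*$ sparse and an honest subgraph of $G$. Establishing the subflip flip-flatness and subflipper-game characterizations, threading the bounded coloring through them, and bookkeeping the self-loops so that no inequality is smuggled in is the technically demanding core of the argument.
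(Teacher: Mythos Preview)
Your high-level reduction in the first paragraph is fine, and the intuition that pure flips between sparse $G^*$ and dense $G$ can be undone by quantifier-free positive formulas (this is \Cref{lem:trans-pure-flips-reflexive}) is exactly the point. But two things are off.

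First, a terminological inversion: in this paper a \emph{subflip} only \emph{removes} edges ($G \ominus \PP$ is a subgraph of $G$), not adds them. What you are calling a subflip is the inverse pure flip. This is not fatal, but it signals that you have not yet internalized how the construction actually runs.

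Second, and this is the real gap: you assert that ``every $G \in \CC$ arises from a sparse base graph $G^*$ by applying a bounded number of subflips,'' and then treat $G^*$, its sparsity, and its biclique-freeness as given ``by construction.'' There is no construction. A bounded number of subflips of $G$ is just a single $k'$-subflip for some larger $k'$, and a $k'$-subflip of an arbitrary graph in $\CC$ need not be biclique-free or have bounded tree-depth, path-width, etc. The paper does something much more structured: it first passes through the common umbrella of \emph{structurally bounded expansion} (all five half-graph-free regimes land there by \Cref{thm:sbe-and-transclosed}), then uses the characterization by \emph{bounded-subflip-depth covers} (\Cref{thm:subflip-depth-covers}) to cover $V(G)$ by boundedly many sets $U_1,\dots,U_s$ with each $G[U_i]$ of bounded subflip-depth. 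Each piece is sparsified \emph{recursively along its subflip-depth decomposition} (\Cref{lem:sparsify-sfrk}), alternating a subflip with localization to connected components and reinserting a few vertex-cover edges; this is not ``a bounded number of subflips'' but an inductive construction whose depth is the subflip-depth. The sparse pieces $G_i^*$ have bounded tree-depth, and only then does $G^* := \bigcup_i G_i^*$ become biclique-free (via the bipartite Ramsey argument of \Cref{lem:no-bicliques}) and hence of bounded expansion by \Cref{cor:sparse-sbe}. Your proposed $\trans{Sparsify}$ of the shape $E(u,v)\wedge\psi(u,v)$ and your ``transport the decomposition'' step do not survive without this machinery; in particular, $\trans{Recover}$ needs the subgraph-selection transduction of \Cref{lem:transduce-subgraphs} (available only once bounded expansion is established) to pick out each $G_i^*$ from the union before reapplying the inductive recovery. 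None of the cover step, the recursion on subflip-depth, or the structurally-bounded-expansion bridge appears in your outline, and without them the argument does not go through.
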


Note that in the above theorem the sparsified graphs are obtained as \emph{subgraphs of the original input graphs}. 
This is in contrast to the known cases of the general sparsification conjecture (\cref{thm:general-conjecture-sbe}), where one merely obtains the existence of some graph class with the desired structural properties that is transduction-equivalent to~$\Cc$, without any guarantee that the witnessing graphs are subgraphs of the original ones. 
Thus, for all known cases of the conjecture, the theorem establishes a substantially stronger, canonical form of sparsification, in which the structure of $G$ is preserved and refined rather than abstractly re-encoded. 

An overview of the solved cases of the (existential positive) sparsification conjecture and of the relationships between the appearing class properties is given in \Cref{fig:universe}.

\begin{figure}[htbp]
    \centering
    \includegraphics[width = \textwidth]{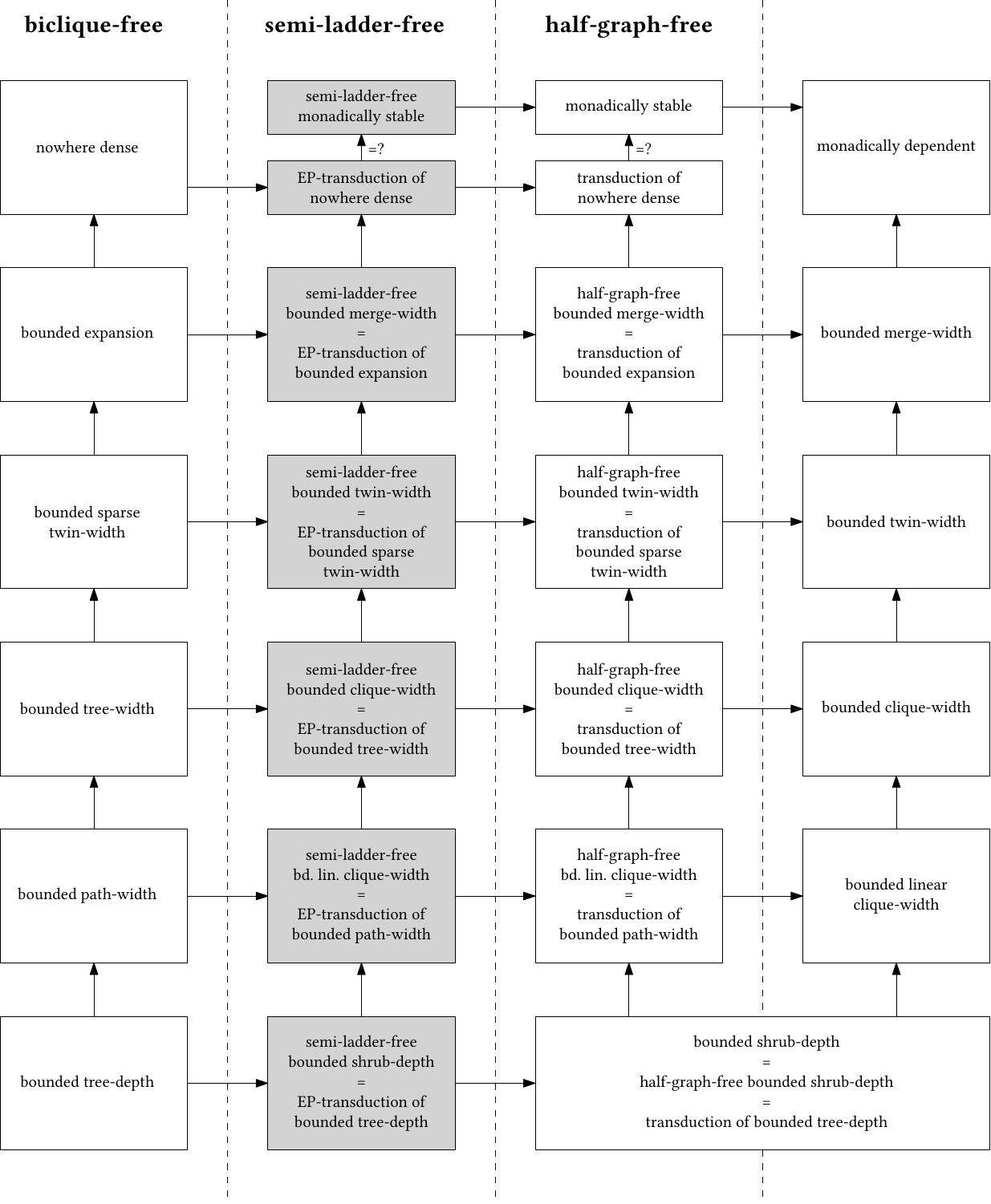}
    \caption{A hierarchy of properties of graph classes.
    An arrow $P_1 \rightarrow P_2$ between two properties means that every graph class that has property $P_1$ also has property $P_2$. The highlighted semi-ladder-free column represents our results on the existential positive sparsification conjecture.}
    \label{fig:universe}
\end{figure}

\subsection*{Contribution 2: The monadic non-equality property}
We take some time to distill the model theoretic core, that helps to explain both the connection between semi-ladder-freeness and \expos-transductions, and the need to work with reflexive graphs.

Let $\phi(\bar x, \bar y)$ be an FO formula over the signature of colored graphs, where $\bar x$ and $\bar y$ are two tuple variables of the same length~$t$.
We say that $\phi$ has the \emph{monadic order-property} on a graph class~$\CC$ if for every $\ell \in \N$ there is a graph coloring $G_\ell^+$ of a graph $G_\ell \in \CC$ and tuples $\bar a_1, \ldots, \bar a_\ell \in V(G)^t$ such that for all $i,j\in [\ell]$
\begin{equation}\label{eq:op}
    G^+_\ell \models \phi(\bar a_i, \bar a_j) \iff i \leq j. \tag{$*$}
\end{equation}

While we have previously defined monadic stability via transductions, the original (and equivalent) definition by Baldwin and Shelah~\cite{baldwin1985second} states that a graph class $\CC$ is monadically stable if no formula has the monadic order-property on $\CC$.
It is now natural to define the \emph{monadic non-equality-property} by replacing~\eqref{eq:op} in the definition of the monadic order property with
\[
    G^+_\ell \models \phi(\bar a_i, \bar a_j) \iff i \neq j.
\]

The simple formula $\phi(x,y) = \neg (x = y)$ has the monadic non-equality-property on every graph class that contains arbitrarily large graphs.
Hence, the monadic non-equality-property is only interesting for positive formulas.
We show that for reflexive graphs, the monadic non-equality-property for \expos-formulas characterizes the semi-ladder-free fragment of monadic dependence.

\begin{restatable}[\needsselfloops]{theorem}{thmSemiLadderFreeNEP}\label{thm:nep-simple}
    For every class $\CC$ of reflexive graphs the following are equivalent.
    \begin{enumerate}
        \item $\CC$ is semi-ladder-free and monadically dependent.
        \item No \expos-formula has the monadic non-equality-property on $\CC$.
    \end{enumerate}
\end{restatable}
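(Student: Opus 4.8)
The plan is to prove the contrapositive equivalence
\[
\text{some \expos-formula has the monadic non-equality-property on }\CC \iff \CC\text{ is not co-matching-free or not monadically stable.}
\]
By the remark following \Cref{conjecture:positive-general}, the right-hand side is precisely the negation of condition (1) (recall that semi-ladder-freeness together with monadic dependence equals co-matching-freeness together with monadic stability), so establishing this equivalence yields the theorem. I will treat the two implications separately, and I expect the ``$\Rightarrow$'' direction to contain the real work.

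For the ``$\Leftarrow$'' direction I would split into two cases. If $\CC$ is \emph{not} co-matching-free, pick a semi-induced co-matching of order $\ell$ between $A=\{a_1,\dots,a_\ell\}$ and $B=\{b_1,\dots,b_\ell\}$, set $\bar a_i=(a_i,b_i)$, and take $\phi(\bar x,\bar y)=E(x_1,y_2)$; then $\phi(\bar a_i,\bar a_j)\iff E(a_i,b_j)\iff i\neq j$, and $\phi$ is clearly \expos. If instead $\CC$ is monadically \emph{unstable}, the idea is to first upgrade the abstract monadic order-property to honest \emph{semi-induced half-graphs} of unbounded order (after a bounded coloring), using the equivalence available for monadically dependent classes, and then apply a \emph{strictness-by-doubling} trick: inside a semi-induced half-graph of order $2\ell+1$ with sides indexed by $\le$, a vertex at an odd position $2i+1$ is adjacent to a vertex at an even position $2j$ exactly when $2i+1\le 2j$, i.e.\ exactly when $i<j$. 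Encoding each index by a short tuple collecting its even and odd representatives on both sides, the \expos-formula $\phi(\bar x,\bar y)=E(x_2,y_3)\vee E(y_2,x_3)$ then satisfies $\phi(\bar a_i,\bar a_j)\iff (i<j)\vee(j<i)\iff i\neq j$. This is where reflexivity is genuinely used: the naive witness $E(x,y)$ cannot express non-equality on reflexive graphs because $E(\bar a_i,\bar a_i)$ always holds via self-loops, so one is forced to route the distinction through a \emph{strict}, crossing pattern as above.

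For the ``$\Rightarrow$'' direction, suppose an \expos-formula $\phi$ has the monadic non-equality-property, witnessed for arbitrarily large $\ell$ by colored graphs and tuples $\bar a_1,\dots,\bar a_\ell$ with $\phi(\bar a_i,\bar a_j)\iff i\neq j$. First I would put $\phi$ in disjunctive normal form $\bigvee_k \exists\bar z\,\chi_k$ (a disjunction of primitive-positive formulas) and apply Ramsey's theorem to the ordered pairs $(i,j)$ with $i<j$, coloring each by a disjunct that witnesses it; on a large sub-sequence a single conjunctive query $\chi$ witnesses all upward pairs, while the diagonal keeps failing (since $\phi(\bar a_i,\bar a_i)$ fails, every disjunct fails there). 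This produces a single pp-formula $\chi$ with $\chi(\bar a_i,\bar a_j)$ true for $i<j$ and false for $i=j$. A further Ramsey step on the downward pairs gives a dichotomy: either $\chi$ behaves like a genuine (strict) half-graph, in which case homomorphism-closure of pp-formulas transfers this pattern into semi-induced half-graphs and $\CC$ is monadically unstable; or $\chi(\bar a_i,\bar a_j)$ holds for \emph{all} $i\neq j$, giving the full off-diagonal pattern by a single conjunctive query, from which I would extract semi-induced co-matchings, contradicting co-matching-freeness.

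The main obstacle is this last extraction. Unlike the atomic case, the existential variables $\bar z$ of the witnessing query may be reassigned between different pairs, so the failure on the diagonal need not be caused by any single fixed atom; one must argue, using that \expos-formulas are preserved under homomorphisms together with the co-matching-freeness (resp.\ stability) hypothesis, that a crossing binary atom must ultimately carry the ``present iff distinct'' behaviour. I expect this to be the step that requires the paper's subflip and flip-flatness machinery (the strengthened flip-flatness and flipper-game characterizations of the co-matching-free fragment), which packages exactly the combinatorial extraction of a co-matching from a homomorphism-closed relation that is complete off the diagonal and empty on it.
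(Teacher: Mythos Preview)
Your ``$\Leftarrow$'' direction has a real gap. The co-matching and half-graph constructions are correct and essentially match the paper's (\Cref{lem:nep-non-structure}), but the case split does not cover the possibility that $\CC$ is \emph{semi-ladder-free} (both co-matching-free and half-graph-free) yet not monadically dependent---and your ``upgrade the monadic order-property to honest semi-induced half-graphs'' is exactly the step that fails there. Concretely, the class of $1$-subdivisions of all graphs is semi-ladder-free (every subdivision vertex has degree~$2$, which caps the order of any semi-induced half-graph or co-matching) but transduces all graphs, so it is not monadically dependent while containing no large semi-induced half-graphs to feed into your formula. The paper handles this case with a separate, non-trivial ingredient: by the forbidden-subflip characterization, a semi-ladder-free class that is not monadically dependent \expos-transduces the class of all reflexive graphs (\Cref{cor:expos-transduce-everything}), and the non-equality-property is then pulled back along that transduction.

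For the ``$\Rightarrow$'' direction, the step you flag as the obstacle is indeed the heart of the matter, but the paper does \emph{not} resolve it by extracting a co-matching from a pp-formula that is full off the diagonal, and the subflip machinery does not ``package'' such an extraction. The paper argues the contrapositive directly: assuming $\CC$ is semi-ladder-free and monadically dependent, it is subflip-flat (\Cref{thm:subflatness}, \Cref{lem:sff-tuples}), so after a bounded subflip $H$ of $G$ a large subfamily of the (disjoint) witnessing tuples is pairwise at distance greater than~$2^q$. Reflexivity enters here, not in your half-graph construction: the original edge relation of $G$ is definable from $H$ by a quantifier-free positive formula (\Cref{lem:trans-pure-flips-reflexive}), so $\phi$ can be rewritten over $H$ without raising its quantifier rank. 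One then applies the normal form (\Cref{thm:normal-form}) writing $\phi$ as a positive boolean combination of $2^q$-clique formulas; for two same-$q$-type tuples $\bar a,\bar a'$ that are far apart in $H$, each clique-formula conjunct of the disjunct witnessing $\phi(\bar a,\bar a')$ mentions only one of them, forcing $\phi(\bar a,\bar a)$ to hold as well---the desired contradiction. No Ramsey step on disjuncts is needed, and no co-matching is ever extracted; co-matching-freeness is used only to make subflips approximate flips.
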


Above, assuming reflexiveness is crucial, and it is illustrative to see the difference to the general partially reflexive case, where we prove the following.

\begin{restatable}{theorem}{thmNEP}\label{thm:nep}
    For every class $\CC$ of partially reflexive graphs the following are equivalent.
    \begin{enumerate}
        \item $\CC$ is semi-ladder-free, irreflexive-clique-free, and monadically dependent.
        \item No \expos-formula has the monadic non-equality-property on $\CC$.
    \end{enumerate}
\end{restatable}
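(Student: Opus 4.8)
The plan is to reduce \Cref{thm:nep} to the reflexive case \Cref{thm:nep-simple} by passing to the \emph{reflexive closure}. For a partially reflexive graph $G$ let $G^{\circlearrowleft}$ be the reflexive graph obtained by adding every missing self-loop, and set $\hat{\CC} := \{G^{\circlearrowleft} : G \in \CC\}$. Adding self-loops is the \expos-transduction given by $E(x,y) \vee (x=y)$, while removing them is the (non-positive) FO-transduction $E(x,y) \wedge (x \neq y)$; hence $\CC$ and $\hat{\CC}$ are mutually FO-transducible and $\CC$ is monadically dependent iff $\hat{\CC}$ is. Since a self-loop lives inside a single vertex, it never affects the bipartite graph semi-induced between two disjoint sides, so $\CC$ contains arbitrarily large co-matchings / half-graphs exactly when $\hat{\CC}$ does. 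Thus $\CC$ is semi-ladder-free and monadically dependent iff $\hat{\CC}$ is. As $\hat{\CC}$ is reflexive it is vacuously irreflexive-clique-free, so \Cref{thm:nep-simple} applies to $\hat{\CC}$ and yields the pivot $(\dagger)$: $\CC$ is semi-ladder-free and monadically dependent iff no \expos-formula has the monadic non-equality-property on $\hat{\CC}$.

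Given $(\dagger)$, the theorem follows from a key lemma: no \expos-formula has the monadic non-equality-property on $\CC$ iff both \emph{(i)} no \expos-formula has it on $\hat{\CC}$, and \emph{(ii)} $\CC$ is irreflexive-clique-free. Indeed, combining this lemma with $(\dagger)$ gives $(1) \Leftrightarrow \big([\text{no property on }\hat{\CC}] \wedge [\text{irreflexive-clique-free}]\big) \Leftrightarrow (2)$. I would prove the easy direction of the lemma first. Assume no \expos-formula has the property on $\CC$. If $\CC$ contained arbitrarily large irreflexive cliques $v_1,\dots,v_n$, then $E(v_i,v_j) \Leftrightarrow i \neq j$, so the single atom $E(x,y)$ would already witness the property on $\CC$; hence \emph{(ii)} holds. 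For \emph{(i)}, any \expos-formula $\phi$ witnessing the property on $\hat{\CC}$ pulls back to $\CC$ upon replacing each edge atom $E(s,t)$ by $E(s,t) \vee (s=t)$: the resulting \expos-formula $\phi'$ satisfies $G \models \phi'(\cdot) \Leftrightarrow G^{\circlearrowleft} \models \phi(\cdot)$, so it witnesses the property on $\CC$ -- contradiction. (The same two cheap witnesses, $E(x_1,y_2)$ for a co-matching and $E(x,y)$ for an irreflexive clique, explain conceptually why the extra condition \emph{(ii)} appears: co-matchings survive the reflexive closure, whereas irreflexive cliques are destroyed by the added self-loops.)

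The substance lies in the converse direction of the lemma, which drives the implication $(1) \Rightarrow (2)$: assuming $\CC$ is irreflexive-clique-free and no \expos-formula has the property on $\hat{\CC}$, I must show none has it on $\CC$. Equivalently, from an \expos-formula $\phi$ with the property on $\CC$ I would manufacture either a witness on $\hat{\CC}$ or arbitrarily large irreflexive cliques. The off-diagonal instances transfer for free: $G$ is an edge-subgraph of $G^{\circlearrowleft}$ and \expos-formulas are preserved under adding edges, so $\phi$ stays true on every $(\bar a_i, \bar a_j)$ with $i \neq j$. The whole difficulty is the \emph{diagonal}: $\phi(\bar a_i, \bar a_i)$ is false in $G$ but the added self-loops may make it true in $G^{\circlearrowleft}$. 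Writing $\phi$ as a disjunction of primitive-positive formulas and applying Ramsey to the witnessing homomorphisms of a single dominant disjunct, I would homogenize the equality-type of the witnesses and isolate the atoms responsible for the diagonal failing. Crucially using that $E$ is symmetric, each such atom either certifies a genuine non-adjacency between two \emph{distinct} vertices -- which persists in $G^{\circlearrowleft}$ and hence keeps the diagonal false there, handing us the desired witness on $\hat{\CC}$ -- or it reduces to a missing self-loop at an irreflexive vertex, in which case homogeneity exhibits many pairwise-adjacent irreflexive vertices, i.e.\ an arbitrarily large irreflexive clique.

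The hard part will be exactly this dichotomy. The naive hope of keeping the diagonal false through one disjunct fails, because $G^{\circlearrowleft}$ may satisfy $\phi(\bar a_i,\bar a_i)$ via an entirely different disjunct or a different homomorphism that exploits several freshly added self-loops at once; and since ``$\neq$'' is not positive, one cannot simply forbid self-loop usage inside an \expos-formula. Controlling all such spurious witnesses simultaneously -- and proving that, once large irreflexive cliques are excluded, every one of them is blocked by a distinct-vertex non-adjacency surviving the reflexive closure -- is where the real work sits. I expect this to require the full Ramsey-theoretic bookkeeping over the bounded-size patterns of $\phi$, together with careful use of the symmetry of the edge relation to rule out the purely ``directed'' configurations that would otherwise escape the co-matching / half-graph / irreflexive-clique trichotomy.
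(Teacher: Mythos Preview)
Your reduction runs in the wrong direction relative to the paper's logical structure: in the paper, \Cref{thm:nep-simple} is \emph{derived from} \Cref{thm:nep} (``As reflexive graphs contain no irreflexive cliques, the simpler version for reflexive graphs follows immediately''), not the other way around. So invoking \Cref{thm:nep-simple} as a black box is circular unless you supply an independent proof of it --- and the paper offers none.

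More substantively, your proposal is incomplete at exactly the point you flag. The ``hard direction'' of your key lemma is only sketched, and the sketch does not close. You correctly observe that the diagonal $\phi(\bar a_i,\bar a_i)$ may become true in $G^{\circlearrowleft}$ via a disjunct or homomorphism unrelated to the one you homogenized for the off-diagonal; Ramseying the off-diagonal witnesses gives no control over \emph{all} potential diagonal witnesses in \emph{all} disjuncts simultaneously. Your final move --- that if the blocking atoms are always missing self-loops then ``homogeneity exhibits many pairwise-adjacent irreflexive vertices'' --- is asserted, not argued: nothing in the setup forces the irreflexive vertices arising at different indices $i$ to be mutually adjacent, so no irreflexive clique materializes. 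You yourself call this ``where the real work sits''; as it stands the proposal is a plan, not a proof.

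By contrast, the paper proves \Cref{thm:nep} directly. For $(2.\Rightarrow 1.)$ it exhibits explicit \expos-witnesses on co-matchings, half-graphs, irreflexive cliques, and (via \Cref{cor:expos-transduce-everything}) monadically independent classes. For $(1.\Rightarrow 2.)$ it assumes an \expos-formula with the (disjoint) monadic non-equality-property, applies subflip-flatness for tuples (\Cref{thm:subflatness}, \Cref{lem:sff-tuples}) to separate many witnessing tuples by distance $>2^q$ in a $k$-subflip $H$, then uses irreflexive-clique-freeness via \Cref{cor:transduce-subflips-partially-reflexive} to positively transduce $G$ back from $H$, and finally the normal form (\Cref{thm:normal-form}) to force a contradiction on the diagonal from two type-equivalent far-apart tuples. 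Irreflexive-clique-freeness enters precisely so that subflips can be undone by a quantifier-free \emph{positive} formula (a clique part flipped with itself contains at most boundedly many loopless vertices). This is the mechanism your reflexive-closure approach tries to replace, but without an analogue of that finiteness you cannot tame the diagonal.
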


Here, an \emph{irreflexive} graph is a graph, in which no vertex has a self-loop.
We call a graph class \emph{irreflexive-clique-free} if it does not contain all irreflexive cliques as induced subgraphs.
Clearly, on the class of irreflexive cliques, the edge relation $E(x,y)$ has the monadic non-equality-property.
This hints at the reason why in the setting of positive logic, reflexive graphs behave nicer than irreflexive ones:
in irreflexive graphs, the positive information of being connected by an edge carries the negative information of being non-equal.
As mentioned before, we further discuss reflexivity in \cref{sec:self-loops}.

\subsection*{Contribution 3: Subflips}

As a foundation for the logical results presented in Contributions 1 and 2, we initiate the systematic combinatorial study of the semi-ladder-free fragment of monadic dependence. We obtain multiple characterizations through an operation, which we call \emph{subflip}, which we believe to be of independent interest.
Our starting point are the more general monadically stable classes (i.e., the half-graph-free fragment of monadic dependence).
Only recently, fixed-parameter tractability of the FO model checking problem was established for all monadically stable classes~\cite{dreier2024stablemc,dreier2023ssmc}.
Key to these algorithmic results was the emergence of a multitude of characterizations of monadic stability that are combinatorial rather than model theoretic in nature.
At the heart of these combinatorial characterizations lies the \emph{flip} operation.
Given a graph $G$, a \emph{$k$-flip} of $G$ is obtained by choosing a partition $\PP$ of $V(G)$ of size at most $k$ and complementing the edge relation between arbitrary pairs of parts from $\PP$ (possibly complementing parts with themselves, but preserving self-loops).
See \Cref{fig:flips} for an example.

\begin{figure}[htbp]
    \centering
    \includegraphics[scale = 0.7]{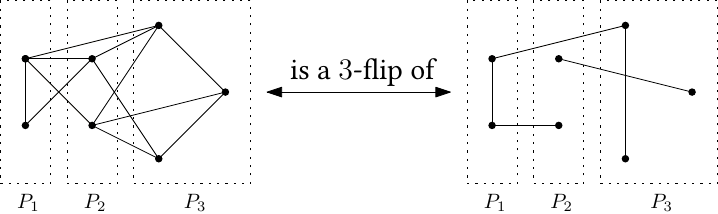}
    \caption{Two graphs that are $3$-flips of each other.
    This is witnessed by the partition $\{ P_1,P_2,P_3 \}$ where the pairs $(P_1,P_2)$, $(P_2,P_3)$, and $(P_3,P_3)$ were flipped. 
    }
    \label{fig:flips}
\end{figure}

The goal is usually to separate sets of vertices by putting them at high distance, using a $k$-flip for some small~$k$.
Exemplary for this is the following characterization of monadic stability by a property called \emph{flip-flatness}, which is one of the main tools for the algorithmic treatment of monadically stable classes.

\begin{definition}\label{def:flip-flatness}
    A graph class $\CC$ is \emph{flip-flat}
    if for every $r\in\N$ there is a function
    $M_r : \N \to \N$ and a constant $k_r \in \N$, such that for all $m \in \N$, $G \in \CC$, and $W \subseteq V(G)$ with
    $|W| \ge M_r(m)$,  there is a subset $A \subseteq W$ with $|A| \ge m$
    and a $k_r$-flip $H$ of $G$, such that for every two distinct vertices $u,v \in A$
    \[
    \dist_{H}(u,v) \geq r.    
    \]
\end{definition}

\begin{theorem}[\cite{dreier2022indiscernibles}]
    A graph class is monadically stable if and only if it is flip-flat.
\end{theorem}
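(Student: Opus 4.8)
The plan is to prove the two implications separately, in both cases arguing by contraposition, so that the combinatorial core — that the only obstruction to flattening a set is an order-like configuration — is exposed directly.

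For the direction flip-flat $\Rightarrow$ monadically stable, I would assume $\CC$ is not monadically stable and show it is not flip-flat. By hypothesis some formula $\phi(\bar x,\bar y)$ has the monadic order-property, so for every $\ell$ we obtain a coloring $G^+_\ell$ of some $G_\ell\in\CC$ and tuples $\bar a_1,\dots,\bar a_\ell$ with $G^+_\ell\models\phi(\bar a_i,\bar a_j)\iff i\le j$. The key point is that such a half-graph pattern is robust under bounded flips. A $k$-flip is induced by a partition of $V(G_\ell)$ into at most $k$ parts; pigeonholing the indices $i$ into these parts, on a large subsequence the flip acts uniformly on the relevant blocks, so the $\phi$-definable order relation survives up to complementation. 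Since $\phi$ has bounded quantifier rank $q$, the truth of $\phi(\bar a_i,\bar a_j)$ is decided inside the radius-$\Oof(1)$ neighborhoods of the tuples, and complementing a uniform block cannot push all of these apart beyond a fixed threshold $r=r(q)$ on a large subset. Hence for this single $r$ no $k$-flip makes a large subset $r$-flat, so $\CC$ is not flip-flat. One subtlety to address is the vertex coloring $G^+_\ell$: I would handle it by noting that it only adds boundedly many monadic predicates, which the host structure witnesses as sets of small diameter, preserved up to bounded diameter by any bounded flip.

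For the converse, monadically stable $\Rightarrow$ flip-flat, I would fix $r$ and again argue by contradiction, assuming that for some $r$ there exist, for every $k$ and every target size, arbitrarily large $W\subseteq V(G)$ in which no subset of the target size can be made $r$-flat by a $k$-flip. The first step is a Ramsey extraction: coloring pairs $\{u,v\}\subseteq W$ by $\min(\dist_G(u,v),r+1)$ yields, on a large subsample, either all pairwise distances $>r$ — in which case the trivial flip already flattens, a contradiction — or a large \emph{clustered} set in which all pairwise distances equal a fixed $d\le r$. Iterating Ramsey over the (finitely described) connection patterns of short paths, I would refine this to a long sequence $v_1,v_2,\dots$ that is indiscernible with respect to the radius-$r$ local types and whose short connecting structure is uniform. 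The heart of the argument is then to show that, for such an indiscernible clustered sequence in a monadically stable class, a \emph{bounded} flip already pushes every pair to distance $>r$.

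The main obstacle is exactly this last step: turning ``not separable by any bounded flip'' into an explicit formula with the monadic order-property. Here I would analyze the bundles of short paths realizing the distance-$d$ connections along the indiscernible sequence. If no bounded partition of $V(G)$ separates the $v_i$, then these connecting neighborhoods cannot be uniformly grouped, and I would use indiscernibility to arrange them into a strictly nested chain; reading off ``$v_i$ reaches $v_j$ through the $j$-th bundle'' then gives an $\FO$ formula $\psi(\bar x,\bar y)$ with $\psi(\bar a_i,\bar a_j)\iff i\le j$ after a bounded monadic coloring, contradicting monadic stability. I expect the delicate points to be (i) making the Ramsey/indiscernibility step deliver genuine uniformity even though radius-$r$ local types are a priori unbounded in arbitrary graphs — which is precisely where one must exploit that flattening is tested against \emph{all} bounded $k$ simultaneously — and (ii) the bookkeeping that converts the nested chain of neighborhoods into an honest order-property witness using one fixed formula and finitely many colors.
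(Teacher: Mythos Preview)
The paper does not prove this theorem; it is quoted as an external result from \cite{dreier2022indiscernibles}, so there is no proof here to compare against. Evaluating your sketch on its own merits, both directions contain genuine gaps.

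In the direction flip-flat $\Rightarrow$ monadically stable, the sentence ``the truth of $\phi(\bar a_i,\bar a_j)$ is decided inside the radius-$\Oof(1)$ neighborhoods of the tuples'' is false as written: $\phi$ is a general FO formula evaluated in $G^+$, and the $k$-flip is applied to $G$ to produce a new graph $H$ whose \emph{distances} you inspect --- it does not alter the structure in which $\phi$ is evaluated. The missing step is that a $k$-flip is a quantifier-free bi-interpretation, so there is a formula $\phi'$ with $G^+\models\phi(\bar a_i,\bar a_j)\Leftrightarrow H^+\models\phi'(\bar a_i,\bar a_j)$, where $H^+$ carries the new edge relation together with the colors and the flip partition. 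Only after this translation can you invoke Gaifman locality for $\phi'$ in $H^+$, pigeonhole on local types among an $r$-scattered subset, and conclude that $\phi'$ is symmetric there, contradicting the order. Without the translation step your argument conflates distances in the flip with satisfaction in the original colored graph.

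In the direction monadically stable $\Rightarrow$ flip-flat, the Ramsey extraction to an indiscernible sequence is the right opening move, but you miss the decisive use of stability: in a monadically stable class, an indiscernible \emph{sequence} (for full FO, not merely for radius-$r$ local types) is automatically an indiscernible \emph{set}. This total indiscernibility is what lets one read off, from the atomic EM-type, a bounded partition and flip that separates the elements --- the construction is direct, not by contradiction. Your alternative plan (``if no bounded flip separates, arrange the connecting neighborhoods into a strictly nested chain and read off an order'') does not follow from the indiscernibility you have established: indiscernibility yields uniformity, not asymmetry, and your local-type indiscernibility is far too coarse to control short connecting paths. The step producing a nested chain is exactly where the argument would fail without further ideas.
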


Intuitively, flip-flatness states that in every huge set $W$, we find a still large subset $A$ whose vertices are pairwise at high distance (``well-separated'') after performing a small flip.

In this work we consider a restriction of the flip operation that we call \emph{subflip}. A graph $H$ is a \emph{$k$-subflip} of a graph $G$ if it is both a $k$-flip and a subgraph of $G$.
This means subflips are only allowed to remove edges, so we can only complement the edge relation between two parts $P$ and $Q$ of the flip partition $\PP$ of $V(G)$ if
\begin{itemize}
    \item $P$ and $Q$ are the same part and the graph induced on $P = Q$ is a clique, or
    \item $P$ and $Q$ are distinct parts and the graph semi-induced between $P$ and $Q$ is a biclique.
\end{itemize}

Being more restricted, subflips have some advantages over flips.
To give an example, the next lemma from~\cite{flip-separability} states that multiple flips of the same graph can be aggregated into a single flip that roughly achieves the same separation. 
This aggregation of flips is lossy and highly non-trivial to prove.

\begin{lemma}[{\cite[Lem.\ 8]{flip-separability}}]
    Let $G$ be a graph and $H_1, \ldots, H_m$ be $k$-flips of $G$. 
    There is an $\Oof(k^m \cdot 2^{k^m})$-flip~$H$ of $G$ such that for all $u,v \in V(G)$ and $i\in [m]$
    \[
        6 \cdot \dist_H(u,v) \geq \dist_{H_i}(u,v).
    \]
\end{lemma}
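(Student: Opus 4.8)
The plan is to reduce the statement to a purely \emph{local, edge-by-edge} simulation. Observe that it suffices to construct a flip $H$ of $G$, whose flip partition has size at most $\Oof(k^m \cdot 2^{k^m})$, with the single property that every edge $uw \in E(H)$ satisfies $\dist_{H_i}(u,w) \le 6$ for all $i \in [m]$. Indeed, given such an $H$ and any $u,v \in V(G)$ with $d := \dist_H(u,v)$ finite, take a shortest path $u = w_0, w_1, \ldots, w_d = v$ in $H$; applying the triangle inequality in each $H_i$ along this path yields $\dist_{H_i}(u,v) \le \sum_{j} \dist_{H_i}(w_j, w_{j+1}) \le 6d = 6\,\dist_H(u,v)$, and the case $d = \infty$ is trivial. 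So the entire task becomes: build a bounded-size flip of $G$ all of whose edges are $6$-close in every $H_i$.

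First I would pass to the common refinement $\PP$ of the flip partitions $\PP_1, \ldots, \PP_m$ underlying $H_1, \ldots, H_m$. Since each $\PP_i$ has at most $k$ parts, $|\PP| \le k^m$, and each part of $\PP$ lies inside a single part of every $\PP_i$. Hence for any two parts $P, Q \in \PP$, the behavior of $H_i$ between $P$ and $Q$ is governed by a single bit $F_i(P,Q) \in \{0,1\}$: the bipartite graph $H_i[P,Q]$ is either $G[P,Q]$ (if $F_i(P,Q) = 0$) or its bipartite complement $\overline{G[P,Q]}$ (if $F_i(P,Q) = 1$). I would then split the pairs into \emph{agreement pairs}, where all $F_i(P,Q)$ coincide, and \emph{disagreement pairs}, where both values occur. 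On agreement pairs the construction is forced and free: setting $H$ to the common flip value makes $H[P,Q] = H_i[P,Q]$ for every $i$ simultaneously, so every edge of $H$ across an agreement pair is an actual edge of each $H_i$ and is trivially $1$-close.

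The entire difficulty concentrates on the disagreement pairs, where some $H_i$ keep $G[P,Q]$ while others use $\overline{G[P,Q]}$, so no single flip value can match all of them. The plan here is to refine $\PP$ by attaching to each vertex $v$ a \emph{homogeneity type}: a subset $S_v \subseteq \PP$ recording one bit of information per part (for instance, whether $v$ is completely adjacent in $G$ to that part). This yields at most $k^m \cdot 2^{k^m}$ classes, matching the target bound, and it lets me choose the flip between refined classes so that every edge $uw$ of $H$ across a disagreement pair can be rerouted inside each "wrong-side" $H_i$ by a bounded detour: move $u$ and $w$ within their own parts to suitable representatives and then cross once, with the within-part moves and the single cross composing to length at most $6$. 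For homogeneous (complete or empty) configurations, which are exactly the ones flagged by the labels $S_v$, the bipartite graph between the classes or its complement is itself complete or empty, so the crossing edge and the detour are available outright.

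The main obstacle is precisely this rerouting across disagreement pairs: for a general $G$, a vertex that is complete (or empty) toward a part has \emph{no} relay neighbor there, so naive detours can fail. The role of the homogeneity labeling is exactly to detect and isolate these degenerate configurations — guaranteeing that outside them the required relays provably exist, and inside them the induced bipartite (co-)graph is trivial to traverse. I expect the technical heart of the argument to be the case analysis verifying, over all combinations of within-part and cross-part flip values in each $H_i$, that a valid detour of length at most $6$ always exists, together with checking that the refined partition genuinely has size $\Oof(k^m \cdot 2^{k^m})$. The constant $6$ should emerge as the exact cost of these bounded detours (two within-part steps on each side plus a crossing), and I would not try to optimize it.
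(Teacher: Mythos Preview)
The paper does not prove this lemma: it is quoted verbatim from~\cite{flip-separability} (Lemma~8 there) purely as a foil to the much simpler subflip aggregation, and the surrounding text explicitly calls it ``highly non-trivial to prove.'' So there is no proof in the paper to compare your proposal against.

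As for the proposal on its own terms: your reduction to ``every edge of $H$ is $6$-close in each $H_i$'' is correct, and taking the common refinement and splitting into agreement versus disagreement pairs is the natural first move. The agreement case is indeed free. But the plan for disagreement pairs is only a plan, not a proof, and the gap is real. Your single bit per part (``is $v$ completely adjacent in $G$ to this part?'') does not by itself control the within-part moves you need: whether a step from $u$ to a relay inside the same part $P$ is available in $H_i$ depends on whether $H_i$ flips $P$ with itself, and this interacts with the cross-pair flip in ways your labeling does not capture. Moreover, on a disagreement pair the same edge of $H$ must be rerouted for \emph{every} $H_i$ on the ``wrong side,'' and each such $H_i$ may have different within-part behavior on $P$ and on $Q$; you have not argued that one fixed choice of flip on the refined classes serves all of them at once. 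The sentence ``I expect the technical heart of the argument to be the case analysis'' is accurate --- that case analysis is the entire content of the lemma, and until it is carried out the proposal does not establish the constant $6$ (or any constant).
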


In contrast, aggregation of subflips is simple and lossless by working with the common refinement of the two flip partitions. (We later prove a strengthening of this observation in \Cref{lem:refine}.)

\begin{observation}
    Let $G$ be a graph and $H_1,\ldots, H_m$ be a $k$-subflips of $G$. There is a $k^m$-subflip~$H$ of $G$ that is a subgraph of all the $H_i$.
\end{observation}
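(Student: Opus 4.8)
The plan is to realize the aggregated subflip $H$ simply as the edgewise intersection of the given subflips, and to witness it by the common refinement of the underlying flip partitions. Concretely, I would fix for each $i$ a partition $\PP_i$ of size at most $k$ together with a flip set witnessing that $H_i$ is a $k$-subflip, and define $H$ to be the graph on $V(G)$ with $E(H) := \bigcap_{i=1}^m E(H_i)$, carrying the same self-loops as $G$ (these are preserved by every subflip). Since $E(H) \subseteq E(H_i) \subseteq E(G)$, the graph $H$ is immediately a subgraph of every $H_i$ and of $G$; the only real content is to exhibit $H$ as a $k^m$-flip of $G$.

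For the partition I would take the common refinement $\PP := \PP_1 \wedge \cdots \wedge \PP_m$. Each part of $\PP$ is determined by the tuple of parts of $\PP_1,\ldots,\PP_m$ containing it, so $|\PP| \le \prod_i |\PP_i| \le k^m$. The key point is that $\PP$ refines each $\PP_i$, so for any two parts $P,Q$ of $\PP$ there are uniquely determined parts $P_i \supseteq P$ and $Q_i \supseteq Q$ of $\PP_i$; whether $H_i$ flips the pair $(P_i,Q_i)$ therefore depends only on $P$ and $Q$, not on the chosen vertices inside them. This uniformity is exactly what is needed to express $H$ as a single flip of $G$ along $\PP$.

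With this in hand I would set, for each unordered pair $(P,Q)$ of parts of $\PP$ (including the case $P=Q$), the index set $S_{PQ} := \{\, i : H_i \text{ flips } (P_i,Q_i)\,\}$, and let the flip set be $F := \{\, (P,Q) : S_{PQ} \ne \emptyset \,\}$. For $u \in P$, $v \in Q$ with $uv \in E(G)$, a subflip removes the $G$-edge exactly on flipped pairs, so $uv \in E(H_i)$ precisely when $i \notin S_{PQ}$; hence $uv$ survives in $\bigcap_i E(H_i)$ iff $S_{PQ} = \emptyset$, while non-edges of $G$ survive in no $H_i$. Thus flipping $G$ along $\PP$ using exactly the pairs in $F$ removes precisely the edges absent from $E(H)$ and produces $H$.

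It remains to check that this is a valid \emph{subflip}, i.e.\ that every flipped pair is complete in $G$, and this is where I expect the only real (though mild) bookkeeping. If $(P,Q) \in F$, pick $i \in S_{PQ}$; since $H_i$ is a subflip flipping $(P_i,Q_i)$, the graph $G$ semi-induces a biclique between $P_i$ and $Q_i$ (or $G[P_i]$ is a clique when $P_i = Q_i$). As $P \subseteq P_i$ and $Q \subseteq Q_i$ are disjoint when $P \ne Q$, completeness of the pair $(P,Q)$ is inherited, so removing all $G$-edges across it is a legal subflip step; the case $P=Q$ is analogous, using that $G[P]$ inherits cliquehood from $G[P_i]$. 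I do not anticipate any serious obstacle here: the construction is lossless precisely because intersecting edge sets corresponds to taking the union of removed pairs over the refinement, and the sole point requiring care is confirming that completeness of the flipped pairs — together with the correct treatment of self-loops and the $P=Q$ versus $P\ne Q$ distinction — is preserved under passing to the common refinement.
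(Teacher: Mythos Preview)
Your proof is correct and follows the same core idea the paper sketches: take the common refinement $\PP = \PP_1 \wedge \cdots \wedge \PP_m$ of the underlying partitions. The paper does not spell out a proof of this Observation; it merely points to the common refinement and defers to \Cref{lem:refine} as a strengthening.

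One small remark on packaging. You work with the informal definition of subflip from the introduction (``a $k$-flip that is also a subgraph'') and hand-build the flip set $F$ so that $G \oplus (\PP,F)$ equals the edgewise intersection $\bigcap_i E(H_i)$. The paper's formal Definition of $G \ominus \PP$ in \cref{sec:subflips} fixes $F$ to be \emph{all} fully adjacent pairs of $\PP$, so the canonical choice $H := G \ominus \PP$ is automatically a $k^m$-subflip, and $H \subseteq H_i$ follows in one line from \Cref{lem:refine} (namely $G \ominus \PP = (G \ominus \PP_i) \ominus \PP \subseteq G \ominus \PP_i = H_i$). Your $H$ may be a proper supergraph of this canonical one (the refinement can create new fully adjacent pairs that no $H_i$ flipped), but either graph witnesses the Observation. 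Your case analysis is sound; just note that when $P \ne Q$ you may still have $P_i = Q_i$, in which case full adjacency of $(P,Q)$ is inherited from $G[P_i]$ being a clique rather than from a biclique between distinct parts --- you allude to this but phrase it slightly loosely.
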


This tameness comes at the cost of subflips being less powerful than unrestricted flips. 
A $2$-flip can turn any co-matching into a matching, by flipping between the two parts of the bipartition.
This flip achieves very strong separation: after the flip, every vertex has a single neighbor and distance $\infty$ to all other vertices. This is not possible to achieve with a small subflip.
However, it turns out that co-matchings are in some sense the only graphs were subflips perform worse than flips.
As one of our main results, we show that in co-matching-free classes, flips can always be approximated using subflips.

\begin{theorem}
    Let $G$ be a graph that contains no semi-induced co-matching of size $t$ and let $H$ be a $k$-flip of~$G$.
    There exists a $kt^k$-subflip $H'$ of~$G$ such that for all $u,v \in V(G)$
    \[
        3t \cdot \dist_{H'}(u,v) \geq \dist_{H}(u,v).
    \]
\end{theorem}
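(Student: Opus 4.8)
The plan is to analyze the $k$-flip $H$ pair-by-pair on its flip partition $\PP=\{P_1,\dots,P_k\}$ and to build the subflip $H'$ on a bounded refinement of $\PP$. Write $F$ for the set of pairs of parts (including diagonal pairs) that are complemented in passing from $G$ to $H$. The target inequality $3t\cdot\dist_{H'}(u,v)\ge\dist_H(u,v)$ is equivalent to saying that $H'$ creates no shortcuts relative to $H$: it suffices to produce a subflip $H'\subseteq G$ in which every edge $xy\in E(H')$ satisfies $\dist_H(x,y)\le 3t$, since then any $H'$-path of length $\ell$ expands to an $H$-walk of length at most $3t\ell$. Edges of $H'$ inside a non-flipped pair are automatically $H$-edges, so the whole difficulty is concentrated on the edges that $H'$ retains inside flipped pairs.

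First I extract hubs. Fix a flipped pair $(i,j)\in F$ and consider the bipartite graph $H_{ij}$ semi-induced by $H$ between $P_i$ and $P_j$. Since this pair is complemented, $H_{ij}$ is exactly the bipartite complement of $G[P_i,P_j]$, and because $G$ contains no semi-induced co-matching of order $t$, the graph $H_{ij}$ contains no induced matching of order $t$. Taking a maximal induced matching $a_1b_1,\dots,a_sb_s$ of $H_{ij}$ (so $s<t$, with $a_p\in P_i$ and $b_p\in P_j$) yields a bounded hub set with two features: each $a_pb_p$ is an $H$-edge but a $G$-non-edge, and by maximality every $H$-edge across the pair is dominated by some hub. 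These hubs are the vertices through which retained $G$-edges of the pair will be rerouted inside $H$.

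Next I refine $\PP$. For each vertex and each flipped pair it participates in, I record a single value in $[t]$ --- morally the least hub index to which the vertex attaches in the prescribed way --- and I let the refined part of a vertex be its original part together with this vector of values over the at most $k$ pairs it meets. This produces at most $k\cdot t^{k}$ refined parts, matching the claimed bound. I then define $H'$ as the subflip on this refinement that, between two refined parts lying inside a flipped pair, deletes all edges when the semi-induced bipartite graph in $G$ is a biclique (a legal subflip move) and otherwise keeps the $G$-edges, while inside non-flipped pairs $H'$ keeps $G$ unchanged. By construction $H'$ is a subgraph of $G$ and a $kt^k$-flip, hence a $kt^k$-subflip.

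It remains to verify the no-shortcut property for the retained cross-edges, and this is the main obstacle. I must arrange the $[t]$-valued type so that within every refined flipped pair the dichotomy is clean: either $G$ restricted to the pair is a biclique, so the pair is deleted and contributes no $H'$-edge, or there is a single hub index $p$ that works simultaneously for all retained edges $xy$ of the pair, yielding a short $H$-detour such as $x-a_p-b_p-y$ of length at most $3$ (using that $a_pb_p$ is an $H$-edge and that the shared type pins down the $G$-adjacencies of $x$ and $y$ to $a_p,b_p$). Chaining at most $t$ such three-step gadgets through the $s<t$ hubs bounds the $H$-distance of every retained edge by $3t$, with the factor depending only on the co-matching order and not on $k$. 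The technical heart is showing that a type of range only $[t]$, rather than the full $2^{\mathcal O(t)}$ hub-adjacency pattern, already separates the biclique case from the reroutable case uniformly across the pair; here one exploits that, the complement having no induced matching of order $t$, the hub-adjacencies organize into at most $t$ coherent layers. Self-flips $(i,i)\in F$ are handled identically, with hubs and reroutes taken inside $P_i$. Assembling the pairwise bounds shows every edge of $H'$ is an $H$-path of length at most $3t$, which is exactly $3t\cdot\dist_{H'}\ge\dist_H$.
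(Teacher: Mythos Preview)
Your overall architecture matches the paper: refine each part by a $[t]$-valued label per flipped pair, take the common refinement to get $\QQ$ of size $\le kt^k$, set $H' = G \ominus \QQ$, and reduce the inequality to showing that every $H'$-edge has $H$-distance $< 3t$. The gap is in the choice of label and the detour argument.

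Your proposed detour $x - a_p - b_p - y$ uses the edges $xa_p$ and $b_py$, which lie \emph{within} $P_i$ and $P_j$ respectively. Whether these are $H$-edges depends on the diagonal flips $(i,i),(j,j)$ and on the $G$-structure inside those parts, neither of which is controlled by your analysis of the off-diagonal pair $(i,j)$; so the detour need not exist in $H$ at all. The subsequent ``chaining at most $t$ such three-step gadgets'' inherits this problem, and you explicitly flag the required dichotomy (biclique vs.\ reroutable via a single hub index) as the ``technical heart'' left unproven --- indeed, the least-hub-index label does not give it.

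The paper's label is different and sidesteps the issue: for a vertex of $P$ relative to the flipped pair $(P,Q)$, the label is its connected component in the bipartite complement $\overline{B}$ (with one extra label for $\overline{B}$-isolated vertices). Since $\overline{B}$ has no induced matching of size $t$, there are fewer than $t$ nontrivial components, each of diameter $<3t$. Refined parts lying in different components of $\overline{B}$ are fully $G$-adjacent, hence subflipped away; refined parts in the same nontrivial component have their retained edges connected by a path of length $<3t$ lying \emph{entirely inside} $\overline{B}$, i.e., using only cross-$(P,Q)$ edges of $H$. Thus the detour never touches within-part edges. Your maximal induced matching does witness the component count (every nontrivial component of $\overline{B}$ contains a hub edge), but the reroute has to walk through the component in $\overline{B}$, not hop to a hub along an uncontrolled within-part edge.
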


Leveraging the above theorem, we obtain multiple combinatorial characterizations of semi-ladder-free monadically dependent classes, by replacing flips with subflips in known characterizations of monadic stability.
We need to make no assumptions about self-loops here.

\begin{theorem}\label{thm:helly-characterization}
For every graph class $\Cc$ the following are equivalent.
\begin{enumerate}
    \item $\Cc$ is semi-ladder-free and monadically dependent. 
    \item $\Cc$ is subflip-flat.
    \item Subflipper has a winning strategy for the subflipper-game on $\Cc$.
    \item $\CC$ is semi-ladder-free and for every $r,k \in \N$ both of the following hold:
    \begin{itemize}
        \item There is a star $r$-crossing that is not a $k$-subflip of any induced subgraph from $\CC$.
        \item There is a clique $r$-crossing that is not a $k$-subflip of any induced subgraph from $\CC$.
    \end{itemize}
\end{enumerate}
\end{theorem}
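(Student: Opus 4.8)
The plan is to transfer the three known \emph{flip}-based characterizations of monadic stability to their \emph{subflip} analogues, using the preceding approximation theorem as the bridge between flips and subflips in the co-matching-free setting. Throughout I will use the identity recorded in the remark following \cref{conjecture:positive-general}, that condition (1) is the same as \emph{co-matching-free and monadically stable}, and I will establish (1)~$\Leftrightarrow$~(2), (1)~$\Leftrightarrow$~(3), and (1)~$\Leftrightarrow$~(4) separately. A single observation drives every ``easy'' direction: a $k$-subflip is in particular a $k$-flip, so each subflip-based condition trivially implies its flip-based counterpart, and the flip-based counterparts already characterise monadic stability (flip-flatness via \cref{def:flip-flatness} and \cite{dreier2022indiscernibles}, and the flipper-game and crossing characterisations via \cite{dreier2024stablemc,dreier2023ssmc}). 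Conversely, every ``hard'' direction upgrades a flip to a subflip through the preceding theorem: fixing the class constant $t$ with no semi-induced co-matching of size $t$, a $k$-flip achieving separation $3tr$ yields a $kt^k$-subflip achieving separation $r$, and since $t$ is a constant the blow-up $k \mapsto kt^k$ stays bounded in terms of the original flip parameters.

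Concretely, for (1)~$\Rightarrow$~(2) I would run flip-flatness at radius $3tr$ to obtain, for any large $W$, a large well-separated set $A$ together with a $k_{3tr}$-flip, and then replace that flip by its $k_{3tr}\,t^{k_{3tr}}$-subflip approximation; the distance distortion $3t$ exactly cancels the inflated radius, giving subflip-flatness with $M^{\mathrm{sub}}_r := M_{3tr}$ and $k^{\mathrm{sub}}_r := k_{3tr}\,t^{k_{3tr}}$. For (2)~$\Rightarrow$~(1), subflip-flatness implies flip-flatness and hence monadic stability, so only co-matching-freeness remains. Here I would argue directly that a co-matching resists small subflips: in a co-matching of order $n$ with sides $\{a_i\}$, $\{b_i\}$ and $a_ib_j \in E \Leftrightarrow i \neq j$ (after a routine Ramsey cleanup we may assume both sides independent, the remaining homogeneous cases being only easier), set $W = \{b_1, \ldots, b_n\}$ and suppose a $k$-subflip with partition $\PP$ separates a subset $A \subseteq W$ of size $\geq m$ to pairwise distance $\geq 3$. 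Some part $P^*$ contains at least $m/k$ vertices of $A$; for two such $b_j, b_{j'} \in P^*$, a common neighbour $a_i$ survives unless its part $P(a_i)$ is flipped with $P^*$. Writing $I_A(P)=\{m : a_m \in P\}$ and $I_B(P)=\{m : b_m \in P\}$, flipping $P(a_i)$ with $P^*$ is a legal subflip only when $I_A(P(a_i)) \cap I_B(P^*) = \emptyset$, since the diagonal non-edges $a_m b_m$ forbid $a_m$ from lying in a part biclique-flipped with a part containing $b_m$. Eliminating all common neighbours therefore forces $I_B(P^*) \subseteq \{j, j'\}$, contradicting $|A \cap P^*| \geq m/k \geq 3$ once $m \geq 3k$; thus arbitrarily large co-matchings violate subflip-flatness.

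For (1)~$\Leftrightarrow$~(3) I would port the flipper-game winning strategy round by round: whenever Flipper's winning strategy prescribes a $k$-flip, Subflipper instead plays its $kt^k$-subflip approximation, and since the game terminates in a bounded number of rounds the accumulated $3t$-factor only forces Subflipper to aim at a correspondingly larger target radius, which is harmless as the radius is a free parameter. The reverse direction again uses that a Subflipper strategy is a legal Flipper strategy (yielding monadic stability) together with the co-matching obstruction adapted to the game, a large co-matching letting the escaping player evade any bounded-size subflip exactly as in the flatness argument. For (1)~$\Leftrightarrow$~(4), semi-ladder-freeness is explicit on both sides, so only the crossing conditions must be matched: since ``not a $k$-flip of any induced subgraph'' implies ``not a $k$-subflip of any induced subgraph'', monadic stability's flip-crossing obstructions immediately witness the subflip-crossing obstructions of (4); conversely, the approximation theorem turns a realisation of an $r$-crossing as a $k$-flip into a realisation of a coarser $\lfloor r/3t \rfloor$-crossing as a $kt^k$-subflip, so the subflip-crossing conditions of (4), together with co-matching-freeness, force the flip-crossing conditions and hence monadic stability.

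I expect the flipper-game conversion in (1)~$\Leftrightarrow$~(3) to be the main obstacle. Unlike flatness and crossings, the game is interactive and multi-round, so I must verify that replacing each flip by its subflip approximation stays compatible with the opponent's responses, that the per-round radius distortion does not compound beyond the bounded number of rounds, and in particular that the approximation theorem remains applicable to the intermediate game positions (whose co-matching-freeness is not automatic, since removing edges can create new semi-induced co-matchings). A secondary technical point is the bookkeeping of the exact crossing definitions and parameter translations from \cite{dreier2024stablemc,dreier2023ssmc} needed for (4); these are routine given the approximation theorem but require importing the precise statements from prior work.
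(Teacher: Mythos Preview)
Your plan for (1)$\Leftrightarrow$(2) matches the paper (\cref{lem:subflatness-forward,lem:subflatness-backward}), and your co-matching obstruction argument, while more elaborate than the paper's two-fold pigeonhole in \cref{lem:subflatness-backward}, is correct.

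For (1)$\Rightarrow$(3) you have pinpointed the real difficulty but not its resolution. Your worry that intermediate game positions need not be co-matching-free is exactly right, and the paper does \emph{not} solve it by tracking growth of the co-matching index round by round. Instead it introduces the notion of \emph{$\MM$-similarity} (two graphs $G_1,G_2$ are $\MM$-similar if $G_1\ominus\MM=G_2\ominus\MM$) and proves the stronger inductive statement: if $G$ is $m$-similar to some $G_0$ of co-matching-index $<t$ and $\frk_{3tr,k}(G)\le d$, then $\sfrk_{r,k'}(G)\le d$ (\cref{lem:subflipper-inductive}). The point is that the approximation is always applied to the fixed reference $G_0$ (and its induced subgraphs), never to the possibly wilder intermediate position; the refinement identity $(G\ominus\MM)\ominus\QQ=G\ominus\QQ$ of \cref{lem:refine} then transports the resulting subflip back to $G$. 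Your round-by-round simulation sketch does not provide such a mechanism.

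The genuine gap is in your (4)$\Rightarrow$(1). You claim that the approximation theorem ``turns a realisation of an $r$-crossing as a $k$-flip into a realisation of a coarser $\lfloor r/3t\rfloor$-crossing as a $kt^k$-subflip''. It does not. The theorem only guarantees that the subflip $H'$ satisfies $3t\cdot\dist_{H'}\ge\dist_H$; this is a one-sided distance bound and says nothing about the \emph{structure} of $H'$. In particular there is no reason the subdivided-biclique pattern of the crossing survives in~$H'$: edges of the crossing may disappear, so the paths $\pi_{i,j}$ need not exist in $H'$ at all. The paper's argument for this direction (\cref{thm:forbidden-subflips}) does not use the approximation theorem. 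Instead, given that arbitrarily large $r$-crossings arise as $k$-flips of induced subgraphs, it first Ramsey-normalises so that the flip partition coincides with the layer partition $\LL=\{L_0,\ldots,L_{r+1}\}$, and then checks layer by layer that every pair $(L_i,L_j)$ in the flip relation must be fully adjacent in the original graph---because the only alternative produces a large semi-induced co-matching, contradicting semi-ladder-freeness. Hence the flip is already a subflip. This structural analysis is what you are missing.
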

Naturally, the definition of \emph{subflip-flatness} is obtained by replacing flips with subflips in the definition of flip-flatness (\Cref{def:flip-flatness}). The \emph{subflipper-game} is similarly derived from the \emph{flipper-game}, which is a game theoretic characterization of monadic stability~\cite{flipper-game}.
The last item restricts a characterization of monadic stability by forbidden induced subgraphs~\cite{dreier2024stablemc,dreier2024flipbreakability} to the semi-ladder-free setting. 
We provide the formal definitions in the respective sections. 
\subsection*{Contribution 4: Collapse of (existential) positive MSO}

Finally, in exploring the possibility of generalizing our results to the more expressive monadic second-order logic (MSO), we uncover that on relational structures (existential) positive MSO collapses to (existential) positive FO.

\begin{theorem}
    Every (existential) positive MSO formula $\phi(\bar x, \bar X)$ is equivalent to an (existential) positive FO formula.
\end{theorem}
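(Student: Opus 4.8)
The plan is to exploit \emph{monotonicity}: in a positive MSO formula every set variable occurs only positively (only through membership atoms $x \in Y$, never through a negated membership), so the formula is monotone in each set parameter. Consequently an existential set quantifier is optimised by taking the set as large as possible, and a universal set quantifier by taking it as small as possible; in either case the quantifier can be eliminated by substituting a fixed set. First I would fix the book-keeping: I interpret the free set variables $\bar X$ as unary relation symbols adjoined to the signature, so that $x \in X_i$ becomes the atomic FO formula $X_i(x)$, and I read the claim as asserting that every positive MSO formula is equivalent, over this expanded relational signature, to a positive FO formula containing no set quantifiers.

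The key lemma is a monotonicity statement, proved by a routine structural induction: for every positive MSO formula $\psi$ and every set variable $Y$ in it, if $\struc{A} \models \psi(\ldots, C)$ and $C \subseteq C'$ then $\struc{A} \models \psi(\ldots, C')$, with all other parameters fixed. The base cases are immediate for relational and equality atoms (which do not mention $Y$) and for the atom $y \in Y$ (where it is just $y \in C \Rightarrow y \in C'$), and the inductive step passes through $\wedge$, $\vee$, and all four quantifiers precisely because positivity forbids any negated occurrence of $Y$. Monotonicity yields the two substitution rules that drive the proof: $\exists Y\, \psi \equiv \psi[Y := V]$, obtained by replacing every atom $y \in Y$ with $\top$ (taking $Y$ to be the whole domain $V$), and dually $\forall Y\, \psi \equiv \psi[Y := \emptyset]$, obtained by replacing every $y \in Y$ with $\bot$.

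With these rules I would eliminate bound set quantifiers by induction on the structure of $\phi$. Atoms, $\wedge$, $\vee$, and the first-order quantifiers are handled by applying the induction hypothesis to subformulas. For $\exists Y\, \psi$, the hypothesis rewrites $\psi$ as a positive FO formula $\psi'$ over the signature expanded by $Y$ and the still-free set variables; since $Y$ occurs only positively in $\psi'$, the substitution rule gives $\exists Y\, \psi' \equiv \psi'[Y := V]$, a positive FO formula no longer mentioning $Y$. The case $\forall Y\, \psi$ is symmetric via $\psi'[Y := \emptyset]$. Iterating removes all set quantifiers and leaves a positive FO formula over the expanded signature, settling the positive case. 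In the \emph{existential} positive fragment only $\exists Y$ occurs, so only the substitution $Y := V$ (introducing $\top$) is ever applied; in particular no universal quantifier and no $\bot$ is created, and the output is an existential positive FO formula.

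The one genuine subtlety, which I would flag as the main obstacle, is the treatment of the constants $\top$ and $\bot$ produced by the substitutions. Here $\top$ is harmless and is even definable as $\exists x\,(x = x)$, but $\bot$ (arising from $\forall Y$ with $Y := \emptyset$) has no negation-free FO equivalent over arbitrary relational structures, since every positive sentence holds in the one-element structure whose relations are all full. I would therefore adopt the standard convention that the positive fragment is closed under the Boolean constants $\top$ and $\bot$, which contain no negated atom, so that the substitution rules remain inside positive FO; equivalently, one checks that after propagating these constants through $\wedge$ and $\vee$ at most a single top-level constant survives, recording exactly the unavoidable fact that positive MSO can express the everywhere-false formula $\forall Y\,(x \in Y)$ while constant-free positive FO cannot. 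With this convention fixed the induction is complete, and the existential positive case needs no constants beyond the definable $\top$.
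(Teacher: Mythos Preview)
Your proposal is correct and follows essentially the same argument as the paper: both use monotonicity of positive formulas in each set variable to justify replacing $\exists Y\,\psi$ by $\psi[Y:=\top]$ and $\forall Y\,\psi$ by $\psi[Y:=\bot]$, then iterate to strip all set quantifiers. Your additional discussion of the $\top/\bot$ constants is a point the paper does not address; the paper simply writes $\psi^\top$ and $\psi^\bot$ and implicitly assumes the positive fragment admits these constants.
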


This theorem is not hard to prove.
However, to the best of our knowledge, it was not known before in the literature, and we find it worth stating.

\subsection*{Organization of the paper}
We give preliminaries in \Cref{sec:preliminaries}.
We further discuss self-loops in \Cref{sec:self-loops}.
In \Cref{sec:subflips,sec:sub-flatness,sec:subflipper-rank,sec:forbidden-subflips} we introduce subflips and prove the combinatorial characterizations of the semi-ladder-free fragment of monadic stability, that amount to \Cref{thm:helly-characterization}.
In \Cref{sec:normal-form,sec:nep}, we characterize the monadic non-equality-property.
In \Cref{sec:sparsification-into-subgraphs,sec:sbe} we show how to find sparse preimages for \expos-transductions as subgraphs (\Cref{thm:sparsification}).
In \Cref{sec:positive-mso}, we show how (existential) positive MSO collapses to (existential) positive FO.
\section{Preliminaries}\label{sec:preliminaries}

We use standard notation from graph theory and model theory
and refer to~\cite{Diestel} and~\cite{Hodges} for more background. 
We write $[m]$ for the set of integers $\{1,\ldots,m\}$.

\subsection*{Colored graphs}

In the context of logic we treat a \emph{graph} as a finite structure whose signature consists of a single binary relation symbol~$E$, interpreted as a symmetric edge relation. 
Given a graph $G$, we write~$V(G)$ for its vertex set and $E(G)$ for its edge set. 
We also allow self-loops, i.e., we may have $E(v,v)$ for vertices~$v$ (see \Cref{sec:self-loops} for a discussion of this assumption). 
A graph is called \emph{reflexive} if every vertex has a self-loop and \emph{irreflexive} if no vertex has a self-loop.
A graph $H$ is a \emph{subgraph} of $G$ if $V(H)\subseteq V(G)$ and
$E(H)\subseteq E(G)$. 
For graphs~$G,H$ (whose vertex sets may overlap) we write $G\cup H$ for the graph with vertex set $V(G)\cup V(H)$ and edge set $E(G)\cup E(H)$. We sometimes write $G \uplus H$ if the vertex sets of $G$ and $H$ are disjoint.
We recall the following definition from the introduction.

\defSemiinduced*

The \emph{co-matching-index} of a graph $G$ is the largest number $t$ such that $G$ contains a semi-induced co-matching of order $t$.

A \emph{monadic extension} of the graph signature is obtained by adding (finitely many) unary predicate symbols. 
We usually do not distinguish between a relation symbol and its interpretation. 
We refer to unary predicates as \emph{colors}.
A \emph{colored graph} is a graph expanded by a (finite) set of
colors. 

If $\Cc$ is a class of graphs and $\sigma^+$ is a monadic extension of the graph signature, we write $\Cc[\sigma^+]$ for the class of all $\sigma^+$-expansions of graphs from~$\Cc$.

\subsection*{First-order logic}

First-order logic over a signature $\sigma$ is defined in 
the usual way.
We usually write $\bar x$ for tuples $(x_1,\ldots, x_k)$
of variables, $\bar a$ for tuples $(a_1,\ldots, a_\ell)$ of elements and leave it to the context to determine the length of the tuple.
A formula is \emph{existential} if it contains no universal quantifiers and negation is applied
only to quantifier-free subformulas
A formula is \emph{positive} if it does not use negations (in particular it cannot use $\neq$). 
We call an existential positive formula an \emph{\expos-formula}. 

As a shorthand, we write $\dist(x,y) \leq r$ for the existential positive formula
\[
    \exists z_1 \ldots \exists z_{r-1}\ \big(E(x,z_1) \wedge
    E(z_1,z_2) \wedge \ldots \wedge E(z_{r-2},z_{r-1}) \wedge E(z_{r-1},y)\big),
\]
stating that two elements in a graph have distance at most $r$. 
Note that we do not require the $z_i$ to be distinct, and hence also elements whose distance is strictly smaller than $r$ can satisfy this formula, in particular in the presence of self-loops.


For a formula $\phi(\bar x)$ and a (colored) graph $G$, we denote by $\phi(G)$ the set all tuples $\bar v$ of vertices of $G$ such that $G$ satisfies $\phi(\bar v)$, 
that is, $\phi(G)=\{\bar v \in V(G)^{|\bar x|}: G\models \phi(\bar v)\}$. 
A \emph{simple interpretation}~$\trans I$ of graphs in (colored) graphs is a pair $(\nu(x), \eta(x,y))$ consisting of two formulas (over the signature of (colored) graphs), where $\eta$ is symmetric, i.e.,
\mbox{$\models \eta(x,y)\leftrightarrow\eta(y,x)$}. 
If $G^+$ is a (colored) graph, then $H=\trans I(G^+)$ is the graph with vertex set $V(H)=\nu(G^+)$ and edge set $E(H)=\eta(G^+)\cap \nu(G^+)^2$.

For a set $\sigma$ of colors, the \emph{coloring operation} $\Gamma_\sigma$ maps a graph $G$ to the set~$\Gamma_\sigma(G)$ of all its $\sigma$-colorings.
A transduction $\trans T$ is the composition
$\trans I\circ\Gamma_\sigma$ of a  coloring operation~$\Gamma_\sigma$ and a simple interpretation $\trans I$ of graphs in $\sigma$-colored graphs.
In other words, for every graph $G$ we have
$\trans T(G)=\{\trans I(H^+): H^+\in\Gamma_\sigma(G)\}$.
In many cases we will deal with transductions where the domain formula~$\nu(x)$ is simply true. In this case we may simply specify the formula $\eta(x,y)$, which by the color predicates it uses fully determines a transduction. We speak of an \emph{$\eta$-transduction}. 
For a class~$\Cc$ of graphs we define $\trans{T}(\CC) := 
\bigcup_{G\in\CC} \trans T(G)$, and we say that a class $\Dd$ is a \emph{transduction} of $\CC$, or that $\CC$ \emph{transduces} $\DD$, if there exists a transduction $\trans T$ such that $\Dd\subseteq \trans{T}(\CC)$. 
A transduction is an \emph{\expos-transduction} if the formulas in the simple interpretation are \expos-formulas. 

\medskip
A class $\Cc$ of graphs is \emph{monadically dependent} if and only if it cannot transduce the class of all graphs, it is \emph{monadically stable} if and only if it cannot transduce the class of all half-graphs~\cite{baldwin1985second}. 
Equivalently, a class is monadically stable if and only if it is monadically dependent and half-graph-free~\cite{nevsetvril2021rankwidth}.  

\medskip
We next collect some properties about transductions. 
It is well-known that transductions compose, see e.g.~\cite{gajarsky2020first}. 
As the next lemma shows the same is true for \expos-transductions. 
For the proof one can follow the lines of the proof in~\cite{gajarsky2020first} and note that the composition of positive existential formulas is again a positive existential formula.

\begin{lemma}\label{lem:expos-transitive}
The class of \expos-transductions is closed under composition. More precisely, let
\[
\trans T_1  =  \trans I_1\circ \Gamma_{\sigma_1}
\qquad\text{and}\qquad
\trans T_2  =  \trans I_2\circ \Gamma_{\sigma_2}
\]
be \expos-transductions with disjoint color sets $\sigma_1,\sigma_2$. 
Then there exists an \expos-transduction $\trans T = \trans I\circ \Gamma_\sigma$, where $\sigma=\sigma_1\cup\sigma_2$, such that for every graph $G$,
\[
\trans T(G) \supseteq (\trans T_2\circ \trans T_1)(G).
\]
\end{lemma}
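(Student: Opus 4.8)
The plan is to follow the standard composition argument for transductions (as in \cite{gajarsky2020first}) and verify that positivity and existentiality survive each step. The composition $\trans T_2\circ\trans T_1$ works as follows: to obtain a graph in $(\trans T_2\circ\trans T_1)(G)$, one first colors $G$ with colors from $\sigma_1$, applies the interpretation $\trans I_1=(\nu_1,\eta_1)$ to obtain an intermediate graph $G_1$, then colors $G_1$ with colors from $\sigma_2$, and finally applies $\trans I_2=(\nu_2,\eta_2)$. The key idea is that a $\sigma_2$-coloring of $G_1$ can be simulated directly on $G$: since the vertices of $G_1$ are a subset of the vertices of $G$ (the interpretation does not create new elements, only a new edge relation on the surviving vertices), any $\sigma_2$-coloring of $V(G_1)$ extends to a $\sigma_2$-coloring of $V(G)$, and conversely every $\sigma_2$-coloring of $V(G)$ restricts to one of $V(G_1)$. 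Hence we may perform a single combined coloring with $\sigma=\sigma_1\cup\sigma_2$ (the disjointness of the color sets ensures no clash), and the composed transduction $\trans T=\trans I\circ\Gamma_\sigma$ is built from a single interpretation $\trans I=(\nu,\eta)$.

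First I would construct the composed interpretation by \emph{relativizing} $\trans I_2$ through $\trans I_1$. Concretely, $\eta_2(x,y)$ refers to the edge relation $E$ of the intermediate graph $G_1$; each atomic occurrence $E(s,t)$ in $\eta_2$ must be replaced by $\eta_1(s,t)$ (or $\eta_1(s,t)\wedge\nu_1(s)\wedge\nu_1(t)$ to respect the domain), and each occurrence of a $\sigma_2$-color predicate is kept unchanged since it now refers directly to the combined coloring of $G$. Similarly, each quantifier $\exists z$ in $\trans I_2$'s formulas must be relativized to the domain $\nu_1$, i.e.\ replaced by $\exists z\,(\nu_1(z)\wedge\cdots)$, and the domain of the composite becomes $\nu(x)=\nu_1(x)\wedge\nu_2'(x)$, where $\nu_2'$ is the relativized version of $\nu_2$. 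This yields a simple interpretation computing exactly the double application $\trans I_2(\trans I_1(G^+))$ for any combined coloring $G^+$.

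Next I would check that positivity and existentiality are preserved by these syntactic substitutions. This is the crux of what distinguishes the \expos-case from the general one, but it is genuinely easy: substituting a positive formula ($\eta_1$, which is positive since $\trans T_1$ is an \expos-transduction) for a positive atom $E(s,t)$ inside another positive formula produces a positive formula, because no substitution introduces a negation and both building blocks are negation-free. Likewise, relativizing an existential quantifier by conjoining $\nu_1(z)$ (itself an \expos-formula) keeps the formula existential, as we only add existential quantifiers and positive conjuncts, never a universal quantifier. The only subtlety is that relativization by $\nu_1$ is normally done by a \emph{conjunction} under existential quantifiers (correct for \expos-formulas) rather than an implication under universal quantifiers (which would introduce a negation); since our formulas are existential, the conjunctive relativization is exactly right and stays within the \expos-fragment.

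Finally I would conclude the containment. For any graph $G$ and any $H\in(\trans T_2\circ\trans T_1)(G)$, there is a $\sigma_1$-coloring and a subsequent $\sigma_2$-coloring witnessing $H$; combining these (legitimately, by disjointness of $\sigma_1,\sigma_2$) into a single $\sigma$-coloring $G^+$ and applying the constructed $\trans I$ gives $\trans I(G^+)=H$, so $H\in\trans T(G)$. Thus $\trans T(G)\supseteq(\trans T_2\circ\trans T_1)(G)$, and $\trans T$ is an \expos-transduction. The inclusion is stated as $\supseteq$ rather than equality because the coloring operation $\Gamma_\sigma$ also allows colorings that do not arise from a two-stage process, but this only enlarges $\trans T(G)$, which is harmless. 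I expect the main obstacle to be purely bookkeeping: carefully tracking the relativization of quantifiers and the substitution of $\eta_1$ into $\eta_2$ so that the resulting formula is syntactically well-formed and provably equivalent to the two-step interpretation; there is no genuine difficulty, only the need to confirm that the \expos-fragment is closed under exactly the operations the classical proof uses.
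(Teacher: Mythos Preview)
Your proposal is correct and follows exactly the approach the paper takes: the paper simply remarks that one can follow the standard composition proof from \cite{gajarsky2020first} and observe that substituting \expos-formulas into \expos-formulas (and relativizing existential quantifiers by conjunction with an \expos-domain formula) yields \expos-formulas again. You have spelled out precisely this argument in more detail than the paper does.
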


We further show how to combine transductions that operate on parts of a graph. 


\newcommand{\Glue}{\trans{Glue}}
\begin{lemma}\label{lem:gluing-transductions}
    For every (\expos-)transduction $\trans T$ and $s\in\mathbb{N}$, there is an (\expos-)transduction $\Glue(\trans T,s)$ with the following property. 
    For every graph $G$ and sets $U_1,\ldots, U_s\subseteq V(G)$, we have 
    \[
    \Glue(\trans T,s)(G) \supseteq
    \Big\{\, H_1\cup\cdots\cup H_s \;\Big|\;
    H_i\in \trans T(G[U_i])\text{ for all }i\in [s] \,\Big\}.
    \]
\end{lemma}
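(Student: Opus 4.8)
The plan is to combine the $s$ witnessing colorings into a single coloring over an enlarged palette and to relativize the interpretation formulas to each set $U_i$. Write $\trans T = \trans I\circ\Gamma_\sigma$ with $\trans I = (\nu(x),\eta(x,y))$. For the glued transduction I would take the color set $\sigma' := \{M_1,\dots,M_s\}\cup\bigcup_{i\in[s]}\sigma^{(i)}$, where $\sigma^{(1)},\dots,\sigma^{(s)}$ are $s$ disjoint copies of $\sigma$ (write $c^{(i)}$ for the copy of $c\in\sigma$ in $\sigma^{(i)}$) and $M_1,\dots,M_s$ are fresh unary predicates. The intended meaning is that $M_i$ marks the set $U_i$, while the $i$-th copy of $\sigma$ records, on the vertices of $U_i$, the coloring that a witnessing coloring of $G[U_i]$ assigns to them. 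Since a vertex may lie in several of the $U_i$, it is essential that a colored graph may carry several colors simultaneously; this is exactly why we use $s$ disjoint copies rather than a single copy of $\sigma$.

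The second ingredient is a relativization operation. For a $\sigma$-formula $\psi$ I define $\psi^{(i)}$ by replacing every color atom $c(z)$ with its copy $c^{(i)}(z)$, leaving edge atoms $E(z,z')$ unchanged, and restricting quantifiers to $U_i$: every $\exists z\,\chi$ becomes $\exists z\,(M_i(z)\wedge\chi^{(i)})$ (and, in the general \FO\ case, every $\forall z\,\chi$ becomes $\forall z\,(M_i(z)\rightarrow\chi^{(i)})$). Crucially, in the existential positive case this only ever adds conjunctions with the positive atom $M_i(z)$, so $\psi^{(i)}$ is again an \expos-formula. I then set
\[
    \nu'(x) := \bigvee_{i\in[s]}\big(M_i(x)\wedge\nu^{(i)}(x)\big),
    \qquad
    \eta'(x,y) := \bigvee_{i\in[s]}\big(M_i(x)\wedge M_i(y)\wedge\nu^{(i)}(x)\wedge\nu^{(i)}(y)\wedge\eta^{(i)}(x,y)\big),
\]
and let $\Glue(\trans T,s) := \trans I'\circ\Gamma_{\sigma'}$ with $\trans I' = (\nu',\eta')$. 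Note that $\eta'$ is symmetric because $\eta$ is, and that $\nu',\eta'$ remain \expos\ in the existential positive case.

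The correctness argument rests on one routine induction, which I would isolate as a claim: for every $\sigma$-formula $\psi(\bar x)$, every $i\in[s]$, every tuple $\bar a$ over $U_i$, every $\sigma$-coloring $G_i^+$ of $G[U_i]$, and every $\sigma'$-coloring $G'$ of $G$ that sets $M_i := U_i$ and copies the colors of $G_i^+$ into the $i$-th copy on the $M_i$-marked vertices, one has $G_i^+\models\psi(\bar a)\iff G'\models\psi^{(i)}(\bar a)$. The base cases use that $G[U_i]$ is an \emph{induced} subgraph (so edge atoms on pairs from $U_i$ agree in $G[U_i]$ and in $G$) and that the $i$-th copy of $\sigma$ faithfully records the coloring; the inductive step for quantifiers is precisely what relativization to $M_i$ is designed to handle. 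Given a target $H = H_1\cup\cdots\cup H_s$, I would fix witnessing colorings $G_i^+\in\Gamma_\sigma(G[U_i])$ with $H_i = \trans I(G_i^+)$ and define $G'$ by marking $U_i$ with $M_i$ and, for $v\in U_i$, placing $c^{(i)}$ on $v$ exactly when $v$ carries $c$ in $G_i^+$. Applying the claim to $\nu$ and $\eta$ then gives $\nu'(G') = \bigcup_i V(H_i) = V(H)$ and $\eta'(G')\cap\nu'(G')^2 = \bigcup_i E(H_i) = E(H)$, so $\trans I'(G') = H$ and hence $H\in\Glue(\trans T,s)(G)$, which proves the claimed inclusion.

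I expect no serious obstacle: the content is entirely bookkeeping. The one point that genuinely needs care is the interaction between the \emph{union} of the $H_i$ and the interpretation convention — an edge of $H$ must originate from a single $H_i$ both of whose endpoints survive $\nu$ — which is why each disjunct of $\eta'$ conjoins $\nu^{(i)}(x)\wedge\nu^{(i)}(y)$ rather than relying solely on the global intersection with $\nu'(G')^2$ (which then becomes vacuous). The second thing worth verifying is that relativization keeps us inside the \expos\ fragment, which holds because restricting an existential quantifier is a negation-free, positive operation.
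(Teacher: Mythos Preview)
Your proposal is correct and takes essentially the same route as the paper: $s$ disjoint copies of the color palette, membership predicates for the $U_i$, relativized copies $\nu^{(i)},\eta^{(i)}$ of the interpretation formulas, and a disjunction over $i\in[s]$. The only noteworthy differences are that you conjoin $\nu^{(i)}(x)\wedge\nu^{(i)}(y)$ into each disjunct of $\eta'$ --- which is in fact slightly more careful than the paper's version, which silently assumes $\eta$ already implies $\nu$ on both arguments --- and that the paper relativizes universal quantifiers via a separate complement color $\overline{U_i}$ rather than an implication, a cosmetic choice that is immaterial for the general-FO and EP cases covered by the statement.
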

\begin{proof}
Assume $\trans T=\mathsf I\circ \Gamma_{\sigma}$ for color set $\sigma$. 
Let $\sigma'$ consist of the following colors:
\begin{itemize}
    \item for each $i\in[s]$, colors $U_i$ and $\overline{U_i}$ (intended to mark (non)membership in the set $U_i\subseteq V(G)$),
    \item for each $i\in[s]$ and each color $C\in\sigma$, a color $C^{(i)}$ (intended to represent the $\sigma$-coloring used to witness an output of $\trans T$ on $G[U_i]$).
\end{itemize}

We define $\Glue(\trans T,s)$ as $\mathsf I'\circ \Gamma_{\sigma'}$ for a suitable interpretation $\mathsf I'=(\nu',\eta')$ of graphs in $\sigma'$-colored graphs.
For each $i\in[s]$, we define new formulas $\nu_i(x)$ and $\eta_i(x,y)$ over the signature of $\sigma'$-colored graphs by recursively replacing in $\nu$ and $\eta$:
\begin{itemize}
    \item every color predicate $C(z)$ by $C^{(i)}(z)$,
    \item every existentially quantified formula $\exists z\ \psi(\bar z, z)$ by $\exists z\ U_i(z) \wedge \psi(\bar z, z)$, and
    \item every universally quantified formula $\forall z\ \psi(\bar z, z)$ by $\forall z\ \overline{U_i}(z) \vee \psi(\bar z, z)$.
\end{itemize}
Intuitively, $\nu_i$ and $\eta_i$ evaluate $\nu$ and $\eta$ inside the induced subgraph $G[U_i]$,
while reading the $\sigma$-colors from the palette $\{C^{(i)}:C\in\sigma\}$.
We define $\mathsf I'=(\nu',\eta')$ by
\[
\nu'(x) \ :=\  \bigvee_{i=1}^s U_i(x) \wedge \nu_i(x),
\qquad\qquad
\eta'(x,y) \ :=\  \bigvee_{i=1}^s U_i(x) \wedge U_i(y) \wedge \eta_i(x,y).
\]
Then $\eta'$ is symmetric because each $\eta_i$ is symmetric (as $\eta$ is) and we only take a disjunction.

\smallskip
Let $G$ be a graph and let $U_1,\ldots,U_s\subseteq V(G)$ be given. 
We show that 
$\Glue(\trans T,s)(G) \supseteq \{\, H_1\cup\cdots\cup H_s : H_i\in \trans T(G[U_i])\text{ for all }i\in [s] \,\}$. 
Consider arbitrary graphs $H_i\in \trans T(G[U_i])$ for $i\in[s]$.
By definition of~$\trans T$, for each $i$ there exists a $\sigma$-colored graph $G_i^+\in\Gamma_\sigma(G[U_i])$ such that $H_i=\mathsf I(G_i^+)$.
Consider the $\sigma'$-coloring $G^+\in\Gamma_{\sigma'}(G)$ that for each $i\in[s]$, colors exactly the vertices of $U_i$ with the color $U_i$ and for each $i\in[s]$ and each $C\in\sigma$, and each vertex $v\in U_i$ satisfies $C^{(i)}(v)$ in $G^+$ if and only if $C(v)$ holds in the $\sigma$-coloring~$G_i^+$ (just as expected). 

It is now easy to see that 
\[
\mathsf I'(G^+) \ =\ H_1\cup\cdots\cup H_s.
\]
Indeed, fix $i\in[s]$. By construction of $G^+$, the induced subgraph $G[U_i]$ together with the colors $\{C^{(i)}:C\in\sigma\}$ coincides with the given $\sigma$-colored graph $G_i^+$.
Moreover, inside $G^+$ the replacements defining $\nu_i$ and $\eta_i$ ensure that $\nu_i$ and $\eta_i$ evaluate exactly like $\nu$ and $\eta$ do in $G_i^+$.
Consequently, the graph with vertex set $\nu_i(G^+)$ and edge set $\eta_i(G^+)\cap \nu_i(G^+)^2$ is precisely $\mathsf I(G_i^+)=H_i$.
Since $\nu'=\bigvee_i \nu_i$ and $\eta'=\bigvee_i \eta_i$, the interpretation $\mathsf I'(G^+)$ is exactly the union $H_1\cup\cdots\cup H_s$.
Hence, $H_1\cup\cdots\cup H_s\in \Glue(\trans T,s)(G)$, proving the desired inclusion.

Finally, it is immediate from the fact that the composition of \expos-formulas is again an \expos-formula, that if~$\trans T$ is an \expos-transduction, then $\Glue(\trans T,s)$ is an \expos-transduction.
\end{proof}


Finally, we note the following fact.
\begin{fact}\label{fact:type-bound}
    For every signature $\sigma$ and $q,t \in \N$, there is a number $N \in \N$, such that there are at most~$N$ many pairwise non-equivalent $\sigma$-formulas of quantifier rank $q$ and with $t$ free variables.
\end{fact}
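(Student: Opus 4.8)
The plan is to prove this by induction on the quantifier rank $q$, exploiting that $\sigma$ is a finite relational signature so that only finitely many atomic formulas can be built from a fixed finite list of variables. For fixed $\sigma$ and $t$, write $F(q,t)$ for the number of pairwise non-equivalent $\sigma$-formulas of quantifier rank at most $q$ whose free variables lie among $x_1,\ldots,x_t$. I would show $F(q,t)<\infty$ for all $q,t\in\N$, which is exactly the claim (taking $N=F(q,t)$).

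For the base case $q=0$, I would first count the atomic formulas over $x_1,\ldots,x_t$. Each is either an equality $x_i=x_j$ (of which there are $t^2$) or an application $R(x_{i_1},\ldots,x_{i_a})$ of a relation symbol $R\in\sigma$ of arity $a$ to variables from the list (of which there are $t^a$ for each such $R$). Since $\sigma$ is finite and each symbol has finite arity, the total number $A=A(t)$ of atomic formulas is finite. A quantifier-free formula is a Boolean combination of these $A$ atoms, and up to logical equivalence such a combination is determined by the Boolean function it induces on the $2^{A}$ truth-assignments to the atoms; hence $F(0,t)\le 2^{2^{A}}<\infty$.

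For the inductive step, I would invoke the standard syntactic observation that, by structural induction on formulas, every $\sigma$-formula is a Boolean combination of atomic formulas and formulas whose outermost symbol is a quantifier. Restricting to quantifier rank at most $q+1$, each such quantified top-level subformula has the shape $\exists y\,\psi$ or $\forall y\,\psi$ with $\psi$ of quantifier rank at most $q$ and free variables among $x_1,\ldots,x_t,y$. By the induction hypothesis applied with $t+1$ variables, there are at most $F(q,t+1)$ non-equivalent choices for $\psi$, hence at most $2\,F(q,t+1)$ non-equivalent quantified building blocks. Together with the $A$ atoms this gives $B:=A+2\,F(q,t+1)$ building blocks up to equivalence, and every formula of quantifier rank at most $q+1$ is equivalent to a Boolean combination of them. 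As in the base case, the number of non-equivalent Boolean combinations of $B$ fixed formulas is at most $2^{2^{B}}$, so $F(q+1,t)\le 2^{2^{B}}<\infty$, completing the induction.

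I expect the only genuinely delicate point to be the normal-form reduction in the inductive step: the claim that, up to logical equivalence, a formula of quantifier rank at most $q+1$ is a Boolean combination of atoms and single-quantifier-prefixed formulas of rank at most $q$. This is a routine structural induction on the formula (atoms are atoms; Boolean connectives stay inside Boolean combinations; a top-level $\exists y\,\psi$ or $\forall y\,\psi$ is itself an allowed building block since dropping the outer quantifier drops the rank to at most $q$), but it is the step that makes the counting go through, and it is also where finiteness of $\sigma$ is used to keep the number of atoms finite. Everything else reduces to the elementary double-exponential bound on the number of Boolean combinations of finitely many formulas.
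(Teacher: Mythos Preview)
Your proof is correct and is the standard textbook argument; the paper simply states this as a \emph{Fact} without proof, so there is nothing to compare. One minor remark: you tacitly assume $\sigma$ is a finite relational signature, which is indeed the setting of the paper (the graph signature plus finitely many colors), so this is fine.
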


\section{On self-loops}\label{sec:self-loops}

In this section, we justify the non-standard assumption that we consider graphs with self-loops in the context of positive transductions.

\subsection{Self-loops in general transductions}
We first observe that in the case of general (i.e., not necessarily positive) transductions, whether we forbid/allow/enforce self-loops has no influence on the expressive power of transductions at all.
Here, whether a vertex has a self-loop or not is essentially the same as being marked with a unary predicate: we can distinguish a ``regular'' edge $uv$ from a self-loop by checking that $\neg (u = v)$. Therefore, regarding the input graph of a transduction, self-loops can neither be used to
\begin{itemize}
    \item \ldots conceal information, as the transduction can recover the graph without self-loops, nor
    \item \ldots add new information, as the transduction can add a unary predicate in the coloring step that carries the same information.
\end{itemize}
Moreover regarding its output, by using one of the following three rewritings,
a transduction $\trans{T}_\phi$ can always be assumed to produce output graphs that have \ldots
\begin{enumerate}[label=(R\arabic*)]
    \item\label{rewriting:no-loops} \ldots no self-loops at all, by
    rewriting $\phi(x,y)$ to $\neg (x = y) \wedge \phi(x,y)$, or
    \item\label{rewriting:all-loops} \ldots a self-loop on every vertex, by
    rewriting $\phi(x,y)$ to $(x = y) \vee \phi(x,y) $, or even
    \item\label{rewriting:some-loops} \ldots self-loops on an arbitrary subset $S$ of vertices, by
    rewriting $\phi(x,y)$ to 
    \[
        \big(\neg (x = y) \wedge \phi(x,y)\big) \vee  
        \big( (x = y) \wedge S(x) \big), 
    \]
    where $S$ is added as a unary predicate in the coloring step of the transduction.
\end{enumerate}
Hence, self-loops are not relevant in the setting of general transductions.

\subsection{Self-loops in positive transductions}

In the setting of positive transductions, self-loops naturally appear and seem unavoidable.
Consider for example the transduction $\trans T_\phi$ specified by
\[
    \phi(x,y) := E(x,y) \vee \big(\exists z\ E(x,z) \wedge E(z,y)\big)
\]
that creates the square of a graph.
Applied to a star, this creates self-loops on every vertex, as depicted in \Cref{fig:star}.

\begin{figure}[htbp]
    \centering
    \includegraphics[scale = 1]{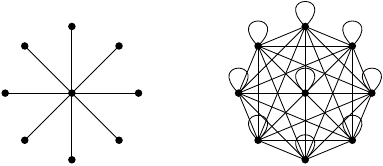}
    \caption{A star and the graph obtained by applying the transduction $\trans{T}_\phi$ to it.}
    \label{fig:star}
\end{figure}

Note that we cannot use \ref{rewriting:no-loops} to make $\phi$ irreflexive, due to the used negation.
Forbidding self-loops would therefore severely restrict our ability to express positive transductions that produce loopless graphs.
For our work the presence of self-loops is crucial.
One could be tempted to modify the definition of a transduction by adding a post-processing step that removes all self-loops from the output graph. Let us call this new notion an \emph{irreflexive-transduction}.
This new notion is no remedy. 
A crucial property of  \expos-transductions is that they are transitive (\Cref{lem:expos-transitive}). For \expos-irreflexive-transductions transitivity fails.
The example in \Cref{fig:loops-co-matching} shows how to create co-matchings from matchings by chaining three \expos-irreflexive-transductions, which is not possible by a single \expos-irreflexive-transduction.
The second transduction is defined by the formula $\phi(x,y) := \exists z\ (E(x,z) \wedge E(z,y))$.
Note that the first \expos-irreflexive-transduction creates a large irreflexive clique from an independent set, which is not possible with an \expos-transduction.

\begin{figure}[htbp]
    \centering
    \includegraphics[scale = 1]{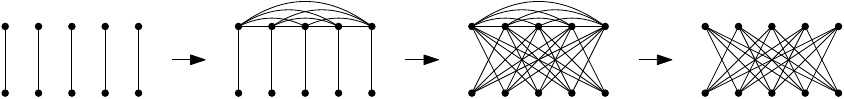}
    \caption{A chain of three \expos-irreflexive-transductions that produce the class of co-matchings from the class of matchings.}
    \label{fig:loops-co-matching}
\end{figure}

As mentioned in the introduction, \Cref{thm:nep-simple} fails in the absence of self-loops:
in the class of cliques without self-loops (this class is semi-ladder-free and monadically dependent) the simple formula $\phi(x,y) = E(x,y)$ has the monadic non-equality-property.


The ``philosophical reason'' as to why the absence self-loops involuntarily increases the expressive power of \expos-formulas is the following:
An edge between two vertices that do not have self-loops carries the additional information that these two vertices are non-equal; an inherently non-positive property.

\section{Subflips}\label{sec:subflips}

In this section, we introduce the new notion of a \emph{subflip}.
For this purpose, let us first review the notion of a \emph{flip} from the literature.

\subsection{Flips and partitions}

Let $\PP$ be a partition of a set $V$.
For every element $v \in V$, we denote by $\PP(v) \in \PP$ the part of $\PP$ that contains $v$.
For a set $S \subseteq V$ we denote by 
\[
    \PP|_S := \{ P \cap S : P \in \PP \}
\]
the \emph{restriction} of $\PP$ to $S$, which is a partition of $S$.
For another partition $\QQ$ on $V$ we denote by
\[
    \PP \wedge \QQ := \{P \cap Q : P \in \PP, Q\in \QQ\}
\]
the \emph{common refinement} of $\PP$ and $\QQ$.

\medskip
Fix a graph~$G$ and a partition~$\PP$ of its vertices.
Let~$F \subseteq \PP^2$ be a symmetric relation.
We define~$H := G \oplus (\PP,F)$ to be the graph with vertex set~$V(G)$,
where we define the edges between distinct vertices \mbox{$u,v\in V(G)$} using the following condition:
\[
    uv \in E(H) \Leftrightarrow \begin{cases}
        uv \notin E(G) & \text{if } (\PP(u), \PP(v)) \in F,\\
        uv \in E(G) & \text{otherwise.}
    \end{cases}
\]
Self-loops are preserved, i.e., for every $v\in V(G)$ we have $vv \in E(H) \Leftrightarrow vv \in E(G)$.
We call~$H$ a \emph{$\PP$-flip} of~$G$.
If~$\PP$ has at most~$k$ parts, we say that~$H$ a \emph{$k$-flip} of~$G$. See \Cref{fig:flips} in the introduction for an example.
We note a few basic facts about flips. 

\begin{fact}\label{lem:flip-basics}
    Let $G$, $H$, $H_1$, $H_2$ be graphs on a vertex set $V$, let
    $\PP,\PP_1, \PP_2$ be partitions of $V$, and let~$S \subseteq V$.
    \begin{enumerate}
        \item Symmetry: If $H$ is a $\PP$-flip of $G$, then $G$ is also a $\PP$-flip of $H$.
        \item Transitivity: If $H_1$ is a $\PP_1$-flip of $G$ and $H_2$ is a $\PP_2$-flip of $H_1$, then $H_2$ is a $(\PP_1 \wedge \PP_2)$-flip of $G$.
        \item Hereditariness: If $H$ is a $\PP$-flip of $G$, then $H[S]$ is a $\PP|_S$-flip of $G[S]$.
    \end{enumerate}
\end{fact}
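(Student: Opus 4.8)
The plan is to verify all three statements directly from the defining equation of $G \oplus (\PP, F)$, using the single structural observation that whether the edge between two distinct vertices $u,v$ gets toggled depends only on the pair of parts $(\PP(u),\PP(v))$ together with $F$, and never on $u,v$ themselves. Since each of the three graphs named in the statement is by definition of the form $G \oplus (\PP, F)$ for some unspecified symmetric $F$, in every case I would first fix a witnessing $F$ and then exhibit the symmetric relation witnessing the claimed flip, checking the two cases $(\PP(u),\PP(v))\in F$ and $(\PP(u),\PP(v))\notin F$ for each distinct pair $u,v$. Self-loops are preserved under $\oplus$ by definition and so require no argument in any of the three parts.

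For symmetry, suppose $H = G \oplus (\PP, F)$. I would show that the \emph{same} relation $F$ witnesses $G = H \oplus (\PP, F)$, i.e.\ that $\oplus(\PP,F)$ is an involution. Indeed, for distinct $u,v$ with $(\PP(u),\PP(v))\in F$ we have $uv\in E(H)\Leftrightarrow uv\notin E(G)$, so toggling a second time recovers $uv\in E(G)$; and for $(\PP(u),\PP(v))\notin F$ neither step toggles the edge. For transitivity, write $\QQ := \PP_1 \wedge \PP_2$ and note that $\QQ(v) = \PP_1(v)\cap\PP_2(v)$ determines both $\PP_1(v)$ and $\PP_2(v)$, since $\PP_1(v)$ is the unique $\PP_1$-part containing $\QQ(v)$ (and similarly for $\PP_2$). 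Passing from $G$ to $H_1$ toggles $uv$ exactly when $(\PP_1(u),\PP_1(v))\in F_1$, and passing from $H_1$ to $H_2$ toggles it exactly when $(\PP_2(u),\PP_2(v))\in F_2$; hence $uv$ differs between $G$ and $H_2$ precisely when exactly one of these two conditions holds. Because this ``exactly one'' condition is a function of $(\QQ(u),\QQ(v))$ by the observation above, I can define a symmetric relation $F\subseteq\QQ^2$ by declaring $(\QQ(u),\QQ(v))\in F$ to mean that exactly one of $(\PP_1(u),\PP_1(v))\in F_1$ and $(\PP_2(u),\PP_2(v))\in F_2$ holds; symmetry of $F$ is inherited from that of $F_1$ and $F_2$, and by construction $H_2 = G \oplus (\QQ, F)$.

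For hereditariness, suppose $H = G \oplus (\PP, F)$ and let $S\subseteq V$. The correspondence $P\mapsto P\cap S$ is a bijection between the parts of $\PP$ meeting $S$ and the parts of $\PP|_S$, and $(\PP|_S)(v) = \PP(v)\cap S$ for $v\in S$. I would define a symmetric relation $F|_S$ on $\PP|_S$ by transporting $F$ along this bijection, so that $((\PP|_S)(u),(\PP|_S)(v))\in F|_S \Leftrightarrow (\PP(u),\PP(v))\in F$. Since the edges of $G[S]$ and $H[S]$ are exactly those of $G$ and $H$ restricted to pairs in $S$, the defining equation for $H[S] = G[S]\oplus(\PP|_S, F|_S)$ then matches that of $H = G\oplus(\PP,F)$ on every such pair. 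None of these verifications is difficult; the only point requiring a moment's thought is the observation in the transitivity argument that the toggle status factors through the common refinement $\QQ$, which is precisely what makes $\PP_1\wedge\PP_2$ the correct partition to name.
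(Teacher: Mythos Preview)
Your proof is correct. The paper states this as a \emph{Fact} without proof, treating all three items as immediate consequences of the definition of $G \oplus (\PP, F)$; your argument spells out exactly the natural verification one would give, and there is nothing to compare.
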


\subsection{Subflips}

\begin{definition}[Fully (non-)adjacent sets]
For a graph $G$ and two sets $A, B \subseteq V(G)$, with possibly $A = B$, we say $A$ and $B$ are \emph{fully adjacent} (\emph{fully non-adjacent}) if every two distinct vertices $a\in A$ and $b \in B$ are adjacent in $G$ (non-adjacent in $G$).
If $A$ and $B$ are disjoint, this means they semi-induce a biclique in $G$ (an edgeless graph in $G$).
If $A = B$, this means $A$ induces a clique (an independent set).    
\end{definition}

Because we only speak about distinct vertices, the above definition makes no assumptions about self-loops.

\begin{definition}[Subflips]
    For a graph $G$ and partition $\PP$ of $V(G)$, we define the \emph{$\PP$-subflip of $G$} to be the graph
    \begin{center}
        $G \ominus \PP := G \oplus (\PP,F)$, where $F := \{(A,B) : \text{$A$ and $B$ are fully adjacent in $G$}\}$. 
    \end{center}
    If~$\PP$ has at most~$k$ parts, we say that~$H$ a \emph{$k$-subflip} of~$G$.
\end{definition}

Note that the structure of subflip differs from that of a flip. We fix a partition $\PP$ but do not supply a flip relation $F$. 
Instead, we always use the relation $F$ that removes the most edges.
The following crucial observation gives the subflips their name.
\begin{observation}
    If $H$ is a subflip of $G$, then $H$ is a subgraph of $G$.
\end{observation}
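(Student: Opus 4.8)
The plan is to unwind the definitions and verify directly that $E(H) \subseteq E(G)$; since a subflip has the same vertex set as $G$ by construction, this containment is all that remains for $H$ to be a subgraph of $G$. Writing $H = G \ominus \PP = G \oplus (\PP,F)$ with $F = \{(A,B) : A,B \text{ fully adjacent in } G\}$, I would fix an arbitrary edge $uv \in E(H)$ and argue that $uv \in E(G)$, splitting into the self-loop case $u = v$ and the case of distinct endpoints $u \neq v$.

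The self-loop case is immediate from the flip definition: the flip operation preserves self-loops, so $vv \in E(H)$ forces $vv \in E(G)$. For distinct $u,v$, set $P := \PP(u)$ and $Q := \PP(v)$, and distinguish whether $(P,Q) \in F$. If $(P,Q) \notin F$, the flip leaves this pair of parts untouched, so $uv \in E(H)$ iff $uv \in E(G)$, and the containment is immediate. If $(P,Q) \in F$, then by definition of $F$ the parts $P$ and $Q$ are fully adjacent in $G$, meaning every two \emph{distinct} vertices, one from $P$ and one from $Q$, are adjacent in $G$; in particular $uv \in E(G)$. But $(P,Q) \in F$ also means the flip complements this pair, so $uv \in E(H)$ iff $uv \notin E(G)$, contradicting $uv \in E(H)$. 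Hence this second case contributes no edges of $H$ whatsoever, and the containment holds vacuously there.

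The conceptual point—and the reason the relation $F$ is tailored exactly this way—is that $F$ only flips pairs of parts that are already fully adjacent, so complementing such a pair can only delete edges, never create them. I do not expect a genuine obstacle here beyond careful bookkeeping: the only subtlety is that \emph{fully adjacent} quantifies over distinct vertices and therefore asserts nothing about self-loops, which is precisely why the self-loop case has to be discharged separately through the loop-preservation clause of the flip rather than through the condition on $F$.
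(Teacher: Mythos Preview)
Your proof is correct and is exactly the natural verification one would write down. The paper in fact states this as an observation without proof; your argument is precisely the definition-unwinding that justifies it, including the careful handling of the self-loop case via the loop-preservation clause of flips.
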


We next establish a basic toolbox for working with subflips.
This will be guided by the symmetry, transitivity, and hereditariness properties that hold for flips (\Cref{lem:flip-basics}).

\subsection{Symmetry}
Clearly, the subflip relation is not symmetric: the independent set $nK_1$ is a subflip of the clique~$K_n$, but the reverse fails because $K_n$ is no subgraph of $nK_1$.
To reverse the subflip operation, we introduce \emph{pure flips} as a middle ground between subflips and regular flips.

\begin{definition}[Pure flips]
    We call a flip $G \oplus (\PP,F)$ of $G$ \emph{pure}, if every pair of parts $(A,B) \in F$ is fully adjacent or fully non-adjacent in $G$.
\end{definition}

\begin{observation}
    If $H$ is a $k$-subflip of $G$, then $H$ and $G$ are pure $k$-flips of each other.
\end{observation}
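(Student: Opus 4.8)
The plan is to unfold the definitions and then invoke the symmetry of flips. Write $H = G \ominus \PP = G \oplus (\PP, F)$, where $F = \{(A,B) : A \text{ and } B \text{ are fully adjacent in } G\}$ and $|\PP| \le k$. I will verify the two directions of the claim separately, using the same partition $\PP$ (hence at most $k$ parts) in both cases.

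First I would show that $H$ is a pure $k$-flip of $G$. This is immediate: $H$ is by definition the $\PP$-flip $G \oplus (\PP, F)$, so it is a $k$-flip, and every flipped pair $(A,B) \in F$ is fully adjacent in $G$ by the very definition of $F$. Since ``fully adjacent'' is one of the two admissible cases in the definition of a pure flip, purity holds with no further work.

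Second, and slightly more involved, I would show that $G$ is a pure $k$-flip of $H$. By the symmetry part of \Cref{lem:flip-basics}, $G$ is a $\PP$-flip of $H$; concretely, since flipping the same pairs twice cancels out and self-loops are preserved throughout, $G = H \oplus (\PP, F)$ for the same relation $F$. It then remains to check purity in $H$, i.e., that every $(A,B) \in F$ is fully adjacent or fully non-adjacent in $H$. Here I would use that $(A,B) \in F$ means $A$ and $B$ are fully adjacent in $G$: passing to $H$ complements the edges between distinct vertices of $A$ and $B$ (including the diagonal case $A = B$, where a clique becomes an independent set), so $A$ and $B$ become fully \emph{non}-adjacent in $H$. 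Thus every flipped pair is fully non-adjacent in $H$, which certifies purity.

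There is essentially no obstacle here; the only points requiring care are to track that the same partition $\PP$ and the same flip relation $F$ witness both directions, and to recall that full (non-)adjacency is defined only for \emph{distinct} vertices, so that self-loops never interfere with the argument.
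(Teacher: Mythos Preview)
Your proof is correct and exactly what is intended; the paper states this as an observation without proof, and your unfolding of the definitions---checking that the flipped pairs are fully adjacent in $G$ and hence fully non-adjacent in $H$, using the same $\PP$ and $F$ in both directions---is the natural justification.
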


Crucially, in reflexive graphs we can transduce pure flips using quantifier-free, positive formulas.

\begin{lemma}[\needsselfloops]\label{lem:trans-pure-flips-reflexive}
    For every $k\in \N$ there is a quantifier-free, positive formula $\phi(x,y)$ such that for every reflexive graph $G$ and pure $k$-flip $H$ of $G$, $H$ $\phi$-transduces $G$.
\end{lemma}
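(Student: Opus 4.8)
The plan is to reconstruct $G$ from $H$ by coloring $H$ with its flip partition plus a constant amount of global bookkeeping, and then writing a positive, quantifier-free formula that ``undoes'' the flip. Fix a partition $\PP=\{P_1,\dots,P_k\}$ and a symmetric $F\subseteq\PP^2$ with $H=G\oplus(\PP,F)$, witnessing that $H$ is a pure $k$-flip of $G$. Purity lets me classify every ordered pair of part-indices $(i,j)$ into one of three behaviors: \emph{copy} if $(P_i,P_j)\notin F$; \emph{force-true} if $(P_i,P_j)\in F$ and $P_i,P_j$ are fully adjacent in $G$; and \emph{force-false} if $(P_i,P_j)\in F$ and $P_i,P_j$ are fully non-adjacent in $G$. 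This yields a symmetric behavior function $b\colon [k]^2\to\{c,t,f\}$ (symmetric because $F$ and full (non-)adjacency are symmetric). Crucially, $b$ depends on the particular flip, whereas the target formula $\phi$ must depend only on $k$; I will bridge this gap through the coloring.

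First I would fix the coloring $H^+$ of $H$: each vertex $v\in P_i$ is marked with a part-color $C_i$, and \emph{every} vertex is additionally marked with a single global flag $B_b$, drawn from the palette $\{B_\beta:\beta \text{ a symmetric } [k]^2\to\{c,t,f\}\}$. Since $k$ is fixed this palette has constant size, so the coloring uses boundedly many colors, as required of a transduction, and $B_b$ acts as a global marker readable at every vertex. I would then take the reconstruction formula to be
\[
\phi(x,y)\;=\;(x=y)\ \vee\ \bigvee_{\beta}\Bigl(B_\beta(x)\wedge B_\beta(y)\wedge \psi_\beta(x,y)\Bigr),
\]
where
\[
\psi_\beta(x,y)\;=\;\bigvee_{(i,j):\,\beta(i,j)=t}\bigl(C_i(x)\wedge C_j(y)\bigr)\ \vee\ \bigvee_{(i,j):\,\beta(i,j)=c}\bigl(C_i(x)\wedge C_j(y)\wedge E(x,y)\bigr).
\]
This $\phi$ is quantifier-free and positive: it uses only $=$ (allowed), the positive atoms $C_i,B_\beta,E$, and $\wedge,\vee$, with no $\neg$ and no $\neq$. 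It is symmetric because $F$ is symmetric, so the index sets $\{(i,j):\beta(i,j)=t\}$ and $\{(i,j):\beta(i,j)=c\}$ are symmetric and each $\psi_\beta$ satisfies $\psi_\beta(x,y)\leftrightarrow\psi_\beta(y,x)$, while $(x=y)$ and $B_\beta(x)\wedge B_\beta(y)$ are visibly symmetric.

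Next I would verify that $\phi(H^+)=E(G)$, so that $\trans{I}(H^+)=G$ and hence $H$ $\phi$-transduces $G$. On $H^+$ every vertex carries exactly the flag $B_b$ and exactly one part-color, so the outer disjunction collapses to $\psi_b$. For distinct $x\in P_i$, $y\in P_j$ the three behaviors reproduce $E(G)$: if $b(i,j)=t$ then $\psi_b$ is forced true by $C_i(x)\wedge C_j(y)$, matching $xy\in E(G)$ by full adjacency; if $b(i,j)=c$ then $\psi_b$ evaluates to $E(x,y)=E(H)(x,y)=E(G)(x,y)$ since the pair is not flipped; and if $b(i,j)=f$ no disjunct fires, matching $xy\notin E(G)$ by full non-adjacency. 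Self-loops are handled uniformly by the $(x=y)$ disjunct, which forces $\phi(x,x)$ true, matching $E(G)(x,x)$ because $G$ is reflexive; this is precisely where the \needsselfloops\ hypothesis is used.

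The step I expect to be the main obstacle is positivity: the naive way to undo a flip, rewriting edges to non-edges via $\neg E(x,y)$, is forbidden. The role of restricting to \emph{pure} flips is exactly to circumvent this, since every flipped pair of parts is uniformly all-adjacent or all-non-adjacent in $G$, so such a pair never requires consulting and negating $E$; it is instead reconstructed by \emph{forcing} the value true (adding the trivially satisfiable conjunct $C_i(x)\wedge C_j(y)$) or false (omitting the disjunct), both positive operations, while $E$ is read only on the unflipped copy pairs. The secondary subtlety, that $\phi$ is one fixed formula while the correct behavior $b$ varies with the input flip, is resolved by the global marker colors $B_\beta$, which let the fixed disjunction select the right case on each input.
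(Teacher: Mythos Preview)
Your proof is correct and follows essentially the same approach as the paper's: classify each pair of parts as copy/force-true/force-false using purity, write a positive quantifier-free formula that branches on the part colors, and absorb the dependence on the particular flip into a bounded palette of global marker colors. The paper in fact proves a slightly more general statement (\Cref{lem:trans-pure-flips}) covering partially reflexive graphs, using $(x=y)\wedge E(x,x)$ in place of your $(x=y)$ clause, but the argument is otherwise the same.
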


Instead of proving \Cref{lem:trans-pure-flips-reflexive}, we prove the following more general lemma, where we only require the vertices in parts that are flipped with themselves to have self-loops.

\begin{lemma}\label{lem:trans-pure-flips}
    For every $k\in \N$ there is a quantifier-free, positive formula $\phi(x,y)$ with the following property.
    Let $G$ be a partially reflexive graph $G$ and $H = G \oplus (\PP,F)$ be a pure $k$-flip of $G$ satisfying
    \begin{equation}\label{eq:tame-loops}
        \forall v \in V(G): \quad (\PP(v),\PP(v)) \in F \Rightarrow vv \in E(G).\tag{$*$}
    \end{equation}
    Then $H$ $\phi$-transduces $G$.
\end{lemma}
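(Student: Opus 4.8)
The plan is to observe that recovering $G$ from $H$ is nothing but re-applying the same pure flip, and to realize this re-flip by a single fixed positive quantifier-free formula together with a suitable coloring of $H$. By the symmetry of flips (\Cref{lem:flip-basics}), $G = H\oplus(\PP,F)$, so it suffices to color $H$ with a palette depending only on $k$ and to exhibit a positive quantifier-free $\phi$ such that on the colored graph $H^+$ the relation $\phi(H^+):=\{(u,v):H^+\models\phi(u,v)\}$ equals $E(G)$ on all pairs, both distinct and diagonal. Taking the domain formula to be true, this yields $G=\trans I(H^+)$ for the interpretation $\trans I=(\mathrm{true},\phi)$, and hence $G\in\trans T_\phi(H)$, i.e. $H$ $\phi$-transduces $G$.

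Write $\PP=\{P_1,\dots,P_k\}$ and classify each ordered pair $(P_i,P_j)$ into one of three \emph{pure} types: type $\mathsf A$ if $(P_i,P_j)\in F$ and $P_i,P_j$ are fully adjacent in $G$; type $\mathsf N$ if $(P_i,P_j)\in F$ and they are fully non-adjacent; and type $\mathsf K$ (``kept'') if $(P_i,P_j)\notin F$. Purity is exactly what makes these three cases exhaustive. To keep $\phi$ uniform in $k$, I color each $v\in P_i$ by the pair $(i,\tau_i)$, where $\tau_i\colon[k]\to\{\mathsf A,\mathsf N,\mathsf K\}$ records the type of $(P_i,P_j)$ for all $j$; the resulting palette has size $k\cdot 3^k$, a function of $k$ alone. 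Writing $\mathrm{type}(C,D)=\tau_i(j)$ for colors $C=(i,\tau_i)$, $D=(j,\tau_j)$ (well defined because $F$ and adjacency are symmetric, so $\tau_i(j)=\tau_j(i)$), I set
\[
\phi(x,y)\ :=\ \bigvee_{\mathrm{type}(C,D)=\mathsf A}\big(C(x)\wedge D(y)\big)\ \vee\ \bigvee_{\mathrm{type}(C,D)=\mathsf K}\big(C(x)\wedge D(y)\wedge E(x,y)\big)\ \vee\ \big(x=y\wedge E(x,y)\big),
\]
which is positive and quantifier-free. For distinct $u\in P_i$, $v\in P_j$ only the term with $C,D$ the colors of $u,v$ can fire: type $\mathsf A$ forces an edge, matching $uv\in E(G)$ since $P_i,P_j$ are fully adjacent; type $\mathsf N$ contributes nothing, matching $uv\notin E(G)$; and type $\mathsf K$ reproduces $E(x,y)$, i.e.\ $uv\in E(H)$, which equals $uv\in E(G)$ because the pair was not flipped.

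The delicate point, where condition~$(*)$ enters, is the diagonal $x=y=v$. Self-loops are preserved by flips, so the appended disjunct $x=y\wedge E(x,y)$ evaluates to $vv\in E(G)$, which faithfully recovers a self-loop whenever the color terms stay silent --- in particular in the otherwise problematic case where $(P_i,P_i)\in F$ with $P_i$ an independent set yet $v$ still carries a loop. The only remaining danger is inventing a self-loop: if $(P_i,P_i)\in F$ with $P_i$ a clique (diagonal type $\mathsf A$), the term $C(x)\wedge C(y)$ fires on $(v,v)$ and forces $\phi(v,v)$ true; this is correct precisely because condition~$(*)$ guarantees $vv\in E(G)$ in exactly this situation, and without it the transduction would create a spurious loop. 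Thus the main obstacle is reconciling positivity --- which forbids isolating the diagonal via $x\neq y$ --- with the need to reproduce self-loops exactly, and $(*)$ is the hypothesis that makes the two compatible. Collecting the cases gives $\phi(H^+)=E(G)$; symmetry of $\phi$ follows from symmetry of the type classification and of $E$, which finishes the proof.
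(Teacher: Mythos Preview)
Your proof is correct and follows essentially the same approach as the paper: a three-way split into flipped-and-fully-adjacent pairs (force an edge), unflipped pairs (keep the $H$-edge), and a dedicated self-loop disjunct $(x=y)\wedge E(x,y)$, with condition~$(*)$ invoked precisely to prevent the ``force'' case from inventing spurious loops on the diagonal. The only cosmetic difference is that you bake the type information $\tau_i$ into the colors from the start (yielding a palette of size $k\cdot 3^k$ and a single uniform $\phi$), whereas the paper first writes a formula depending on the particular $F,F_\top,F_\bot$ and only at the end enlarges the palette to branch over the finitely many possibilities.
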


\begin{proof}
    We have to show that there exists a formula $\phi$ and a coloring $H^+$ of $H$ such that for all $u,v \in V(G)$ we have $H^+ \models \phi(u,v) \Leftrightarrow uv \in E(G)$.
    Let $H^+$ be the coloring of $H$ obtained by marking the parts of~$\PP$.
    We partition $F$ into $F_\top$ and $F_\bot$ where $F_\top$ contains the pairs in $F$ that are fully adjacent in $G$ and $F_\bot$ contains those that are fully non-adjacent in $G$.
    Consider the positive, quantifier-free formula
    \begin{align*}
        \psi(x,y) &:= \alpha(x,y) \vee \beta(x,y)  \vee \gamma(x,y),\\
        \alpha(x,y) &:= (x = y) \wedge E(x,x), \\
        \beta(x,y) &:= \bigvee_{(P,Q) \in F_\top} (x \in P \wedge y \in Q),\\
        \gamma(x,y) &:= \bigvee_{(P,Q) \notin F_\bot} (E(x,y) \wedge x \in P \wedge y \in Q).
    \end{align*}
    \begin{claim}
        For all $u,v \in V(G)$: $H^+ \models \psi(u,v) \Leftrightarrow uv \in E(G)$.
    \end{claim}
    \begin{claimproof}
    Let $P := \PP(u)$ and $Q := \PP(v)$.
    \begin{enumerate}
        \item Assume $uv \in E(G)$.
        The goal is to show that $H^+$ satisfies one of $\alpha(u,v)$, $\beta(u,v)$, and $\gamma(u,v)$.
        \begin{enumerate}
            \item Assume $u = v$. As subflips preserve self-loops, $u$ and $v$ are also adjacent in $H$ and $\alpha$ is satisfied.
            \item Assume $u \neq v$.
            
            \begin{enumerate}
                \item Assume $(P,Q) \in F$. 
                Since $u$ and $v$ are adjacent in $G$, the parts $P$ and $Q$ must be fully adjacent in $G$ by definition of a pure flip. Then $(P,Q) \in F_\top$ and $\beta$ is satisfied.
                \item 
        Assume $(P,Q) \notin F$. Then $u$ and $v$ are adjacent also in $H$.
        Moreover, $(P,Q) \notin F_\bot$, so $\gamma$ is satisfied.
            \end{enumerate}

        \end{enumerate}

        \item Assume $uv \notin E(G)$. The goal is to show that $H^+$ satisfies none of $\alpha(u,v)$, $\beta(u,v)$, and $\gamma(u,v)$. 
        
        \begin{enumerate}
            \item Assume $u = v$. 
            As subflips preserve self-loops, $u$ and $v$ are also non-adjacent in $H$ and $\alpha$ and $\gamma$ are both not satisfied.
            By \eqref{eq:tame-loops}, we have that $(P, Q=P) \notin F \supseteq F_\top$ and $\beta$ is not satisfied, either.

            \item Assume $u \neq v$. Clearly, $\alpha$ is not satisfied.
            
            \begin{enumerate}
                \item Assume $(P,Q) \in F$. Since $u$ and $v$ are non-adjacent in $G$, the parts $P$ and $Q$ must be fully non-adjacent in $G$ by definition of a pure flip. Then $(P,Q) \notin F_\top$ and $(P,Q) \in F_\bot$, and neither $\beta$ nor $\gamma$ are satisfied.
                \item Assume $(P,Q) \notin F \supseteq F_\top$.
                Then $\beta$ is not satisfied. Moreover, $u$ and $v$ are also non-adjacent in $H$ so $\gamma$ is not satisfied.
            \end{enumerate}
        \end{enumerate}
    \end{enumerate}
    Having exhaustively checked all cases, this proves the claim.
    \end{claimproof}

    The presented formula $\psi$ depends not only on $k$, but also on the choice of $F$ and its partition into $F_\top$ and $F_\bot$.
    Note that there are at most $2^{k^2}$ ways to choose and $F_\top$ and $F_\bot$ each. By increasing the color palette, we can encode this information into the colors of every vertex in $H^+$ and construct the desired formula $\phi$ which branches over the bounded number of cases and only depends on $k$.
\end{proof}

To relax the requirements on self-loops a little bit, we show that we can always add a constant number of vertices with arbitrary neighborhoods (one by one).

\begin{lemma}\label{lem:add-one}
    For every quantifier-free, positive formula~$\phi$ there is a quantifier-free, positive formula~$\psi$ such that for every two partially reflexive graphs 
    $G$ and $H$ on the same vertex set $A \uplus \{b\}$: if~$H[A]$ $\phi$-transduces~$G[A]$, 
    then $H$ $\psi$-transduces $G$.
\end{lemma}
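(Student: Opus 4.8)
The plan is to take the coloring of $H[A]$ that witnesses the $\phi$-transduction of $G[A]$, extend it to a coloring $H^+$ of all of $H$ by hard-coding the single new vertex $b$ into a handful of fresh colors, and then define $\psi$ as a disjunction of three positive pieces that handle, respectively, pairs inside $A$, pairs of the form $\{a,b\}$ with $a\in A$, and the self-loop at $b$. Note the required quantifier structure: $\psi$ must depend on $\phi$ alone, while the coloring $H^+$ is allowed to depend on $G$ and $H$; the construction below respects this.

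First I would fix the colors. Beyond the palette already used by $\phi$ (which I copy verbatim onto $A$ from the given witnessing coloring of $H[A]$), I introduce four unary predicates: $C_A$ marking every vertex of $A$, $C_b$ marking $b$, a predicate $N_b$ marking exactly those $a\in A$ with $ab\in E(G)$, and a predicate $L_b$ marking $b$ precisely when $bb\in E(G)$. All four depend on $G,H$ only through the coloring, so the formula $\psi$ itself will reference only these fixed predicate names together with $\phi$. I then set
\[
\psi(x,y) := \big(C_A(x)\wedge C_A(y)\wedge \phi(x,y)\big)\ \vee\ \big((C_b(x)\wedge N_b(y))\vee(N_b(x)\wedge C_b(y))\big)\ \vee\ \big((x=y)\wedge L_b(x)\big).
\]
This is quantifier-free and positive, since it uses only $\wedge$, $\vee$, the positive formula $\phi$, color atoms, and equality, and no occurrence of $\neg$ or $\neq$; it defines a symmetric relation because $\phi$ is symmetric and the remaining disjuncts are manifestly symmetric.

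The verification that $H^+\models\psi(u,v)\Leftrightarrow uv\in E(G)$ splits into three cases matching the three disjuncts. For $u,v\in A$ only the first disjunct can fire, as $C_A$ fails on $b$ and $L_b$ never holds on $A$; and since $\phi$ is quantifier-free, $\phi(u,v)$ evaluates identically on $H^+$ and on the witnessing coloring of $H[A]$ (the induced structure on $\{u,v\}$ and the $\phi$-colors agree, and $b$ together with its incident edges is invisible to a quantifier-free binary formula), so it correctly recovers $uv\in E(G[A])=E(G)\cap A^2$. For a pair $\{a,b\}$ with $a\in A$, the first disjunct is killed by $C_A(b)$ and the third by $a\neq b$, while the middle disjunct fires exactly when $N_b(a)$ holds, i.e.\ iff $ab\in E(G)$. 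For $u=v=b$ the first two disjuncts fail ($b$ has no $C_A$, and $b\notin A$ carries no $N_b$), and the third reduces to $L_b(b)$, which holds iff $bb\in E(G)$.

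The argument is essentially careful bookkeeping, and the only genuinely delicate point — the very reason the statement is restricted to quantifier-free formulas — is maintaining positivity while carrying out what is morally a case distinction on ``is this vertex in $A$, or is it $b$?''. Since negation is forbidden I cannot write ``$x\neq b$''; instead the explicit marker $C_A$ positively gates the $\phi$-part so that it fires only on pairs inside $A$, and the markers $C_b,N_b,L_b$ positively encode all adjacency information about the single vertex $b$. Quantifier-freeness is exactly what lets the first disjunct inherit the behaviour of $\phi$ on $A$ unchanged: a formula with quantifiers could ``see'' $b$ and its incident edges, breaking the clean restriction to $G[A]$. I expect no further obstacle beyond this bookkeeping.
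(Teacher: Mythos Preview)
Your proof is correct and essentially identical to the paper's: both extend the witnessing coloring of $H[A]$ by a marker for $A$, a marker for $b$, and a marker for the $G$-neighborhood of $b$, then write $\psi$ as the disjunction of the $\phi$-part gated to $A$ with a clause hard-coding the adjacencies at $b$. The only cosmetic difference is that you restrict $N_b$ to $A$ and add a separate predicate $L_b$ for the self-loop at $b$, whereas the paper simply allows $b\in N_b$ when $bb\in E(G)$ and so gets by with one fewer color and one fewer disjunct.
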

\begin{proof}
    We extend the coloring of $H[A]$ that is used in the $\phi$-transduction to a coloring of $H$ by additionally:
    \begin{itemize}
        \item marking the set $A$ with a color predicate $A$,
        \item marking the vertex $b$ with a color predicates $C_b$,
        \item marking the neighborhood of $b$ in $G$ with a color predicate $N_b$.
    \end{itemize}
    This coloring of $H$ witnesses that $H$ $\psi$-transduces $G$ for the formula
    \[
        \psi(x,y) := (x \in C_b \wedge y \in N_b) \vee (y \in C_b \wedge x \in N_b) 
        \vee (x \in A \wedge y \in A \wedge \phi(x,y)).\qedhere
    \]
\end{proof}

The lemmas proven so far culminate in the following corollary, which we later use to characterize the monadic non-equality property.

\begin{corollary}\label{cor:transduce-subflips-partially-reflexive}
    For all $k,t\in \N$ there is a quantifier-free, positive formula $\phi(x,y)$ with the following property.
    Let $G$ be a partially reflexive graph $G$ that excludes the irreflexive $K_t$ as an induced subgraph, and let $H$ be a $k$-subflip of $G$.
    Then $H$ $\phi$-transduces $G$.
\end{corollary}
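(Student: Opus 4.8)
The plan is to deduce the statement from \Cref{lem:trans-pure-flips}, whose only hypothesis we cannot guarantee outright and which we will repair with \Cref{lem:add-one}. First recall that by definition $H = G \ominus \PP = G \oplus (\PP,F)$, where $F$ consists of exactly those pairs of parts that are fully adjacent in $G$. In particular every pair in $F$ is fully adjacent in $G$, so $H$ is a \emph{pure} $k$-flip of $G$, and to conclude that $H$ $\phi$-transduces $G$ it suffices to invoke \Cref{lem:trans-pure-flips} with this $G$, $H$, $\PP$, and $F$. The one obstacle is that the tame-loops condition \eqref{eq:tame-loops} may fail: a part $P$ with $(P,P)\in F$ can contain a vertex $v$ without a self-loop, giving $(\PP(v),\PP(v))\in F$ but $vv\notin E(G)$.

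Second, I would use the exclusion of the irreflexive $K_t$ to show that there are only constantly many such \emph{bad} vertices. Let $B$ be the set of $v$ with $(\PP(v),\PP(v))\in F$ and $vv\notin E(G)$. If $(P,P)\in F$ then $P$ is fully adjacent to itself, i.e.\ the distinct vertices of $P$ are pairwise adjacent in $G$. Hence the vertices of $P$ lying in $B$ are pairwise adjacent yet carry no self-loop, so they induce an \emph{irreflexive} clique in $G$; since $G$ excludes the irreflexive $K_t$ as an induced subgraph, there are at most $t-1$ of them. Summing over the at most $k$ parts yields $|B| \le k(t-1)$, a bound depending only on $k$ and $t$.

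Third, I would remove $B$ and apply the lemma to the remainder. Writing $V' := V(G)\setminus B$, $G' := G[V']$, and $H' := H[V']$, hereditariness of flips (\Cref{lem:flip-basics}) shows that $H'$ is a $\PP|_{V'}$-flip of $G'$, with at most $k$ nonempty parts; and since the restriction of a fully adjacent (resp.\ fully non-adjacent) pair is again fully adjacent (resp.\ fully non-adjacent), this flip is still pure. Crucially, the tame-loops condition now holds on $G'$ by construction: for $v\in V'$, if the part of $v$ is flipped with itself then $v\notin B$ forces $vv\in E(G)=E(G')$. Thus \Cref{lem:trans-pure-flips} supplies a quantifier-free positive formula $\phi_0$, depending only on $k$, with $H'$ $\phi_0$-transducing $G'$.

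Finally, I would reinsert the bad vertices one at a time via \Cref{lem:add-one}, each colored by its $G$-neighborhood, turning a witness that $H[A]$ transduces $G[A]$ into a witness for $A\cup\{b\}$. To obtain a single formula valid for all $G$ despite $|B|$ varying, I would always perform exactly $M := k(t-1)$ additions, padding the surplus steps with \emph{trivial} ones in which $C_b$ and $N_b$ are empty and the set-color $A$ is the full current vertex set, so that the corresponding instance of the \Cref{lem:add-one} formula collapses to the identity. Chaining these $M$ formula transformations from $\phi_0$ produces a quantifier-free positive $\phi$ depending only on $k$ and $t$, and for any concrete $G$ the colorings above make $\phi$ witness that $H$ $\phi$-transduces $G$. \textbf{The main obstacle} is the second step: the whole argument hinges on recognizing that tame-loops can fail \emph{only} on irreflexive cliques, so that the $K_t$-free assumption is precisely what bounds the defect by a constant and makes the peeling via \Cref{lem:add-one} both terminate and uniformize.
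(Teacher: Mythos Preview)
Your proof is correct and follows essentially the same approach as the paper: remove the boundedly many loopless vertices lying in parts flipped with themselves, apply \Cref{lem:trans-pure-flips} on the remainder, then reinsert the removed vertices via iterated applications of \Cref{lem:add-one}. Your treatment is in fact slightly more careful than the paper's, which simply says ``iterating \Cref{lem:add-one}'' without spelling out the padding argument needed to obtain a single formula independent of the actual number of bad vertices; your observation that the explicit formula from the proof of \Cref{lem:add-one} degenerates to $\phi$ when $C_b$ and $N_b$ are empty and $A$ is the full vertex set is exactly the right way to handle this.
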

\begin{proof}
    Let $\PP$ be a partition of size at most $k$ such that $H = G \ominus \PP$ and let $F \subseteq \PP \times \PP$ be the pairs of parts that are fully adjacent. Then $H = G \oplus (\PP,F)$ and also $G = H \oplus (\PP,F)$.
    Let $\PP^* := \{P\in \PP : (P,P)\in F\}$ be the set of parts flipped with itself.
    By definition, each part of $\PP^*$ forms a clique in $G$.
    Since the irreflexive~$K_t$ is excluded, this means the size of the set $A$ of vertices without self-loops that are contained in a part of $\PP^*$ is bounded by $kt$. Let $B := V(G) - A$ and notice that $H[B]$ and $G[B]$ satisfy the conditions of \Cref{lem:trans-pure-flips}: $G[B]$ is a pure $k$-flip of $H[B]$ and in the witnessing partition $\PP|_B$ no part that is flipped with itself contains a vertex without a self-loop anymore.
    This means there is a formula $\psi$ (depending only on $k$) such that~$G[B]$ $\psi$-transduces $H[B]$.
    Iterating \Cref{lem:add-one}, we can add back the at most $kt$ removed vertices~$A$ to get a quantifier-free, positive formula $\phi$ such that $H$ $\phi$-transduces $G$.
\end{proof}

\subsection{Transitivity}

The transitivity property of general flips does not hold for subflips. The finer parts of $\PP_1 \wedge \PP_2$ can be fully-adjacent, while their containing coarser parts in $\PP_1$ and $\PP_2$ are not. In this case the subflip specified by $\PP_1 \wedge \PP_2$ removes even more edges.
We can still work with refinements, using the following property.

\begin{lemma}\label{lem:refine}
    For every graph $G$, partition $\PP$ of $V(G)$, and refinement $\RR$ of $\PP$ we have
    \[
        (G\ominus \PP) \ominus \RR = G \ominus \RR.
    \]
\end{lemma}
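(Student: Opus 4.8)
The plan is to verify the claimed identity edge by edge. Both sides have vertex set $V(G)$, and since a subflip never alters self-loops and $\RR$ refines $\PP$, both $(G\ominus\PP)\ominus\RR$ and $G\ominus\RR$ carry exactly the self-loops of $G$; hence it suffices to compare the two graphs on \emph{distinct} pairs $u,v$.

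First I would record a one-line description of a single subflip on distinct pairs. Directly from the definition of $\ominus$ (all edges between fully adjacent parts are flipped off, and nothing else is touched), for any partition $\QQ$ of $V(G)$ and distinct $u,v$ one has
\[
  uv\in E(G\ominus\QQ)\iff uv\in E(G)\ \wedge\ \QQ(u),\QQ(v)\text{ are not fully adjacent in }G.
\]
Writing $H:=G\ominus\PP$ and abbreviating by $g$, $p$, $r$, $s$ the statements ``$uv\in E(G)$'', ``$\PP(u),\PP(v)$ are fully adjacent in $G$'', ``$\RR(u),\RR(v)$ are fully adjacent in $G$'', and ``$\RR(u),\RR(v)$ are fully adjacent in $H$'', applying the formula once to $G\ominus\RR$ and twice to $(G\ominus\PP)\ominus\RR$ reduces the lemma to the equivalence
\[
  g\wedge\neg p\wedge\neg s\iff g\wedge\neg r
\]
for every distinct pair $u,v$. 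When $g$ fails both sides vanish, so the entire statement comes down to proving $r\iff(p\vee s)$ whenever $uv\in E(G)$.

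The feature that makes this clean is that the fixed pair $(u,v)$ is itself an honest \emph{distinct} pair lying in the relevant parts, as $u\in\RR(u)\subseteq\PP(u)$ and $v\in\RR(v)\subseteq\PP(v)$; consequently no part in play is pair-free, and the degenerate singleton cases hidden in the definition of full adjacency cannot interfere. With this in hand I would prove three elementary facts. First, $p\Rightarrow r$: subsets of fully adjacent parts are fully adjacent. Second, $p$ and $s$ are mutually exclusive: if $p$ holds then $uv\in E(G)$ is flipped off in $H$, so $\RR(u),\RR(v)$ cannot be fully adjacent in $H$. Third, if $p$ fails then the adjacency between $\PP(u)$ and $\PP(v)$ is unchanged from $G$ to $H$, whence $r\iff s$. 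Combining these gives $r\iff(p\vee s)$: for the backward direction use the first fact on the $p$-branch and (via mutual exclusivity) the third on the $s$-branch; for the forward direction, if $p$ holds we are done, and otherwise the third fact upgrades $r$ to $s$.

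I expect the only genuine subtlety to be the bookkeeping around full adjacency of parts that may coincide ($\RR(u)=\RR(v)$) or be singletons. This is precisely where the mutual-exclusivity step and the remark that $(u,v)$ is always a legitimate distinct witness are needed, in order to exclude vacuous edge cases; once these are settled, the remaining verification of $r\iff(p\vee s)$ is a routine truth-table check.
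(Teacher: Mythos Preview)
Your proof is correct and follows essentially the same route as the paper: both argue edge by edge on distinct pairs, dispose of non-edges of $G$ immediately, and then reduce to the two observations that $p\Rightarrow r$ (subsets of fully adjacent parts are fully adjacent) and that $\neg p$ leaves the $\RR(u)$--$\RR(v)$ adjacencies unchanged so $r\Leftrightarrow s$. Your symbolic packaging via the equivalence $r\Leftrightarrow(p\vee s)$ is just a tidier presentation of the paper's direct case analysis; the extra mutual-exclusivity fact you prove is fine but not strictly needed, since one can simply case on $p$ in the $(p\vee s)\Rightarrow r$ direction as the paper does.
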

\begin{proof}
    Let $v_1$ and $v_2$ be distinct vertices in $G$ and let $P_i \in \PP$ and $R_i \in \RR$ be the parts containing $v_i \in R_i \subseteq P_i$.
    If $v_1v_2$ is a non-edge in $G$ then this also true in the two subgraphs $(G \ominus \PP) \ominus \RR$ and $G \ominus \RR$ and there is nothing to check.
    Assume therefore $v_1v_2$ an edge in $G$.

    Assume $v_1v_2$ is removed in $G \ominus \RR$.
    Then the parts $R_1$ and $R_2$ are fully adjacent in $G$.
    If $P_1$ and $P_2$ are fully adjacent in $G$, then $v_1v_2$ is removed already in $G \ominus \PP$.
    Otherwise, the adjacency between $P_1$ and $P_2$ is the same in $G$ as in $G \ominus \PP$.
    In particular the subparts $R_1$ and $R_2$ remain fully adjacent in $G \ominus \PP$, which means that $v_1v_2$ is removed in  $(G \ominus \PP) \ominus \RR$.

    Assume $v_1v_2$ is removed in $(G \ominus \PP) \ominus \RR$.
    If $v_1v_2$ is already removed in $G \ominus \PP$, we are done as $G\ominus \RR$ is a subgraph of $G \ominus \PP$.
    Otherwise, $v_1v_2$ is still present in $G \ominus \PP$ and the adjacency between~$R_1$ and $R_2$ is the same in $G\ominus \PP$ and $G$.
    Then applying $\ominus \RR$ must have removed the edge $v_1v_2$ in $G \ominus \PP$ and also in $G$.
\end{proof}

\subsection{Hereditariness}

Taking a subflip does not commute with taking an induced subgraph, as shown by the following example.

\begin{example}
    For the graph $G := G_1 \uplus G_2 = 2K_2$, where $G_1$ and $G_2$ are both the single edge graph $K_2$, the partition $\PP: = \{V(G)\}$ into a single part, and the set $S := V(G_1)$ we have
    \[
        2K_1 = G[S] \ominus \PP|_S \neq (G \ominus \PP) [S] = K_2.
    \]
\end{example}

We will work around this using the following lemma.

\begin{lemma}\label{lem:subflip-induce}
    For every graph $G$, partition $\PP$ of $V(G)$, and $S\subseteq V(G)$
    \[
        G[S] \ominus \PP|_S = ((G \ominus \PP) [S]) \ominus \PP|_S.
    \]
\end{lemma}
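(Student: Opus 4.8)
The plan is to prove the claimed identity by comparing the two graphs vertex pair by vertex pair. Both sides have vertex set $S$, so I would first dispose of self-loops: since subflips preserve self-loops and $H := G \ominus \PP$ has exactly the same self-loops on $S$ as $G$ does, a self-loop at $v \in S$ survives in $G[S]\ominus \PP|_S$ iff $vv \in E(G)$, and likewise survives in $(H[S])\ominus \PP|_S$ iff $vv \in E(H[S])$ iff $vv \in E(G)$. Hence both sides carry identical self-loops, and it remains only to compare adjacency of two \emph{distinct} vertices $u,v \in S$.

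The key preparatory step I would isolate is a clean characterization of which edges a subflip retains. Unfolding $G \ominus \PP = G \oplus(\PP,F)$, for distinct $u,v$ the pair $uv$ is an edge of $G \ominus \PP$ if and only if $uv \in E(G)$ \emph{and} the parts $\PP(u),\PP(v)$ are not fully adjacent in $G$: if they are fully adjacent the edge is flipped away, and a non-edge can never become an edge because distinct non-adjacent vertices preclude full adjacency. Recording this once lets me apply it uniformly on both sides. I would also note that "fully adjacent in $G[S]$" and "fully adjacent in $H[S]$" coincide with full adjacency measured in $G$ and in $H$ respectively, since the quantification ranges only over distinct pairs inside $S$ and the graphs are induced.

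With $P := \PP(u)$ and $Q := \PP(v)$ (the parts in the coarse partition $\PP$, whose traces in $\PP|_S$ are $P\cap S$ and $Q\cap S$), the whole argument splits on whether $P$ and $Q$ are fully adjacent in $G$. In the \textbf{fully adjacent} case: on the left, $P\cap S$ and $Q\cap S$ are subsets of fully adjacent sets, hence fully adjacent, so the left subflip deletes $uv$; on the right, the inner subflip $\ominus\PP$ already deletes $uv$ (as $P,Q$ are fully adjacent and $u\neq v$ forces $uv\in E(G)$), so $uv \notin E(H[S])$ and it is absent on the right too. Both yield a non-edge. In the \textbf{not fully adjacent} case: the subflip $\ominus\PP$ leaves every $P$--$Q$ adjacency untouched, so $H$ and $G$ agree between $P$ and $Q$; consequently $uv\in E(H)\iff uv\in E(G)$, and moreover $P\cap S,Q\cap S$ are fully adjacent in $H$ iff they are fully adjacent in $G$. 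Plugging this into the survival characterization collapses both the left and the right side to the identical condition "$uv\in E(G)$ and $P\cap S, Q\cap S$ not fully adjacent in $G$", so they agree.

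The only real subtlety I anticipate is phrasing everything so that the degenerate case $P=Q$ (both vertices in one part, where "fully adjacent" means "$P$ induces a clique") is handled by the same two-case analysis rather than separately; stating the survival characterization in terms of the possibly-equal pair $(P,Q)$ achieves exactly that. Everything else is routine bookkeeping once the edge-survival characterization is in place.
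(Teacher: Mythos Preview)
Your proof is correct and takes essentially the same approach as the paper: both arguments reduce to distinct vertices, then split on whether the coarse parts $\PP(u),\PP(v)$ are fully adjacent in $G$, using that in the non-fully-adjacent case all $P$--$Q$ adjacencies (and hence the full-adjacency status of $P\cap S, Q\cap S$) coincide in $G$ and $G\ominus\PP$. The only difference is organizational: you extract an explicit ``edge-survival'' characterization up front and case-split on the coarse parts first, whereas the paper first restricts to edges of $G[S]$ and then proves the two removal directions separately---but the content is the same.
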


\begin{proof}
    We can restrict our attention to the set $S$.
    Non-edges in $G[S]$ stay non-edges in the two subflips.
    Hence, it suffices to prove that edges of $G[S]$ are removed in 
    $G[S] \ominus \PP|_S$ if and only if they are removed in $((G \ominus \PP) [S]) \ominus \PP|_S$.
    Fix any edge $uv$ in $G[S]$.

    Assume first that $uv$ is a non-edge $G[S]\ominus \PP|_S$.
    Then $\PP|_S(u)$ and $\PP|_S(v)$ are fully adjacent in~$G[S]$.
    If the coarser parts $\PP(u)$ and $\PP(v)$ are also fully adjacent in $G$, then $uv$ was already removed in~$G \ominus \PP$.
    Otherwise, the adjacencies between $\PP|_S(u)$ and $\PP|_S(v)$ are the same in $G[S]$ and $(G \ominus \PP) [S]$, so~$uv$ is removed in 
    $((G \ominus \PP) [S]) \ominus \PP|_S$.

    \pagebreak
    Assume now that $uv$ is a non-edge in $((G \ominus \PP) [S]) \ominus \PP|_S$.
    Then either 
    \begin{enumerate}
        \item $\PP(u)$ and $\PP(v)$ are fully adjacent in $G$, or
        \item $\PP|_S(u)$ and $\PP|_S(v)$ are fully adjacent in $(G\ominus \PP)[S]$.
    \end{enumerate}
    In the first case, in particular $\PP|_S(u)$ and $\PP|_S(v)$ are fully adjacent in $G[S]$, and we are done.
    If the first case fails and the second holds, then the adjacencies between $\PP|_S(u)$ and $\PP|_S(v)$ are the same in $G[S]$ and $(G \ominus \PP) [S]$, so $uv$ is removed in 
    $G[S] \ominus \PP|_S$.
\end{proof}
\subsection{Approximating flips through subflips}

A \emph{semi-induced matching} of size $t$ in a graph $G$ is a set of vertices $a_1,\ldots,a_t$ and $b_1,\ldots,b_t$ such that for all $i,j \in [t]$: $E(a_i,b_j) \in E(G) \Leftrightarrow i = j$.
In the more restricted notion of an \emph{induced matching}, we additionally demand that all the $a_i$s form an independent set, and all the $b_i$s form an independent set.
Distinguishing between the two will be important when making the following observation.

\begin{observation}\label{obs:induced-vs-semi-induced}
    Let $G$ be a graph, $B$ be a bipartite graph semi-induced in $G$, and $\overline{B}$ be the bipartite complement of $B$.
    If $G$ has co-matching-index less than $t$, then $\overline{B}$ contains no induced matching of size $t$.
\end{observation}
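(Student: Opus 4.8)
The plan is to prove the contrapositive: I would assume that $\overline{B}$ contains an induced matching of size $t$ and derive that $G$ has co-matching-index at least $t$, i.e.\ $G$ semi-induces a co-matching of order $t$. Let $B$ be semi-induced in $G$ between sides $A$ and $B'$ (say the two color classes of $B$), so that adjacency in $B$ agrees with adjacency in $G$ for pairs with one endpoint on each side. The bipartite complement $\overline{B}$ lives on the same two sides, with a pair $ab$ (one vertex from each side) an edge of $\overline{B}$ exactly when it is a \emph{non}-edge of $B$, hence exactly when it is a non-edge of $G$.

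Now suppose $\overline{B}$ has an induced matching of size $t$, witnessed by vertices $a_1,\ldots,a_t$ on the first side and $b_1,\ldots,b_t$ on the second side, with $a_ib_j \in E(\overline{B}) \Leftrightarrow i = j$. Translating through the definition of bipartite complement, this says that for all $i,j \in [t]$ the pair $a_ib_j$ is a non-edge of $G$ precisely when $i=j$, equivalently $a_ib_j \in E(G) \Leftrightarrow i \neq j$. This is exactly the defining condition of a co-matching of order $t$ semi-induced between the sides $\{a_1,\ldots,a_t\}$ and $\{b_1,\ldots,b_t\}$ in $G$. Hence $G$ has co-matching-index at least $t$, contradicting the hypothesis.

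The one subtlety I would be careful about is the distinction the paragraph preceding the observation flags between \emph{induced} and \emph{semi-induced} matchings, and correspondingly between the adjacencies \emph{within} each side versus \emph{across} the two sides. The co-matching condition $a_ib_j \in E(G) \Leftrightarrow i \neq j$ only constrains cross pairs, and the notion of semi-induced subgraph only records cross-adjacencies as well, so no assumption about edges inside $\{a_1,\ldots,a_t\}$ or inside $\{b_1,\ldots,b_t\}$ is needed to conclude that a semi-induced co-matching is present. This is exactly why the hypothesis on $\overline{B}$ can be taken to be an \emph{induced} matching (the strongest form) while the conclusion only needs to rule out \emph{semi-induced} co-matchings: the extra independence guaranteed within the sides of the induced matching in $\overline{B}$ is more than enough and is simply discarded when passing to the semi-induced co-matching in $G$. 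I expect no genuine obstacle here; the entire content is unwinding the definition of bipartite complement together with the equivalence $(i=j \text{ in } \overline{B}) \leftrightarrow (i \neq j \text{ in } G)$, so the proof should be a short direct verification.
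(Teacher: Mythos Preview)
Your proof is correct and is precisely the intended argument; the paper states this as an observation without proof, so your unwinding of the definitions is all that is needed. One small point you leave implicit: when you say the induced matching is ``witnessed by vertices $a_1,\ldots,a_t$ on the first side and $b_1,\ldots,b_t$ on the second side'', this uses that each matching edge of the bipartite graph $\overline{B}$ crosses the bipartition, so the $2t$ endpoints can always be relabeled to respect the sides---which is immediate, but worth a half-sentence since the paper's counterexample for the semi-induced version turns exactly on the failure of this relabeling.
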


The observation fails if we replace ``induced matching'' with ``semi-induced matching'' as witnessed by the following graph $G$ on the left.

\begin{figure}[htbp]
    \centering
    \includegraphics{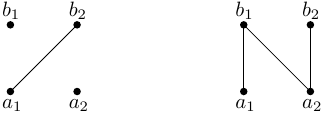}
\end{figure}

Since $G$ has only one edge, it has co-matching index less than $2$. Depicted on the right is $\overline{B}$: the bipartite complement of the graph $B$ semi-induced between $\{a_1,a_2\}$ and $\{b_1,b_2\}$.
$\overline{B}$ contains a semi-induced matching of size $2$, obtained by semi-inducing between $\{a_1,b_2\}$ and $\{b_1,a_2\}$. Hence, $G$ is a counterexample to \Cref{obs:induced-vs-semi-induced} for semi-induced matchings.
This is the reason, we state the next lemma for forbidden induced matchings, even though for forbidden semi-induced matchings a better bound of $2t$ would be possible.

\begin{lemma}\label{lem:matching-diam}
    Let $G$ be a graph that contains no induced matching of size $t$. 
    Then $G$ contains less than~$t$ connected components of size at least $2$ and each of these components has diameter less than~$3t$.
\end{lemma}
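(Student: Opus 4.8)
The plan is to prove the two claims separately, handling the bound on the number of nontrivial components first, then the diameter bound within each such component.

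\medskip

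\noindent\textbf{Bounding the number of nontrivial components.} Suppose toward a contradiction that $G$ has at least $t$ connected components $C_1,\dots,C_t$, each of size at least $2$. In each $C_i$, since $|C_i|\geq 2$ and $C_i$ is connected, I can pick an edge $a_ib_i\in E(G)$ with $a_i,b_i\in C_i$. I claim the vertices $a_1,\dots,a_t$ and $b_1,\dots,b_t$ form an induced matching of size $t$. Indeed, for $i=j$ the pair $a_ib_i$ is an edge by construction, and for $i\neq j$ the vertices lie in different connected components, so there is no edge between $\{a_i,b_i\}$ and $\{a_j,b_j\}$ whatsoever; in particular $a_ia_j$, $b_ib_j$, and $a_ib_j$ are all non-edges. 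This is exactly an induced matching of size $t$, contradicting the hypothesis. Hence there are fewer than $t$ components of size at least $2$.

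\medskip

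\noindent\textbf{Bounding the diameter of a nontrivial component.} Fix a component $C$ with $|C|\geq 2$, and suppose for contradiction that it has diameter at least $3t$. Then there are two vertices $u,w\in C$ with $\dist_G(u,w)\geq 3t$; let $P = (u=p_0, p_1, \dots, p_d=w)$ be a shortest $u$--$w$ path, so $d=\dist_G(u,w)\geq 3t$ and $P$ is an \emph{induced} path (shortest paths are induced: any chord would give a shorter path). The strategy is to extract an induced matching of size $t$ from this long induced path. I would select every third edge along $P$: set $a_i := p_{3(i-1)}$ and $b_i := p_{3(i-1)+1}$ for $i\in[t]$, which is possible since the path has at least $3t$ edges. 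Each $a_ib_i$ is an edge of $P$, hence of $G$. For $i\neq j$, the four vertices $a_i,b_i,a_j,b_j$ are four distinct vertices on the induced path whose indices differ by at least $3-1=2$ across the two chosen edges; since $P$ is induced, the only edges among its vertices are between consecutive vertices, and the gap of at least $2$ between the blocks guarantees no edge between $\{a_i,b_i\}$ and $\{a_j,b_j\}$. Thus $\{a_i,b_i\}_{i\in[t]}$ is an induced matching of size $t$, again a contradiction. Therefore every nontrivial component has diameter less than $3t$.

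\medskip

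\noindent\textbf{Main obstacle.} The combinatorial core is routine; the one point requiring care is the spacing constant in the diameter argument. To guarantee that consecutive chosen edges are \emph{non}-adjacent in the induced path, the blocks must be separated by at least one unused vertex, so stepping by $3$ (two vertices per matching edge plus a one-vertex gap) is the natural choice and explains the factor $3t$. I would double-check that indexing $a_i = p_{3(i-1)}$ through $i=t$ stays within a path of length $\geq 3t$, and confirm the claim of \Cref{lem:matching-diam} is stated for \emph{induced} (not merely semi-induced) matchings, which is exactly what this extraction produces and what \Cref{obs:induced-vs-semi-induced} earlier motivates.
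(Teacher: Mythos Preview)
Your proof is correct and follows essentially the same approach as the paper: pick one edge per nontrivial component to witness the first bound, and extract every third edge from a shortest (hence induced) path of length at least $3t$ for the second. The paper's proof is terser but identical in content; your explicit indexing and spacing argument are fine.
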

\begin{proof}
    Let $C_1, \ldots, C_m$ be the connected components of $G$ of size at least $2$.
    Each component $C_i$ contains at least one edge $u_i v_i$.
    The $u_i$s and $v_i$s induce a matching of order $m$.
    Hence, $m < t$.

    Assume towards a contradiction that one of these components has diameter at least $3t$. This is witnessed by an induced path with $3t$ edges of which $t$ induce a matching of order $t$, a contradiction.
\end{proof}

\begin{lemma}\label{lem:transfer}
    For every graph $G$ with co-matching-index less than $t$ and size-$k$ partition $\PP$ of~$V(G)$,
    there exists a refinement $\QQ$ of $\PP$ of size at most $k \cdot t^k$, such that for every $\PP$-flip $H$ of $G$ and all vertices $u,v\in V(G)$
    \[
        G \ominus \QQ \models E(u,v)
        \text{ implies }
        \dist_H(u,v) < 3t.
    \]
\end{lemma}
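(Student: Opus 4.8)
The plan is to define $\QQ$ by refining each part of $\PP$ according to a vector of \emph{component labels}, one coordinate per part of $\PP$. Fix $G$ of co-matching-index less than $t$ and $\PP$ with parts $P_1,\dots,P_k$. For an ordered pair of parts $(A,B)$ with $A\ne B$, let $D_{A,B}$ denote the bipartite complement (inside $G$) of the bipartite graph semi-induced between $A$ and $B$; for a single part $A$, let $D_{A,A}$ denote the complement of $G[A]$. By \Cref{obs:induced-vs-semi-induced} each $D_{A,B}$ with $A\ne B$ has no induced matching of size $t$; for the diagonal case I argue directly, since an induced matching of size $t$ in $D_{A,A}=\overline{G[A]}$ would consist of $2t$ distinct vertices $x_1,y_1,\dots,x_t,y_t$ with $x_iy_j\in E(G)\iff i\ne j$, i.e.\ a semi-induced co-matching of order $t$, contradicting the bound. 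Hence by \Cref{lem:matching-diam} every $D_{A,B}$ (including $A=B$) has fewer than $t$ connected components of size at least $2$, each of diameter less than $3t$. Enumerating these big components, I assign to each $v\in A$ a label $\ell_B(v)\in\{1,\dots,t-1,\star\}$ recording which such component of $D_{A,B}$ contains $v$, using $\star$ when $v$ is isolated in $D_{A,B}$. There are at most $t$ labels in each of the $k$ coordinates, so placing $v$ into the $\QQ$-part $(\PP(v),(\ell_B(v))_{B\in\PP})$ yields a refinement $\QQ$ of $\PP$ with at most $k\cdot t^k$ parts.

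For correctness, fix a $\PP$-flip $H=G\oplus(\PP,F)$ and an edge $uv$ of $G\ominus\QQ$; I may assume $u\ne v$, as self-loops give distance $0$. Writing $A=\PP(u)$ and $B=\PP(v)$, a surviving subflip edge satisfies $uv\in E(G)$ while $\QQ(u)$ and $\QQ(v)$ are \emph{not} fully adjacent in $G$. If $(A,B)\notin F$ the adjacency between $A$ and $B$ is unchanged by the flip, so $uv\in E(H)$ and $\dist_H(u,v)=1<3t$. Otherwise $(A,B)\in F$, and the (bipartite, or for $A=B$ ordinary) graph that $H$ induces between $A$ and $B$ is exactly $D_{A,B}$, a subgraph of $H$. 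The core step is to show that $u,v$ lie in a common size-$\geq 2$ component of $D_{A,B}$: since $\QQ(u),\QQ(v)$ are not fully adjacent there are distinct $a\in\QQ(u)$, $b\in\QQ(v)$ with $ab\notin E(G)$, i.e.\ $ab$ is an edge of $D_{A,B}$, so $a,b$ lie in a common component $C$ of $D_{A,B}$ of size at least $2$; writing $c$ for its index we get $\ell_B(a)=c$ and, as $D_{B,A}=D_{A,B}$, also $\ell_A(b)=c$. Because $\QQ(u)$ and $\QQ(v)$ are $\QQ$-parts, the labels $\ell_B$ and $\ell_A$ are constant on them, so $\ell_B(u)=\ell_B(a)=c$ and $\ell_A(v)=\ell_A(b)=c$, placing both $u$ and $v$ in $C$. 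Since $C$ has diameter less than $3t$ in $D_{A,B}$ and every edge of $D_{A,B}$ is an edge of $H$, a shortest path in $C$ yields $\dist_H(u,v)\leq\mathrm{diam}(C)<3t$.

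I expect the main obstacle to be arranging the labeling so that the purely local hypothesis ``$\QQ(u)$ and $\QQ(v)$ are not fully adjacent'' is forced to propagate to ``$u$ and $v$ share a bounded-diameter component of the flipped bipartite graph''. The two delicate points are (i) that the witness non-edge $ab$ pins down one genuine (non-isolated) component whose index, by constancy of labels on $\QQ$-parts, is inherited by $u$ and $v$ themselves rather than merely by $a$ and $b$; and (ii) correctly treating the diagonal case $A=B$, where co-matching-freeness must be invoked through the complement $\overline{G[A]}$ directly instead of through \Cref{obs:induced-vs-semi-induced}. Once these are settled, the distance bound is immediate, because the choice $(A,B)\in F$ guarantees that the edges traversed along a shortest path inside a component of $D_{A,B}$ are genuine edges of $H$.
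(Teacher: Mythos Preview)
Your proof is correct and follows essentially the same approach as the paper: both refine each $\PP$-part according to the connected components (plus an ``isolated'' class) of the bipartite complements $D_{A,B}$, invoke \Cref{lem:matching-diam} to bound the number and diameter of these components, and then argue that a surviving edge of $G\ominus\QQ$ between flipped parts forces its endpoints into a common bounded-diameter component. The only cosmetic difference is that the paper argues directly that $u$ cannot land in the isolated class (since the isolation part is fully adjacent to the opposite side in $G$ and hence killed by the subflip), whereas you reach the same conclusion by first locating a witness non-edge $ab$ and propagating its component label back to $u$ and $v$ via constancy on $\QQ$-parts.
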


\begin{remark}
    \Cref{lem:transfer} implies for every $r\in\NN$ and $v\in V(G)$:
    \[
        B_r^{G \ominus \QQ}(v) \subseteq B_{3tr}^H(v).
    \]

\end{remark}

\begin{proof}[Proof of \Cref{lem:transfer}]
    We construct $\QQ$ by individually refining each part $P \in \PP$.
    For all pairs of parts $P,Q \in \PP$ (possibly $P = Q$) we will define a partition $\XX(P,Q)$ of $P$ of size at most $t$,
    which naturally lifts to a partition $\XX'(P,Q) := \XX(P,Q) \cup \{V(G) \setminus P\}$ of the whole vertex set $V(G)$.
    We define $\QQ$ to be the common refinement of all the $\XX'(P,Q)$.
    As each $\XX'(P,Q)$ only refines a single part $P$, this bounds the size of $\QQ$ by $k \cdot t^k$.

    \medskip\noindent
    \emph{Construction of $\XX(P,Q)$ for $P \neq Q$.}
        Consider the bipartite graph $B$ semi-induced by the parts~$P$ and $Q$ in~$G$ and let $\overline{B}$ be its bipartite complement.
        First, we add to $\XX(P,Q)$ the \emph{isolation part} $X_0(P,Q)$ containing all vertices of $P$ that are isolated in $\overline{B}$.
        Second, let $C_1, \ldots, C_m$ be the connected components of $\overline{B}$ of size at least two.
        For each $C_i$, we add to $\XX(P,Q)$ the \emph{component part} $X_i(P,Q) := C_i \cap P$.
        Since $G$ has co-matching-index less than $t$, $\overline{B}$ cannot contain an induced matching of size~$t$ (\Cref{obs:induced-vs-semi-induced}).
        By \Cref{lem:matching-diam}, we conclude that $m < t$ and $|\XX(P,Q)| = m + 1 \leq t$.

    \medskip\noindent
    \emph{Construction of $\XX(P,P) =: \XX(P)$.} 
    Consider the complement $\overline{G[P]}$ of the graph $G[P]$ induced by $P$ in~$G$.
    We proceed as in the case above: we add to $\XX(P)$ the \emph{isolation part} $X_0(P)$ containing all vertices isolated in $\overline{G[P]}$ and each connected component $C_i$ of $\overline{G[P]}$ of size at least two gets its own \emph{component part} $X_i(P) := C_i$.
    Again by \Cref{lem:matching-diam}, $|\XX(P)|  \leq t$. 

    \medskip\noindent
    \emph{Analysis for edges between distinct parts.}
    Consider an edge $uv$ in $G \ominus \QQ$ with endpoints in distinct parts $u \in P \in \PP$ and $v \in Q \in \PP$ and an arbitrary $\PP$-flip $H$ of $G$. 
    Our goal is to show that $\dist_H(u,v) < 3t$.
    If $u$ and $v$ are also adjacent in $H$ we are done, so assume they are non-adjacent.
    As they are adjacent in the subgraph $G \ominus \QQ$ of $G$, they must be also adjacent in $G$.
    This means the adjacency between~$P$ and $Q$ was flipped in $H$ and the bipartite graph $\overline{B}$ semi-induced between $P$ and $Q$ in $H$ is the same as the bipartite complement of the graph $B$ semi-induced between $P$ and $Q$ in $G$.
    Note that the isolation part $X_0(P,Q)$ and $Q$ are fully non-adjacent in $\overline{B}$, which means they are fully adjacent in $G$, which means they are fully non-adjacent in $G \ominus \QQ$.
    As $u$ is adjacent to $v \in Q$ in $G \ominus \QQ$, we must have $u \notin X_0(P,Q)$ and symmetrically $v \notin X_0(Q,P)$.
    Then $u$ and $v$ must be contained in component parts $u \in X_i(P,Q)$ and $v \in X_j(Q,P)$.
    If $u$ and $v$ are contained in different connected components of $\overline{B}$, then $X_i(P,Q)$ and $X_j(Q,P)$ are fully non-adjacent in $\overline{B}$ and also in $G \ominus \QQ$; a contradiction.
    This means $u$ and $v$ are contained in the same connected component of $\overline{B}$.
    By \Cref{lem:matching-diam}, this component has diameter less than $3t$ in $\overline B$, which yields the desired bound  $\dist_H(u,v) < 3t$.

    \medskip\noindent
    \emph{Analysis for edges inside the same part.}
    Similar as above. If $\{u,v\} \subseteq P \in \PP$ are adjacent in $G\ominus \QQ$ but non-adjacent in $H$,
    then the part $P$ was flipped with itself.
    Every two distinct parts of $\XX(P)$ are fully non-adjacent in $\overline{G[P]}$, so fully adjacent in $G$, so fully non-adjacent in $G \ominus \QQ$.
    This means~$u$ and~$v$ must be contained in the same part of $\XX(P)$. 
    The isolation part $X_0(P)$ forms a clique in~$G$ and therefore an independent set in $G \ominus \QQ$.
    Hence, $u$ and $v$ must be contained in the same component part, and we conclude by \Cref{lem:matching-diam}.
\end{proof}

\begin{lemma}\label{lem:preserve-cmi}
    For every graph $G$ and partition $\PP$ of $V(G)$ the following holds:
    \begin{enumerate}
        \item For all $t>1$, if $\operatorname{co-matching-index}(G \ominus \PP)\geq t \cdot |\PP|^2$, then $\operatorname{co-matching-index}(G) \geq t$.
        
        \item For all $t\geq 0$, if $\operatorname{co-matching-index}(G) \geq t \cdot |\PP|^2$, then 
        $\operatorname{co-matching-index}(G \ominus \PP) \geq t$.
    \end{enumerate}
\end{lemma}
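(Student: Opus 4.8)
The plan is to prove both directions with a single pigeonhole argument over the parts, combined with the elementary fact that a subflip deletes edges between a pair of parts in an all-or-nothing fashion. Throughout I would fix the convention (already implicit in the definitions of semi-induced matchings and co-matchings) that a semi-induced co-matching of order $n$ is realised by $2n$ \emph{pairwise distinct} vertices, so that all of its edges and non-edges are edges/non-edges between distinct vertices and self-loops play no role. Write $k := |\PP|$ and $F := \{(A,B) : A,B \text{ fully adjacent in } G\}$, so that $G \ominus \PP = G \oplus (\PP,F)$. First I would record the following preservation fact: for distinct vertices $u,v$ with $P := \PP(u)$ and $Q := \PP(v)$, if $(P,Q) \notin F$ then $uv \in E(G \ominus \PP) \iff uv \in E(G)$, while if $(P,Q)\in F$ then $uv \notin E(G\ominus \PP)$. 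In words, between any ordered pair of parts the adjacency is either preserved entirely (when the parts are not fully adjacent) or deleted entirely (when they are); consequently, a single surviving edge between two parts in $G\ominus\PP$ certifies $(P,Q)\notin F$, and a single non-edge between two parts in $G$ certifies the same. This statement covers the case $P=Q$ as well, since $u,v$ are distinct.

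Next I would set up the common pigeonhole step. Given a semi-induced co-matching of order $tk^2$, realised by distinct vertices $a_1,\dots,a_{tk^2}$ and $b_1,\dots,b_{tk^2}$ (in $G$ or in $G\ominus\PP$, depending on the direction), classify each index $i$ by the part-pair $(\PP(a_i),\PP(b_i))\in\PP^2$. There are at most $k^2$ classes, so some class contains a set $I$ of at least $t$ indices all sharing one part-pair $(P,Q)$. Restricting the co-matching to the indices in $I$ yields a semi-induced co-matching of order $t$, all of whose $a$-vertices lie in $P$ and all of whose $b$-vertices lie in $Q$; it then only remains to transfer this restricted co-matching between $G$ and $G\ominus\PP$ via the preservation fact.

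For part~2 I would start from the co-matching of order $tk^2$ in $G$ and pick $I$ and $(P,Q)$ as above. For any $i\in I$ the pair $a_i,b_i$ is a non-edge of $G$ with $a_i\in P$, $b_i\in Q$ and $a_i\neq b_i$, so $(P,Q)\notin F$; hence all adjacencies between $P$ and $Q$ are preserved in $G\ominus\PP$, and the restricted co-matching survives verbatim in $G\ominus\PP$. This uses only a single index, so it is valid for every $t\geq 0$ (with $t=0$ vacuous). For part~1 the argument is dual but uses an \emph{edge} rather than a non-edge. Starting from a co-matching of order $tk^2$ in $G\ominus\PP$, pick $I$ and $(P,Q)$; since $t>1$ we have $|I|\geq 2$, so choosing distinct $i,j\in I$ gives an edge $a_ib_j\in E(G\ominus\PP)$ between $P$ and $Q$ (here I use $a_i\neq b_j$, which holds as the realising vertices are distinct), certifying $(P,Q)\notin F$. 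Again adjacencies between $P$ and $Q$ are identical in $G$ and $G\ominus\PP$, so the restricted co-matching is present in $G$ as well.

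I do not expect a serious obstacle; the only point requiring care is the asymmetry between the two parts, which is precisely where the hypotheses differ. Part~1 can detect that a part-pair is unflipped only by exhibiting a surviving edge, which forces it to assume $t>1$ so that the pigeonhole class contains at least two indices and hence a genuine cross-edge; part~2 detects the same property from a non-edge, one of which is supplied by every index of an order-$t$ co-matching, so no lower bound on $t$ is needed. Stating the distinctness convention at the outset is what licenses reading $a_i\neq b_j$ as edges/non-edges of the simple adjacency and is exactly what makes the all-or-nothing behaviour of $F$ applicable.
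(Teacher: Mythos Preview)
Your proof is correct and follows essentially the same approach as the paper: pigeonhole the indices by their part-pair $(\PP(a_i),\PP(b_i))$ to concentrate a sub-co-matching in a single pair $(P,Q)$, then certify $(P,Q)\notin F$ via a surviving edge (part~1, requiring $t>1$) or a non-edge (part~2), so that adjacencies between $P$ and $Q$ are preserved. Your presentation is slightly more explicit in isolating the preservation fact and the distinctness convention, but the argument is the same.
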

\begin{proof}
    For the first implication let $A,B \subseteq V(G)$ be sets of size $t \cdot |\PP|^2$ that semi-induce a co-matching in $G \ominus \PP$.
    Applying the pigeonhole principle twice, we get two parts $P,Q \in \PP$ and subsets $A' \subseteq A \cap P$ and $B' \subseteq B \cap Q$ with $|A'| = |B'| = t$ that still semi-induce a co-matching in $G \ominus \PP$. 
    Hence, $P$ and~$Q$ are neither fully-adjacent in $G \ominus \PP$ nor in $G$ (here we used $t > 1$, as a co-matching of size~$1$ would just be a non-edge, so $P$ and $Q$ could be fully adjacent).
    This means the adjacency between~$A'$ and~$B'$ is the same in~$G$ and $G \ominus \PP$, where they semi-induce a co-matching of size $t$.

    For the second implication with $t > 1$, swap the roles of $G\ominus \PP$ and $G$ in the proof above.
    The case $t= 0$ is vacuous and for $t=1$ notice that every co-matching of size $1$ in $G$ (i.e., every non-edge in $G$) is also a co-matching of size $1$ in $G\ominus \PP$.
\end{proof}

\section{Subflip-flatness}\label{sec:sub-flatness}

 \begin{definition}[(Sub)flip-flatness]
    For $r\in \NN \cup \{ \infty \}$, a graph class $\Cc$ is \emph{$r$-flip-flat} (\emph{$r$-subflip-flat}), if there exists a $\emph{margin}$ function $M:\NN \to \NN$ and a \emph{budget} $k \in \NN$ such that for every $G \in \Cc$ and set $W \subseteq V(G)$ of size at least $M(m)$, there is a $k$-flip ($k$-subflip) $H$ of $G$ and a size $m$ set $A \subseteq W$ such that for all distinct $u,v \in A$ we have
    \[
        \dist_H(u,v) > r.
    \]
    (For $r=\infty$, we demand that $u$ and $v$ are in different connected components of $H$.)
 \end{definition}

Dreier, Mählmann, Siebertz, and Toru\'nczyk in~\cite{dreier2022indiscernibles} proved the following combinatorial characterization of monadic stability.

\begin{theorem}[\cite{dreier2022indiscernibles}]\label{thm:flatness}
    For every graph class $\CC$, the following are equivalent:
    \begin{enumerate}
        \item $\CC$ is monadically stable.
        \item $\CC$ is $r$-flip-flat for every $r\in \N$.
    \end{enumerate}
\end{theorem}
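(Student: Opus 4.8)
The plan is to prove the two implications separately, using Ramsey-type extraction of indiscernible sequences together with the defining feature of (monadic) stability — the absence of the monadic order property — as the two main engines. Fix $r$ throughout; since the statement quantifies over all $r$, it suffices to treat a single radius, and I would aim to keep the budget $k$ bounded independently of the margin $m$ and of $|W|$.

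For the implication (2) $\Rightarrow$ (1) I would argue by contraposition: assuming $\CC$ is not monadically stable, I would exhibit a radius $r$ at which $\CC$ fails to be $r$-flip-flat. Non-stability yields a formula $\phi(\bar x,\bar y)$ with the monadic order property, so for every $\ell$ there is a coloring $G^+_\ell$ of some $G_\ell\in\CC$ and tuples $\bar a_1,\dots,\bar a_\ell$ with $G^+_\ell\models\phi(\bar a_i,\bar a_j)\iff i\leq j$. Letting $W$ collect one coordinate of each $\bar a_i$, Gaifman locality of $\phi$ (with $r$ its locality radius) converts the nested pattern recorded by $\phi$ into a half-graph-like web of short paths in $G_\ell$: the radius-$r$ balls of the $\bar a_i$ become linearly ordered by the witnessing configuration. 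The core of this direction is a pigeonhole estimate showing that a $k$-flip, having only $k$ parts, can destroy the short connections of only boundedly many of these nested balls, so that in every large subset of $W$ some two vertices stay at distance $\leq r$; hence no fixed $k$ and margin $M$ witness $r$-flip-flatness. The colors of $G^+_\ell$ must be handled with care here, and this is exactly where \emph{monadic} (rather than plain) stability is used: it lets us fold the finitely many colors into the flip partition, increasing the budget by only a constant.

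The implication (1) $\Rightarrow$ (2) is the substantial direction. Given $G\in\CC$ and a huge $W$, I would first color short tuples of $W$ by their atomic type enriched with pairwise distances up to a bound depending on $r$ and by how they attach to the rest of $G$, then apply Ramsey's theorem to extract a large order-indiscernible subsequence $I\subseteq W$. The decisive use of stability is that in a monadically stable class an order-indiscernible sequence is in fact \emph{totally} indiscernible (an indiscernible set): the absence of the order property forbids any definable ordering of the elements of $I$, so all relevant information about $I$ is symmetric. From stability one then extracts that for any single external vertex $w$ and any short formula, the set of elements of $I$ interacting with $w$ in a prescribed way is either bounded or co-bounded in $I$. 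This finite/co-finite behaviour is what makes a \emph{bounded} flip succeed: the vertices through which two elements of $I$ could be joined by a path of length $\leq r$ fall into a bounded number of ``hub'' classes, and complementing between these hub classes and the parts of $I$ removes all such short connections, leaving a still-large $A\subseteq I$ at pairwise distance $>r$.

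The main obstacle is precisely this last step — passing from the $r=1$ (adjacency) analysis to genuine distance $r$. For $r=1$, total indiscernibility at once yields a homogeneous set that a single flip renders edgeless; but for $r>1$ the short paths between elements of $I$ run through the rest of $G$, so one must simultaneously control how \emph{every} external vertex attaches to $I$ and propagate the argument along paths. I would organize this as an induction on $r$, at each stage using the finite/co-finite behaviour of types over the indiscernible set to bound the number of hub classes the next flip layer must neutralize, and transferring monadic stability to the auxiliary colored structures that encode distance-$(r{-}1)$ reachability so that the inductive hypothesis applies. Keeping $k$ bounded uniformly across this iteration — rather than letting it grow with the recursion depth or with $|W|$ — is the delicate bookkeeping at the heart of the proof.
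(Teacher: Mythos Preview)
The paper does not prove this theorem. It is stated with a citation to \cite{dreier2022indiscernibles} and used as a black box: the paper's contribution in this section is the \emph{subflip} analogue (\Cref{thm:subflatness}), whose forward direction (\Cref{lem:subflatness-forward}) simply invokes \Cref{thm:flatness} and then upgrades the resulting flip to a subflip via \Cref{lem:transfer}, and whose backward direction (\Cref{lem:subflatness-backward}) is a short direct argument about co-matchings. So there is no proof in the paper to compare your proposal against.

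That said, your sketch is in the spirit of the original proof in \cite{dreier2022indiscernibles}: the hard direction there does proceed by extracting an indiscernible sequence and exploiting that in a stable theory order-indiscernibles are set-indiscernibles, then turning the resulting symmetry into a bounded flip that separates a large subset. Two points where your outline is thinner than the actual argument: first, the passage from ``finite/co-finite behaviour of external vertices over $I$'' to an explicit flip partition that works uniformly for \emph{all} length-$\leq r$ paths is the crux, and your inductive-on-$r$ plan does not yet explain why the number of hub classes stays bounded by a function of $r$ alone (not of $|W|$); in \cite{dreier2022indiscernibles} this is handled by a careful type-counting argument over the indiscernible sequence rather than a naive induction on $r$. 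Second, for $(2)\Rightarrow(1)$ your Gaifman-locality reduction is not quite right: the order property may be witnessed by a formula of arbitrary quantifier depth, and there is no a priori bound on its locality radius in terms of the flip-flatness radius; the actual argument goes the other way, showing directly that failure of flip-flatness lets one \emph{transduce} half-graphs (or equivalently define the order property) rather than trying to read off a radius from a given unstable formula.
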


The goal of this section is to prove the following analog for co-matching-free classes.

\begin{theorem}\label{thm:subflatness}
    For every graph class $\CC$, the following are equivalent:
    \begin{enumerate}
        \item $\CC$ is monadically stable and co-matching-free.
        \item $\CC$ is $r$-subflip-flat for every $r\in \N$.
    \end{enumerate}
\end{theorem}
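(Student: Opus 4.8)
The plan is to prove the equivalence by combining \Cref{thm:flatness} with the machinery built in \Cref{sec:subflips}. For the easy direction (2. $\Rightarrow$ 1.), suppose $\CC$ is $r$-subflip-flat for every $r$. Since every subflip is a flip, $\CC$ is trivially $r$-flip-flat for every $r$, so by \Cref{thm:flatness} the class $\CC$ is monadically stable. Co-matching-freeness should follow from the fact that subflips can only delete edges: a large semi-induced co-matching is very dense, and I would argue that the separation guaranteed by subflip-flatness is incompatible with containing arbitrarily large co-matchings. Concretely, I expect to use \Cref{lem:preserve-cmi}: a subflip of bounded size can only decrease the co-matching-index by a bounded factor, so if $\CC$ had unbounded co-matching-index, then after any $k$-subflip of a large co-matching one would still retain a large co-matching, whose two sides sit at small pairwise distance (every pair of the two sides is adjacent except on the diagonal), contradicting that we can separate a large subset to distance $> r$.

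For the main direction (1. $\Rightarrow$ 2.), assume $\CC$ is monadically stable and co-matching-free, say with co-matching-index bounded by some constant $t$. Fix $r \in \N$. By \Cref{thm:flatness}, $\CC$ is $r'$-flip-flat for $r' := 3t \cdot r$, witnessed by a margin function $M$ and budget $k$. Given $G \in \CC$ and $W \subseteq V(G)$ with $|W| \ge M(m)$, flip-flatness yields a set $A \subseteq W$ of size $m$ and a $k$-flip $H$ of $G$ with $\dist_H(u,v) > r'$ for all distinct $u,v \in A$. The idea is to replace $H$ by a subflip $H'$ that only shrinks distances by a bounded factor, so that the well-separation survives. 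This is exactly what \Cref{lem:transfer} provides: since $G$ has co-matching-index less than $t$ and $\PP$ is the flip partition of size $k$, there is a refinement $\QQ$ of size at most $k \cdot t^k$ such that setting $H' := G \ominus \QQ$, an edge of $H'$ between $u,v$ implies $\dist_H(u,v) < 3t$. By the remark following \Cref{lem:transfer}, this gives the ball containment $B_{r}^{H'}(v) \subseteq B_{3tr}^{H}(v)$ for every $v$.

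Now I would conclude as follows. For distinct $u,v \in A$ we have $\dist_H(u,v) > r' = 3tr$, so $v \notin B_{3tr}^H(u) \supseteq B_r^{H'}(u)$, which means $\dist_{H'}(u,v) > r$. Thus $H' = G \ominus \QQ$ is a $(k t^k)$-subflip of $G$ separating the size-$m$ set $A \subseteq W$ to pairwise distance $> r$. Taking the budget $k' := k t^k$ and the same margin function $M$ witnesses that $\CC$ is $r$-subflip-flat. Since $r$ was arbitrary, $\CC$ is $r$-subflip-flat for every $r$, completing this direction.

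The main obstacle is the $r = \infty$ case implicitly present in the definition, and more subtly the uniformity of the construction: for each fixed $r$ I need a \emph{single} budget $k'$ and margin $M$ working across all of $\CC$, which is fine because the flip-flatness budget $k$ and the co-matching-index bound $t$ are both global constants for the class, so $k' = k t^k$ is a genuine constant. The one genuine subtlety is that \Cref{lem:transfer} is stated for $\dist_H(u,v) < 3t$ using a fixed local contraction factor $3t$, so I must be careful to choose $r' = 3tr$ (not merely $3t + r$ or similar) so that the multiplicative blow-up in the ball-containment remark is absorbed correctly; I would double-check the off-by-constant bookkeeping there, since a strict-versus-non-strict inequality mismatch between the ``$< 3t$'' in \Cref{lem:transfer} and the ``$> r$'' separation target is the kind of thing that needs to line up exactly. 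The easy direction's co-matching-freeness claim is the other place warranting care, and \Cref{lem:preserve-cmi} is the right tool to make it rigorous.
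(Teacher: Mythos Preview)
Your proof is correct and matches the paper's approach closely. The forward direction is exactly the paper's \Cref{lem:subflatness-forward}: apply $3tr$-flip-flatness, then use \Cref{lem:transfer} to replace the $k$-flip by a $kt^k$-subflip; your bookkeeping with the ball-containment remark is fine, and the theorem only asks for $r\in\N$, so the $r=\infty$ worry is moot here.

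One small point on the backward direction: citing \Cref{lem:preserve-cmi} as a black box is not quite enough. That lemma guarantees a large co-matching survives \emph{somewhere} in $G\ominus\PP$, but you need the surviving co-matching to involve the specific set $A$ that subflip-flatness handed you. The paper's \Cref{lem:subflatness-backward} does the pigeonhole directly on $A$: take $W=\{a_1,\ldots,a_\ell\}$, get $A$ of size $3k^2$, then find three $a_i\in A$ in a common part $P$ and their three matched $b_i$ in a common part $Q$; since $P,Q$ are not fully adjacent in $G$ (the diagonal non-edges witness this), the subflip leaves these six vertices as a co-matching, so $a_1,a_2$ are at distance $\le 2$. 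This is exactly the proof idea \emph{inside} \Cref{lem:preserve-cmi}, so your instinct is right---you just need to run the pigeonhole on $A$ rather than invoke the lemma's conclusion globally.
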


The forward direction is a simple application of \Cref{lem:transfer}.

\begin{lemma}\label{lem:subflatness-forward}
    For every graph class $\Cc$ with co-matching-index less than $t \in \N$ and for every $r\in \N \cup \{\infty\}$:
    If~$\Cc$ is $3tr$-flip-flat, then it is also $r$-subflip-flat.
\end{lemma}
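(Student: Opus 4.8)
The plan is to transfer a flip witnessing $3tr$-flip-flatness into a subflip witnessing $r$-subflip-flatness by applying \Cref{lem:transfer}. Fix the co-matching-index bound $t$ and suppose $\CC$ is $3tr$-flip-flat with margin function $M$ and budget $k$. I will show that $\CC$ is $r$-subflip-flat with the \emph{same} margin function $M$ and budget $k \cdot t^k$. So let $G \in \CC$ and let $W \subseteq V(G)$ with $|W| \geq M(m)$. By $3tr$-flip-flatness, there is a $k$-flip $H$ of $G$, witnessed by a size-$k$ partition $\PP$, and a set $A \subseteq W$ with $|A| \geq m$ such that $\dist_H(u,v) > 3tr$ for all distinct $u,v \in A$.

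The key step is to feed this partition $\PP$ into \Cref{lem:transfer}. Since $G$ has co-matching-index less than $t$, the lemma yields a refinement $\QQ$ of $\PP$ of size at most $k \cdot t^k$ such that for the particular $\PP$-flip $H$ and all $u,v \in V(G)$,
\[
    G \ominus \QQ \models E(u,v) \ \text{ implies } \ \dist_H(u,v) < 3t.
\]
I then set $H' := G \ominus \QQ$, which is a $(k \cdot t^k)$-subflip of $G$, and verify the separation of $A$ in $H'$. The contrapositive of the displayed implication, iterated along a path, is exactly the ball containment noted in the remark after \Cref{lem:transfer}: every edge of $H'$ corresponds to a jump of length less than $3t$ in $H$, so a path of length $\ell$ in $H'$ between two vertices forces their $H$-distance to be less than $3t\ell$. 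Concretely, $\dist_{H'}(u,v) \leq r$ would give $\dist_H(u,v) < 3t \cdot r$, contradicting $\dist_H(u,v) > 3tr$ for distinct $u,v \in A$. Hence $\dist_{H'}(u,v) > r$ for all distinct $u,v \in A$, as required.

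For the case $r = \infty$, the argument is the analog: two vertices in the same connected component of $H'$ would be joined by a finite $H'$-path, which by the same edge-by-edge bound places them at finite $H$-distance, hence in the same connected component of $H$; this contradicts the $\infty$-separation of $A$ in $H$. So distinct vertices of $A$ lie in distinct components of $H'$.

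I expect the argument to be essentially routine given \Cref{lem:transfer}, which does all the heavy lifting; the only point requiring a little care is the path-lifting step, i.e.\ correctly turning the single-edge bound $\dist_H(u,v) < 3t$ into the ball containment $B_r^{H'}(v) \subseteq B_{3tr}^{H}(v)$ and reading off the contrapositive for the separation guarantee. One should also double-check that $H'$ is genuinely a subflip (it is, being of the form $G \ominus \QQ$) and that the budget $k \cdot t^k$ and margin $M$ are independent of $m$ and of the choice of $G$, which they are.
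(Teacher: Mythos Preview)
Your proof is correct and follows essentially the same approach as the paper: apply $3tr$-flip-flatness to obtain the $k$-flip $H$ and the set $A$, then invoke \Cref{lem:transfer} to replace $H$ by the $(k\cdot t^k)$-subflip $H' = G \ominus \QQ$, and read off the separation of $A$ in $H'$ via the ball-containment $B_r^{H'}(v) \subseteq B_{3tr}^{H}(v)$. In fact you spell out the path-lifting step and the $r=\infty$ case more explicitly than the paper does, which leaves them to the remark after \Cref{lem:transfer}.
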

\begin{proof}
Let $M : \NN \rightarrow \NN$ and $k\in \NN$ be the margin and budget for which $\Cc$ is $3tr$-flip-flat.
We claim that~$\Cc$ is $r$-subflip-flat for margin $M$ and budget $k' := k \cdot t^k$.
Given a size $M(m)$ vertex subset~$W$ in a graph $G\in\Cc$, we can apply $3tr$-flip-flatness, which yields a size $m$ subset $A$ of $W$ and a $k$-flip~$H$ in which all elements from $A$ are at pairwise distance greater than $3tr$ from each other.
Using \Cref{lem:transfer}, we obtain from $H$ a $k t^k$-subflip $H'$ of $G$ in which all elements from $A$ are at pairwise distance greater than $r$ from each other.
(In particular if $r = \infty$, then elements in different connected components of $H$ are still in different connected components of $H'$.)
\end{proof}

For the backward direction, notice that if a graph class is not monadically stable, then, since it is not even flip-flat, it is for sure not subflip-flat.
It remains to prove that classes with unbounded co-matching-index are not subflip-flat.

\begin{lemma}\label{lem:subflatness-backward}
    No class with unbounded co-matching-index is $2$-subflip-flat.
\end{lemma}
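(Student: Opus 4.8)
The plan is to argue by contradiction, exploiting the defining feature of subflips: a subflip can delete an edge $uv$ only when the parts $\PP(u)$ and $\PP(v)$ are \emph{fully adjacent} in $G$ (or coincide in a clique part), so that the non-edges $a_ib_i$ of a co-matching actively obstruct such deletions. Suppose $\CC$ were $2$-subflip-flat, witnessed by a margin function $M$ and a budget $k$. Set $m := 2k+1$ and $t := M(m)$. Since $\CC$ has unbounded co-matching-index, I would pick $G \in \CC$ containing a semi-induced co-matching of order $t$ on vertices $a_1,\ldots,a_t$ and $b_1,\ldots,b_t$, so that $a_ib_j \in E(G) \iff i \neq j$. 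Applying $2$-subflip-flatness to the set $W := \{b_1,\ldots,b_t\}$ of size $M(m)$ yields a set $A \subseteq W$ with $|A| = m$ and a $k$-subflip $H = G \ominus \PP$ in which every two distinct vertices of $A$ lie at distance greater than $2$.

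The crux is then a pigeonhole argument together with the right choice of a surviving common neighbor. Distributing the $m = 2k+1$ vertices of $A$ over the at most $k$ parts of $\PP$, some part $B^{\star}$ satisfies $|A \cap B^{\star}| \geq \lceil m/k \rceil \geq 3$. I would choose three distinct vertices $b_p, b_q, b_r \in A \cap B^{\star}$ and claim that $a_r$ is a common neighbor of $b_p$ and $b_q$ in $H$, contradicting $\dist_H(b_p,b_q) > 2$. First, $a_r b_p, a_r b_q \in E(G)$ because $r \notin \{p,q\}$. Recall that $\ominus$ uses the canonical relation $F = \{(P,Q) : P,Q \text{ fully adjacent in } G\}$, and an edge $uv \in E(G)$ survives in $H$ precisely when $(\PP(u),\PP(v)) \notin F$. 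Since $\PP(b_p) = \PP(b_q) = B^{\star}$, it therefore suffices to show $(\PP(a_r), B^{\star}) \notin F$.

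To see this, note that $b_r \in B^{\star}$ while $a_r b_r \notin E(G)$, and $a_r \neq b_r$. This single non-edge prevents $\PP(a_r)$ and $B^{\star}$ from being fully adjacent when $\PP(a_r) \neq B^{\star}$, and prevents $B^{\star}$ from being a clique when $\PP(a_r) = B^{\star}$; in either case $(\PP(a_r), B^{\star}) \notin F$. Hence both $a_r b_p$ and $a_r b_q$ survive in $H$, giving $\dist_H(b_p,b_q) \leq 2$, the desired contradiction. The one genuinely delicate point — and the step I expect to require the most care in the writeup — is the insistence on placing \emph{three} matched $b$-vertices into a common part and using the partner $a_r$ of the third as the surviving witness: putting $b_p,b_q$ in the same part is what makes the deletion condition for $a_r b_p$ and $a_r b_q$ uniform (both governed by $(\PP(a_r),B^{\star})$), and having $b_r$ in that same part is exactly what supplies the obstructing non-edge $a_r b_r$. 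Everything else is routine, and no hypotheses on self-loops are needed.
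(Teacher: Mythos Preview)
Your proof is correct and follows essentially the same strategy as the paper's: pigeonhole to place three matched vertices in a single part, then use the non-edge $a_rb_r$ to block full adjacency so that the crossing edges survive and yield a common neighbor. Your version is in fact a bit leaner than the paper's, which pigeonholes twice (first on the $a_i$'s in the separated set, then on the corresponding $b_i$'s) and thus requires a co-matching of order $M(3k^2)$; you pigeonhole only once on the $b_i$'s and make no demand on where $a_r$ lands, getting away with $M(2k+1)$.
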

\begin{proof}
    Assume towards a contradiction the existence of a graph class $\CC$ with unbounded co-matching-index that is $2$-subflip-flat with margin $M$ and budget $k$.
    Let $G \in \CC$ be a graph with co-matching-index at least $\ell := M(3k^2)$ and let $a_1,\ldots,a_\ell$ and $b_1,\ldots,b_\ell$ be the vertices that semi-induce a co-matching of order $\ell$.
    Applying subflip-flatness to $G$ and the set $W = \{a_1,\ldots,a_\ell\}$ yields a partition~$\PP$ of size at most $k$ and a size $\ell' := 3k^2$ subset $A$ of $W$ (we can assume $A = \{ a_1, \ldots, a_{\ell'} \}$) such that the vertices in $A$ are at pairwise distance greater than $2$ in $G\ominus \PP$.
    Applying the pigeonhole principle twice, we can assume $a_1,a_2,a_3$ are in the same part $P$ of $\PP$ and $b_1,b_2,b_3$ are in the same part $Q$ of $\PP$.
    This means $P$ and $Q$ are not fully adjacent in $G$, so the edges between $P$ and $Q$ were not deleted in the subflip.
    This means the six vertices still semi-induce a co-matching in $G \ominus \PP$.
    Then $a_1$ and $a_2$ have distance at most two from each other; a contradiction.
\end{proof}

This finishes the proof of \Cref{thm:subflatness}.
We later want to use subflip-flatness not only for singleton elements, but also for tuples.
We call two tuples \emph{disjoint} if they share no element.
We define the distance between two $t$-tuples $\bar u$ and $\bar v$ in $H$ as $\min_{i,j\in [t]} \dist_H(u_i,v_j)$.

\begin{definition}[Subflip-flatness for tuples]
For $r\in \NN \cup \{ \infty \}$, a graph class $\Cc$ is $r$-subflip-flat \emph{for tuples}, if for every $t\in\N$ there exists a margin function $M:\NN \to \NN$ and a budget $k \in \NN$ such that for every $G \in \Cc$ and set $W \subseteq V(G)^t$ of disjoint $t$-tuples of size at least $M(m)$, there is a $k$-subflip $H$ of $G$ and a size $m$ set $A \subseteq W$ such that for all distinct $\bar u, \bar v \in A$ we have
\[
    \dist_H(\bar u, \bar v) > r.
\]
\end{definition}

\begin{lemma}\label{lem:sff-tuples}
    Every graph class that is $2r$-subflip-flat is also $r$-subflip-flat for tuples.
\end{lemma}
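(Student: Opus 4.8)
The plan is to lift singleton $2r$-subflip-flatness to tuples in two stages: first I would separate each coordinate individually and merge the resulting subflips into one, and then prune the selected tuples by a ball-packing argument that converts same-coordinate separation into full tuple separation.

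Fix $t$ and let $k$ and $M_0$ be the budget and margin witnessing $2r$-subflip-flatness. Given $G \in \Cc$ and a set $W$ of disjoint $t$-tuples, I would process the coordinates one at a time. Set $W_0 := W$. For $i = 1, \ldots, t$, the $i$-th components $\{u_i : \bar u \in W_{i-1}\}$ form a set of $|W_{i-1}|$ distinct vertices (distinct because the tuples are disjoint); applying $2r$-subflip-flatness to this set yields a $k$-subflip $H_i$ of $G$ and a subset $W_i \subseteq W_{i-1}$ whose $i$-th components are pairwise at distance $> 2r$ in $H_i$. Choosing the intermediate target sizes by composing $M_0$ with itself $t$ times ensures that, whenever $|W|$ is large enough, the final set $W_t$ attains a prescribed size $m_t$. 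Using the common refinement of the $t$ flip-partitions (\Cref{lem:refine}, equivalently the aggregation of the $H_i$ into a single $k^t$-subflip that is a subgraph of all of them), I obtain one $k^t$-subflip $H$ of $G$ that is a subgraph of each $H_i$. Since removing edges only increases distances, for every coordinate $i$ the $i$-th components of distinct tuples in $W_t$ remain pairwise at distance $> 2r$ in $H$.

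The remaining and conceptually central step converts this coordinate-wise separation into genuine tuple separation in the \emph{same} subflip $H$. Call two distinct tuples $\bar u, \bar v \in W_t$ \emph{bad} if $\dist_H(u_i, v_j) \leq r$ for some $i, j \in [t]$; note that $i = j$ cannot cause badness, since same-coordinate components are more than $2r$ apart. The key is a packing bound: for a fixed tuple $\bar v$ and fixed indices $i, j$, at most one tuple $\bar u \in W_t$ can satisfy $\dist_H(u_i, v_j) \leq r$, because two distinct such $i$-th components would lie within distance $2r$ of one another, contradicting their pairwise separation $> 2r$. This is exactly where the doubling of the radius is used. Summing over the $t^2$ index pairs, each tuple is bad with at most $t^2$ others, so the bad graph on $W_t$ has maximum degree at most $t^2$ and hence an independent set $A$ of size at least $|W_t|/(t^2+1)$. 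By construction $A$ consists of tuples pairwise at distance $> r$ in $H$, which is the desired conclusion with budget $k^t$ and margin $M(m) := M_0^{(t)}\big((t^2+1)\,m\big)$, the $t$-fold composition of (a monotone envelope of) $M_0$.

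The main obstacle is precisely the cross-coordinate interaction just handled: separating each coordinate on its own leaves the distances $\dist_H(u_i, v_j)$ with $i \neq j$ uncontrolled, and a naive triangle-inequality bound fails because within-tuple distances can be arbitrarily large. The packing/degree argument sidesteps this entirely and is the crux of the proof; the bookkeeping of the composed margins, and the case $r = \infty$ (where ``distance $> 2r$'' reads as ``in different connected components'', and the same packing bound applies verbatim with balls replaced by connected components), are then routine.
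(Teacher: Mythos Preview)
Your proposal is correct and follows essentially the same approach as the paper, which defers to Lemma~2.10 of~\cite{pilipczuk2018number}: iterate subflip-flatness over the coordinates, aggregate the resulting subflips via the common refinement (\Cref{lem:refine}), and then use the packing/bounded-degree argument exploiting the doubled radius to pass from coordinate-wise to full tuple separation. The bookkeeping of margins and the treatment of $r=\infty$ are as you indicate.
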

\begin{proof}
    The lemma is proved exactly as Lemma 2.10 of~\cite{pilipczuk2018number}, replacing uniform quasi-wideness by subflip-flatness and using \cref{lem:refine}.
\end{proof}
\section{Subflipper-rank}\label{sec:subflipper-rank}

In this section we will characterize the semi-ladder-free fragment of monadic dependence as those classes that have bounded \emph{subflipper-rank}.

\subsection{Definition and basic properties}
\newcommand{\frk}{\mathrm{frk}}
\newcommand{\sfrk}{\mathrm{sfrk}}
\newcommand{\Ball}{\mathrm{Ball}}

Given a graph $G$, an integer $r$, and a vertex $v \in V(G)$ we write $\Ball^G_{r}(v)$ as a shorthand for the graph $G[N_r^G[v]]$ induced by the (closed) $r$-neighborhood of $v$ in $G$.

\begin{definition}[(Sub)flipper-rank]
    Fix $r\in \NN$ and $k\in\NN$. For the single vertex graph $K_1$ 
    we define the \emph{flipper-rank}
    \[
        \frk_{r,k}(K_1) := 0.
    \]
    For every other graph $G$ we define
    \[
        \frk_{r,k}(G) := 1+ \min_{\substack{\text{$H$ is a} \\ \text{$k$-flip of $G$}}} \max_{v \in V(H)} \frk_{r,k}(\Ball_r^H(v)).
    \]
    A graph class $\CC$ has \emph{bounded $r$-flipper-rank} if there are $k,\ell \in \NN$ such that $\frk_{r,k}(G) \leq \ell$ for all $G \in \CC$.
    The definition of the \emph{subflipper-rank} $\sfrk_{r,k}(G)$ is derived by demanding that $H$ is also a $k$-subflip of $G$. 
    We define \emph{bounded $r$-subflipper-rank} as expected.
\end{definition}

Intuitively, we can understand the flipper-rank (and also the subflipper-rank) as a game called the \emph{flipper-game}~\cite{flipper-game}.
Played on a graph $G$ between two players \emph{flipper} and \emph{localizer}, in every round of the game, flipper applies a $k$-flip to the current graph (with the goal of decomposing it as fast as possible), and localizer shrinks the arena to an $r$-neighborhood (with the goal of surviving as long as possible).
The flipper-rank then denotes the number of rounds needed for flipper to finish the game.
The following theorem is a key ingredient for the first-order model checking algorithm for monadically stable classes.

\begin{theorem}[\cite{flipper-game}]\label{thm:flipper}
    For every graph class $\CC$, the following are equivalent:
    \begin{enumerate}
        \item $\CC$ is monadically stable.
        \item $\CC$ has bounded $r$-flipper-rank for every $r\in \N$.
    \end{enumerate}
\end{theorem}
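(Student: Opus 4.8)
The plan is to establish the two implications separately, using \cref{thm:flatness} (the flip-flatness characterization of monadic stability) as the main external tool, together with a transduction-robustness lemma for the flipper-rank.

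For the implication (2)~$\Rightarrow$~(1) I would combine two ingredients. First, the class of all half-graphs itself has unbounded $r$-flipper-rank for, say, $r=3$: a half-graph of order $n$ is connected with diameter at most $3$, so a radius-$3$ ball around a suitable centre retains essentially the whole structure, while a single $k$-flip of a half-graph of order $n$ still semi-induces a half-graph whose order grows with $n$ for fixed $k$ (apply the pigeonhole principle to the flip partition separately on the two sides of the order, so that within a single part on each side the order $i\leq j$ is untouched). Hence localizer can keep an arbitrarily large semi-induced half-graph inside the arena for arbitrarily many rounds against any fixed budget $k$, showing $\frk_{3,k}$ is unbounded. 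Second, I would show that having bounded $r$-flipper-rank (for all $r$) is closed under transductions, in the same spirit as monadic stability and flip-flatness are, by tracking how colors translate into flips and how the output balls are contained in bounded-radius input balls. Together these give the contrapositive: a class of bounded $r$-flipper-rank for every $r$ cannot transduce the half-graphs, and is therefore monadically stable by \cite{baldwin1985second, nevsetvril2021rankwidth}.

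For the harder implication (1)~$\Rightarrow$~(2) I would build a winning strategy for flipper out of flip-flatness. Fix $r$, and let $k$ and $M$ be the budget and margin witnessing that $\CC$ is $r'$-flip-flat for some radius $r'\gg r$ (\cref{thm:flatness}). The guiding idea is that flip-flatness lets flipper perform a single $k$-flip after which any prescribed large vertex set contains a still-large subset that is pairwise more than $r'$-separated; localizing to a radius-$r$ ball of the flipped graph then meets this separated subset in at most one vertex, so the ``flat content'' of the arena strictly drops. To convert this one-shot separation into a genuine rank bound, I would define a well-founded potential on graphs and prove the crucial lemma that flipper can always play a single $k$-flip after which \emph{every} radius-$r$ ball of the resulting graph has strictly smaller potential. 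Iterating, flipper drives the arena down to a single vertex (potential $0$) in a number of rounds bounded uniformly by the initial potential, yielding constants $k,\ell$ with $\frk_{r,k}(G)\leq \ell$ for all $G\in\CC$.

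The main obstacle is precisely this last step. Flip-flatness is a statement about separating \emph{one} large subset of \emph{one} chosen set, whereas the flipper-rank recursion must control every radius-$r$ ball of the flipped graph simultaneously and in a measure that provably decreases each round; a naive potential can be merely reshuffled between balls by a single flip rather than genuinely reduced. Bridging this gap is the technical core of \cite{flipper-game}: one must choose the potential carefully and argue its strict decrease under a single well-placed flip using the quantitative combinatorics of monadically stable classes. I would not expect a short self-contained derivation of this strict-decrease lemma, and would organize the entire argument around it, treating the half-graph robustness and the transduction-closure of the rank as the comparatively routine ingredients of the reverse direction.
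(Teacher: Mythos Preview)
This theorem is not proved in the paper; it is cited from \cite{flipper-game} and used as a black box (notably in \cref{sec:subflipper-backward}). There is no proof here to compare your proposal against.

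Two remarks on your sketch nonetheless. For $(1)\Rightarrow(2)$ you correctly locate the hard step --- converting the one-shot separation of flip-flatness into a potential that strictly decreases in \emph{every} radius-$r$ ball after a single well-chosen flip --- and you explicitly defer its resolution to \cite{flipper-game}. That is an honest acknowledgment of a gap, not a proof plan; the actual argument in \cite{flipper-game} does not proceed via a simple potential function but via a recursive neighbourhood-cover construction tailored to monadically stable classes.

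For $(2)\Rightarrow(1)$, the step you label ``comparatively routine'' --- that bounded flipper-rank is preserved under transductions --- is not routine. A general transduction redefines the edge relation by an arbitrary, non-local first-order formula, so a $k$-flip and an $r$-ball in the output graph have no direct counterpart as a bounded flip and a bounded-radius ball in the input graph; establishing this closure is of the same order of difficulty as the forward direction. A more self-contained route for this implication is to exhibit, inside any non-stable class directly (without passing through a transduction), concrete graphs on which localizer survives arbitrarily long --- e.g.\ via the pattern characterization of \cref{thm:forbidden-flips-stability}.
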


The goal of this section is to prove the following analog for co-matching-free classes.

\begin{theorem}\label{thm:subflipper}
    For every graph class $\CC$, the following are equivalent:
    \begin{enumerate}
        \item $\CC$ is monadically stable and co-matching-free.
        \item $\CC$ has bounded $r$-subflipper-rank for every $r\in \N$.
    \end{enumerate}
\end{theorem}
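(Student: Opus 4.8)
The plan is to derive both directions from the flipper-game characterization of monadic stability (\Cref{thm:flipper}), using co-matching-freeness to bridge between flips and subflips via \Cref{lem:transfer} and \Cref{lem:preserve-cmi}. Two routine monotonicity facts about $(\mathrm{sub})$flipper-rank will be needed, and I would establish them first: (i) $\frk_{r,k}$ and $\sfrk_{r,k}$ are nonincreasing in the budget $k$; and (ii) both are monotone under induced subgraphs, i.e. $\frk_{r,k}(G[S])\le \frk_{r,k}(G)$ and likewise for $\sfrk$. Both follow by induction on the rank, using the hereditariness of $(\mathrm{sub})$flips (\Cref{lem:flip-basics}, \Cref{lem:subflip-induce}) together with the fact that balls in $G[S]$ are contained in balls in $G$. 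I would also record that since every $k$-subflip is a $k$-flip, $\frk_{r,k}(G)\le\sfrk_{r,k}(G)$ for every graph $G$.

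For the direction (2)~$\Rightarrow$~(1): bounded $r$-subflipper-rank gives bounded $r$-flipper-rank by $\frk\le\sfrk$, so $\CC$ is monadically stable by \Cref{thm:flipper}. For co-matching-freeness I would argue the contrapositive, showing that unbounded co-matching-index forces unbounded $2$-subflipper-rank. Define $f(0):=2$ and $f(\ell):=(f(\ell-1)+1)\,k^2$, and prove by induction on $\ell$ that if the co-matching-index of $G$ is at least $f(\ell)$ then $\sfrk_{2,k}(G)>\ell$. For the step, fix any $k$-subflip $H=G\ominus\PP$; by \Cref{lem:preserve-cmi}(2) the graph $H$ still contains a semi-induced co-matching of order $f(\ell-1)+1$. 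Every co-matching has diameter at most $2$, so after the localizer picks one of its vertices as the center $v$, the ball $\Ball_2^H(v)$ still contains a semi-induced co-matching of order $f(\ell-1)$; the induction hypothesis gives $\sfrk_{2,k}(\Ball_2^H(v))>\ell-1$, and since $H$ was arbitrary, $\sfrk_{2,k}(G)>\ell$. As the $f(\ell)$ are finite, unbounded co-matching-index yields graphs of arbitrarily large $2$-subflipper-rank, contradicting assumption (2).

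The direction (1)~$\Rightarrow$~(2) is the crux. Fix $r$ and let $t$ bound the co-matching-index of $\CC$. By \Cref{thm:flipper}, $\CC$ has bounded $3tr$-flipper-rank, say with budget $k'$ and bound $\ell'$; by monotonicity~(ii) this bound extends to all induced subgraphs of members of $\CC$. I would then prove by induction on $\ell$ the statement $C(\ell)$: for every $\kappa$ there is a budget $k=k(\ell,\kappa)$ such that every $G$ with co-matching-index $<t$ and $\frk_{3tr,\kappa}(G)\le\ell$ satisfies $\sfrk_{r,k}(G)\le\ell$. For the step, take an optimal $\kappa$-flip $H=G\oplus(\PP,F)$ and let $\QQ$ be the refinement of $\PP$ from \Cref{lem:transfer}; set $H':=G\ominus\QQ$, a $(\kappa t^\kappa)$-subflip of $G$. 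The key structural observation is that, since $\QQ$ refines $\PP$ and flips are involutive, $H'$ is a $\QQ$-flip of $H$ (\Cref{lem:flip-basics}(1,2)), while the remark after \Cref{lem:transfer} gives $B_r^{H'}(v)\subseteq B_{3tr}^H(v)$. Hence each $\Ball_r^{H'}(v)$ is a bounded-size flip of an induced subgraph of $\Ball_{3tr}^H(v)$; combining (i), (ii), and the transitivity of flips yields $\frk_{3tr,\kappa^2 t^\kappa}(\Ball_r^{H'}(v))\le\frk_{3tr,\kappa}(\Ball_{3tr}^H(v))\le\ell-1$. Applying $C(\ell-1)$ with the enlarged budget $\kappa^2 t^\kappa$ to every ball, and using budget-monotonicity to unify the budget of the root subflip with those of the recursion, gives $\sfrk_{r,k}(G)\le\ell$ for a suitable $k$. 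Instantiating $C(\ell')$ with $\kappa=k'$ then bounds the $r$-subflipper-rank over all of $\CC$, and letting $r$ range over $\N$ finishes the proof.

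The main obstacle I anticipate is controlling the budget in the forward direction: passing from a flip to its refining subflip multiplies the budget by $t^\kappa$, and the passage from $H$ to $H'$ multiplies it again, so the flipper budget grows at each level of the recursion. This is harmless only because the number of levels is capped by the global bound $\ell'$ coming from monadic stability, which is precisely why the induction must carry the budget as a free parameter $\kappa$ rather than a fixed constant. The remaining work -- the two monotonicity lemmas and the observation that $G\ominus\QQ$ is a $\QQ$-flip of $H$ -- is routine but essential.
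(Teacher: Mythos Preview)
Your backward direction $(2)\Rightarrow(1)$ is essentially the paper's argument (the paper uses radius $3$ and keeps the entire co-matching in the ball rather than dropping one vertex at radius $2$, but your variant works too; the phrase ``every co-matching has diameter at most $2$'' is imprecise, but your $+1$ slack compensates correctly).

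The forward direction has a genuine gap. Your inductive hypothesis $C(\ell-1)$ requires the graph to have co-matching-index $<t$, and you want to apply it to $\Ball_r^{H'}(v)$. But this is an induced subgraph of $H'=G\ominus\QQ$, not of $G$, and subflipping can \emph{increase} the co-matching-index: \Cref{lem:preserve-cmi} only gives $\operatorname{co-matching-index}(G\ominus\QQ)<t\cdot|\QQ|^2$, which is far larger than $t$. So the premise of $C(\ell-1)$ fails for the balls you recurse into. Attempting to patch this by letting $t$ grow along the induction does not work either: \Cref{lem:transfer} applied with bound $t'$ yields the distance factor $3t'$, so you would need the flipper-rank bound at radius $3t'r$, but \Cref{thm:flipper} only gave you a bound at the fixed radius $3tr$, and there is no monotonicity of $\frk$ in the radius that bridges this.

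The paper resolves this by strengthening the induction hypothesis: instead of ``$G$ has co-matching-index $<t$'', it carries ``$G$ is $m$-similar to a fixed graph $G_0$ with co-matching-index $<t$'' (\Cref{lem:subflipper-inductive}), where $m$-similarity means $G\ominus\MM=G_0\ominus\MM$ for some size-$m$ partition $\MM$. The point is that \Cref{lem:transfer} is then applied to $G_0$ (with the partition $\MM\wedge\PP$), not to $G$; since $G_0$ never changes, the co-matching bound $t$ and hence the radius $3tr$ stay fixed throughout. Similarity is preserved under subflips and induced subgraphs (\Cref{lem:similarity}), so the balls $\Ball_r^{H^*}(v)$ remain $|\QQ|$-similar to induced subgraphs of $G_0$, and the induction goes through with only the similarity parameter $m$ (and the budget) growing. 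This bookkeeping via a fixed anchor $G_0$ is the missing idea.
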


Before we prove the forward and backward direction of \Cref{thm:subflipper} in \Cref{sec:subflipper-forward,sec:subflipper-backward},
we first collect some useful facts about the (sub)flipper-rank.

\begin{lemma}\label{lem:flipper-basics}
    For all graphs $G$ and $G'$:
    \begin{enumerate}
        \item $\frk_{r,k}(G) \leq \sfrk_{r,k}(G)$.
        \item If $G'$ is an induced subgraph of $G$, then $\frk_{r,k}(G') \leq \frk_{r,k}(G)$.
        \item If $G'$ is a $k'$-flip of $G$, then $\frk_{r,k\cdot k'}(G') \leq \frk_{r,k}(G)$.
    \end{enumerate}
\end{lemma}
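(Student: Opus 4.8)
The plan is to prove all three parts by induction on the (sub)flipper-rank, using the symmetry, transitivity, and hereditariness of flips collected in \Cref{lem:flip-basics}. As a preliminary I would record a \emph{budget-monotonicity} fact: if $k_1 \le k_2$, then $\frk_{r,k_2}(G) \le \frk_{r,k_1}(G)$ for every graph $G$. This follows by induction on $\frk_{r,k_1}(G)$, since the $k_1$-flip $H$ witnessing $\frk_{r,k_1}(G)$ is in particular a $k_2$-flip, and the induction hypothesis bounds the rank of each ball $\Ball^H_r(v)$ under the larger budget by its rank under the smaller one. In all three arguments the base case $G = K_1$ is immediate (both sides equal $0$, and every flip of a single-vertex graph is again $K_1$ since self-loops are preserved), so I describe only the inductive steps below.

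For Part 1, I would induct on $\sfrk_{r,k}(G)$. Let $H$ be the $k$-subflip realizing $\sfrk_{r,k}(G)$. Since every $k$-subflip is in particular a $k$-flip, $H$ is an admissible candidate in the definition of $\frk_{r,k}(G)$. Applying the induction hypothesis to each ball $\Ball^H_r(v)$ (whose subflipper-rank is strictly smaller than that of $G$) yields $\frk_{r,k}(\Ball^H_r(v)) \le \sfrk_{r,k}(\Ball^H_r(v))$, and hence $\frk_{r,k}(G) \le 1 + \max_v \frk_{r,k}(\Ball^H_r(v)) \le \sfrk_{r,k}(G)$.

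For Part 2, I would induct on $\frk_{r,k}(G)$, using hereditariness of flips (\Cref{lem:flip-basics}(3)). Let $S := V(G')$ and let $H = G \oplus (\PP,F)$ witness $\frk_{r,k}(G)$; then $H[S]$ is a $\PP|_S$-flip, hence a $k$-flip, of $G' = G[S]$. The key observation is that for every $v \in S$ the ball $\Ball^{H[S]}_r(v)$ is an \emph{induced subgraph} of $\Ball^H_r(v)$: distances only grow when passing to an induced subgraph, so $N^{H[S]}_r[v] \subseteq N^H_r[v]$, and both balls are induced in $H$. Applying the induction hypothesis to this pair gives $\frk_{r,k}(\Ball^{H[S]}_r(v)) \le \frk_{r,k}(\Ball^H_r(v))$, and maximizing over $v \in S$ bounds $\frk_{r,k}(G')$ by $\frk_{r,k}(G)$. (Empty induced subgraphs can be excluded, as they are irrelevant for the intended applications.)

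Part 3 is the main obstacle, as it combines symmetry and transitivity of flips with budget-monotonicity. Write $G' = G \oplus (\PP',F')$ with $|\PP'| \le k'$, and induct on $\frk_{r,k}(G)$; let $H = G \oplus (\PP,F)$ witness it, with $|\PP| \le k$. By symmetry (\Cref{lem:flip-basics}(1)), $G$ is a $\PP'$-flip of $G'$, so $H$ arises from $G'$ by flipping first with $\PP'$ and then with $\PP$; by transitivity (\Cref{lem:flip-basics}(2)), $H$ is a $(\PP' \wedge \PP)$-flip of $G'$, and $|\PP' \wedge \PP| \le k\cdot k'$. Thus $H$ is an admissible candidate for $\frk_{r,kk'}(G')$, giving $\frk_{r,kk'}(G') \le 1 + \max_v \frk_{r,kk'}(\Ball^H_r(v))$. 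Finally I combine budget-monotonicity with the value of $\frk_{r,k}(G)$ in the inductive step: $\frk_{r,kk'}(\Ball^H_r(v)) \le \frk_{r,k}(\Ball^H_r(v)) \le \frk_{r,k}(G) - 1$, so $\frk_{r,kk'}(G') \le \frk_{r,k}(G)$. The subtle point requiring care is realizing $H$ itself (rather than some abstract new flip) as a bounded-budget flip of $G'$, which is exactly what symmetry and transitivity provide.
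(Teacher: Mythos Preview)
Your proof is correct and follows essentially the same approach as the paper: induction using the basic flip properties from \Cref{lem:flip-basics}. The one point worth noting is that you make explicit the budget-monotonicity step (larger $k$ never increases $\frk_{r,k}$), which the paper silently relies on in Part~3 when asserting that the same $H$ ``witnesses $\frk_{r,k\cdot k'}(G')\le d$''; your version is more rigorous here, and your observation that Part~3 is then a direct argument (no induction hypothesis is actually invoked once budget-monotonicity is available) is accurate.
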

\begin{proof}
    We prove the statements in order.
    \begin{enumerate}
        \item By induction on the subflipper-rank using the fact that every $k$-subflip is also a $k$-flip.
        \item By induction on the flipper-rank using hereditariness (\Cref{lem:flip-basics}).
        \item     By symmetry (\Cref{lem:flip-basics}), $G$ is a $k'$-flip of $G'$.
    Let $H$ be the $k$-flip of $G$ witnessing $\frk_{r,k}(G)\leq d$.
    By transitivity (\Cref{lem:flip-basics}), $H$ is also a $(k \cdot k')$-flip of $G'$ that witnesses $\frk_{r,k\cdot k'}(G')\leq d$.\qedhere
    \end{enumerate}
\end{proof}

We will later prove the bound on the flipper rank by induction and use that at any stage of the induction, our graph is similar to a graph of small co-matching-index. 

\begin{definition}[$\MM$-similarity]
Two graphs $G_1$ and $G_2$ on the same vertex set $V$ are called \emph{$\MM$-similar} if $\MM$ is a partition of $V$ such that $G_1 \ominus \MM = G_2 \ominus \MM$.
We say $G_1$ and $G_2$ are $m$-similar if they are $\MM$-similar for some partition $\MM$ of size at most $m$, which we call a \emph{mediator partition}, or simply a \emph{mediator}.
\end{definition}

\begin{lemma}\label{lem:similarity}
    For all graphs $G$ and $H$, every partition $\MM$ of $V(G)$, and set $S\subseteq V(G)$:
    \begin{enumerate}
        \item $G$ and $G$ are $1$-similar.
        \item $G$ and $G\ominus \MM$ are $\MM$-similar.
        \item If $G$ and $H$ are $\MM$-similar, then they are $\MM$-flips of each other.
        \item If $G$ and $H$ are $\MM$-similar, then $G[S]$ and $H[S]$ are $\MM|_S$-similar.
    \end{enumerate}
    
\end{lemma}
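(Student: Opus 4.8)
The plan is to prove each of the four parts by invoking the structural lemmas about subflips already established earlier in the section, rather than arguing anything from scratch. The relevant black boxes are \Cref{lem:refine} (that iterating a subflip along a refinement collapses to a single subflip), \Cref{lem:subflip-induce} (that subflipping commutes with taking induced subgraphs, up to a harmless extra subflip), and the symmetry and transitivity of ordinary flips from \Cref{lem:flip-basics}. The entire argument is then just a matter of recognizing which of these applies in each case.

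For part 1, I would take the trivial one-part partition $\MM := \{V(G)\}$; then $G \ominus \MM = G \ominus \MM$ holds tautologically, so $G$ is $1$-similar to itself. Part 2 asks for $(G \ominus \MM) \ominus \MM = G \ominus \MM$, and since $\MM$ is a refinement of itself this is exactly \Cref{lem:refine} applied with $\PP = \RR = \MM$; hence $G$ and $G \ominus \MM$ are $\MM$-similar. For part 3, I would set $K := G \ominus \MM = H \ominus \MM$ and recall that a subflip is in particular a flip, so $K$ is an $\MM$-flip of both $G$ and $H$. By the symmetry of flips, $G$ is an $\MM$-flip of $K$, and chaining this with $K$ being an $\MM$-flip of $H$ via transitivity yields that $G$ is an $(\MM \wedge \MM)$-flip of $H$; since $\MM \wedge \MM = \MM$, and symmetry then gives the reverse direction as well, this is exactly the claim that $G$ and $H$ are $\MM$-flips of each other.

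For part 4, assuming $G \ominus \MM = H \ominus \MM$, I would apply \Cref{lem:subflip-induce} to each side to obtain $G[S] \ominus \MM|_S = ((G \ominus \MM)[S]) \ominus \MM|_S$ together with the analogous identity for $H$; since $G \ominus \MM$ and $H \ominus \MM$ are literally the same graph on the same vertex set, their restrictions to $S$ coincide, so the two right-hand sides are equal and therefore $G[S] \ominus \MM|_S = H[S] \ominus \MM|_S$, as required. The only place demanding a moment's care — and the closest thing to an obstacle — is part 3: similarity is defined through the asymmetric subflip operation $\ominus$, yet the conclusion is stated in terms of the symmetric flip relation, so I must keep in mind that $G \ominus \MM$ is merely one particular $\MM$-flip of $G$. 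It is precisely this observation that licenses the appeal to the symmetry and transitivity laws for general flips, even though similarity itself is not a priori a symmetric relation.
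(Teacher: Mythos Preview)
Your proposal is correct and takes essentially the same approach as the paper's own proof: the trivial partition for part 1, the mediator $\MM$ for part 2, symmetry and transitivity of flips (via $\MM \wedge \MM = \MM$) for part 3, and two applications of \Cref{lem:subflip-induce} for part 4. Your explicit invocation of \Cref{lem:refine} in part 2 spells out what the paper leaves implicit when it simply writes ``Consider the mediator $\MM$,'' but the underlying reasoning is identical.
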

\begin{proof} We prove the statements in order.
    \begin{enumerate}
        \item Consider the mediator $\{ V(G) \}$.
        \item Consider the mediator $\MM$.
        \item $G \ominus \MM$ is an $\MM$-flip of $G$.
    By symmetry of flips (\Cref{lem:flip-basics}), $H$ is an $\MM$-flip of $G \ominus \MM = H \ominus \MM$.
    By transitivity of flips, $H$ is an $(\MM \wedge \MM = \MM)$-flip of $G$.
    \item Applying \Cref{lem:subflip-induce} twice, we get
        \[
         G[S] \ominus \MM|_S 
         = 
         ((G \ominus \MM) [S]) \ominus \MM|_S
         =
         ((H \ominus \MM) [S]) \ominus \MM|_S
         =
         H[S] \ominus \MM|_S 
         .\qedhere
    \]
    \end{enumerate}
\end{proof}

\subsection{Bounding the subflipper-rank}\label{sec:subflipper-forward}

The forward direction of \Cref{thm:subflipper} is implied by the following lemma.

\begin{lemma}\label{lem:subflipper-forward}
    If a graph class $\CC$ has bounded $3tr$-flipper-rank and co-matching-index less than $t$, then $\CC$ also has bounded $r$-subflipper-rank.

    \smallskip\noindent
    More precisely, there exists a function $f : \N^3 \to \N$ such that for every $r,t,k \geq 1$, $d\geq 0$, and graph~$G$ with co-matching-index less than $t$
    \[
        \frk_{3tr,k}(G) \leq d
        \Rightarrow
        \sfrk_{r,k'}(G) \leq d,
    \]
    where $k' := f(t,k,d)$.
\end{lemma}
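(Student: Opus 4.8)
The plan is to prove the implication by induction on $d$, strengthening the statement with a \emph{similarity invariant}. The naive strategy fails: if the subflipper responds to the flipper-game move $H = G \oplus (\PP,F)$ by playing $G \ominus \QQ$, where $\QQ$ is the refinement of $\PP$ from \Cref{lem:transfer}, and then recurses on $H[S]$, two things break. Subflips are not hereditary (\Cref{lem:subflip-induce}), and a general flip can blow up the co-matching-index (a matching may become a co-matching), so $H[S]$ need not be co-matching-free, and \Cref{lem:transfer} would no longer apply at the next level. The fix is to never leave the world of the \emph{original} graph: I maintain that the current subflipper-game graph is $\MM$-similar to an \emph{induced subgraph of $G$}, whose co-matching-index therefore stays below the fixed $t$, and I exploit that $\MM$-similarity, unlike the subflip operation, is hereditary (\Cref{lem:similarity}, item 4).

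Concretely, I prove the strengthening: there is a function $g$ such that whenever a graph $\widehat G$ is $m$-similar to a graph $G$ with co-matching-index less than $t$ and $\frk_{3tr,k}(G) \leq d$, then $\sfrk_{r,k'}(\widehat G) \leq d$ for $k' := g(t,k,m,d)$. The lemma is the special case $\widehat G = G$ with the trivial mediator $\{V(G)\}$ (\Cref{lem:similarity}, item 1), giving $f(t,k,d) := g(t,k,1,d)$. The base case $d=0$ forces $G = K_1$, hence $\widehat G = K_1$ and $\sfrk_{r,k'}(\widehat G) = 0$. For the inductive step, fix the mediator $\MM$ with $\widehat G \ominus \MM = G \ominus \MM$ and $|\MM|\leq m$. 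Let $H = G \oplus (\PP,F)$ with $|\PP|\leq k$ witness $\frk_{3tr,k}(G)\leq d$, so $\frk_{3tr,k}(\Ball_{3tr}^H(w))\leq d-1$ for all $w$. Apply \Cref{lem:transfer} to $G$ and $\PP$ to obtain a refinement $\QQ \supseteq \PP$ of size at most $kt^k$ with $B_r^{G\ominus\QQ}(v) \subseteq B_{3tr}^H(v)$ for all $v$. Set $\RR := \QQ \wedge \MM$, of size at most $kt^k m$, and let the subflipper play $\widehat G \ominus \RR$. Since $\RR$ refines $\MM$, \Cref{lem:refine} gives
\[
    \widehat G \ominus \RR = (\widehat G \ominus \MM)\ominus\RR = (G \ominus \MM)\ominus\RR = G \ominus \RR =: G',
\]
so the subflip of $\widehat G$ coincides with $G \ominus \RR$.

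Now suppose the localizer moves to a vertex $v$ and set $S := B_r^{G'}(v)$. As $\RR$ refines $\QQ$, the graph $G'$ is a subgraph of $G\ominus\QQ$, hence $S \subseteq B_r^{G\ominus\QQ}(v) \subseteq B_{3tr}^H(v) =: T$. To apply the induction hypothesis to the recursive graph $\Ball_r^{G'}(v) = G'[S]$, I re-establish the invariant using the original-graph witness $G[S]$: its co-matching-index stays below $t$ as an induced subgraph of $G$; by \Cref{lem:similarity} (items 2 and 4), $G'[S]$ is $\RR|_S$-similar to $G[S]$ with $|\RR|_S| \leq kt^k m$; and for the flipper-rank bound, hereditariness (\Cref{lem:flipper-basics}, item 2) gives $\frk_{3tr,k}(H[S]) \leq \frk_{3tr,k}(H[T]) \leq d-1$, while $G[S]$ is a $k$-flip of $H[S]$ (\Cref{lem:flip-basics}, item 3), so \Cref{lem:flipper-basics} (item 3) yields $\frk_{3tr,k^2}(G[S]) \leq d-1$. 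The induction hypothesis then gives $\sfrk_{r,k''}(G'[S]) \leq d-1$ with $k'' = g(t,k^2,kt^k m,d-1)$. Defining
\[
    g(t,k,m,d) := \max\{\,kt^k m,\ g(t,k^2,kt^k m,d-1)\,\}
\]
and using that $\sfrk_{r,\cdot}$ is non-increasing in the budget (a partition of size $\leq k''$ is one of size $\leq k'$), the played subflip $\widehat G \ominus \RR$ and all recursive bounds fit the budget $k' = g(t,k,m,d)$, whence $\sfrk_{r,k'}(\widehat G) \leq 1 + (d-1) = d$.

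The main obstacle, and the conceptual heart of the argument, is exactly the failure of the naive approach: subflips do not commute with taking induced subgraphs, and general flips can dramatically increase the co-matching-index. Routing everything through $\MM$-similarity to an induced subgraph of the fixed original graph $G$ is what keeps the co-matching-index capped at the original $t$ across all $d$ levels of the recursion, with only the budget and the mediator size growing in a controlled way. Getting this invariant stated and maintained precisely---in particular the identity $\widehat G \ominus \RR = G \ominus \RR$ and the hereditary transfer of both similarity and flipper-rank---is the step I expect to demand the most care.
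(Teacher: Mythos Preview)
Your proof is correct and follows essentially the same approach as the paper: both strengthen the statement to an inductive invariant maintaining $m$-similarity to a graph of bounded co-matching-index, and both use \Cref{lem:transfer} together with \Cref{lem:refine} to realize the subflip as $\widehat G \ominus \RR = G \ominus \RR$ and to control radii. The only organizational difference is that you place the flipper-rank hypothesis on the reference graph $G$ (applying \Cref{lem:transfer} to $G$ with $\PP$, then refining by $\MM$), whereas the paper places it on the game graph and applies \Cref{lem:transfer} to $G_0$ with $\MM \wedge \PP$; this gives you a slightly milder budget growth ($k \to k^2$ per level rather than $k \to k \cdot |\QQ|$) but is otherwise equivalent.
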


We will instead prove the following lemma that implies \Cref{lem:subflipper-forward} by choosing $G_0 = G$.
We prove it by approximating the moves of the flipper-game strategy with subflips using \Cref{lem:transfer}.

\begin{lemma}\label{lem:subflipper-inductive}
    There exists a function $f : \N^4 \to \N$ such that for every $r,t,m,k \geq 1$, $d\geq 0$, and graph $G$ that is $m$-similar to a graph~$G_0$ with co-matching-index less than $t$
    \[
        \frk_{3tr,k}(G) \leq d
        \Rightarrow
        \sfrk_{r,k'}(G) \leq d,
    \]
    where $k' := f(t,m,k,d)$.
\end{lemma}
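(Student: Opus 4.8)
The plan is to prove the statement by induction on the depth $d$, defining $f$ by a recursion that terminates because $d$ strictly decreases. For the base case $d=0$, the hypothesis $\frk_{3tr,k}(G)\le 0$ forces $G=K_1$, and then $\sfrk_{r,k'}(G)=0$ for every budget, so any value of $f(t,m,k,0)$ works. For the inductive step, fix $G$ that is $\MM$-similar to a graph $G_0$ of co-matching-index less than $t$ with $|\MM|\le m$, and assume $\frk_{3tr,k}(G)\le d$. Let $H=G\oplus(\PP,F)$ be a $k$-flip witnessing this, so $\frk_{3tr,k}(\Ball_{3tr}^H(v))\le d-1$ for every $v$. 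By \Cref{lem:similarity}, $G$ is an $\MM$-flip of $G_0$, so by transitivity of flips (\Cref{lem:flip-basics}) $H$ is a $(\PP\wedge\MM)$-flip of $G_0$, with $|\PP\wedge\MM|\le km$. Since $G_0$ has co-matching-index less than $t$, I apply \Cref{lem:transfer} to $G_0$ and the partition $\PP\wedge\MM$, obtaining a refinement $\QQ$ of $\PP\wedge\MM$ with $|\QQ|\le km\cdot t^{km}$ such that $G_0\ominus\QQ\models E(u,v)$ implies $\dist_H(u,v)<3t$.

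The crucial observation is that $\QQ$ refines $\MM$, so by \Cref{lem:refine} together with $\MM$-similarity, $G\ominus\QQ=(G\ominus\MM)\ominus\QQ=(G_0\ominus\MM)\ominus\QQ=G_0\ominus\QQ$. Hence the $|\QQ|$-subflip $H_{\mathrm{sub}}:=G\ominus\QQ$ of $G$ inherits the distance bound, and iterating it gives $N_r^{H_{\mathrm{sub}}}[v]\subseteq N_{3tr}^H[v]$ for every $v$. Writing $S:=N_r^{H_{\mathrm{sub}}}[v]$, the ball $\Ball_r^{H_{\mathrm{sub}}}(v)=H_{\mathrm{sub}}[S]$ then satisfies the two hypotheses needed to invoke the induction hypothesis. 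First, since $\QQ$ refines $\PP$, the graph $H$ is a $\QQ$-flip of $H_{\mathrm{sub}}$, so $\Ball_r^{H_{\mathrm{sub}}}(v)$ is a $\QQ|_S$-flip (hence a $|\QQ|$-flip) of $H[S]$, which is an induced subgraph of $\Ball_{3tr}^H(v)$; parts (2) and (3) of \Cref{lem:flipper-basics} then give $\frk_{3tr,\,k\cdot|\QQ|}(\Ball_r^{H_{\mathrm{sub}}}(v))\le d-1$. Second, by \Cref{lem:subflip-induce} the graph $\Ball_r^{H_{\mathrm{sub}}}(v)=(G_0\ominus\QQ)[S]$ is $\QQ|_S$-similar to $G_0[S]$, which has co-matching-index less than $t$; thus it is $|\QQ|$-similar to a graph of co-matching-index less than $t$.

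Applying the induction hypothesis with budget $k_1:=k\cdot|\QQ|$, mediator size $m':=|\QQ|$, and depth $d-1$ yields $\sfrk_{r,k_2}(\Ball_r^{H_{\mathrm{sub}}}(v))\le d-1$ with $k_2:=f(t,m',k_1,d-1)$, uniformly over $v$. Setting $k':=\max(|\QQ|,k_2)$ and using that $(\text{sub})$flipper-rank is monotone in the budget (a routine induction, since a $k$-subflip is also a $k'$-subflip for $k'\ge k$), the subflip $H_{\mathrm{sub}}$ witnesses $\sfrk_{r,k'}(G)\le 1+\max_v\sfrk_{r,k'}(\Ball_r^{H_{\mathrm{sub}}}(v))\le d$. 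This determines $f$ by the recursion $f(t,m,k,d)=\max\!\big(km\,t^{km},\,f(t,\,km\,t^{km},\,k^2m\,t^{km},\,d-1)\big)$.

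The main obstacle, and the reason the statement is phrased with the auxiliary graph $G_0$ and $m$-similarity rather than directly for co-matching-free $G$, is that the ball $\Ball_r^{H_{\mathrm{sub}}}(v)$ produced in the recursion need not itself have small co-matching-index; it is only \emph{similar} to one. Propagating this similarity invariant through the induction—so that \Cref{lem:transfer} can always be applied to a genuinely co-matching-free graph while the resulting subflip nonetheless acts on $G$—is the heart of the argument, and it is exactly the identity $G\ominus\QQ=G_0\ominus\QQ$ from \Cref{lem:refine} that makes this possible. The remaining work (tracking the growth of $|\QQ|$, the budget, and the mediator size, and verifying budget monotonicity) is routine bookkeeping.
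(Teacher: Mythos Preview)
Your proof is correct and follows essentially the same route as the paper: induction on $d$, applying \Cref{lem:transfer} to $G_0$ with the partition $\PP\wedge\MM$, using \Cref{lem:refine} to obtain the key identity $G\ominus\QQ=G_0\ominus\QQ$, and propagating the similarity invariant to the ball via \Cref{lem:subflip-induce} (the paper packages this last step through \Cref{lem:similarity}). The only cosmetic differences are that the paper names both balls $B=N_{3tr}^H[v]$ and $B^*=N_r^{H^*}[v]$ separately and absorbs your explicit $\max$ into the recursion by setting $f(t,m,k,0):=m$, which automatically dominates all intermediate $|\QQ|$'s since the mediator argument is increasing along the unrolling.
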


\begin{figure}[htbp]
\begin{center}
\resizebox{\textwidth}{!}{
\begin{tikzcd}
	{G_0} \\
	\\
	& M &&& {H^*} &&&& {H^*[B]} &&&& {H^*[B^*]} \\
	\\
	G &&&& H &&&& {H[B]}
	\arrow["{\ominus \cal M}"{description}, from=1-1, to=3-2]
	\arrow["{\ominus \cal Q}"{description}, from=1-1, to=3-5]
	\arrow["{\ominus \cal Q}"{description}, from=3-2, to=3-5]
	\arrow["{\text{induce on } B}", dashed, from=3-5, to=3-9]
	\arrow["{\text{induce on } B^*}", shift left=3, curve={height=-30pt}, dashed, from=3-5, to=3-13]
	\arrow["{\text{induce on } B^*}", dashed, from=3-9, to=3-13]
	\arrow["{\oplus \cal M}"{description}, tail reversed, from=5-1, to=1-1]
	\arrow["{\ominus \cal M}"{description}, from=5-1, to=3-2]
	\arrow["{\ominus \cal Q}"{description}, from=5-1, to=3-5]
	\arrow["{\oplus \cal P}"{description}, tail reversed, from=5-1, to=5-5]
	\arrow["{\oplus \cal Q}"{description}, tail reversed, from=5-5, to=3-5]
	\arrow["{\text{induce on } B}", dashed, from=5-5, to=5-9]
	\arrow["{\oplus \mathcal{Q}|_B}"{description}, tail reversed, from=5-9, to=3-9]
\end{tikzcd}
}
\end{center}
\caption{An illustration of the proof of \Cref{lem:subflipper-inductive}. A directed solid arrow from a graph $G_1$ to a graph $G_2$ labeled by $\ominus \PP$ means $G_1 \ominus \PP = G_2$.
A bidirectional solid arrow between $G_1$ and $G_2$ labeled by $\oplus \PP$ means $G_1$ and $G_2$ are $\PP$-flips of each other.
A dashed arrow from $G_1$ to $G_2$ means $G_2$ is an induced subgraph of $G_1$.
}
\label{fig:flipper-proof2}
\end{figure}
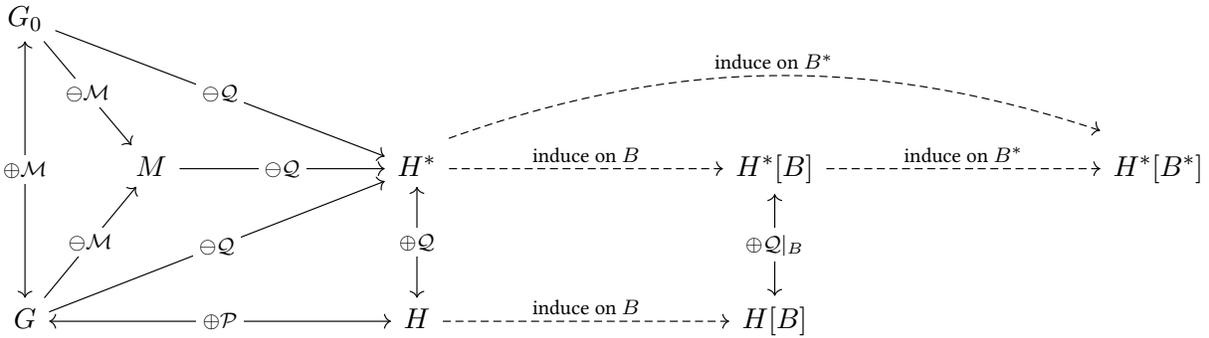

\begin{proof}
    We inductively define 
    \[
        f(t,m,k,0) := m
        \quad
        \text{and}
        \quad
        f(t,m,k,d+1) := f(t,m \cdot k \cdot t^{m \cdot k}, m \cdot k^2 \cdot t^{m \cdot k}, d),
    \]
    and show that $f$ satisfies the lemma by induction on $d$.
    The case $d = 0$ is trivial.
    The inductive step is illustrated in \Cref{fig:flipper-proof2}.
    Assume we are given $G$ and $G_0$ as in the statement of the lemma and~$\MM$ is a mediator partition of size at most $m$.
    By \Cref{lem:similarity}, $G$ and $G_0$ are $\MM$-flips of each other.

    By assumption on the flipper rank of $G$, there is a $\PP$-flip $H$ of $G$ with
    \begin{equation}\label{eq:small-frk-balls}
        \max_{v \in V(H)} \frk_{3tr,k}(\Ball_{3tr}^H(v)) \leq d
    \end{equation}
    for some partition $\PP$ of $V(G)$ of size at most $k$.
    By transitivity, $H$ and $G_0$ are $(\MM \wedge \PP)$-flips of each other. As $G_0$ has co-matching index less than $t$, we can apply \Cref{lem:transfer} to $G_0$, $(\MM \wedge \PP)$, and $H$, which yields a partition $\QQ$ of size
    \[
        |\QQ| \leq m \cdot k \cdot t^{m \cdot k},
    \]
    which refines $(\MM \wedge \PP)$ such that
    $H^* := G_0 \ominus \QQ$ approximates $H$ as follows:
    \begin{equation}\label{eq:subballs}
        \text{for every $v \in V(G)$:} 
        \quad
        N_r^{H^*}[v] \subseteq N_{3tr}^H[v].
    \end{equation}
    Using \Cref{lem:refine} and the fact that $\QQ$ refines $\MM$, we obtain
    \[
        G \ominus \QQ = M \ominus \QQ = G_0 \ominus \QQ = H^*.
    \]
    By transitivity, $H^*$ and $H$ are $(\PP \wedge \QQ = \QQ)$-flips of each other.
    Consider now any vertex $v$, its $3tr$-neighborhood in $H$ and its $r$-neighborhood in $H^*$:
    \[
        B := N_{3tr}^H[v] 
        \quad\text{and}\quad
        B^* := N_{r}^{H^*}[v].
    \]
    By hereditariness of flips (\Cref{lem:flip-basics}), $H[B]$ and $H^*[B^*]$ are $\QQ|_B$-flips of each other.
    Moreover by \Cref{eq:subballs}, $H^*[B^*]$ is an induced subgraph of $H^*[B]$.
    This means we can bound
    \[
    \frk_{3tr,k \cdot |\QQ|}(H^*[B^*]) 
    \leq 
    \frk_{3tr,k \cdot |\QQ|}(H^*[B]) 
    \leq 
    \frk_{3tr,k}(H[B]) 
    \leq 
    d,
    \]
    where the first two inequalities follow by \Cref{lem:flipper-basics}, and the last one by \Cref{eq:small-frk-balls}.
    Additionally by \Cref{lem:similarity}, $H^*[B^*]$ is $|\QQ|_B$-similar to the graph $G_0[B]$, which has co-matching-index less than $t$.
    Therefore, applying induction on $H^*[B^*]$ yields
    \[
                \sfrk_{r,k'}(H^*[B^*])\leq d,
    \]
    where 
    $k' := 
    f(t,|\QQ|,k \cdot |\QQ|,d) = 
    f(t,m \cdot k \cdot t^{m \cdot k}, m \cdot k^2 \cdot t^{m \cdot k}, d) = 
    f(t,m,k,d+1)$.
    We summarize: there is a $(|\QQ| \leq m \cdot k \cdot t^{m \cdot k} \leq k')$-subflip $H^*$ of $G$ in which every $r$-ball 
    $H^*[B^*]$ satisfies $\sfrk_{r,c}(H^*[B^*])\leq d$.
    This means $\sfrk_{r,k'}(G)\leq d + 1$, which completes the proof.
\end{proof}

This finishes the proof of the forward direction of \Cref{thm:subflipper}.

\subsection{Large co-matchings imply large rank}\label{sec:subflipper-backward}
In this subsection we prove the backward direction of \Cref{thm:subflipper}.
We have to show that if a graph class~$\CC$ is either not monadically stable or not co-matching-free, then there is some $r\in\N$ such that $\CC$ has unbounded $r$-subflipper-rank.
By \Cref{thm:flipper}, if a graph class $\CC$ is not monadically stable, then there is some $r\in\N$ such that $\CC$ has unbounded $r$-flipper-rank.
By \Cref{lem:flipper-basics}, $\CC$ has unbounded $r$-subflipper-rank.
This leaves us with the case where $\CC$ is not co-matching-free.

\begin{lemma}
    For every $k \geq 2$, $d \geq 0$, and graph $G$:
    If $G$ has co-matching-index at least $(k^2)^d$, then $\sfrk_{3,k}(G) \geq d$.
\end{lemma}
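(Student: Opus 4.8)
The plan is to induct on $d$ using the recursive definition of the subflipper-rank, proving the statement $S(d)$: every graph $G$ with $\operatorname{co-matching-index}(G) \geq (k^2)^d$ satisfies $\sfrk_{3,k}(G) \geq d$ (with $k \geq 2$ fixed throughout). I would handle $d=0$ and $d=1$ as base cases. The case $S(0)$ is trivial since the subflipper-rank is always nonnegative. For $S(1)$ it suffices to observe that $\operatorname{co-matching-index}(G) \geq k^2 > 0$ forces $G$ to contain a non-edge and hence at least two vertices, so $G \neq K_1$ and therefore $\sfrk_{3,k}(G) \geq 1$ directly by the definition.

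For the inductive step I would prove $S(d) \Rightarrow S(d+1)$ for $d \geq 1$. Unfolding the definition, showing $\sfrk_{3,k}(G) \geq d+1$ amounts to exhibiting, for \emph{every} $k$-subflip $H = G \ominus \PP$ with $|\PP| \leq k$, a vertex $v$ with $\sfrk_{3,k}(\Ball_3^H(v)) \geq d$. By the inductive hypothesis $S(d)$, it is enough to produce for each such $H$ a vertex $v$ whose ball $\Ball_3^H(v)$ has co-matching-index at least $(k^2)^d$.

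The first ingredient is that subflips cannot destroy too much co-matching structure. Since $(k^2)^{d+1} = (k^2)^d \cdot k^2 \geq (k^2)^d \cdot |\PP|^2$, the second part of \Cref{lem:preserve-cmi} (applied with parameter $(k^2)^d$) yields $\operatorname{co-matching-index}(H) \geq (k^2)^d$. Thus $H$ carries a semi-induced co-matching of order $\ell := (k^2)^d$ on vertices $a_1,\ldots,a_\ell$ and $b_1,\ldots,b_\ell$, where $a_i b_j \in E(H) \iff i \neq j$; note $\ell \geq k^2 \geq 4$ because $d \geq 1$. The second, and crucial, ingredient is a packing observation: the whole co-matching fits inside a single radius-$3$ ball, and I claim $v := b_1$ works. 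Indeed, every $a_i$ with $i \neq 1$ is adjacent to $b_1$, hence at distance $1$; every $b_j$ with $j \neq 1$ is reached along $b_1 - a_i - b_j$ for any index $i \in [\ell] \setminus \{1,j\}$ (nonempty since $\ell \geq 3$), hence at distance $\leq 2$; and $a_1$ is reached along $a_1 - b_2 - a_3 - b_1$ (again using $\ell \geq 3$), hence at distance $\leq 3$. So all $2\ell$ vertices lie in $N_3^H[b_1]$, and because $\Ball_3^H(b_1)$ is an induced subgraph of $H$ the co-matching stays semi-induced there, giving $\operatorname{co-matching-index}(\Ball_3^H(b_1)) \geq \ell = (k^2)^d$. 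Applying $S(d)$ to this ball yields $\sfrk_{3,k}(\Ball_3^H(b_1)) \geq d$, and since $H$ was an arbitrary $k$-subflip, the minimum over subflips of the maximum over centers is at least $d$, so $\sfrk_{3,k}(G) \geq d+1$.

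I expect the packing step to be the only genuine subtlety: it is exactly what forces radius $3$ into the statement, and it is also the reason the case $d=1$ must be split off as a base case, since the distance-$3$ routing requires $\ell \geq 3$, which fails for the order-$1$ co-matching one would otherwise extract. Everything else is bookkeeping of the multiplicative loss $k^2$ per round, which matches the claimed bound $(k^2)^d$.
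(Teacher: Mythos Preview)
Your proposal is correct and follows essentially the same approach as the paper: induction on $d$ with $d=0,1$ as base cases, then in the inductive step passing to an arbitrary $k$-subflip $H$, invoking \Cref{lem:preserve-cmi} to retain a co-matching of order $(k^2)^d$ in $H$, and observing that this co-matching lies entirely within a radius-$3$ ball. The paper simply asserts the last point (``it is easy to check'' for any vertex $v$ of the co-matching, using $(k^2)^d \geq 3$), whereas you pick $v=b_1$ and spell out the paths explicitly; your added remark about why radius $3$ and the separate base case $d=1$ are forced is accurate.
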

\begin{proof}
    We show the lemma by induction on $d$.
    The cases $d = 0$ and $d = 1$ are trivial.
    For the inductive step let $G$ be a graph with co-matching-index at least $(k^2)^{d+1}$ for some $d \geq 1$.
    Let $H$ be any $k$-subflip of $G$.
    By \Cref{lem:preserve-cmi}, $H$ contains a semi-induced co-matching-index of size at least~$(k^2)^d$. 
    Consider any vertex $v$ from this co-matching.
    As $(k^2)^d \geq 3$, it is easy to check, that $\Ball_3^H(v)$ contains the entire semi-induced co-matching.
    By induction, we have $\sfrk_{3,k}(\Ball_3^H(v)) \geq d$.
    As $H$ was an arbitrary $k$-subflip of $G$, this proves $\sfrk_{3,k}(G) \geq d+1$.
\end{proof}

\begin{corollary}\label{cor:high-rank}
    Every class $\CC$ that is not co-matching-free has unbounded $3$-subflipper-rank.
\end{corollary}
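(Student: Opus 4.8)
The plan is to unwind the definition of \emph{bounded $3$-subflipper-rank} and reduce directly to the preceding lemma. Recall that $\CC$ has bounded $3$-subflipper-rank precisely when there exist $k,\ell\in\NN$ with $\sfrk_{3,k}(G)\leq \ell$ for all $G\in\CC$. Thus, to prove unboundedness, I must show that for every choice of budget $k$ and every bound $\ell$, there is a graph $G\in\CC$ with $\sfrk_{3,k}(G)>\ell$.

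So I would fix arbitrary $k,\ell\in\NN$. I may assume $k\geq 2$, since $\sfrk_{3,k}$ is non-increasing in $k$ (every $k$-subflip is also a $(k{+}1)$-subflip and the recursive budget only grows, so a routine induction on the rank gives $\sfrk_{3,k+1}(G)\leq\sfrk_{3,k}(G)$), and hence any witness for $k=2$ is also a witness for $k=1$. Since $\CC$ is not co-matching-free, it contains semi-induced co-matchings of arbitrarily large order; in particular there is a graph $G\in\CC$ whose co-matching-index is at least $(k^2)^{\ell+1}$. Applying the preceding lemma to $G$ with $d:=\ell+1$ then yields $\sfrk_{3,k}(G)\geq \ell+1>\ell$, exactly as required.

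As this is an immediate consequence of the lemma, there is no real obstacle; the only points needing care are the logical unwinding of the word ``unbounded'' (negating the two existential quantifiers over $k$ and $\ell$) together with the small monotonicity observation in $k$ that discharges the edge case $k=1$.
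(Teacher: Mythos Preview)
Your proof is correct and follows exactly the approach the paper intends: the corollary is stated without proof as an immediate consequence of the preceding lemma, and you have simply made explicit the routine unwinding of the definition of ``bounded $3$-subflipper-rank'' together with the application of that lemma. Your handling of the edge case $k=1$ via monotonicity of $\sfrk_{3,k}$ in $k$ is a nice detail that the paper glosses over.
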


This finishes the proof of \Cref{thm:subflipper}.
\section{Forbidden induced subflips}\label{sec:forbidden-subflips}

The following characterization of half-graph-free monadically dependent classes (i.e., monadically stable classes) by forbidden induced subgraphs was originally proved in \cite{dreier2024stablemc}. 
We refer to \cite{maehlmann-thesis, pilipczuk2025graphclasseslenslogic} for the following formulation of the result. 
The definition of a \emph{star/clique $r$-crossing} is given below.

\begin{theorem}[\cite{dreier2024stablemc}]\label{thm:forbidden-flips-stability}
For every half-graph-free graph class $\Cc$ the following are equivalent.
\begin{enumerate}
    \item $\Cc$ is monadically dependent. 
    \item For every $r,k \geq 1$ there exists $t \in \N$ such that 
    \begin{itemize}
        \item the star $r$-crossing of order $t$ is not a $k$-flip of any induced subgraph in $\CC$, and 
        \item the clique $r$-crossing of order $t$ is not a $k$-flip of any induced subgraph in $\CC$.
    \end{itemize}
\end{enumerate}
\end{theorem}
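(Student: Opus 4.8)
The plan is to prove both implications by contraposition, using the transduction-based definition of monadic dependence together with the fact that, inside a half-graph-free class, monadic dependence coincides with monadic stability. The single external input I would isolate first is the following property of the crossing patterns, to be verified directly once their formal definitions are in place: for each fixed $r$, the class of all star $r$-crossings (and, separately, the class of all clique $r$-crossings), taken over all orders $t$, transduces the class of all graphs. This is precisely the reason these patterns are obstructions: a sufficiently large crossing, after its row- and column-vertices are marked with colors, interprets an arbitrary bipartite incidence structure through the $\FO$-definable relation $\dist(x,y)\le r$, so the crossings are monadically independent.

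For the direction $(1)\Rightarrow(2)$ I would argue the contrapositive. Suppose $(2)$ fails, so there are fixed $r,k$ such that, say, the star $r$-crossing of order $t$ is a $k$-flip of some induced subgraph $G'_t$ of a graph $G_t\in\CC$, for every $t$. A $k$-flip is realized by a transduction (color the at most $k$ parts of the flip partition and rewrite the edge relation by a quantifier-free formula), and a transduction may first restrict to an induced subgraph; hence $\CC$ transduces every star $r$-crossing, i.e.\ the whole class of star $r$-crossings. Since transductions compose and, by the property isolated above, that class transduces all graphs, $\CC$ transduces all graphs and is therefore monadically independent, contradicting $(1)$. The argument for clique crossings is identical. (One could alternatively contradict flip-flatness via \Cref{thm:flatness}, since crossings resist separation by small flips, but the transduction argument is cleaner and avoids pattern-specific distance estimates.)

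The direction $(2)\Rightarrow(1)$ is the hard one and again goes by contraposition. Assume $\CC$ is half-graph-free but not monadically dependent; being half-graph-free, it is then not monadically stable. I would invoke the Ramsey-theoretic pattern-extraction machinery: starting from a transduction witnessing that $\CC$ interprets all graphs, one extracts, inside a bounded flip of an induced subgraph, a large and highly regular canonical substructure. A product-Ramsey argument homogenizes the neighborhood behavior of the extracted elements, and the resulting regular patterns fall into a finite catalogue of obstructions that splits according to whether the encoded incidence is ``ordered'' (half-graph-like) or ``unordered''. Half-graph-freeness of $\CC$ rules out all ordered patterns, leaving exactly the star and clique crossings. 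This produces, for a single fixed pair $(r,k)$, star or clique $r$-crossings of \emph{every} order $t$ as $k$-flips of induced subgraphs of $\CC$, which is precisely the negation of $(2)$.

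The main obstacle lies entirely in the last paragraph: extracting a canonical pattern from an arbitrary independence-witnessing transduction, and proving that in the half-graph-free world the catalogue of unavoidable patterns is exhausted by star and clique crossings. This is where the full flip-breakability and insulator apparatus of \cite{dreier2024stablemc} is required, including the quantitative bookkeeping that guarantees one pair $(r,k)$ suffices uniformly in $t$. By contrast, the direction $(1)\Rightarrow(2)$ and the isolated transduction property of crossings are comparatively routine verifications.
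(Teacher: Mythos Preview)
The paper does not prove this theorem at all: it is stated as an external result from \cite{dreier2024stablemc} (with the formulation attributed to \cite{maehlmann-thesis, pilipczuk2025graphclasseslenslogic}) and used as a black box. There is therefore no proof in the paper to compare your proposal against.

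That said, your sketch is a fair high-level summary of how the cited result is obtained. The direction $(1)\Rightarrow(2)$ is indeed routine along the lines you describe (crossings are monadically independent, flips and induced subgraphs are transductions, transductions compose). For $(2)\Rightarrow(1)$ you correctly identify that the real work lies in the pattern-extraction machinery of \cite{dreier2024stablemc,dreier2024flipbreakability}, and you accurately flag that this is where the flip-breakability/insulator apparatus is needed. One small point: the catalogue of obstructions in the general monadically dependent setting also includes half-graph crossings and their complements, and it is the half-graph-freeness hypothesis that eliminates those, leaving only star and clique crossings --- you say this, but be aware that making this precise requires some care about which side of the pattern encodes the order, and this is handled in the cited works rather than being a one-line remark.
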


Similar to the definition of subflip-flatness, we obtain our characterization by replacing flips with subflips.

\begin{theorem}\label{thm:forbidden-subflips}
For every semi-ladder-free graph class $\Cc$ the following are equivalent.
\begin{enumerate}
    \item $\Cc$ is monadically dependent. 
    \item For every $r,k \geq 1$ there exists $t \in \N$ such that
    \begin{itemize}
        \item the star $r$-crossing of order $t$ is not a $k$-subflip of any induced subgraph in $\CC$, and
        \item the clique $r$-crossing of order $t$ is not a $k$-subflip of any induced subgraph in $\CC$.
    \end{itemize}
\end{enumerate}
\end{theorem}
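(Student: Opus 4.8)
The plan is to reduce the whole statement to the flip version, \Cref{thm:forbidden-flips-stability}, exploiting two facts: every semi-ladder-free class is in particular half-graph-free, and every $k$-subflip of a graph is by definition also a $k$-flip of it. The latter means that for a fixed pattern graph $S$, if $S$ is \emph{not} a $k$-flip of any induced subgraph in $\Cc$, then $S$ is \emph{a fortiori} not a $k$-subflip of any induced subgraph in $\Cc$. Thus condition~(2) of \Cref{thm:forbidden-flips-stability} (the flip version) implies condition~(2) of the present theorem (the subflip version), with the same witnessing order $t$. This immediately settles $(1)\Rightarrow(2)$: if $\Cc$ is semi-ladder-free and monadically dependent, it is half-graph-free and monadically dependent, so \Cref{thm:forbidden-flips-stability} gives the flip version of~(2), which yields the subflip version.

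For the converse $(2)\Rightarrow(1)$ I would argue by contraposition, proving the genuinely new implication ``subflip condition $\Rightarrow$ flip condition'', which is where co-matching-freeness enters. Assume $\Cc$ is semi-ladder-free but not monadically dependent. Being half-graph-free, \Cref{thm:forbidden-flips-stability} yields fixed $r,k$ such that for every $t$ either the star or the clique $r$-crossing of order $t$ is a $k$-flip of some induced subgraph of a member of $\Cc$; pigeonholing over the infinitely many orders, fix one of the two types, say star crossings, realized as a $k$-flip for a cofinal set of orders $t$. I then transfer each such flip-realization into a subflip-realization via \Cref{lem:transfer}. Let $\tau$ bound the co-matching-index on $\Cc$. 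Given $H = G' \oplus (\PP,F)$ isomorphic to the crossing, with $G'$ an induced subgraph of some $G\in\Cc$ and $|\PP|\le k$, \Cref{lem:transfer} produces a refinement $\QQ$ of $\PP$ with $|\QQ|\le k\tau^{k}$ such that the subflip $H^\ast := G' \ominus \QQ$ satisfies $B^{H^\ast}_\rho(v)\subseteq B^{H}_{3\tau\rho}(v)$ for all $v,\rho$. In particular $\dist_{H^\ast}\ge \dist_{H}/(3\tau)$, so every separation of the crossing $H$ survives, rescaled, in $H^\ast$, which is a $k'$-subflip of $G'$ for the fixed budget $k':=k\tau^{k}$.

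The main obstacle is that \Cref{lem:transfer} only preserves \emph{large} distances: a pair that is close in $H$ — and whose closeness is part of the crossing pattern — may become far in $H^\ast$, so $H^\ast$ need not literally be a crossing. To recover a genuine star $r'$-crossing of large order, I would (i) fix the output radius $r'$ as the function of $r$ and $\tau$ dictated by the ball containment; (ii) read the ``close/far in $H^\ast$'' relation on the $2t$ anchor vertices as a two-coloring and run a Ramsey/pigeonhole cleanup to extract a sub-pattern of order $t'\to\infty$ on which the $H^\ast$-distances realize a clean crossing of the required type, using \Cref{lem:subflip-induce} to pass to the induced subgraph on the cleaned-up anchors while remaining a subflip of an induced subgraph of $G$; and (iii) invoke \Cref{lem:preserve-cmi} so that the co-matching-index stays bounded along the refinement and \Cref{lem:transfer} remains applicable throughout.

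The delicate point, and the step I expect to require the most care, is reconstructing the \emph{connectivity backbone} of the crossing inside $H^\ast$: the surviving separations are handed to us for free, but the short connector paths witnessing the ``close'' relations of the crossing must be re-established from the structure of $H$, and this is precisely where the specific combinatorial shape of star and clique crossings and the co-matching-freeness of $\Cc$ do the real work. Once star $r'$-crossings of unbounded order are produced as $k'$-subflips of induced subgraphs in $\Cc$, condition~(2) of the present theorem fails at the fixed parameters $(r',k')$, which completes the contrapositive and hence $(2)\Rightarrow(1)$.
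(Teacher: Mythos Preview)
Your direction $(1)\Rightarrow(2)$ is correct and is exactly what the paper does.

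For $(2)\Rightarrow(1)$, your plan has a genuine gap at precisely the point you flag as ``delicate''. \Cref{lem:transfer} gives you only a one-sided distance comparison: it says that an edge of $H^\ast=G'\ominus\QQ$ implies closeness in $H$, hence $\dist_{H^\ast}(u,v)\cdot 3\tau \ge \dist_H(u,v)$. This preserves \emph{separations} (things far in $H$ stay far in $H^\ast$) but gives you absolutely no \emph{upper} bound on $\dist_{H^\ast}$. In particular, two anchors $a_i,b_j$ that are at distance $r+1$ in the crossing $H$ may be at arbitrarily large distance, or even disconnected, in $H^\ast$. A crossing is not a distance profile; it is a concrete graph with specified connector paths of a fixed length and a prescribed star/clique structure at each root. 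Your Ramsey cleanup in step~(ii) can homogenize the distance profile on the anchors, but it cannot manufacture the missing paths or the local neighbourhood structure, and no combination of \Cref{lem:subflip-induce} and \Cref{lem:preserve-cmi} addresses this. So as stated the argument does not produce a crossing in $H^\ast$, and I do not see how to close this without abandoning the approximation route.

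The paper avoids this entirely by proving something stronger and simpler: after a standard Ramsey step one may assume the flip partition $\PP$ is exactly the layer partition $L_0,\ldots,L_{r+1}$ of the crossing (so the budget becomes $r+2$) and that the same flip relation $F$ is used for all orders $t$. One then checks directly that every pair $(L_i,L_j)\in F$ is fully adjacent in $G_t$, so the flip is literally a subflip, $C_t=G_t\ominus\PP$. The case analysis is short: between non-successive distinct layers the crossing has no edges, hence $G_t$ has all of them; between successive layers the crossing has an induced matching of size $\ge t$, so flipping there would place a semi-induced co-matching of size $\ge t$ in $G_t$, contradicting co-matching-freeness of $\Cc$; within a layer that is independent in the crossing, flipping makes it a clique in $G_t$; within a clique layer of a clique crossing, flipping again forces a large co-matching. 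Thus co-matching-freeness is used not to rebuild connectivity after an approximation, but to rule out any flip that is not already a subflip.
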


For~$r \ge 1$, the \emph{star~$r$-crossing} of order~$t$ is the~$r$-subdivision of~$K_{t,t}$ (the biclique of order~$t$).
More precisely, it consists of \emph{roots}~$a_1,\dots,a_t$ and~$b_1,\dots,b_t$
together with~$t^2$ many pairwise vertex-disjoint~$r$-vertex paths~$\{ \pi_{i,j} : i,j \in [t] \}$, whose endpoints we denote as~$\Start(\pi_{i,j})$ and~$\End(\pi_{i,j})$.
Each root~$a_i$ is adjacent to~$\{ \Start(\pi_{i,j}) : j \in [t] \}$,
and each root~$b_j$ is adjacent to~$\{ \End(\pi_{i,j}) : i \in [t] \}$.
See \Cref{fig:patterns}.
The \emph{clique~$r$-crossing} of order~$t$ is the graph obtained from the star~$r$-crossing of order~$t$
by turning the neighborhood of each root into a clique.
It will later be useful to partition the vertices of the two patterns into \emph{layers}~$\LL = \{L_0, \ldots, L_{r+1}\}$:
The 0th layer consists of the vertices~$\{a_1,\dots,a_t\}$.
The~$l$th layer, for~$l \in [r]$, consists of the~$l$th vertices of the paths~$\{ \pi_{i,j} : i,j \in [t] \}$.
Finally, the~$(r+1)$th layer consists of the vertices~$\{b_1,\dots,b_n\}$. 

\begin{figure}[htbp]
    \centering
    \includegraphics{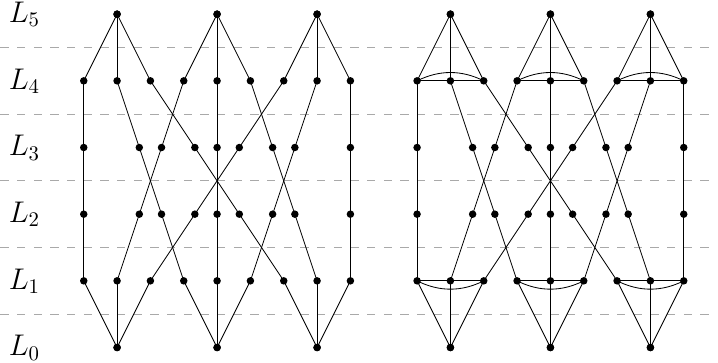}
    \caption{The star $4$-crossing and the clique $4$ crossing of order $3$.}
    \label{fig:patterns}
\end{figure}

We are now ready to prove \Cref{thm:forbidden-subflips}

\begin{proof}[Proof of \Cref{thm:forbidden-subflips}]
    Fix a semi-ladder-free graph class $\CC$.
    The direction $(1. \Rightarrow 2.)$ of the theorem follows immediately from \Cref{thm:forbidden-flips-stability}, as $\CC$ is in particular half-graph-free and every $k$-subflip is also a $k$-flip.

    We prove the direction $(1. \Rightarrow 2.)$ by contraposition. Assume $\CC$ is not monadically dependent.
    We have to show that for some $r,k \in \N$, arbitrary large star $r$-crossings or clique $r$-crossings can be obtained by $k$-subflips from induced subgraphs in $\CC$.
    By \Cref{thm:forbidden-flips-stability}, we know already that we can obtain these patterns from $\CC$ using $k$-flips.
    We will use co-matching-freeness of $\CC$ to argue that subflips are sufficient to extract them.

    Assume first that $\CC$ contains for every $t$, an induced graph $G_t$ with 
    $C_t = G_t \oplus (\PP, F)$ for some size $k$ partition $\PP$, where $C_t$ is the clique $r$-crossings of size $t$.
    By the bipartite structure of the patterns and Ramsey's Theorem, we can assume that $\PP$ is the partition of $V(C_t) = V(G_t)$ into the $r+2$ layers $L_0, \ldots,L_{r+1}$ as in the definition of a clique $r$-crossing (see \cite{dreier2024flipbreakability} or \cite{maehlmann-thesis} for more details on how the Ramseying of these patterns is done).
    Moreover, by the pigeonhole principle we can assume that for every $t\in \N$ the layers are flipped the same way: we use the same $F \subseteq \PP \times \PP$ for every $t \in \N$.
    We want to prove $C_t = G_t \ominus \PP$.
    This means we have to argue that the pairs in $F$ are exactly the pairs of layers that are fully adjacent in $G_t$.
    First note that no two pairs of layers are fully adjacent in~$C_t$. Hence, if there is a pair of fully adjacent layers in $G_t$, it has to be included in~$F$.
    It remains to prove that every pair of layers $(L_i,L_j)\in F$ is fully adjacent in~$G_t$.
    
    First consider the case where $L_i = L_j$, i.e., a layer was already flipped with itself.
    If $i \in \{1,r\}$, then $L_i$ is one of the ``clique layers'': $C_t[L_i]$ is the disjoint union of $t$ many cliques and in particular contains an induced matching of size $t$.
    Then the complement $G_t[L_i]$ contains a semi-induced co-matching of size $t$. Recall that $t$ can be arbitrarily large, so we get a contradiction to $\CC$ being co-matching-free.
    If $i \notin \{1,r\}$, then $G_t[L_i]$ is an independent set so $G_t[L_i]$ is a clique, and $L_i$ is fully adjacent with itself in $G_t$ as desired.

    Assume now $L_i \neq L_j$. If $i$ and $j$ are not successive, then the graph semi-induced between~$L_i$ and~$L_j$ in~$C_t$ is empty, so both sets must be fully adjacent in $G_t$, and we are done.
    For successive layers~$L_i$ and~$L_{i+1} = L_j$ the graph semi-induced in $C_t$ contains an  induced matching of size at least $t$ (size $t$ if an outermost layer is included: $\{i,i+1\} \cap \{0,r+1\}\neq \emptyset$; and size $t^2$ otherwise). 
    As argued before, $G_t$ must contain a semi-induced co-matching of size $t$, and we get a contradiction. 

    This means $C_t = G_t \ominus \PP$, and we can extract arbitrarily large clique $r$-crossings from induced subgraphs of $\CC$ by $(r+2)$-subflips.

    The case where $\CC$ contains flips of arbitrarily large star $r$-crossings $S_t$ works the same way, but needs to consider one less case, as every layer $L_i$ is an independent set in $S_t$.
\end{proof}

As a corollary, we get the following.
\begin{corollary}\label{cor:expos-transduce-everything}
    For every semi-ladder-free class $\CC$ of partially reflexive graphs, the following are equivalent.
    \begin{enumerate}
        \item $\CC$ is monadically dependent.
        \item $\CC$ does not \expos-transduce the class of all reflexive graphs.
    \end{enumerate}
\end{corollary}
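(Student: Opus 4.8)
The implication $(1.\Rightarrow 2.)$ I would dispatch immediately using that monadic dependence is transduction-closed. The class of all reflexive graphs transduces the class of all graphs (delete self-loops via $E(x,y)\wedge\neg(x=y)$), so it is not monadically dependent. Since every \expos-transduction is in particular a transduction, a monadically dependent class $\CC$ cannot \expos-transduce a class that is not monadically dependent. Hence $(1.)$ forces $(2.)$.

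For $(2.\Rightarrow 1.)$ the plan is to argue by contraposition: assuming $\CC$ is semi-ladder-free but \emph{not} monadically dependent, I will build an \expos-transduction of $\CC$ that produces every reflexive graph. First I would invoke \Cref{thm:forbidden-subflips}: there are fixed $r,k$ such that for every order $t$ at least one of the star or clique $r$-crossing of order $t$ is a $k$-subflip of an induced subgraph of $\CC$. Passing to an infinite subfamily of orders, I may assume one pattern type (say the star $r$-crossing) occurs for arbitrarily large $t$, witnessed by $C_t=G_t\ominus\PP_t$ with $G_t$ an induced subgraph of a member of $\CC$ and $|\PP_t|\le k$. The goal is then to produce two \expos-transductions $\trans{T}_1,\trans{T}_2$ and compose them by \Cref{lem:expos-transitive}: $\trans{T}_1$ extracts the crossings $C_t$ from $\CC$, and $\trans{T}_2$ produces every reflexive graph from the crossings.

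The transduction $\trans{T}_1$ is the easy half, and the key point is that it stays positive. Although the inverse of a subflip is a general flip, the \emph{forward} operation only deletes edges inside colour classes that are fully adjacent, and this is positively definable. Colouring $G_t$ by the parts of $\PP_t$ (and marking $G_t$ inside its host), the quantifier-free positive formula
\[
\eta_1(x,y):=\big(x=y\wedge E(x,x)\big)\ \vee\ \Big(E(x,y)\wedge\textstyle\bigvee_{(P,Q)\notin F}\big(P(x)\wedge Q(y)\big)\Big)
\]
defines exactly the edge relation of $C_t=G_t\ominus\PP_t$, where $F$ is the set of fully-adjacent pairs. Since there are only boundedly many choices of $\PP_t$-structure and of $F$, I would fix them by pigeonhole (or encode $F$ in the colouring) so that $\eta_1$ is independent of $t$; then every $C_t$ lies in $\trans{T}_1(\CC)$. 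The heart of the argument is $\trans{T}_2$. Given a target reflexive graph $H$ on $\{1,\dots,n\}$, I use a crossing $C_t$ with $t\ge n$, take its roots $a_1,\dots,a_n$ as the vertex set ($\nu(x):=A(x)$), and encode $H$ into the subdivision paths: I colour the internal vertices of $\pi_{i,j}$ with a colour $\mathsf E$ precisely when $ij\in E(H)$ and $i\neq j$, and the internal vertices of every diagonal path $\pi_{m,m}$ with a colour $\mathsf D$. Because the subdivision paths are pairwise vertex-disjoint, an $\mathsf E$-monochromatic path of length $r+1$ from $a_i$ reaches exactly $\{b_m:im\in E(H)\}$, while a $\mathsf D$-monochromatic path of length $r+1$ from $b_m$ reaches exactly $a_m$. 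Thus the \expos-formula
\[
\eta_2(x,y):=(x=y)\ \vee\ \exists w\,\big(B(w)\wedge \pi^{\mathsf E}_{r+1}(x,w)\wedge \pi^{\mathsf D}_{r+1}(w,y)\big),
\]
with $\pi^{\mathsf E}_{r+1},\pi^{\mathsf D}_{r+1}$ the positive ``coloured path of length $r+1$'' formulas, holds between $a_i$ and $a_j$ iff $ij\in E(H)$, and the disjunct $x=y$ installs all self-loops, so the induced subgraph on the $a$'s is the reflexive graph $H$. Composing $\trans{T}_2\circ\trans{T}_1$ then yields an \expos-transduction of $\CC$ containing every reflexive graph, contradicting $(2.)$.

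The hard part, and the step I would spend most care on, is the exactness claim for $\trans{T}_2$: that no \emph{unintended} coloured walk of length exactly $r+1$ links two roots. This must be checked against two features of the patterns — the self-loops possibly present on the crossing (which permit stalling but can only shorten reach to vertices closer than $r+1$, never reach a root at distance exactly $r+1$ by a detour) and, in the clique-crossing case, the added intra-layer cliques on each root's neighbourhood (whose extra edges stay within one layer and therefore only lengthen any cross-layer walk beyond $r+1$). Both rely on the layer/distance structure of the crossings, and this is exactly where the positivity restriction bites, since $\trans{T}_2$ may only ``keep'' paths and never forbid them. Once this verification goes through, the clique case is otherwise identical to the star case.
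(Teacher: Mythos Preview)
Your argument is correct and shares the paper's contrapositive skeleton---invoke \Cref{thm:forbidden-subflips}, \expos-extract the crossings, then \expos-transduce every reflexive graph from them---but your realization of the second step is genuinely different and more laborious. The paper does not colour paths or verify walk exactness at all: it observes that the reflexive star $r$-crossing already contains the reflexive $(2r+1)$-subdivision of any target $H$ as an \emph{induced} subgraph (embed vertex $i$ as $a_i$ and route each edge $ij$ through a fresh $b$-index via $a_i\!\to\!\pi_{i,\sigma(ij)}\!\to\!b_{\sigma(ij)}\!\to\!\pi_{j,\sigma(ij)}\!\to\!a_j$), after which the single formula $\dist(x,y)\le 2r+2$ recovers $H$. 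Your two-colour, two-leg construction achieves the same end but forces exactly the case analysis you flag as ``the hard part''; the paper's subdivision trick sidesteps it entirely. Likewise, for the clique $r$-crossing the paper does not redo a layer-distance argument but first \expos-reduces to the star case by colouring the layers and keeping only inter-layer edges via $\bigvee_{i\ne j}E(x,y)\wedge L_i(x)\wedge L_j(y)$. Your $\trans{T}_1$ is essentially the same as the paper's (which packages it as ``add self-loops, then apply \Cref{lem:trans-pure-flips-reflexive}'').

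Two small remarks. Your $\eta_2$ is not symmetric as a formula (the $\mathsf E$-leg starts at $x$ and the $\mathsf D$-leg ends at $y$); you should take $\eta_2(x,y)\vee\eta_2(y,x)$, which is harmless since the relation it defines on the $a$-roots is already symmetric. And your worry about self-loops on the crossing is in fact void: the abstract crossing in \Cref{thm:forbidden-subflips} is irreflexive and subflips preserve self-loops, so the witnessing induced subgraph $G_t$ carries none either.
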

\begin{proof}
    If $\CC$ is monadically dependent, then it does not transduce the class of all graphs, so in particular it does not \expos-transduce the class of all reflexive graphs.
    Assume now $\CC$ is not monadically dependent.
    By \Cref{thm:forbidden-subflips}, there are $r,k \geq 1$ such that $\CC$ contains all star $r$-crossings or all clique $r$-crossings as $k$-subflips.

    Assume first it $\CC$ contains all star $r$-crossings as subflips.
    By first taking induced subgraphs and adding all self-loops and then applying \Cref{lem:trans-pure-flips-reflexive}, we can \expos-transduce the class of all reflexive star $r$-crossings from $\CC$.
    For every reflexive graph $G$, the class of all star reflexive $r$-crossings contains the $(2r+1)$-subdivision of $G$ as an induced subgraph.
    From the subdivision, we can transduce $G$ using the \expos-formula that expresses that two vertices are at distance at most $2r+2$ from each other. (This naturally adds self-loops, which is why we required $G$ to be reflexive.)
    This shows we can \expos-transduce the class of all reflexive graphs from the class of all star $r$-crossings, and thus also from $\CC$.

    Assume now $\CC$ contains all clique $r$-crossings as subflips.
    Again $\CC$ \expos-transduces the class of all reflexive clique $r$-crosses.
    From there, we can \expos-transduce all reflexive star $r$-crossings (and thereby reduce to the previous case) by coloring the layers of the crossing and only keeping edges between distinct layers, using the formula
    \[
        \phi(x,y) := \bigvee_{i \neq j \in \{ 0,\ldots,r+1 \}} E(x,y) \wedge L_i(x) \wedge L_j(y).\qedhere
    \]
\end{proof}
\section{A normal form for existential positive FO}\label{sec:normal-form}
\newcommand{\DS}{\mathrm{DS}}
\newcommand{\DC}{\mathrm{DC}}

In this section we establish a normal form for \expos-formulas that we will use for our further arguments. 
We remark that this normal form can also be established for general relational $\Sigma$-structures by considering distances in the Gaifman graph of $G$, we refrain from doing so for consistency of presentation. 

\begin{definition}[Clique formulas]
A formula $\phi(\bar x)$ is an \emph{$r$-clique formula} if for every graph $G$ and every tuple $\bar v \in V(G)^{|\bar x|}$
\[
    G \models \phi(\bar v)
    \quad
    \implies
    \quad
    G \models \bigwedge_{\mathclap{1 \leq  i < j \leq |\bar x|}}\dist(v_i,v_j)\leq r.
\]
This means that the (instantiations of the) free variables of $\phi$ form a \emph{distance-$r$ clique} in $G$.
\end{definition}



We prove the following normal form. 

\begin{theorem}\label{thm:normal-form}
    Every \expos-formula $\phi(\bar x)$ with quantifier rank $q$ is equivalent to an \expos-formula that is a boolean combination of $2^q$-clique formulas $\chi_i(\bar z_i)$
    of quantifier rank at most $q$, where the $\bar z_i$ and are subtuples of $\bar x$.
\end{theorem}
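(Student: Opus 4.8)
The plan is to prove the statement by structural induction on the \expos-formula $\phi$, carrying the quantifier rank along as the exponent in the distance bound. Concretely, I would prove the slightly rephrased claim that every \expos-formula $\phi(\bar x)$ of quantifier rank $q$ is equivalent to a positive boolean combination (using only $\wedge$ and $\vee$, as befits an \expos-formula) of $2^{q}$-clique formulas on subtuples of $\bar x$, each of quantifier rank at most $q$. The key structural fact making the boolean bookkeeping painless is monotonicity of the target family in the distance parameter: a $2^{q'}$-clique formula with $q'\le q$ is automatically a $2^{q}$-clique formula, and its quantifier rank is still $\le q'\le q$. For the atomic base case it then suffices to note that $E(x,y)$ is a $1$-clique formula, while $x=y$ and each color atom $C(x)$ are trivially clique formulas (distance $0$, respectively no pair of free variables), and $2^{0}=1$. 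For the boolean cases $\phi_1\vee\phi_2$ and $\phi_1\wedge\phi_2$ the quantifier rank is the maximum of the two ranks, so by the monotonicity above the two induction hypotheses combine directly into a positive boolean combination of $2^{q}$-clique formulas.

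The only real work is the case $\phi=\exists y\,\psi(\bar x,y)$, where $q=\operatorname{qr}(\psi)+1$. By the induction hypothesis $\psi$ is a positive boolean combination of $2^{q-1}$-clique formulas $\chi_i(\bar z_i)$ with $\bar z_i\subseteq(\bar x,y)$ and quantifier rank $\le q-1$. I would first rewrite this boolean combination into disjunctive form $\psi\equiv\bigvee_k\bigwedge_{i\in I_k}\chi_i$ by distributing $\wedge$ over $\vee$; this only rearranges the boolean skeleton and introduces no new clique formulas, so all $\chi_i$ retain their parameters. Pushing the existential over the outermost disjunction gives $\exists y\,\psi\equiv\bigvee_k\exists y\bigwedge_{i\in I_k}\chi_i$, so it suffices to treat a single conjunction $\exists y\bigwedge_{i\in I_k}\chi_i$. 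I then split $I_k$ into the indices with $y\notin\bar z_i$ and those with $y\in\bar z_i$, pulling the former (which do not mention $y$) out of the existential. This reduces the task to showing that $\chi_J(\bar w):=\exists y\bigwedge_{i\in J}\chi_i$, where $y\in\bar z_i$ for all $i\in J$ and $\bar w$ collects the $\bar x$-variables occurring in $\bigcup_{i\in J}\bar z_i$, is a single $2^{q}$-clique formula of quantifier rank $\le q$ on a subtuple of $\bar x$.

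For this I would argue at the level of walks rather than abstract graph distance, which is what makes the triangle inequality literal. Suppose $G\models\chi_J(\bar a)$, witnessed by some $b$ substituted for $y$. Each component $a_p$ of $\bar w$ occurs in some $\bar z_i$ with $i\in J$, and since $\chi_i$ is a $2^{q-1}$-clique formula having $y$ among its free variables, we get $G\models\dist(a_p,b)\le 2^{q-1}$, i.e.\ there is a walk of length exactly $2^{q-1}$ from $a_p$ to $b$. As $E$ is symmetric, concatenating such a walk with the reverse of the corresponding walk for another component $a_{p'}$ yields a walk of length $2^{q}$, whence $G\models\dist(a_p,a_{p'})\le 2^{q}$; thus $\chi_J$ is a $2^{q}$-clique formula. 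Its quantifier rank is $1+\max_{i\in J}\operatorname{qr}(\chi_i)\le q$, and $\bar w\subseteq\bar x$. Assembling the pieces, $\exists y\bigwedge_{i\in I_k}\chi_i$ is the conjunction of the pulled-out $2^{q-1}$-clique formulas (which are in particular $2^{q}$-clique formulas) with the single formula $\chi_J$, and taking the disjunction over $k$ keeps us inside positive boolean combinations of $2^{q}$-clique formulas, closing the induction.

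The delicate point, and where I expect the main obstacle, is the quantifier-rank bookkeeping. The classical route of normalizing an \expos-formula into a disjunction of conjunctive queries in prenex form would collect all existential quantifiers into a single block and blow up the quantifier rank far beyond $q$. The argument above is arranged specifically to avoid this: only the single outermost $\exists y$ is ever moved, it is distributed over $\vee$ (which is for free), pulled across conjuncts not mentioning $y$, and the remaining $y$-conjuncts are packaged into one new existential formula whose quantifier rank therefore increases by exactly one. That single increment is precisely what matches the doubling of the distance bound from $2^{q-1}$ to $2^{q}$, so the two invariants stay synchronized throughout the induction.
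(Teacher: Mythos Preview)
Your proof is correct and follows essentially the same route as the paper: structural induction, DNF normalization, pushing the existential over $\vee$, pulling out $y$-free conjuncts, and packaging the remaining $y$-conjunction into a single $2^q$-clique formula via the triangle-inequality/walk-concatenation argument. The explicit remarks on monotonicity of the distance parameter and on quantifier-rank bookkeeping are helpful additions but do not constitute a different approach.
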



\begin{proof}
    We prove the statement by induction on the structure of $\phi$. 
    The statement is obviously true for atomic formulas $E(x,y)$, $P(x)$ for unary predicates $P$, and for $x=y$. 
    It trivially extends to positive boolean combinations. 
    
    The interesting case is the existential quantifier.
    Assume $\phi(\bar x)=\exists y\, \psi(\bar x,y)$ is a formula of quantifier rank $q \geq 1$.
    Let $r := 2^{q-1}$ and $t := q + |\bar x|$.
    The goal is to show that $\phi$ is equivalent to a positive boolean combination of $2r$-clique formulas.
    
    By applying induction, we can assume that~$\psi(\bar x,y)$ is a positive boolean combination of $r$-clique formulas. 
    We bring~$\psi$ into disjunctive normal form. 
    Note that for this we only have to distribute conjunctions over disjunctions by the equivalence 
    \[
        \zeta_1 \wedge (\zeta_2 \vee \zeta_3) \;\equiv\; (\zeta_1 \wedge \zeta_2) \;\vee\; (\zeta_1 \wedge \zeta_3),
    \]
    hence positivity is preserved by this operation.
    By the equivalence 
    \[
        \exists y \, (\zeta_1(y) \vee \zeta_2(y)) \;\;\equiv\;\; (\exists y \, \zeta_1(y)) \;\vee\; (\exists y \, \zeta_2(y)),    
    \]
    we push the existential quantifier into the disjunction.
    We hence have to deal with the case 
    \begin{align*}
        \phi(\bar x)=\exists y \big( & \chi_1(\bar z_1) \wedge \ldots \wedge \chi_m(\bar z_m)\big),
    \end{align*}
    where the $\chi_i$ are $r$-clique formulas, and the $\bar z_i$ and $\bar z'_i$ are subtuples of $\bar x y$. 
    
    Observe that if the tuple $\bar z_i$ does not contain the quantified variable $y$, then we can move the conjunct~$\chi_i(\bar z_i)$ out of the scope of the quantifier.
    We can therefore assume that every tuple~$\bar z_i$ contains the quantified variable $y$.
    Up to restricting the free variables of $\phi$, we can also assume that each variable from $\bar x$ appears in some tuple $\bar z_i$.

    We claim that $\phi$ is an $r$-clique formula.
    Assume that $\phi(\bar x)$ holds.
    This means there exists an element~$y$ such that all the $r$-clique formulas~$\chi_i(\bar z_i)$ are satisfied.
    Consider now any two variables $x$ and $x'$ from~$\bar x$.
    Let~$\bar z_i$ be a tuple containing $x$ and let $\bar z_j$ be a tuple containing $x'$.
    The truth of the $r$-clique formula~$\chi_i(\bar z_i)$ witnesses that $\dist(x,y) \leq r$ (because $\{x,y\} \subseteq \bar z_i$).
    Similarly, we deduce $\dist(x',y) \leq r$, and by the triangle inequality $\dist(x,x') \leq 2r$, as desired. 
\end{proof}

\section{The monadic non-equality-property}\label{sec:nep}

\begin{definition}\label{def:nep}
    Let $\phi(\bar x, \bar y)$ be an FO formula over the signature of colored graphs, where $\bar x$ and $\bar y$ are two tuple variables of the same length~$t$.
We say that $\phi$ has the \emph{monadic non-equality-property} on a graph class~$\CC$ if for every $\ell \in \N$ there is a graph coloring $G_\ell^+$ of a graph $G_\ell \in \CC$ and tuples $\bar a_1, \ldots, \bar a_\ell \in V(G)^t$ such that for all $i,j\in [\ell]$
\begin{equation*}
    G_\ell^+ \models \phi(\bar a_i, \bar a_j) \iff i \neq j.
\end{equation*}
\end{definition}

The goal of this section is to prove \Cref{thm:nep} from the introduction, which we restate for here for convenience.

\thmNEP*

As reflexive graphs contain no irreflexive cliques, simpler version for reflexive graphs (\Cref{thm:nep-simple}) follows immediately.

In the next two subsections we prove both implications of \Cref{thm:nep} separately.

\subsection{Non-structure}

The property of not having the monadic non-equality-property for all \expos-formulas is preserved by \expos-transductions, as made precise by the following observation. 

\begin{observation}\label{obs:transfer-non-equality}
    For every \expos-formula $\phi(x,y)$ and \expos-transduction $\trans{T}$ there exists an \expos-formula~$\psi(x,y)$ with the following property.
    For every graph class $\CC$: If $\phi$ has the monadic non-equality-property on $\trans{T}(\CC)$, then $\psi$ has the monadic non-equality-property on $\CC$.
\end{observation}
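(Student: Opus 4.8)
The plan is to prove \Cref{obs:transfer-non-equality} by a standard composition argument for transductions, adapted to the existential positive setting. Suppose $\trans{T} = \trans I \circ \Gamma_\sigma$ is an \expos-transduction, where $\trans I = (\nu(x), \eta(x,y))$ uses the colors in $\sigma$. Given the \expos-formula $\phi(x,y)$, I would first substitute the interpretation into $\phi$: replace every atom $E(x,y)$ in $\phi$ by $\eta(x,y)$ and relativize all quantifiers to $\nu$. Since $\phi$, $\eta$, and $\nu$ are all \expos-formulas and positive existential formulas are closed under such substitution and relativization (this is exactly the syntactic observation underlying \Cref{lem:expos-transitive}), the resulting formula $\psi'(x,y)$ is again an \expos-formula over the signature of $\sigma$-colored graphs.

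Next I would verify that $\psi'$ transports the monadic non-equality-property correctly. Suppose $\phi$ has the monadic non-equality-property on $\trans{T}(\CC)$. Then for every $\ell$ there is a coloring $H_\ell^+$ of some $H_\ell \in \trans{T}(\CC)$ and tuples witnessing $H_\ell^+ \models \phi(a_i, a_j) \iff i \neq j$. By definition of $\trans{T}(\CC)$, there is a graph $G_\ell \in \CC$ and a $\sigma$-coloring $G_\ell^{++}$ with $H_\ell = \trans I(G_\ell^{++})$, and the witnessing elements $a_i$ lie in $\nu(G_\ell^{++})$. The coloring realizing $\phi$ on $H_\ell$ can itself be absorbed into the color set (enlarging $\sigma$), so that $\psi'$ evaluated on the combined $\sigma$-coloring $G_\ell^+$ of $G_\ell$ agrees with $\phi$ evaluated on the colored output graph. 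This gives $G_\ell^+ \models \psi'(a_i, a_j) \iff i \neq j$, establishing the monadic non-equality-property of $\psi'$ on $\CC$. Taking $\psi := \psi'$ finishes the argument.

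The only subtlety I foresee is bookkeeping around colors: the monadic non-equality-property allows an arbitrary coloring of the output graph, and this coloring must be pushed back through the interpretation onto the input graph. Since both the coloring operation $\Gamma_\sigma$ and the interpretation are part of the transduction, this is precisely the situation handled by composition of transductions, so no new idea is needed — one simply merges the color palette used to witness $\phi$ with the palette $\sigma$ used by $\trans{T}$, exactly as in the proof of closure under composition. The key point to emphasize is that \emph{positivity and existentiality are preserved} throughout: substituting an \expos-formula for an atom and relativizing quantifiers to an \expos-domain-formula never introduces a negation or a universal quantifier, so $\psi$ is guaranteed to be an \expos-formula.

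I do not expect a genuine obstacle here; the statement is a routine transport lemma and its proof is essentially a restriction of the standard transduction-composition proof to the \expos-fragment, as already noted for \Cref{lem:expos-transitive}. The main care is simply to confirm that the substitution step stays within the existential positive fragment, which follows because $\phi$ contains no negations or universal quantifiers to begin with and the substituted pieces $\eta, \nu$ are themselves \expos-formulas.
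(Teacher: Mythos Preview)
Your proposal is correct and follows exactly the standard composition argument; the paper states this as an \emph{observation} without proof, so there is nothing further to compare against.
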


\begin{lemma}[{$(2. \Rightarrow 1.)$ of \Cref{thm:nep}}]\label{lem:nep-non-structure}
    For every graph class $\CC$, if $\CC$ is
    \begin{enumerate}
        \item not irreflexive-clique-free, or
        \item not semi-ladder-free, or
        \item not monadically dependent,
    \end{enumerate}
    then there is an EP-formula with the monadic non-equality property on $\CC$.
\end{lemma}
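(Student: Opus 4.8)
The plan is to prove the contrapositive's three cases independently, in each case exhibiting an explicit \expos-formula together with a witnessing coloring. The key insight is that the monadic non-equality-property is fragile enough that we can read it off directly from the combinatorial obstruction, exactly as the companion \Cref{obs:transfer-non-equality} suggests: once we produce the property on some ``canonical'' class, we can pull it back along any \expos-transduction. I would handle the three items roughly in order of increasing difficulty.

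\textbf{Case 1 (not irreflexive-clique-free).} Here $\CC$ contains arbitrarily large irreflexive cliques as induced subgraphs. As already observed in \Cref{sec:self-loops}, on the class of irreflexive cliques the edge formula $\phi(x,y) := E(x,y)$ satisfies $G \models \phi(a_i,a_j) \iff a_i a_j \in E(G) \iff i \neq j$, since distinct vertices are adjacent and (being irreflexive) no vertex is self-adjacent. No coloring is needed. So $\phi = E(x,y)$ is the desired \expos-formula with the monadic non-equality property, taking $\bar a_i$ to be the singletons $a_i$ ranging over the vertices of the $\ell$-clique. This is essentially immediate.

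\textbf{Case 2 (not semi-ladder-free).} By the definition recalled in \Cref{conjecture:positive-general}, a class fails to be semi-ladder-free exactly when it is not half-graph-free or not co-matching-free. If $\CC$ contains arbitrarily large semi-induced co-matchings, then using the vertices $a_1,\dots,a_\ell,b_1,\dots,b_\ell$ of an order-$\ell$ co-matching I would set $\bar a_i := (a_i, b_i)$ and take the \expos-formula $\phi(\bar x,\bar y) := E(x_1, y_2)$, which evaluates to true exactly when $a_i b_j \in E(G)$, i.e.\ exactly when $i \neq j$ by the definition of a co-matching. The half-graph subcase is the point where I expect to need a coloring: a half-graph is an \emph{order}, so to convert ``$i \leq j$ and $j \leq i$ fail simultaneously'' into a single positive disequality one must symmetrize. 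The natural move is to color the two sides and write a formula that is positive and checks the two strict-order directions via edges/non-edges of the half-graph after a color-guided rewriting; the difficulty is that strict order is built from $\leq$ together with $\neq$, and $\neq$ is forbidden. I would exploit that a half-graph semi-induces a co-matching-like disequality pattern only after coloring alternately, or alternatively reduce this subcase to the co-matching subcase by noting that large half-graphs contain (after coloring) the relevant non-equality pattern. This reduction is the most delicate bookkeeping in the case.

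\textbf{Case 3 (not monadically dependent).} This is the genuinely substantive case and where I expect the main obstacle to lie. The leverage is \Cref{cor:expos-transduce-everything}: if $\CC$ is semi-ladder-free but not monadically dependent then it \expos-transduces the class of all reflexive graphs, and conversely if it is not even semi-ladder-free we are already in Case 2. The clean route is therefore to assume the first two cases fail (so $\CC$ is semi-ladder-free and irreflexive-clique-free), so that by \Cref{cor:expos-transduce-everything} failure of monadic dependence lets $\CC$ \expos-transduce all reflexive graphs; since all reflexive graphs trivially carry some \expos-formula with the monadic non-equality property (e.g.\ via an encoded disequality on a rich enough class), \Cref{obs:transfer-non-equality} transports that property back to $\CC$. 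The main obstacle is pinning down the base case: I must name one \expos-formula that has the monadic non-equality-property on the class of all reflexive graphs. Since on reflexive graphs $E(x,y)$ no longer detects disequality (self-loops make every vertex $E$-related to itself), I would instead encode disequality using the freedom of coloring all of reflexive-graph-space: color a single ``distinguished'' vertex and use reachability to separate one index from the rest, iterating to obtain the full $i \neq j$ pattern. Establishing this base instance cleanly, and then merely invoking \Cref{obs:transfer-non-equality}, is what makes Case 3 go through.
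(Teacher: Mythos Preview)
Your Cases~1 and the co-matching subcase of Case~2 match the paper exactly. The gaps are in the half-graph subcase of Case~2 and in closing Case~3.

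For the half-graph subcase you never produce a formula, and both of your suggested routes are dead ends. Coloring the two sides does not help: with finitely many colors you still only have access to the predicate $i \le j$ via $E(a_i,b_j)$, and no positive boolean combination of $(i \le j)$ and $(j \le i)$ yields $(i \neq j)$. Reducing to the co-matching subcase also fails, since half-graphs do not semi-induce large co-matchings. The paper's trick is direct and uses no coloring: take the formula $\psi(x_1x_2,y_1y_2) := E(x_1,y_2) \vee E(y_1,x_2)$ together with \emph{shifted} tuples, essentially $\bar a_i := (a_{i+1},b_i)$. Then $\psi(\bar a_i,\bar a_j)$ becomes $(i+1 \le j) \vee (j+1 \le i)$, i.e.\ $(i < j) \vee (j < i)$, which is exactly $i \neq j$. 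The shift is what turns the two weak orders into strict ones.

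For Case~3 you correctly invoke \Cref{cor:expos-transduce-everything} to \expos-transduce the class of all reflexive graphs, and correctly plan to pull back via \Cref{obs:transfer-non-equality}. But you then manufacture a difficulty that is not there: you do not need any new base formula on reflexive graphs. The class of all reflexive graphs contains all reflexive co-matchings (and all reflexive half-graphs), and the formulas from Case~2 work on those unchanged --- the self-loops are irrelevant since the witnessing edges $E(a_i,b_j)$ run between distinct vertices. The paper simply says ``conclude as in the second case.'' Your proposed ``color a distinguished vertex and iterate'' does not obviously yield a single \expos-formula witnessing the property for all $\ell$ simultaneously.
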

\begin{proof}
    In the first case, $E(x,y)$ has the monadic inequality property on $\CC$.

    In the second case, $\CC$ \expos-transduces either the class of all co-matchings or the class of all half-graphs.
    By \Cref{obs:transfer-non-equality}, it suffices to show that there are \expos-formulas $\phi$ and $\psi$ with the monadic non-equality-property on the class of all co-matchings and on the class of all half-graphs, respectively.
    We can pick
    \[
        \phi(x_1x_2,y_1y_2) := E(x_1,y_2)
        \text{ and }
        \psi(x_1x_2,y_1y_2) := E(x_1,y_2) \vee E(y_1,x_2).
    \]
    See \Cref{fig:monadic-non-eq} on how to choose witnessing tuples.

    In the third case, if $\CC$ is semi-ladder-free, but not monadically dependent then it \expos-transduces the class of all reflexive graphs by \Cref{cor:expos-transduce-everything}, so we conclude as in the second case.
\end{proof}

\begin{figure}[htbp]
    \centering
    \includegraphics{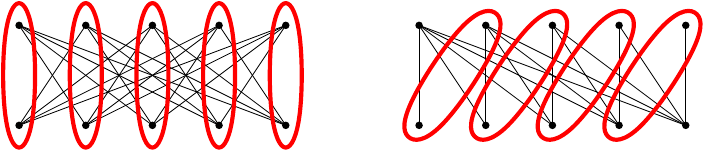}
    \caption{Tuples witnessing the monadic non-equality-property in co-matchings and half-graphs.}
    \label{fig:monadic-non-eq}
\end{figure}

\subsection{Structure}

If additionally in the definition of the monadic non-equality-property (\Cref{def:nep}), we demand that the tuples $\bar a_1,\ldots,\bar a_\ell$ are pairwise disjoint, then we say $\phi$ has the \emph{disjoint} monadic non-equality-property on $\CC$.
Let us show that we can always assume disjointness.

\begin{lemma}\label{lem:assume-disjointness}
    For every graph class $\CC$: if there is an (existential) positive formula $\phi$ that has the monadic non-equality property on $\CC$, then there is also an (existential) positive formula $\psi$ that has the disjoint monadic non-equality property on $\CC$.
\end{lemma}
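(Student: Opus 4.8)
The plan is a Ramsey-style purification of the witnessing tuples, followed by absorbing the shared elements into the coloring. Fix a target length $\ell'$ and start from a witness $\bar a_1,\ldots,\bar a_\ell$ in a coloring $G_\ell^+$ of some $G_\ell\in\CC$, with $\ell$ chosen large enough for the Ramsey arguments below; write $\bar a_i=(a_i^1,\ldots,a_i^t)$. First I would apply the pigeonhole principle to the \emph{internal equality type} of each tuple (the partition of $[t]$ recording which coordinates coincide), of which there are only finitely many, and pass to a subfamily on which this type is constant. Then I would color each pair $i<j$ by its \emph{overlap shape} $S_{ij}:=\{(p,q)\in[t]\times[t]:a_i^p=a_j^q\}$, of which there are at most $2^{t^2}$, and apply Ramsey's theorem to extract a large subfamily $I$ on which all pairs share a single overlap shape $S$. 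After relabeling we may assume $I=[\ell']$.

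The structural heart of the argument is to show that $S$ freezes the shared coordinates to a common constant core. I would take indices $i<j<k$ in $I$ and chase equalities: from $(p,q)\in S$ one reads off $a_i^p=a_j^q$, $a_i^p=a_k^q$, and $a_j^p=a_k^q$, which force $a_j^q=a_k^q$ and, since $I$ is large and we may drop the extremal indices, show that coordinate $q$ and symmetrically coordinate $p$ both take one and the same value $c$ across the whole subfamily. Call a coordinate \emph{core} if its value is constant over $I$, and let $C$ (of size at most $t$) collect the core values. The same chase shows that any coincidence $a_i^{p}=a_j^{p'}$ with $i\neq j$ can only occur at core coordinates holding a core value, and in particular no core value reappears in a non-core coordinate. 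Writing $\bar a_i'$ for the subtuple of $\bar a_i$ on the non-core coordinates, this yields that the tuples $\bar a_i'$ are pairwise disjoint.

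It remains to absorb the core into the coloring and rebuild the formula. I would extend $G_\ell^+$ by marking, for each core coordinate $p$, the unique vertex holding its constant value with a fresh unary predicate $P_p$ (reusing symbols when coordinates share a value), which is a legitimate monadic coloring. I then define $\psi$ from $\phi$ by reinstating each core coordinate through a freshly existentially quantified, predicate-pinned variable:
\[
    \psi(\bar x',\bar y') \;:=\; \exists\bar z\,\exists\bar z'\ \Big(\bigwedge_p P_p(z_p)\wedge\bigwedge_p P_p(z_p')\wedge \phi(\bar x,\bar y)\Big),
\]
where in $\phi$ the core coordinates of $\bar x$ (resp.\ $\bar y$) are filled by the $z_p$ (resp.\ $z_p'$) and the non-core coordinates by the free variables $\bar x'$ (resp.\ $\bar y'$). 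Since each $P_p$ marks a single vertex, these existentials merely restore the core values, so $\psi(\bar a_i',\bar a_j')$ is equivalent to $\phi(\bar a_i,\bar a_j)$ and hence holds iff $i\neq j$; the added quantifiers are existential and negation-free, so $\psi$ is existential positive whenever $\phi$ is, and positive whenever $\phi$ is. As $\ell'$ was arbitrary, $\psi$ has the disjoint monadic non-equality property on $\CC$.

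I expect the main obstacle to be the equality chase establishing that the common overlap shape $S$ forces shared coordinates to be \emph{globally constant} rather than merely pairwise matched. The triangle-of-indices computation must be combined with the constant-internal-type reduction to rule out "sliding'' coincidences such as $a_i^p=a_{i+1}^q$ varying with $i$, and to guarantee both that the non-core subtuples are genuinely disjoint and that no core value leaks into a non-core coordinate, which is exactly what makes the predicate-pinned reconstruction of $\psi$ faithful.
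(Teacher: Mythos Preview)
Your approach is essentially correct and takes a genuinely different route from the paper. The paper isolates the purification step as a standalone ``sunflower lemma for tuples'' (\Cref{lem:sunflower}): apply the classical sunflower lemma to the \emph{supports} $\mathrm{supp}(\bar a_i)$, then pigeonhole on which coordinate positions land in the sunflower core and on how they map into it. This directly yields a subfamily and an index set $I\subseteq[t]$ on which the $I$-subtuples coincide and the $([t]\setminus I)$-subtuples are pairwise disjoint, with no chase required. You instead Ramsey on the pairwise overlap shapes $S_{ij}$ and then argue on triangles of indices that every coordinate appearing in $S$ is globally constant. Both routes arrive at the same core/non-core split (indeed, your core set equals $\{r:(r,r)\in S\}$, which is exactly the paper's $I$), and your reconstruction of $\psi$ via singleton predicates matches the paper's. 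The sunflower route is shorter; yours is more self-contained.

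There is one real gap in your writeup. The formula $\psi$ you build depends on which coordinates are core, and that set is determined by the Ramsey outcome for the particular target length $\ell'$ you fixed at the outset. A different $\ell'$ may yield a different overlap shape $S$, hence a different core set and a different $\psi$. So the closing line ``As $\ell'$ was arbitrary, $\psi$ has the disjoint monadic non-equality property on $\CC$'' does not follow: you have produced, for each $\ell'$, \emph{some} $\psi_{\ell'}$, not a single $\psi$ working for all $\ell'$. The fix is routine and the paper makes it explicit: there are at most $2^t$ possible core sets, so a further pigeonhole over $\ell'$ gives one core set, and hence one $\psi$, that witnesses the disjoint property for arbitrarily large $\ell'$. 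You should insert this step before the final sentence.
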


To prove this lemma we require the following lemma, which can be seen as a sunflower lemma for tuples.
For a $t$-tuple $\bar a = a_1 \ldots a_t$ we write $\bar a[i] := a_i$ for its projection to the $i$th coordinate, and 
$\bar a |_I := a_{i_1} \ldots a_{i_p}$ for the subtuple induced by the set $I = \{i_1 < \ldots < i_p\} \subseteq [t]$.

\begin{lemma}\label{lem:sunflower}
    For all $t,m \in \N$ there is $n\in\N$ such that every set of $n$ many $t$-tuples $S$, there is a subset $A = \{ \bar a_1, \ldots, \bar a_m \}$ of $m$ many $t$ tuples and a set $I \subseteq [t]$ such that for all distinct $\bar a, \bar a' \in A$ and $J := [t] \setminus I$, the subtuples $\bar a |_I$ and $\bar a'|_I$ are equal, and the subtuples $\bar a |_J$ and $\bar a'|_J$ are disjoint.
\end{lemma}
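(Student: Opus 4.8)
The plan is to reduce the statement to the classical Erd\H{o}s--Rado sunflower lemma, after first normalizing the internal equality structure of the tuples. The subtlety compared to the usual sunflower lemma is twofold: entries may repeat \emph{within} a single tuple, and the sunflower lemma only controls the value-\emph{sets} of the tuples, whereas the conclusion requires control over \emph{positions} — the core must sit in a fixed index set $I$ and carry fixed values there, while the remaining coordinates must be pairwise disjoint.

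First I would apply pigeonhole to the \emph{equality patterns} of the tuples. The equality pattern of $\bar a$ is the partition of $[t]$ into blocks on which $\bar a$ is constant; there are at most $B_t$ (the $t$-th Bell number) of them, so for $n$ large a subfamily $S_1 \subseteq S$, whose size still grows with $n$, shares one common pattern $\pi$ with blocks $C_1, \dots, C_p$ where $p \le t$. On $S_1$ every tuple assigns \emph{distinct} values to distinct blocks, so $\bar a$ is faithfully encoded by its value-set $S(\bar a) := \{\bar a[i] : i \in [t]\}$, which has size exactly $p$, together with the assignment sending each block $C_s$ to its value $v_s(\bar a)$.

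Next I would apply the Erd\H{o}s--Rado sunflower lemma to the $p$-element sets $\{S(\bar a) : \bar a \in S_1\}$. Since $|S_1|$ is large, this yields a sunflower: a large subfamily with a common core $K$ and pairwise-disjoint petals $S(\bar a)\setminus K$. To upgrade this set-level statement to the required position-level statement I would apply two further rounds of pigeonhole. For each retained tuple the blocks split into \emph{core blocks} (those whose value lies in $K$) and \emph{petal blocks} (those whose value lies in the petal); there are at most $2^p$ such splits, so a large subfamily shares one split, say with core blocks $\mathcal I$ and petal blocks $\mathcal J$. Then, since on a core block the assigned value lies in $K$ and distinct blocks within a tuple receive distinct values, there are at most $p^p$ possible assignments of core values to core blocks, so a large subfamily shares one fixed such assignment.

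Setting $I := \bigcup_{C_s \in \mathcal I} C_s$ and $J := \bigcup_{C_s \in \mathcal J} C_s$ then lets me read off the conclusion: on $I$ the values are constant across the retained tuples (by the last pigeonhole step together with the equality pattern), so $\bar a|_I = \bar a'|_I$; and on $J$ the entries of each $\bar a$ lie inside its own petal $S(\bar a)\setminus K$, which are pairwise disjoint and disjoint from $K$, so $\bar a|_J$ and $\bar a'|_J$ are disjoint. Choosing all pigeonhole thresholds and the sunflower parameter large enough in terms of $t$ and $m$ guarantees a final subfamily of size at least $m$, and the resulting $n$ depends only on $t$ and $m$. I expect the main obstacle to be exactly this bookkeeping linking the set-theoretic sunflower to the positional conclusion — ensuring that the core occupies a fixed index set with fixed values while the petals, confined to $J$ and disjoint from $K$, stay pairwise disjoint; the equality-pattern normalization of the first step is what makes this translation clean, as it rules out repeated entries sabotaging disjointness.
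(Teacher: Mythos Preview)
Your proposal is correct and follows essentially the same route as the paper: apply the classical sunflower lemma to the supports of the tuples, then pigeonhole on which coordinates land in the core, then pigeonhole on the actual values at those coordinates. The preliminary pigeonhole on equality patterns is not actually needed --- the paper skips it and directly sets $I(\bar a) := \{k \in [t] : \bar a[k] \in C\}$, which handles repeated entries without issue --- but it does no harm.
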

\begin{proof}
    Fix $t,m\in\mathbb N$ and let $n$ be large enough for the classical sunflower lemma for set systems of size at most~$t$ and sunflower size $m$
    (recall that a family of sets forms a sunflower with core $C$ if all pairwise intersections equal $C$, and the sunflower lemma guarantees the existence of big sunflowers in large set systems).
    Let $S$ be a set of $n$ many $t$-tuples.
    For a tuple $\bar a=(a_1,\ldots,a_t)$ define its support $\mathrm{supp}(\bar a):=\{a_1,\ldots,a_t\}$ (the tuple interpreted as a set). 
    Applying the sunflower lemma to the family $\{\mathrm{supp}(\bar a) : \bar a\in S\}$, we obtain a subset $A=\{\bar a_1,\ldots,\bar a_m\}\subseteq S$ whose supports form a sunflower with core~$C$.
    That is, for all $i\neq j$, $\mathrm{supp}(\bar a_i)\cap \mathrm{supp}(\bar a_j)=C$,
    and the sets $\mathrm{supp}(\bar a_i)\setminus C$ are pairwise disjoint.

    For each $\bar a\in A$, let $I(\bar a):=\{k\in[t] : \bar a[k]\in C\}$.
    Since there are only finitely many subsets of $[t]$, by the pigeonhole principle we may assume (after discarding tuples if necessary) that $I(\bar a_1)=\cdots=I(\bar a_m)=:I$. 
    Since there is only a constant number of ways to assign indices from $I$ to the set $C$, by applying the pigeonhole principle again, we can assume that $\bar a_i|_I=\bar a_j|_I$ for all $i,j$.
    On the other hand, for $k\in J:=[t]\setminus I$ we have $\bar a_i[k]\in \mathrm{supp}(\bar a_i)\setminus C$, and since the petals of the sunflower are pairwise disjoint, the subtuples $\bar a_i|_J$ and $\bar a_j|_J$ are disjoint for all~$i\neq j$.
\end{proof}

We can now prove \Cref{lem:assume-disjointness}.

\begin{proof}[Proof of \Cref{lem:assume-disjointness}]
    Let $\phi(\bar x, \bar y)$ be the formula that has the monadic non-equality-property on~$\CC$ for tuples of length $t := |\bar x| = |\bar y|$.
    By \Cref{lem:sunflower} and the pigeonhole principle, we can assume that there is a single set $I \subseteq [t]$ such that for every $\ell \in \N$ there is a coloring $G^+_\ell$ of a graph $G_\ell \in \CC$ containing tuples $\bar a_1, \ldots, \bar a_\ell$ witnessing the monadic non-equality-property, such that additionally 
    the $\bar a_i |_I$ subtuples are pairwise equal and the $\bar a_i |_J$ subtuples are pairwise disjoint for $J:= [t] \setminus I$.
    Up to permuting the variables in $\phi$, we can assume that $I = \{p+1,\ldots,t\}$ for some $p \in [t]$. (We know that $I \subsetneq [t]$, as otherwise all tuples would be equal and could not witness the monadic non-equality-property.)
    We add to each $G^+_\ell$ for each $i \in I$ a predicate $X_i$ that marks the single element $\bar a_1[i] = \ldots = \bar a_\ell[i]$.
    Now the (existential) positive formula 
    \[
        \psi(\bar x' = x_1 \ldots x_{p}, \bar y' = y_1 \ldots y_p)
        :=
        \exists \bar z \in (X_{p+1} \times \ldots \times  X_{t})\ \phi(\bar x' \bar z, \bar y' \bar z)
    \]
    witnesses the disjoint monadic non-equality-property on the tuples $\bar a_1 |_J, \ldots, \bar a_\ell |_J$.
\end{proof}

\begin{lemma}[{$(1. \Rightarrow 2.)$ of \Cref{thm:nep}}]
    For every class $\CC$ of partially reflexive graphs, if $\CC$ is
    \begin{enumerate}
        \item irreflexive-clique-free, and
        \item semi-ladder-free, and
        \item monadically dependent,
    \end{enumerate}
    then no EP-formula has the monadic non-equality property on $\CC$.
\end{lemma}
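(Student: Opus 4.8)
The plan is to prove the contrapositive in spirit: assume $\CC$ satisfies all three tameness conditions and show that any \expos-formula $\phi$ witnessing the monadic non-equality-property leads to a contradiction. The engine of the argument will be the combinatorial machinery built up in the earlier sections, in particular subflip-flatness for tuples (\Cref{lem:sff-tuples}) together with the normal form for \expos-formulas (\Cref{thm:normal-form}).

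First I would reduce to the disjoint case using \Cref{lem:assume-disjointness}, so that the witnessing tuples $\bar a_1, \ldots, \bar a_\ell$ are pairwise disjoint. Next, observe that $\CC$ being semi-ladder-free and monadically dependent means, by \Cref{thm:subflatness} and \Cref{lem:sff-tuples}, that $\CC$ is $r$-subflip-flat for tuples for every $r$. (Here semi-ladder-freeness gives half-graph-freeness, hence monadic stability, and co-matching-freeness, so both hypotheses of \Cref{thm:subflatness} hold.) I would then fix the quantifier rank $q$ of $\phi$ and set $r := 2^q$ (the distance parameter coming from the normal form, so that the clique-formula constituents of $\phi$ only ``see'' pairs of tuples within distance $r$). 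Applying $r$-subflip-flatness for tuples to the large family $W = \{\bar a_1, \ldots, \bar a_\ell\}$, I extract a still-large subfamily $A$ of tuples that are pairwise at distance greater than $r$ in some $k$-subflip $H$ of $G_\ell$.

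The crucial step is to transfer the formula $\phi$ across the subflip. Since $\CC$ is irreflexive-clique-free, we may invoke \Cref{cor:transduce-subflips-partially-reflexive}: the subflip $H$ (now a $k$-subflip of the $K_t$-free graph $G_\ell$) quantifier-free positively transduces $G_\ell$ back. Composing this with $\phi$ and using \Cref{lem:expos-transitive}, we obtain an \expos-formula $\phi'$ over $H^+$ equivalent to $\phi$ over $G_\ell^+$. Now I bring $\phi'$ into the normal form of \Cref{thm:normal-form}, so it becomes a positive boolean combination of $2^{q'}$-clique formulas $\chi_i(\bar z_i)$ on subtuples. Because the tuples in $A$ are pairwise at distance greater than $2^{q'}$ in $H$, any clique-formula $\chi_i$ evaluated on coordinates drawn from two \emph{distinct} tuples of $A$ must fail (its free variables cannot form a distance-$r$ clique). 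Hence for $\bar a_i, \bar a_j \in A$ with $i \neq j$, the truth value of $\phi'(\bar a_i, \bar a_j)$ depends only on coordinates within $\bar a_i$ and within $\bar a_j$ separately, i.e., it is determined by the isolated ``local type'' of each tuple. But this local type can take only boundedly many values (\Cref{fact:type-bound}), so by pigeonhole two distinct indices $i < j$ in $A$ have the same local types; then $\phi'(\bar a_i, \bar a_j)$ and $\phi'(\bar a_j, \bar a_j)$ agree. Since the monadic non-equality-property demands $\phi'(\bar a_i, \bar a_j)$ holds (as $i \neq j$) while $\phi'(\bar a_j, \bar a_j)$ fails (as $j = j$), this is the desired contradiction.

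The main obstacle I anticipate is making the ``local types decouple after the subflip'' step fully rigorous: one must be careful that the clique-formula decomposition genuinely separates into per-tuple contributions once cross-tuple distances exceed the clique radius, and that the self-loops introduced by the quantifier-free transduction of \Cref{cor:transduce-subflips-partially-reflexive} do not secretly reconnect the tuples (this is exactly where irreflexive-clique-freeness is used, to keep the number of self-loop-free vertices in self-flipped parts bounded). A secondary technical point is tracking how the distance threshold $r$ must be chosen relative to the quantifier rank of the \emph{composed} formula $\phi'$, which is larger than that of $\phi$; this is handled by fixing the composed quantifier rank first and only then invoking subflip-flatness for tuples with the correspondingly larger $r$.
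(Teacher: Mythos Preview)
Your approach is essentially the paper's, but two points need tightening.

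First, your worry about the quantifier rank of $\phi'$ is misplaced and your proposed fix would not work. Since the formula $\psi$ from \Cref{cor:transduce-subflips-partially-reflexive} is \emph{quantifier-free}, substituting it for the edge relation in $\phi$ yields a formula $\phi^*$ of the \emph{same} quantifier rank $q$; there is no circularity. (The paper uses $2^{q+1}$-subflip-flatness only because \Cref{lem:sff-tuples} costs a factor of~$2$.) Your suggested workaround of ``fixing the composed quantifier rank first'' would fail if $q'$ genuinely depended on the subflip budget $k$, since $k$ depends on $r$, which you wanted to set to~$2^{q'}$. You cite the quantifier-freeness yourself but do not draw the conclusion $q'=q$.

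Second, the step from $\phi'(\bar a_i,\bar a_j)$ to $\phi'(\bar a_j,\bar a_j)$ is loose as written. The claim that the truth value ``depends only on the local types'' is justified only when the two slots are far apart; it does not directly apply when both slots receive the same tuple. The paper avoids this by working at the level of one satisfied disjunct $\alpha$ in the DNF: since every conjunct of $\alpha$ is a satisfied $2^q$-clique formula, each has free variables entirely in $\bar x$ or entirely in $\bar y$, so $\alpha$ factors as $\beta(\bar x)\wedge\gamma(\bar y)$, and type equality gives $\alpha(\bar a,\bar a)$ directly. Your version can be rescued by monotonicity (replacing mixed clique subformulas by $\bot$ in a positive Boolean combination only decreases it, and the result already holds at $(\bar a_j,\bar a_j)$), but the disjunct-level argument is cleaner.
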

\begin{proof}
    Towards a contradiction, assume that there is an \expos-formula that has the monadic non-equality-property on $\CC$.
    By \Cref{lem:assume-disjointness} there also exists an \expos-formula $\phi(\bar x, \bar y)$ that has the disjoint monadic non-equality-property on $\CC$.
    Fix a large set of pairwise disjoint tuples of vertices $A = \{ a_{1}, \ldots, a_{\ell} \}$ witnessing that $\phi$ has the disjoint monadic non-equality-property in some graph $G_\ell \in \CC$.
    This means there is a coloring $G^+_\ell$ of $G_\ell$ such that for all $i,j \in [\ell]$ 
    \begin{equation*}
        G_\ell^+ \models \phi(\bar a_i, \bar a_j) \Leftrightarrow i \neq j.
    \end{equation*}
    (It is possible but not fruitful to determine the required value of $\ell$ from the proof that follows.
    It is only important that $\ell$ depends only on $\phi$ and $\CC$.)
    Let $q$ be the quantifier rank of $\phi$.
    Since~$\CC$ is semi-ladder-free and monadically dependent, it satisfies the conditions of \Cref{thm:subflatness} and is $2^{q+1}$-subflip-flat.
    By \Cref{lem:sff-tuples}, $\CC$ is also $2^q$-subflip-flat for tuples.
    Then there exists $k\in\N$ such there is a $k$-subflip $H$ of $G_\ell$ in which many of the tuples $A' \subseteq A$ are pairwise at distance larger than $2^q$. 
    By irreflexive-clique-freeness of $\CC$, we can apply \Cref{cor:transduce-subflips-partially-reflexive} and obtain a quantifier-free positive formula $\psi(x,y)$ that only depends on $k$ and $\CC$.
    Replacing the edge relation $E(x,y)$ with $\psi$ in $\phi$ results in a formula $\phi^*$ for which there is a coloring~$H^+$ of~$H$ such that for all $i,j \in [\ell]$
    \begin{equation}\label{eq:nep-sff}
        H^+ \models \phi^*(\bar a_i, \bar a_j) \Leftrightarrow G_\ell^+ \models \phi(\bar a_i, \bar a_j) \Leftrightarrow i \neq j.
    \end{equation}
    Because $\psi$ is quantifier-free and positive, $\phi^*$ is still an \expos-formula of quantifier rank $q$.
    Applying the normal form for \expos-formulas (\Cref{thm:normal-form}) we can rewrite $\phi^*$ into a positive DNF of $2^q$-clique formulas of  quantifier rank at most $q$.
    This means $\phi$ is a disjunction of formulas of the form
    \begin{equation}\label{eq:dnf}        
        \alpha(\bar x, \bar y) = \alpha_1(\bar z_1) \wedge \ldots \wedge \alpha_m(\bar z_m),
    \end{equation}
    where each $\alpha_i$ is a $2^q$-clique formula of quantifier rank at most $q$ and each $\bar z_i$ is a subtuple of $\bar x \bar y$.
    The number of FO formulas with quantifier rank $q$ over the signature of graphs colored with the predicates appearing in $\phi^*$ is bounded (\cref{fact:type-bound}). 
    Since $A'$ is large, it contains two distinct tuples $\bar a$ and $\bar a'$ that satisfy the same quantifier rank $q$ formulas in $H^+$.
    By construction,
    \[
        H^+ \models \phi^*(\bar a, \bar a'),
    \]
    and this is witnessed by a satisfied disjunct $\alpha$ of $\phi^*$ as in \Cref{eq:dnf}.
    This means all the conjuncts~$\alpha_i$ of $\alpha$ are satisfied.
    Since the conjuncts are $2^q$-clique formulas and the tuples $\bar a, \bar a'$ are at distance greater than~$2^q$ from each other in $H^+$, none of them can have in their free variables elements from both of the tuples.
    Hence, the truth of $\alpha$ depends only on the quantifier rank $q$ formulas satisfied by $\bar a$ and $\bar a'$ individually.
    Since they satisfy the same formulas in $H^+$, we conclude that also $H^+ \models \alpha(\bar a, \bar a)$ and thus also $H^+ \models \phi^*(\bar a, \bar a)$; a contradiction to \Cref{eq:nep-sff}.
\end{proof}

As a corollary, we obtain \Cref{thm:co-matching-freeness-preserved} from the introduction.

\thmCoMatchingFreenessPreserved*

\begin{proof}
    As $\CC$ is nowhere dense, it satisfies all conditions of \Cref{thm:nep}.
    Hence, no \expos-formula has the monadic non-equality-property on $\CC$.
    By \Cref{obs:transfer-non-equality}, the latter also holds for every class~$\DD$ that is \expos-transducible from $\CC$.
    Then by \Cref{lem:nep-non-structure}, $\DD$ must be semi-ladder-free.
\end{proof}
\section{\expos-sparsification for bounded subflip-depth}\label{sec:sparsification-into-subgraphs}

\newcommand{\flipdepth}{\mathrm{flip\text{-}depth}}
\newcommand{\subflipdepth}{\mathrm{subflip\text{-}depth}}
\newcommand{\SCdepth}{\mathrm{SC\text{-}depth}}
\newcommand{\treedepth}{\mathrm{tree\text{-}depth}}

As a stepping stone to \Cref{thm:sparsification}, we define in this section the graph parameter \emph{subflip-depth} and show that every class of bounded subflip-depth can be EP-transduced from a class of bounded tree-depth. (And the reverse is true as well by \Cref{thm:co-matching-freeness-preserved}.)

\subsection{Definition and basic properties}
We define \emph{flip-depth} and \emph{subflip-depth} through the (sub)flipper-rank for radius $r = \infty$.
For every graph $G$ and $k\in \N$, we define
\[
    \textrm{flip-depth}_k(G) := \frk_{\infty, k}(G) \quad\text{and}\quad  \textrm{subflip-depth}_k(G) := \sfrk_{\infty, k}(G).
\]
We recall the definition of the flipper-rank for the special case of radius $r = \infty$ here for convenience.

\begin{definition}[(Sub)flip-depth]
    For every $k\in\NN$ we
    define the \emph{$k$-flip-depth} of the single vertex graph~$K_1$~as
    \[
        \flipdepth_k(K_1) := 0.
    \]
    For every other graph $G$ we define
    \[
        \flipdepth_k(G) := 1+ \min_{\substack{\text{$H$ is a} \\ \text{$k$-flip of $G$}}}\quad \max_{C\in \mathrm{Conn}(H)} \flipdepth_k(C),
    \]
    where $\mathrm{Conn}(H)$ are the connected components of $H$.
    We naturally obtain the definition of \emph{$k$-subflip-depth} by replacing $k$-flips with $k$-subflips in the above.
    A graph class $\CC$ has \emph{bounded (sub)flip-depth}, if there are $k,d \in \N$ such that every graph $G \in \CC$ has $k$-(sub)flip-depth at most $d$.
\end{definition}

Better known than $k$-flip-depth is the parameter \emph{SC-depth}~\cite{shrubdepth}, which is defined by replacing $k$-flips with \emph{subset-complements} in the definition of $k$-flip-depth.
We say $H$ is a subset-complement of $G$, if~$H$ can be obtained from~$G$ by complementing the edges on a subset of vertices. In the language of flips:
$H = G \oplus (\{A, V(G)\setminus A\}, \{(A,A)\})$ for some set~$A\subseteq V(G)$.
It is folklore that Flip-depth and SC-depth are functionally equivalent.

\begin{lemma}\label{rem:flipdepth-SCdepth}
    For every graph $G$ and $k \geq 2$
    \[
        \flipdepth_k(G) 
        \leq 
        \SCdepth(G)
        \leq 
        3k^2 \cdot \flipdepth_k(G).
    \]
\end{lemma}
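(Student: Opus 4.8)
The plan is to prove the two inequalities separately. The lower bound is immediate, while the upper bound requires simulating a single $k$-flip by a bounded number of subset-complements, taking care of the connected-component decompositions that the $\SCdepth$ definition forces after every single operation.

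For the lower bound $\flipdepth_k(G) \le \SCdepth(G)$, I would observe that a subset-complement $G \oplus (\{A, V(G)\setminus A\}, \{(A,A)\})$ is a $2$-flip, hence (since $k \ge 2$) a $k$-flip. Consequently any optimal $\SCdepth$-decomposition of $G$ — one subset-complement per round followed by passing to connected components — is also a valid witness for $\frk_{\infty,k}$, and a straightforward induction on $\SCdepth(G)$ yields $\flipdepth_k(G) \le \SCdepth(G)$.

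For the upper bound the key combinatorial step is the following \textbf{Claim}: every $k$-flip $H = G \oplus (\PP, F)$ can be realized by applying at most $3k^2$ subset-complements to $G$ in sequence. I would prove this by the standard trick: for each self-pair $(P,P) \in F$ use the single subset-complement with subset $P$, and for each cross-pair $(P,Q) \in F$ with $P \ne Q$ use the three subset-complements with subsets $P \cup Q$, $P$, and $Q$. Since applying a subset-complement with subset $S$ flips an edge $uv$ exactly when $u,v \in S$, the net effect on a pair of vertices is governed by the parity of how many chosen subsets contain both endpoints; a short parity computation shows this parity equals $[(\PP(u),\PP(v)) \in F]$ for every pair, so the composition is exactly $H$. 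As $F$ has at most $\binom{k}{2}$ cross-pairs and $k$ self-pairs, the number of subset-complements is at most $3\binom{k}{2} + k = \tfrac{3k^2-k}{2} \le 3k^2$.

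The upper bound $\SCdepth(G) \le 3k^2 \cdot \flipdepth_k(G)$ then follows by induction on $d := \flipdepth_k(G)$, the case $d = 0$ (where $G = K_1$) being trivial. For the inductive step, fix a $k$-flip $H$ witnessing $\flipdepth_k(G) = d+1$, so every connected component of $H$ has flip-depth at most $d$, and write $H$ as $G$ followed by subset-complements $\oplus S_1, \dots, \oplus S_m$ with $m \le 3k^2$ by the Claim. I would simulate these as $m$ consecutive rounds of the $\SCdepth$-game: apply $S_1$ to $G$, pass to a component $C_1$, apply $S_2 \cap V(C_1)$, pass to a component $C_2$, and so on. The crucial observation is that on any vertex pair surviving inside the final piece $C_m$, each restricted subset-complement $S_i \cap V(C_{i-1})$ acts exactly like the full $S_i$, so $C_m = H[V(C_m)]$; being a connected induced subgraph of $H$, the vertex set $V(C_m)$ lies inside a single connected component $C$ of $H$. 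By hereditariness of $\SCdepth$ (which holds because restricting a subset-complement to an induced subgraph is again a subset-complement) and the induction hypothesis, $\SCdepth(C_m) \le \SCdepth(C) \le 3k^2 d$. Counting the $m \le 3k^2$ simulation rounds gives $\SCdepth(G) \le m + 3k^2 d \le 3k^2(d+1)$, as required.

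The main obstacle is exactly this \emph{premature decomposition}: unlike a single $k$-flip, the simulating subset-complements must be interleaved with connected-component decompositions, and a later subset-complement could in principle create edges between previously separated parts. The resolution, and the heart of the argument, is that each final piece $C_m$ is a connected induced subgraph of $H$ and therefore lives inside one genuine component of $H$, so the intermediate decompositions never drive the depth beyond what the components of $H$ already guarantee.
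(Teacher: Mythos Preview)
Your proof is correct and follows the same approach as the paper: the first inequality holds because every subset-complement is a $2$-flip, and the second because a single $k$-flip can be decomposed into at most $3k^2$ subset-complements via the $P\cup Q$, $P$, $Q$ trick. The paper's proof is a terse two-sentence sketch that does not address the ``premature decomposition'' issue you carefully work through (that the $\SCdepth$ recursion forces a component-decomposition after \emph{each} subset-complement, not just after the full simulated $k$-flip); your observation that each final piece $C_m$ equals $H[V(C_m)]$ and hence sits inside a single component of $H$ is exactly the detail needed to make the induction go through, and your treatment is more complete than the paper's on this point.
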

\begin{proof}
    The first inequality follows because every subset-complement is a $2$-flip.
    For the second inequality, notice that flipping between two parts $P$ and $Q$ of a partition can be simulated by performing three subset-complementations: we complement $P\cup Q$, $P$, and $Q$. Then, performing a $k$-flip can be simulated by flipping at most $k^2$ pairs of parts, one after the other.
\end{proof}

Through the above equivalence, the following facts about SC-depth carry over to flip-depth.

\begin{theorem}[{\cite{shrubdepth,shrubdepth-journal}}]\label{lem:sc-depth-no-long-paths}
    For every $k,d$ there is $t\in\N$ such that no graph $G$ with $k$-flip-depth $d$ contains an induced path of length $t$.
\end{theorem}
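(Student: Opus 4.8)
The plan is to prove the contrapositive by induction on the depth $d$, working directly with $k$-flip-depth rather than passing through $\SCdepth$ (the reduction via \Cref{rem:flipdepth-SCdepth} is available, but the flip formulation keeps the budget bookkeeping clean via \Cref{lem:flipper-basics}). Concretely, I would establish: for every $k,d$ there is a bound $N(k,d)$ such that every graph $G$ with $\flipdepth_k(G)\le d$ contains no induced path on more than $N(k,d)$ vertices, starting from $N(k,0)=1$ (a single-vertex graph has no edge).

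For the inductive step, suppose $\flipdepth_k(G)\le d+1$ and fix a witnessing $k$-flip $H=G\oplus(\PP,F)$ whose connected components $C_1,\dots,C_p$ each satisfy $\flipdepth_k(C_a)\le d$. Let $P$ be an induced path in $G$. Two observations drive the argument. First, adjacencies \emph{between} components are rigid: for $u\in C_a$ and $v\in C_b$ with $a\ne b$ we have $uv\in E(G)$ if and only if $(\PP(u),\PP(v))\in F$, since distinct components are non-adjacent in $H$; thus the cross-component picture is a fixed \emph{pattern} on the $\le k$ parts. Second, each maximal contiguous subpath (\emph{run}) of $P$ that stays inside a single component $C_a$ is an induced path in $G[C_a]$, and by hereditariness (\Cref{lem:flip-basics}) $G[C_a]$ is a $k$-flip of $C_a$, so \Cref{lem:flipper-basics} gives $\flipdepth_{k^2}(G[C_a])\le d$. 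Hence by the induction hypothesis each run has at most $N(k^2,d)$ vertices.

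It remains to bound the number of runs $s$, which is the crux. Writing $P$ as $R_1,\dots,R_s$ with consecutive runs in different components joined by a cross edge, the first observation forces the two boundary parts of any such edge to lie in $F$. I would bound $s$ by a function $g(k)$ of $k$ alone by extracting a separate combinatorial lemma: a graph whose adjacencies between a bounded number of vertex classes are governed by a single symmetric relation on $\le k$ parts admits only short induced paths, because two path-vertices sitting in distinct components with the same part are indistinguishable (``twins'') with respect to every other component, and an induced path contains no twins. A pigeonhole/Ramsey argument on the sequence of boundary parts then caps the number of component transitions, and hence $s\le g(k)$. Combining, $N(k,d+1)\le g(k)\cdot N(k^2,d)$, which is finite for all $k,d$, completing the induction.

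The main obstacle — and the reason a one-step argument is hopeless — is that longest induced path length is \emph{not} monotone under flips: a single $1$-flip (complementation) turns $P_n$ into $\overline{P_n}$, which is connected yet contains no long induced path, so the tempting claim ``a flip of a long path retains a long path in some component'' is false. This forces the induction to run over the whole bounded-depth decomposition, and the genuinely delicate point is the run-counting above: one must control runs that repeatedly re-enter the same component and the growth of the flip budget from $k$ to $k^2$ at each recursion level (absorbed by \Cref{lem:flipper-basics}), while isolating the ``pattern graphs have short induced paths'' lemma that bounds the number of component transitions. Everything else is routine once that lemma is in place.
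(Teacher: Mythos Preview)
The paper does not prove this statement itself; it is imported from \cite{shrubdepth,shrubdepth-journal} via the equivalence of flip-depth and SC-depth (\Cref{rem:flipdepth-SCdepth}), where the usual route goes through tree-models/shrub-depth rather than a direct induction. Your approach is therefore genuinely different and more elementary in spirit: the architecture---break an induced path into component-runs of a witnessing $k$-flip, bound each run via \Cref{lem:flipper-basics} and the inductive hypothesis at budget $k^2$, then bound the number of runs combinatorially---is sound, and the two-variable recursion $N(k,d+1)\le g(k)\cdot N(k^2,d)$ does terminate.

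The gap is in the run-counting step. Your twins observation is correct as stated (same-part vertices in distinct components agree on every \emph{third} component), but this is only a \emph{partial} twin relation, and ``an induced path has no twins'' needs full twins in $G[V(P)]$. The obstruction is that non-consecutive runs may lie in the \emph{same} component (runs are maximal contiguous segments, not a partition by component): if $\ell_i$ and $\ell_j$ share a part with $j>i+1$, the intended witness $f_{j+1}$ may sit in the component of $\ell_i$, so no contradiction arises from the twin property alone.

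The repair is to pigeonhole on the transition \emph{pair} $e_i:=(\PP(\ell_i),\PP(f_{i+1}))\in F$ rather than on a single part. Whenever $e_i=e_j$ with $i<j$, the induced-path constraints force $c_i=c_{j+1}$ (else $\ell_i$ and $f_{j+1}$ would be adjacent across components), and for $j\ge i+2$ symmetrically $c_j=c_{i+1}$. A short case analysis then shows that any value of $e_i$ occurring three times forces $c_m=c_{m+1}$ for some $m$, contradicting that consecutive runs change component. Hence each of the at most $k^2$ pairs occurs at most twice, giving $s\le 2k^2+1$, and your recursion goes through with $g(k)=2k^2+1$.
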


\begin{theorem}[\cite{dreier2024flipbreakability}]\label{thm:infty-ff}
    A graph class $\CC$ has bounded flip-depth if and only if it is $\infty$-flip-flat.
\end{theorem}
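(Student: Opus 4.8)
The two implications have very different characters: exhibiting separating flips from a bounded-depth decomposition is routine, while reconstructing such a decomposition from a flatness hypothesis is the hard part. Before splitting into cases I would record the easy observation that $\infty$-flip-flatness implies $r$-flip-flatness for every finite $r$ (vertices in distinct connected components are in particular at distance $>r$), so by \Cref{thm:flatness} an $\infty$-flip-flat class is automatically monadically stable. The real content of the backward direction is therefore to \emph{upgrade} monadic stability together with $\infty$-flatness to the much stronger conclusion of bounded flip-depth, i.e.\ to rule out stable-but-unbounded-depth obstructions such as long induced paths (which by \Cref{lem:sc-depth-no-long-paths} are exactly what bounded flip-depth forbids).

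\textbf{Bounded flip-depth $\Rightarrow$ $\infty$-flip-flat.} I would prove this by induction on the depth bound $d$, producing at each step a margin $M_d$ and a budget $k_d$. Suppose $\flipdepth_k(G)\le d$ and let $W$ be large. First flip $G$ into $H$ so that every connected component of $H$ has flip-depth at most $d-1$. If at least $m$ components of $H$ meet $W$, I pick one vertex from each of $m$ of them to form $A\subseteq W$; these lie in pairwise distinct components of the $k$-flip $H$ and we are done. Otherwise some component $C$ satisfies $|W\cap C|\ge |W|/m$, and induction yields a $k_{d-1}$-flip $H'$ of $C$ together with a size-$m$ set $A\subseteq W\cap C$ whose vertices lie in pairwise distinct components of $H'$. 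Since $C$ is a component of $H$ and the partition witnessing $H'$ only flips pairs of parts contained in $C$, I extend this flip to all of $H$ by adding $V(G)\setminus C$ as a single untouched part: this realizes $H'$ on $C$, leaves $C$ separated from the rest, and hence keeps $A$ scattered. Composing with $H$ via transitivity of flips (\Cref{lem:flip-basics}) produces a single $k(k_{d-1}+1)$-flip of $G$ isolating $A$. Taking $M_d(m):=m\cdot M_{d-1}(m)$ and $k_d:=k(k_{d-1}+1)$ closes the induction; as $d$ is a constant on a class of bounded flip-depth, both the budget and the margin stay finite.

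\textbf{$\infty$-flip-flat $\Rightarrow$ bounded flip-depth.} This is the main obstacle. I would first record two stability properties of $\infty$-flip-flatness that make it amenable to recursion: it is hereditary (restricting to an induced subgraph preserves it, by hereditariness in \Cref{lem:flip-basics}) and it survives a $k'$-flip at the cost of multiplying the budget by $k'$, since a $k$-flip of $G$ that scatters a set is a $(k k')$-flip of any $k'$-flip of $G$. With these in hand, the plan is to adapt the argument that finite-radius flip-flatness yields a bounded-round winning strategy in the flipper game (the engine behind \Cref{thm:flipper}) to the radius-$\infty$ game, whose length is exactly the flip-depth: flipper repeatedly invokes $\infty$-flip-flatness to scatter a large portion of the current component, localizer retreats into one of the resulting pieces, and one exhibits a well-founded rank that strictly decreases each round, bounding the number of rounds — and hence the flip-depth — by a constant depending only on the flatness budget and margin.

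The genuine difficulty, and the reason this direction needs the flip-breakability machinery of~\cite{dreier2024flipbreakability}, is the familiar gap in any flatness-to-flipper argument: flip-flatness only scatters a \emph{subset} of a large set, whereas a decomposition must control the \emph{entire} component that localizer is free to choose. Bridging this gap is exactly where the canonical insulator/flip-breakability bookkeeping is needed, in order to promote local separation of a chosen subset into a global bounded-depth decomposition. A purely elementary recursion on the number of components will not suffice, since a single bounded flip can split a graph into many components without decreasing any obvious depth parameter; I expect this promotion step to be where essentially all the work lies.
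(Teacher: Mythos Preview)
The paper does not prove this theorem at all: it is stated with a citation to \cite{dreier2024flipbreakability} and used as a black box in the proof of \Cref{thm:characterizations-subflip-depth}. So there is no ``paper's own proof'' to compare against.

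That said, your forward-direction sketch is correct and essentially the standard argument: the pigeonhole on components, the recursion into a heavy component, and the extension of the inner flip by adding $V(G)\setminus C$ as an inert part all work as you describe, and the bookkeeping $M_d(m)=m\cdot M_{d-1}(m)$, $k_d=k(k_{d-1}+1)$ is fine.

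For the backward direction you are honest that you do not have a self-contained argument and that the promotion from ``scatter a large subset'' to ``globally bounded flip-depth'' is where the work lies; you explicitly defer to the flip-breakability machinery of \cite{dreier2024flipbreakability}. This is exactly what the paper does too --- it simply cites the result --- so in that sense your treatment matches the paper's. Just be aware that what you have written for this direction is a discussion of the difficulty rather than a proof.
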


This yields the following characterization of subflip-depth.

\begin{theorem}\label{thm:characterizations-subflip-depth}
    For every graph class $\CC$ the following are equivalent.
    \begin{enumerate}
        \item $\CC$ is co-matching-free and has bounded flip-depth.
        \item $\CC$ has bounded subflip-depth.
        \item $\CC$ is $\infty$-subflip-flat.
    \end{enumerate}
\end{theorem}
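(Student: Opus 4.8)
The plan is to treat condition~(1) as a hub, proving $(1)\Leftrightarrow(3)$ purely through the flatness machinery of \Cref{sec:sub-flatness}, and then establishing $(1)\Leftrightarrow(2)$ by importing the inductive argument of \Cref{sec:subflipper-forward} at radius $r=\infty$. Throughout I use that $\flipdepth_k = \frk_{\infty,k}$ and $\subflipdepth_k = \sfrk_{\infty,k}$, and that ``co-matching-free'' is the same as a uniform bound $t$ on the co-matching-index.

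For $(1)\Rightarrow(3)$: if $\CC$ is co-matching-free with index less than $t$ and has bounded flip-depth, then by \Cref{thm:infty-ff} it is $\infty$-flip-flat, and \Cref{lem:subflatness-forward} applied with $r=\infty$ yields $\infty$-subflip-flatness. For $(3)\Rightarrow(1)$: an $\infty$-subflip-flat class is in particular $\infty$-flip-flat, since every $k$-subflip witnessing separation into distinct components is also such a $k$-flip; hence by \Cref{thm:infty-ff} it has bounded flip-depth. It is moreover $2$-subflip-flat, because vertices in distinct connected components are at distance greater than $2$, and therefore co-matching-free by \Cref{lem:subflatness-backward}. This settles $(1)\Leftrightarrow(3)$ cheaply.

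For $(2)\Rightarrow(1)$: bounded subflip-depth gives bounded flip-depth immediately, since $\frk_{\infty,k}\le\sfrk_{\infty,k}$ by \Cref{lem:flipper-basics}. For co-matching-freeness I would re-run the induction preceding \Cref{cor:high-rank} with $3$-balls replaced by connected components, proving that co-matching-index at least $(k^2)^d$ forces $\sfrk_{\infty,k}(G)\ge d$. The argument is unchanged: any $k$-subflip $H$ of such a $G$ still has co-matching-index at least $(k^2)^{d-1}$ by \Cref{lem:preserve-cmi}, and since a co-matching of size at least $3$ has diameter at most $3$, it lies entirely inside the connected component $\Ball_\infty^H(v)$ of any of its vertices $v$, so the inductive hypothesis applies to that component. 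Hence bounded subflip-depth bounds the co-matching-index.

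The main work, and the only nontrivial direction, is $(1)\Rightarrow(2)$: a co-matching-free class of bounded flip-depth has bounded subflip-depth. This is precisely \Cref{lem:subflipper-forward} (via \Cref{lem:subflipper-inductive}) read at $r=\infty$, so the plan is to verify that that proof survives the substitution $r=\infty$. The only place radii enter is the ball containment $N_r^{H^*}[v]\subseteq N_{3tr}^H[v]$ supplied by \Cref{lem:transfer}; for $r=\infty$ the correct substitute is that every connected component of the subflip $H^*=G_0\ominus\QQ$ is contained in a connected component of the flip $H$. This holds because \Cref{lem:transfer} bounds the $H$-distance of each $H^*$-edge by $3t<\infty$, so any $H^*$-path keeps its endpoints at finite $H$-distance and hence in the same $H$-component. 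Granting this, the hereditariness, $\MM$-similarity (\Cref{lem:similarity}), and the induced-subgraph and flip monotonicity steps of \Cref{lem:subflipper-inductive} are radius-agnostic and carry over verbatim, yielding $\sfrk_{\infty,k'}(G)\le d$ whenever $\frk_{\infty,k}(G)\le d$, for a suitable $k'=f(t,k,d)$. The expected obstacle is therefore bookkeeping rather than ideas: confirming that connectivity preservation is exactly the right $r=\infty$ analogue of the ball containment used in the finite-radius proof, after which $(1)\Rightarrow(2)$ closes the equivalence together with $(2)\Rightarrow(1)$ and $(1)\Leftrightarrow(3)$.
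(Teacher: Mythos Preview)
Your proposal is correct and follows the same approach as the paper, which also routes all implications through condition~(1) via \Cref{lem:subflipper-forward}, \Cref{cor:high-rank}, \Cref{thm:infty-ff}, \Cref{lem:subflatness-forward}, and \Cref{lem:subflatness-backward}. Your verification that the finite-radius arguments behind \Cref{lem:subflipper-inductive} and \Cref{cor:high-rank} extend to $r=\infty$ is a welcome addition, since the paper invokes these lemmas at $r=\infty$ without spelling out the adaptation.
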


\begin{proof}
    We have proven all required implications already.
    \begin{itemize}
        \item $(1. \Rightarrow 2.)$ follows from \Cref{lem:subflipper-forward}.
        \item $(2. \Rightarrow 1.)$ follows from \Cref{cor:high-rank}.
        \item $(1. \Rightarrow 3.)$ follows from \Cref{thm:infty-ff} and \Cref{lem:subflatness-forward}.
        \item $(3. \Rightarrow 1.)$ follows from \Cref{thm:infty-ff} and \Cref{lem:subflatness-backward}.\qedhere
    \end{itemize}
\end{proof}

\subsection{Sparsification}
The goal of this subsection is to prove the following theorem.

\begin{restatable}[\needsselfloops]{theorem}{thmSparsifySfrk}\label{lem:sparsify-sfrk}
    For every $k,d \in \N$ there are \expos-transductions $\trans {Sparsify}_{k,d}$ and $\trans{Recover}_{k,d}$ such that for every reflexive graph $G$ with $k$-subflip-depth at most $d$ there is a reflexive subgraph $G^*$ of $G$ with tree-depth at most $k\cdot d$ such that
    \[
        G^* \in \trans{Sparsify}_{k,d}(G)
        \text{ and } 
        G \in \trans{Recover}_{k,d}(G^*).
    \]
\end{restatable}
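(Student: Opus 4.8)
The plan is to expose the recursion witnessing $\subflipdepth_k(G)\le d$ as a rooted \emph{decomposition tree} $T$ of height $\le d$, to read off a canonical sparse subgraph $G^*$ from it, and to realize both transductions from the pattern ``adjacent iff fully adjacent at the separation level''. Concretely, the root of $T$ is $G$; a node $U$ at depth $\ell<d$ carries an optimal subflip partition $\PP_U$ with $\le k$ parts, and its children are the connected components of $G[U]\ominus\PP_U$ (a \emph{subgraph} of $G[U]$ --- this is exactly why subflips, not flips, are the right tool here); leaves are single vertices. Since subflips only delete edges, for distinct $u,v$ with \emph{separation node} $U$ (the deepest node of $T$ containing both) one has $uv\in E(G)$ iff $\PP_U(u),\PP_U(v)$ are fully adjacent in $G[U]$: had they been separated while adjacent, the edge must have been deleted by the subflip, which happens only for fully adjacent pairs. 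I will use the equivalent, level-agnostic form: $uv\in E(G)$ iff there is \emph{some} level $\ell$ at which $u,v$ lie in a common depth-$\ell$ node placing them into fully adjacent parts. Indeed, full adjacency at any common ancestor already forces the edge, and conversely the edge forces full adjacency at the separation level. Self-loops are absorbed by the disjunct $x=y$, which is positive and correct because $G$ is reflexive.

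I would build $G^*$ bottom-up along $T$ as a \emph{centered spanning subgraph}. At a node $U$ with parts $Q_1,\dots,Q_k$ I fix one \emph{center} $c_i\in Q_i$ per part; since the pattern of fully adjacent part-pairs is connected whenever $G[U]$ is, I join the centers by a spanning tree of this pattern (each tree edge $c_ic_j$ is a genuine $G$-edge, as $Q_i,Q_j$ are fully adjacent), attach every remaining vertex of $U$ to the center of an adjacent part, and hang the recursively built spans of the children below these centers. Every retained edge is then an edge of $G$, so $G^*\subseteq G$ is a reflexive spanning subgraph, and the construction yields an explicit elimination forest: the $\le k$ centers of a node contribute a block of height $\le k$, and these blocks stack over the $\le d$ levels, giving $\treedepth(G^*)\le k\cdot d$. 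As $G$ has bounded $\subflipdepth$, hence bounded flip-depth, \Cref{lem:sc-depth-no-long-paths} bounds its longest induced path and thereby the diameter of every node-span by a constant $D=D(k,d)$; this boundedness is what lets existential positive formulas later certify ``same node'' by a \emph{bounded-length} path.

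For $\trans{Sparsify}_{k,d}$ I color $G$ by the decomposition data --- per level the part index in $[k]$, a bit recording whether the vertex is that level's center, and the fully-adjacency pattern $F_U\in 2^{[k]\times[k]}$ of its depth-$\ell$ ancestor --- a palette bounded in $k,d$. The edges of $G^*$ are precisely the centered edges, and ``$uv$ is a centered edge'' is a positive condition on the colors of $u,v$ together with $E(u,v)$: at some level one endpoint is the relevant center and the two parts are \emph{fully adjacent} there (a positive atom), in exactly the spirit of the $\beta/\gamma$ recovery of \Cref{lem:trans-pure-flips-reflexive}. Hence $G^*\in\trans{Sparsify}_{k,d}(G)$ via a quantifier-free positive interpretation.

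The delicate direction is $\trans{Recover}_{k,d}$, and the single real obstacle is to make the \emph{separation level} detectable by an existential positive formula: a bounded vertex-coloring alone cannot separate two vertices with equal part-addresses that lie in different nodes, and ``disconnected'' or ``far'' are inherently negative. The resolution is to color each vertex with the set of levels at which it is a center, so that every retained edge of $G^*$ acquires a well-defined \emph{level} --- the depth of the node that created it --- readable from its endpoints' colors; the span of a depth-$\ell$ node is then exactly the subgraph spanned by retained edges of level $\ge\ell$, so $u,v$ share a depth-$\ell$ node iff they are joined by a path of length $\le D$ all of whose edges have level $\ge\ell$. Crucially, ``edge of level $\ge\ell$'' is a positive disjunction over endpoint colors and ``every vertex of the path stays admissible'' is a conjunction of positive disjunctions, so this bounded-length path predicate $\delta_{\ge\ell}(x,y)$ is $\expos$-expressible (negation, and hence ``avoid a color'', is never needed --- only ``stay in a color class''). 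Combining it with the adjacency identity gives
\[
    \eta(x,y)\ :=\ (x=y)\ \vee\ \bigvee_{\ell=0}^{d-1}\Big(\delta_{\ge\ell}(x,y)\ \wedge\ (p_\ell(x),p_\ell(y))\in F_\ell\Big),
\]
a finite positive existential formula with $\eta(G^*)=E(G)$, witnessing $G\in\trans{Recover}_{k,d}(G^*)$. The remaining effort is bookkeeping: checking that the centered construction truly makes edge-levels endpoint-determinable and keeps node-spans of diameter $\le D$, that the elimination forest has height $\le k\cdot d$, and that both interpretations are genuinely existential positive; the composition and gluing machinery of \Cref{lem:expos-transitive,lem:gluing-transductions} and the refinement identity \Cref{lem:refine} help assemble the per-level pieces, but the conceptual crux is exactly the positive encoding of the separation level described above.
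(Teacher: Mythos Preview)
Your outline follows the same induction on $d$ as the paper's proof, with the same ``leader/center per part'' idea and the same tree-depth bound $k\cdot d$. But there are two genuine gaps.

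First, the claim ``the pattern of fully adjacent part-pairs is connected whenever $G[U]$ is'' is false. Take $G = P_4$ on vertices $a\text{-}b\text{-}c\text{-}d$ with the (optimal) $2$-partition $\PP = \{\{a,d\},\{b,c\}\}$: the only fully adjacent pair is $(\{b,c\},\{b,c\})$, so the pattern graph has two vertices and a single self-loop --- disconnected, though $P_4$ is connected. Your spanning tree of the pattern therefore does not exist, and ``attach every remaining vertex to the center of an adjacent part'' can fail for the same reason (what do you do with $a$?). The paper avoids this by adding \emph{all} leader-to-vertex edges between fully adjacent parts and arguing connectivity of $G^*$ not via the pattern but via the observation that every edge of $G$ either survives the subflip (hence its endpoints are covered inductively inside some connected $H_i^*$) or was deleted by it (hence the endpoints' parts are fully adjacent and are joined through the leaders).

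Second, and more seriously, you claim $\trans{Sparsify}_{k,d}$ is realized by a \emph{quantifier-free} positive interpretation, using only the colors of $x,y$ and the atom $E(x,y)$. This cannot work: vertex colors alone cannot distinguish the case where $x$ and $y$ lie in the \emph{same} depth-$\ell$ node from the case where they lie in different depth-$\ell$ nodes that happen to assign them the same part-indices, center-bits, and patterns. You will therefore include spurious centered edges, breaking both the tree-depth bound and the edge-level detection your $\trans{Recover}$ relies on. Your $\delta_{\ge\ell}$ trick resolves this on the $G^*$ side because there the node-spans are the components of a bounded-diameter subgraph; but on the $G$ side you need the analogous detection \emph{before} $G^*$ exists, and there it is certainly not quantifier-free. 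The paper's mechanism for exactly this is \Cref{lem:parallel-transductions}: after performing the top-level subflip (positive by \Cref{lem:trans-pure-flips-reflexive}), the components have bounded diameter by \Cref{lem:sc-depth-no-long-paths}, so one applies $\trans{Sparsify}_{k,d-1}$ to all of them in parallel via the existential positive predicate $\dist(x,y)\le D$, then adds back the leader edges using \Cref{lem:transduce-vc}. Once you use this recursive scheme, the global ``all-at-once'' coloring and the $\delta_{\ge\ell}$ predicate become unnecessary, and both gaps above dissolve into the inductive construction.
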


We recall the definition of tree-depth.
\begin{definition}[Tree-depth]
    We
    define the \emph{tree-depth} of the single vertex graph $K_1$ as
    \[
        \treedepth(K_1) := 0.
    \]
    For every other graph $G$ we define
    \[
        \treedepth(G) := 1+ \min_{v \in V(G)}\ \ \max_{C\in \mathrm{Conn}(G - v)} \treedepth(C),
    \]
    where $G - v$ is the graph $G$ with $v$ removed and $\mathrm{Conn}(G - v)$ are the connected components of~$G - v$.
\end{definition}

Before we can start sparsifying graphs of bounded subflip-depth, we need some additional tools.
We first show how we can use transductions to add and remove edge sets that have a small vertex cover.
A set of vertices $S \subseteq V$ is a \emph{vertex cover} for a set of edges $X \subseteq V \times V$, if for each $uv \in X$ we have $\{u,v\} \cap S \neq \emptyset$.

\begin{lemma}[\needsselfloops]\label{lem:transduce-vc}
For every $k\in \N$, there is a quantifier-free, positive transduction $\trans{VertexCover}$ such that for every reflexive graph $G = (V,E)$ and edge set $X \subseteq E$ with a vertex cover of size at most $k$
\begin{center}
    $H \in \trans{VertexCover}(G)$ and $G \in \trans{VertexCover}(H)$, where $H := (V, E - X)$.
\end{center}
\end{lemma}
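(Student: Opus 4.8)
The plan is to exhibit a single quantifier-free positive formula $\phi(x,y)$ together with a fixed finite color palette (depending only on $k$) that realizes both directions at once, the two directions differing only in the chosen coloring. Fix an edge set $X\subseteq E$ with a vertex cover $S=\{s_1,\dots,s_{k'}\}$ of size $k'\le k$. The decisive structural fact is that every edge of $X$ touches $S$, so for any pair $u,v\in V\setminus S$ we have $uv\in E(G)\iff uv\in E(H)$; all discrepancies between $G$ and $H$ are confined to the at most $k$ neighborhoods of the vertices of $S$. I would therefore copy the ``bulk'' adjacencies inside $V\setminus S$ directly from the input edge relation, and re-specify the (boundedly many) $S$-incident adjacencies from scratch using colors.

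Concretely, I would take the colors $\overline S$ (marking $V\setminus S$), $C_1,\dots,C_k$ (with $C_i$ marking the single vertex $s_i$), and $D_1,\dots,D_k$ (where $D_i$ marks the neighborhood of $s_i$ in the \emph{target} graph, including $s_i$ itself iff it carries a self-loop there), and define $\trans{VertexCover}:=\trans I\circ\Gamma_\sigma$ with domain $\nu(x):=\top$ and edge formula
\[
  \phi(x,y)\;:=\;\bigl(x\in\overline S\wedge y\in\overline S\wedge E(x,y)\bigr)\;\vee\;\bigvee_{i=1}^{k}\bigl((x\in C_i\wedge y\in D_i)\vee(y\in C_i\wedge x\in D_i)\bigr).
\]
This $\phi$ is quantifier-free, uses no negation or $\neq$, and is symmetric (the first disjunct is symmetric as $E$ is, and the second is symmetric by construction), so it specifies a legitimate quantifier-free positive transduction depending only on $k$.

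For the membership $H\in\trans{VertexCover}(G)$ I would color $G$ by $\overline S:=V\setminus S$, $C_i:=\{s_i\}$, and $D_i:=\{v:s_iv\in E(H)\}$; for $G\in\trans{VertexCover}(H)$ I would keep $\overline S$ and the $C_i$ but instead set $D_i:=\{v:s_iv\in E(G)\}$. The verification is then a short case analysis on a pair $(u,v)$: if $u,v\notin S$ only the first disjunct can fire, and it reproduces the (identical) input/target adjacency; if some endpoint lies in $S$, say $u=s_i$, then $u\notin\overline S$ forces the first disjunct false, while $x\in C_i\wedge y\in D_i$ holds exactly when $s_iv$ is an edge of the target graph, by the choice of $D_i$. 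The two cases are mutually exclusive (the first needs both endpoints in $\overline S$, the second needs one in some $C_i\subseteq S$), so $\phi$ defines precisely the target edge relation. In particular self-loops are handled correctly: since $G$, and hence $H$ outside $S$, is reflexive, the first disjunct preserves the self-loops on $V\setminus S$, while a self-loop at $s_i$ is recorded by whether $s_i\in D_i$.

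The only genuine obstacle is the positivity constraint: one cannot delete the edges of $X$ by the naive $E(x,y)\wedge\neg(\cdots)$, as negation is forbidden. The construction avoids ever ``subtracting'' edges — it copies the untouched bulk positively through $E$ restricted to $\overline S\times\overline S$, and rebuilds the bounded family of $S$-incident adjacencies wholesale from the neighborhood colors $D_i$, membership in a color being an atomic and hence positive test. Because $\phi$ is literally the same in both directions and only the interpretation of the $D_i$ changes, one transduction serves both $G\mapsto H$ and $H\mapsto G$, as required. (Alternatively one could peel off the $k$ vertices of $S$ one at a time and iterate \Cref{lem:add-one}, but the direct symmetric formula above is more economical.)
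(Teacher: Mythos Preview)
Your proof is correct and takes a genuinely different route from the paper's. The paper observes that $H$ and $G$ are pure $3^k$-flips of each other: it refines $V(G)$ by isolating each $v_i\in S$ and its set $X_i=\{x:v_ix\in X\}$, notes that the flipped pairs $(\{v_i\},X_i')$ are fully adjacent in $G$, and then invokes \Cref{lem:trans-pure-flips-reflexive} as a black box for both directions. Your approach is more elementary and self-contained: you bypass the flip machinery entirely, copying the bulk $V\setminus S$ via the edge relation and rebuilding the $k$ boundary neighborhoods from scratch with colors --- essentially a simultaneous version of \Cref{lem:add-one}, as you note. What the paper's route buys is uniformity with the surrounding toolkit (every such lemma is reduced to the single pure-flip primitive); what your route buys is an explicit formula with only $2k+1$ colors rather than the $3^k$-sized partition, and it transparently handles self-loops on $S$ (which the paper's flip-based argument would not, since flips preserve self-loops --- though this is immaterial in the paper's applications of the lemma).
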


\pagebreak
\begin{proof}
    Let $S = \{v_1, \ldots, v_k\}$ be a vertex cover of $X$.
    For each $i \in [k]$, let $\PP_i$ be the partition of $V(G)$ into 
    \begin{enumerate}
        \item a part $V_i$ containing only $v_i$,
        \item a part $X_i$ containing all vertices $x$ such that $v_i x \in X$,
        \item a part containing the remaining vertices.
    \end{enumerate}
    Let $\PP$ be the common refinement of all the $\PP_i$s. We have $|\PP| \leq 3^k$.
    Let $R$ be the relation that contains for every $i \in [k]$ all pairs $(V_i,X'_i)$ and $(X'_i, V_i)$, where $X'_i$ is a part of $\PP$ that is contained in~$X_i$.
    All pairs in $R$ are fully adjacent in $G$ and $H = G \oplus (\PP,R)$, so we conclude in both directions by applying \Cref{lem:trans-pure-flips-reflexive}.
\end{proof}

Next, we show how to apply a transduction to all connected components of a graph in parallel, in the case where all components have bounded diameter.

\begin{lemma}\label{lem:parallel-transductions}
    For every $d\in \N$ and \expos-transduction $\trans T$, there is a \expos-transduction $\trans{P}$ with the following property.
    For every two partially reflexive graphs $G$ and $G^*$, if they satisfy
    \begin{itemize}
        \item $H_1, \ldots, H_m$ are the connected components of $G$,
        \item every $H_i$ has diameter at most $d$,
        \item $H^*_1, \ldots, H^*_m$ are the connected components of $G^*$, and
        \item every $H^*_i$ is contained in $\trans T(H_i)$,
    \end{itemize}
    then $G^* \in \trans P(G)$.
\end{lemma}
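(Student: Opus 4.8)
The plan is to build $\trans P$ by \emph{relativizing} every quantifier of $\trans T$ to the connected component of a free variable. The bounded-diameter hypothesis is exactly what makes this an \expos-operation: write $\trans T = \trans I\circ\Gamma_\sigma$ with $\trans I=(\nu(x),\eta(x,y))$, both \expos-formulas, and observe that since every component of $G$ has diameter at most $d$, the predicate ``$z$ lies in the same component as $x$'' is captured by the \expos-formula
\[
    R_x(z) := (z=x) \vee (\dist(x,z)\le d) \vee (\dist(x,z)\le d+1).
\]
A walk never leaves a component, so $R_x(z)$ always forces $z$ into the component of $x$; conversely every vertex of that component has true distance at most $d$ to $x$, and including the two bounds $d,d+1$ together with the disjunct $z=x$ absorbs the parity and self-loop subtleties of the walk-based formula $\dist(\cdot,\cdot)\le r$, so that every such vertex is indeed captured. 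Thus $R_x(G)$ equals precisely the component of $x$.

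I then define $\trans P := \trans I'\circ\Gamma_\sigma$, where $\trans I'=(\nu',\eta')$ is obtained as follows. Let $\nu'$ and an auxiliary formula $\rho$ arise from $\nu$ and $\eta$ by replacing every existential quantifier $\exists z\,\psi$ (including nested ones) with $\exists z\,(R_x(z)\wedge\psi)$, where $x$ is the first free variable, and set
\[
    \eta'(x,y) := R_x(y) \wedge \big(\rho(x,y)\vee\rho(y,x)\big).
\]
All three formulas are \expos, and $\eta'$ is symmetric, since $R_x(y)$ is logically equivalent to $R_y(x)$ and the disjunction symmetrizes $\rho$. The point is that a component of diameter at most $d$ is exactly the set $R_x(G)$ for any $x$ in it, so relativizing \emph{all} quantifiers to the single ball $R_x$ confines them to that component regardless of nesting; there is no need to invoke $\Glue$ or to group the (unboundedly many) components.

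For correctness, given $H^*_i\in\trans T(H_i)$, choose for each $i$ a $\sigma$-expansion $H_i^+$ of $H_i$ with $\trans I(H_i^+)=H^*_i$, and let $G^+\in\Gamma_\sigma(G)$ be their union (each vertex colored according to the witness of its own component). For $x\in V(H_i)$ the relativized quantifiers of $\nu'$ range exactly over $V(H_i)$, on which the colors of $G^+$ agree with $H_i^+$, so $G^+\models\nu'(x)\iff H_i^+\models\nu(x)$, giving $\nu'(G^+)=\bigcup_i V(H^*_i)=V(G^*)$. For edges: if $x,y$ lie in the same $H_i$, then $R_x(y)$ holds and $\rho$ evaluates exactly as $\eta$ does in $H_i^+$, so $\eta'(x,y)$ reproduces the adjacency of $H^*_i$; if $x,y$ lie in distinct components, then $R_x(y)$ fails and $\eta'(x,y)$ is false, so no cross-component edges appear. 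Hence $\trans I'(G^+)=H^*_1\cup\cdots\cup H^*_m=G^*$, witnessing $G^*\in\trans P(G)$. Finally, $\trans P$ is an \expos-transduction: relativization keeps every quantifier existential and adds only the positive formulas $R_x$ and equalities, introducing no negation or universal quantifier.

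The main obstacle is precisely to prevent existential quantifiers from ``leaking'' across components: a global \expos-formula such as $\exists z\,P(z)$ would otherwise make the transduction's output on one component depend on the colors placed on another, breaking the componentwise decomposition. The bounded-diameter assumption is what repairs this, since it renders ``same component'' \expos-definable and thereby lets each quantifier be localized to its own component.
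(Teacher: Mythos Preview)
Your proof is correct and takes essentially the same approach as the paper: relativize all quantifiers of the transduction to the component of a free variable via an \expos-definable same-component predicate, and guard the edge formula with that predicate to prevent cross-component edges. You are if anything more careful than the paper's own proof, which glosses over the parity issue for the walk-based formula $\dist(\cdot,\cdot)\le d$ in partially reflexive graphs and does not explicitly treat the domain formula~$\nu$; your symmetrization $\rho(x,y)\vee\rho(y,x)$ is harmless but redundant once $R_x(y)$ is in force, since then the components of $x$ and $y$ coincide and the relativized formula inherits the symmetry of $\eta$.
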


\begin{proof}
    Let $\phi(x,y)$ be the formula specifying $\trans T$.
    We build the formula $\phi^*(x,y)$ that specifies $\trans P$ by 
    \begin{enumerate}
        \item modifying the formula $\phi(x,y)$ to only create edges between two vertices $x$ and $y$ that are in the same connected component of $H$, and
        \item relativizing all quantifiers of $\phi$ to the connected component of $x$.
    \end{enumerate}
    More precisely, we set
    \[
        \phi^*(x,y) := \mathrm{sameComponent}(x,y) \wedge \mathrm{relativize}(\phi)(x,y),
    \]
    where $\mathrm{relativize}(\phi)$ is obtained by recursively replacing each quantified subformula $\exists z\ \psi(\bar z z)$ of $\phi$ with 
    \[
        \exists z\ \mathrm{sameComponent}(x,z) \wedge \psi(\bar z z).    
    \]
    Crucially relying on the diameter bound, the predicate $\mathrm{sameComponent}(x,y)$ is expressible in existential positive FO, as it just checks whether $x$ and $y$ are at distance at most $d$.

    When evaluating $\phi^*$ in the graph $H$ on two vertices $u$ and $v$ in the same connected component~$H_i$, we get the same result as if we evaluated $\psi$ in just the graph $H_i$.
    Hence, we transduce $H^*$.
    It is easy to verify that $\psi^*$ remains an \expos-formula.
\end{proof}

We are now ready to prove \Cref{lem:sparsify-sfrk}, which we restate for convenience.

\thmSparsifySfrk*

\begin{proof}
    We first prove, by induction on $d$, a slight variation of the lemma, where we additionally
    \begin{itemize}
        \item assume that the input $G$ is connected, and
        \item demand that the result $G^*$ is connected.
    \end{itemize}
    For the base case $d=0$, we can choose $G^* := G = K_1$ and $\trans{Sparsify}_{k,1}$ and $\trans{Recover}_{k,1}$ to be the identity transduction.
    For the inductive step, assume $G$ has $k$-subflip-depth at most $d+1$.
    Let $\PP$ be the partition of size at most $k$ such that the connected components $H_1,\ldots,H_m$ of the graph
    \[
        H := H_1 \uplus \ldots \uplus H_m = G \ominus \PP    
    \]
    all have $k$-subflip-depth at most $d$.
    Applying induction to these components, we construct the graph
    \[
        H^* := H_1^* \uplus \ldots \uplus H_m^*
    \]
    We build $G^*$ by adding to $H^*$ the edges $X$ defined as follows.
    Pick an arbitrary \emph{leader} vertex $\ell(P)\in P$ from each part $P\in \PP$.
    Build $X$ by including:
    \begin{itemize}
        \item for each part $P$ that forms a clique in $G$: an edge from $\ell(P)$ to each other vertex of $P$, and
        \item for each pair of distinct parts $P,Q$ that are fully adjacent in $G$: an edge from $\ell(P)$ to each vertex of $Q$ and an edge $\ell(Q)$ to each vertex from $P$.
    \end{itemize}
    
    \begin{observation}
        The set of leaders forms a vertex cover of size at most $k$ for $X$.
    \end{observation}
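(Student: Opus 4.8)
The plan is to read off both vertex-cover requirements directly from the construction of $X$, since the observation is an immediate structural consequence of how $X$ was defined. First I would bound the size of the candidate cover. The set in question is $\{\ell(P) : P \in \PP\}$, and by hypothesis $\PP$ is a partition of $V(G)$ of size at most $k$. Since the construction selects exactly one leader $\ell(P)$ from each part $P \in \PP$, and the parts of a partition are pairwise disjoint (so distinct parts yield distinct leaders), the set of leaders has at most $|\PP| \leq k$ elements.

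Next I would verify that every edge of $X$ is incident to a leader, which is exactly the condition that the leaders form a vertex cover of $X$. By construction, $X$ is the union of two families of edges. The first family arises from each part $P$ that forms a clique in $G$: every such edge joins $\ell(P)$ to some other vertex of $P$, so one of its endpoints, namely $\ell(P)$, is a leader. The second family arises from each pair of distinct, fully adjacent parts $P, Q$: every such edge either joins $\ell(P)$ to a vertex of $Q$ or joins $\ell(Q)$ to a vertex of $P$, so in both cases one endpoint — either $\ell(P)$ or $\ell(Q)$ — is a leader. Thus for every edge $uv \in X$ we have $\{u,v\} \cap \{\ell(P) : P \in \PP\} \neq \emptyset$, as required.

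There is no genuine obstacle here; the statement holds by design, because every edge placed into $X$ emanates from a leader vertex. The only point requiring care is exhaustiveness: one must confirm that the two bullet points listed in the construction of $X$ are the only sources of edges, and that in the second bullet it suffices for at least one of the two named endpoints to be a leader (the edge is incident to $\ell(P)$ or to $\ell(Q)$, and either alone witnesses the cover condition). With both families checked, the two requirements — size at most $k$ and incidence with a leader — are established, completing the argument.
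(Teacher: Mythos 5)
Your proposal is correct and matches the paper, which states this observation without proof precisely because it follows immediately from the construction of $X$: each edge added emanates from a leader, and there is one leader per part of the size-at-most-$k$ partition $\PP$. Nothing further is needed.
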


    We now check that $G^*$ satisfies the statement of the lemma.

    \begin{claim}
        $G^*$ is a connected subgraph of $G$.
    \end{claim}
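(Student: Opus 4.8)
The plan is to verify the two assertions of the claim separately: that $G^*$ is a subgraph of $G$, and that it is connected. For the subgraph part I would first observe that all the relevant graphs share the vertex set $V(G)$: the subflip $H = G \ominus \PP$ preserves $V(G)$ (a subflip only deletes edges), and by the induction hypothesis each $H_i^*$ spans $V(H_i)$, so $V(H^*) = \bigcup_i V(H_i) = V(H) = V(G)$; adding the edge set $X$ introduces no new vertices. For the edges I would show $E(H^*) \subseteq E(G)$ and $E(X) \subseteq E(G)$ separately. The first holds because each $H_i^*$ is a subgraph of $H_i$ by induction, $H_i$ is an induced subgraph of $H$, and $H$ is a subgraph of $G$ since any subflip of $G$ is a subgraph of $G$. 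The second holds by construction of $X$: each added edge joins two distinct vertices lying either in a common clique-part or in two distinct fully-adjacent parts of $\PP$, and in both situations such a pair is adjacent in $G$ by the defining property of these parts. Self-loops cause no trouble, since $X$ adds only edges between distinct vertices while the self-loops of $H^*$ are inherited from those of $G$.

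For connectivity I would reduce to a single clean statement: \emph{for every edge $uv \in E(G)$, the endpoints $u$ and $v$ lie in the same connected component of $G^*$}. Granting this, connectivity of $G^*$ follows at once, because $G$ is connected (recall we are proving the variant of the lemma with connected input), so any two vertices are joined by a path in $G$, each edge of which keeps its endpoints together in $G^*$; transitivity of the connectivity relation then finishes the argument. To prove the reduced statement I would split on whether $uv$ survives the subflip. If $uv \in E(H)$, then $u$ and $v$ lie in a common component $H_i$, and since $H_i^*$ is connected (induction) and spans $V(H_i)$, they are connected in $H_i^* \subseteq G^*$. If instead $uv \notin E(H)$, then $(\PP(u),\PP(v)) \in F$, so either $u,v$ lie in a common clique-part $P$ — where the star edges of $X$ give the path $u - \ell(P) - v$ — or they lie in distinct fully-adjacent parts $P \ni u$ and $Q \ni v$.

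The last configuration is the only place requiring a little care, and is where I expect the (very mild) obstacle to sit: here $X$ does not directly join $u$ to $v$, and it need not even join a vertex of $Q$ to its own leader, so a naive one-hop route fails. The fix is to route through both leaders. Indeed $X$ contains the edge $u\,\ell(Q)$ (since $\ell(Q)$ is joined to all of $P$), the leader edge $\ell(Q)\,\ell(P)$ (since $\ell(P)$ is joined to all of $Q$, and $\ell(Q) \in Q$), and the edge $\ell(P)\,v$ (again since $\ell(P)$ is joined to all of $Q$), yielding the path $u - \ell(Q) - \ell(P) - v$ in $G^*$. A final bookkeeping check that the vertices on this path are pairwise distinct — immediate because $P$ and $Q$ are disjoint parts — confirms these are genuine edges, completing both halves of the claim.
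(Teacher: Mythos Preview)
Your proof is correct and follows essentially the same approach as the paper's: both verify the subgraph property via the induction hypothesis and the definition of $X$, and both establish connectivity by showing that every edge of $G$ has its endpoints joined in $G^*$, splitting according to whether the edge survives the subflip. The paper merely asserts that in the final case ``there is a path of length at most $3$ using only edges from $X$'', whereas you spell out the explicit path $u - \ell(Q) - \ell(P) - v$; this is a welcome clarification but not a different argument.
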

    \begin{claimproof}
        All the $H_i^*$ are connected subgraphs of $G$ by induction.
        Clearly $G^*$ is a subgraph of $G$: $G^*$ only adds to the $H_i^*$ the edges $X$ edges that were already present in $G$.

        It remains to show that $G^*$ is connected. As $G$ is connected, it suffices to show that for every edge~$uv$ in $G$, there is a path between $u$ and $v$ in $G^*$.
        If $u$ and $v$ are in the same $H_i^*$, this follows by induction.
        Otherwise, $u$ and $v$ are in different components $H_i$ and $H_j$ of $G \ominus \PP$, so the edge $uv$ was removed by applying the subflip.
        Hence, either $u$ and $v$ are in the same part $P \in \PP$ that is a clique in $G$, or they are in two distinct parts $P$ and $Q$ that are fully adjacent in $G$.
        In both cases we can verify that there is a path of length at most $3$ using only edges from $X$ in $G^*$.
    \end{claimproof}

    \begin{claim}
        $G^*$ has tree-depth at most $k \cdot (d+1)$
    \end{claim}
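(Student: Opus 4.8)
The plan is to bound $\treedepth(G^*)$ by deleting the set of leaders $S := \{\ell(P) : P \in \PP\}$ and recursing into the components of $H^*$. I will rely on two standard properties of tree-depth. First, for any graph $K$ and any vertex set $W$ one has $\treedepth(K) \le |W| + \max_{C \in \mathrm{Conn}(K - W)} \treedepth(C)$; this follows by iterating the single-vertex recursion in the definition of tree-depth, removing the vertices of $W$ one at a time (a short induction on $|W|$, distributing the removed vertices among the components produced by deleting the first one). Second, tree-depth is monotone under subgraphs: if $K'$ is a subgraph of $K$, then $\treedepth(K') \le \treedepth(K)$, since any elimination forest witnessing $\treedepth(K)$ also witnesses a bound for $K'$.

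With these tools the argument is short. By the preceding observation, $S$ has size at most $k$ and is a vertex cover of the edge set $X$. Hence every edge of $X$ is incident to a vertex of $S$, so after deleting $S$ the only surviving edges of $G^*$ are edges of $H^* = H_1^* \uplus \dots \uplus H_m^*$; in other words $G^* - S$ is a subgraph of $H^* - S$. Consequently every connected component $C$ of $G^* - S$ is contained in a single block $H_i^*$ and is therefore a subgraph of $H_i^*$.

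By the inductive hypothesis each $H_i^*$ has tree-depth at most $k \cdot d$, so by subgraph-monotonicity $\treedepth(C) \le \treedepth(H_i^*) \le k \cdot d$ for every such component $C$. Plugging $W = S$ into the set-deletion inequality yields $\treedepth(G^*) \le |S| + \max_C \treedepth(C) \le k + k \cdot d = k \cdot (d+1)$, as claimed.

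I do not expect a genuine obstacle here. The only two points requiring care are that deleting the $k$ leaders \emph{simultaneously} removes all of $X$ (which is exactly the content of the leaders-form-a-vertex-cover observation, reducing $G^*-S$ to the disjoint union of the $H_i^*$) and that one pays only an \emph{additive} $+k$ rather than a multiplicative factor for eliminating the leader set — both handled cleanly by the two tree-depth facts above. One should also keep in mind that this claim lives inside the connected inductive variant of the lemma, so here $G$ and the $H_i$ are connected and the inductive bound $k\cdot d$ applies; the general, possibly disconnected, case is dealt with afterwards by applying the connected version componentwise and using that the tree-depth of a disjoint union is the maximum over its components, which preserves the bound.
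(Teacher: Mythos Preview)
Your proof is correct and takes essentially the same approach as the paper: the paper simply asserts that adding an edge set with a vertex cover of size at most $k$ can raise the tree-depth by at most $k$, and you spell out precisely why this holds, via the deletion-of-$S$ argument and subgraph monotonicity. Your presentation is more detailed than the paper's one-line justification, but the underlying idea is identical.
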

    \begin{claimproof}
        $H^*$ is a disjoint union of graphs of tree-depth at most $k\cdot d$ by induction.
        It follows from the definition of tree-depth that adding the edges $X$ which have a vertex cover of size at most $k$ can increase the tree-depth by at most $k$.
    \end{claimproof}
    
    \begin{claim}
        There is an \expos-transduction $\trans{Sparsify}_{k,d+1}$ that produces $G^*$ from $G$.
    \end{claim}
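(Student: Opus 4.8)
The plan is to realize $\trans{Sparsify}_{k,d+1}$ as a composition of three \expos-transductions mirroring the three construction steps $G \rightsquigarrow H \rightsquigarrow H^* \rightsquigarrow G^*$, and to invoke closure under composition (\Cref{lem:expos-transitive}) at the end. First I would produce $H = G \ominus \PP$ from $G$. Since $H$ is a $k$-subflip of $G$, the two graphs are pure $k$-flips of each other, and since $G$ is reflexive and subflips preserve self-loops, $H$ is reflexive as well. Viewing $G$ as a pure $k$-flip of the reflexive graph $H$ and applying \Cref{lem:trans-pure-flips-reflexive} in the direction that recovers the ``original'' $H$ from its pure flip $G$, one obtains a quantifier-free positive, hence \expos, transduction $\trans{T}_a$ with $H \in \trans{T}_a(G)$.

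Next I would apply $\trans{Sparsify}_{k,d}$ to all connected components $H_1,\dots,H_m$ of $H$ in parallel. The point that makes this expressible is a diameter bound on the components: each $H_i$ has $k$-subflip-depth at most $d$, hence $k$-flip-depth at most $d$ by \Cref{lem:flipper-basics}, so by \Cref{lem:sc-depth-no-long-paths} it contains no induced path longer than some $t = t(k,d)$; since shortest paths are induced, each connected $H_i$ has diameter less than $t$. Thus \Cref{lem:parallel-transductions}, instantiated with diameter bound $t$ and inner transduction $\trans{Sparsify}_{k,d}$, yields an \expos-transduction $\trans{T}_b$ with $H^* \in \trans{T}_b(H)$, because the induction hypothesis gives $H_i^* \in \trans{Sparsify}_{k,d}(H_i)$ for every $i$.

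Finally I would add the edge set $X$ to $H^*$ to obtain $G^*$. By the observation recorded just above the claims, the leaders form a vertex cover of $X$ of size at most $k$, and both $H^*$ and $G^*$ are reflexive. Hence \Cref{lem:transduce-vc}, in the direction that adds $X$ back (taking its ``$G$'' to be $G^*$ and its ``$H$'' to be $H^* = (V, E(G^*)-X)$), provides a quantifier-free positive transduction $\trans{T}_c$ with $G^* \in \trans{T}_c(H^*)$. Composing $\trans{T}_a$, $\trans{T}_b$, $\trans{T}_c$ (with pairwise disjoint color sets) via \Cref{lem:expos-transitive} then yields an \expos-transduction $\trans{Sparsify}_{k,d+1}$ satisfying $G^* \in \trans{Sparsify}_{k,d+1}(G)$, as desired.

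I expect the main subtlety to be the diameter argument underpinning the parallel step: the other two steps are direct instantiations of the reflexive flip and vertex-cover transduction lemmas, but the applicability of \Cref{lem:parallel-transductions} hinges on bounding the diameter of the components $H_i$, which I obtain indirectly through the flip-depth/no-long-induced-paths estimate rather than from subflip-depth directly. A secondary point to get right is the orientation of the two bidirectional lemmas (\Cref{lem:trans-pure-flips-reflexive} and \Cref{lem:transduce-vc}): here each is used to reconstruct the sparser object (the subflip $H$, and the edge-augmented $G^*$) from the denser side, which requires verifying that the relevant graph is reflexive and that the flipped or added pairs are pure, respectively.
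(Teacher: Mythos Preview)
Your proposal is correct and follows essentially the same approach as the paper's proof: both decompose $\trans{Sparsify}_{k,d+1}$ as the chain $G \to H \to H^* \to G^*$, invoking \Cref{lem:trans-pure-flips-reflexive} for the subflip, \Cref{lem:parallel-transductions} together with the diameter bound from \Cref{lem:sc-depth-no-long-paths} for the parallel application of the inductive $\trans{Sparsify}_{k,d}$, \Cref{lem:transduce-vc} for adding the edge set $X$, and \Cref{lem:expos-transitive} for composition. Your additional explicit checks (reflexivity of $H$ and $H^*$, the orientation of the two bidirectional lemmas, and the reduction of subflip-depth to flip-depth via \Cref{lem:flipper-basics} before invoking the induced-path bound) are all correct and simply spell out details that the paper leaves implicit.
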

    \begin{claimproof}
    As subflips are transducible (\Cref{lem:trans-pure-flips-reflexive}), we can transduce $H$ from $G$.
    By induction, $\trans{Sparsify}_{k,d}$ produces $H_i^*$ from $H_i$ for all $i\in[m]$.
    As each of the $H_i$ has bounded diameter (\Cref{lem:sc-depth-no-long-paths}), we can apply $\trans{Sparsify}_{k,d}$ in parallel to transduce $H^*$ from $H$ (\Cref{lem:parallel-transductions}).
    We can transduce $G^*$ from $H^*$ by adding the additional edges $X$ through \Cref{lem:transduce-vc}.
    By transitivity (\Cref{lem:expos-transitive}), we can chain these transductions to transduce $G^*$ from $G$.
    \end{claimproof}

    \begin{claim}
        There is an \expos-transduction $\trans{Recover}_{k,d+1}$ that produces $G$ from $G^*$.
    \end{claim}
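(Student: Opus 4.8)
\begin{claimproof}
The plan is to reverse the three-link chain that defines $\trans{Sparsify}_{k,d+1}$ in the previous claim. There the sparsification proceeded along $G \to H \to H^* \to G^*$, so I would build $\trans{Recover}_{k,d+1}$ as a composition, read backwards, of three \expos-transductions $G^* \to H^* \to H \to G$, glued together by transitivity (\Cref{lem:expos-transitive}).

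First I would undo the addition of the edge set $X$, recovering $H^*$ from $G^*$. The leaders form a vertex cover of $X$ of size at most $k$, and the edges of $X$ are exactly edges deleted by the subflip $G \ominus \PP$, so they are absent from $H$ and hence from its subgraph $H^*$. Consequently $X \cap E(H^*) = \emptyset$ and $H^* = (V, E(G^*) \setminus X)$. Since $G^*$ is reflexive (it is a spanning subgraph of the reflexive graph $G$), \Cref{lem:transduce-vc} applies and yields $H^* \in \trans{VertexCover}(G^*)$.

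Next I would undo the parallel sparsification, recovering $H$ from $H^*$. By the induction hypothesis $H_i \in \trans{Recover}_{k,d}(H_i^*)$ for every component, so I would apply $\trans{Recover}_{k,d}$ componentwise via \Cref{lem:parallel-transductions}. To invoke that lemma on input $H^*$, its components $H_i^*$ must have bounded diameter; this is exactly where I would use that each $H_i^*$ has tree-depth at most $k\cdot d$ (established in the earlier claim), since bounded tree-depth forces bounded diameter. Finally, to recover $G$ from $H$, I would use that $H = G \ominus \PP$, so that $H$ and the reflexive graph $G$ are pure $k$-flips of one another; \Cref{lem:trans-pure-flips-reflexive} then lets $H$ transduce $G$.

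Composing these three \expos-transductions by \Cref{lem:expos-transitive} produces $\trans{Recover}_{k,d+1}$ with $G \in \trans{Recover}_{k,d+1}(G^*)$, completing the connected inductive step. The main obstacle I anticipate is the diameter bound required for the parallel step: unlike in the sparsification direction, where \Cref{lem:sc-depth-no-long-paths} bounded the diameter of the preimages $H_i$, here I must bound the diameter of the already-sparsified components $H_i^*$, which I can only do by transferring their tree-depth bound into a diameter bound. A secondary point to check carefully is the disjointness $X \cap E(H^*) = \emptyset$, which hinges on $H^*$ being a subgraph of $H$ by induction.
\end{claimproof}
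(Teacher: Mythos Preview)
Your proposal is correct and follows essentially the same three-step approach as the paper: remove $X$ via \Cref{lem:transduce-vc}, apply $\trans{Recover}_{k,d}$ componentwise via \Cref{lem:parallel-transductions} using the bounded tree-depth (hence bounded diameter) of the $H_i^*$, and finally undo the subflip via \Cref{lem:trans-pure-flips-reflexive}. You even make explicit the verification $X \cap E(H^*) = \emptyset$ that the paper leaves implicit when it writes $E(G^*) - E(H^*) = X$.
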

    \begin{claimproof} 
    We first transduce $H^*$ from $G^*$ by removing the edges $E(G^*) - E(H^*) = X$ (\Cref{lem:transduce-vc}).
    Recall that $H^* = H_1^* \uplus \ldots \uplus H_m^*$.
    By induction, $\trans{Recover}_{k,d}$ produces $H_i$ from $H_i^*$ for all \mbox{$i\in[m]$}. 
    Crucially, each~$H_i^*$ is connected and has bounded tree-depth and thus bounded diameter (\Cref{lem:sc-depth-no-long-paths}).
    We can therefore apply $\trans{Recover}_{k,d}$ in parallel to all $H_i^*$ (\Cref{lem:parallel-transductions}).
    This yields $H$.
    As the last step, we undo the subflip and obtain $G$ (\Cref{lem:trans-pure-flips-reflexive}).
    \end{claimproof}

    This finishes the proof of the lemma, where we assumed that $G$ is connected.
    By \Cref{lem:parallel-transductions}, the original statement of the lemma now follows.
\end{proof}
\section{\expos-sparsification for structurally bounded expansion}
\label{sec:sbe}

In this section we prove \Cref{thm:sparsification}, which we restate for convenience. 

\thmSparsification*

The key to handling all five cases of \Cref{thm:sparsification} at once is to work with the class property \emph{structurally bounded expansion}.
A graph class has \emph{structurally bounded expansion} if it is transducible (with general FO, not restricted to the \expos-fragment) from a class of \emph{bounded expansion}~\cite{gajarsky2020first}.
Classes of  \emph{bounded expansion}~\cite{bounded_expansion} are very general classes of sparse graphs, subsuming all classes of bounded tree-depth, path-width, tree-width, and sparse-twin-width, but less general than nowhere dense classes.
Classes of structurally bounded expansion subsume the half-graph-free fragments of the dense, transduction-closed counterparts of the previously mentioned sparse classes.

\begin{theorem}\label{thm:sbe-and-transclosed}
    The following class properties $\PP$ are transduction-closed and every half-graph-free class with~$\PP$ has structurally bounded expansion.
    \begin{multicols}{3}
    \begin{enumerate}
        \item shrub-depth~\cite{shrubdepth,shrubdepth-journal}
        \item linear clique-width~\cite{nevsetvril2020linear}
        \item clique-width~\cite{nevsetvril2021rankwidth}
        \item twin-width~\cite{twwI, gajarsky2022stable}, 
        \item merge-width~\cite{merge-width,dreier2026efficient}.
    \end{enumerate}
    \end{multicols}
\end{theorem}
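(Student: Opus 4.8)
The statement bundles two claims for each of the five properties $\PP$: that bounded $\PP$ is transduction-closed, and that every half-graph-free class with bounded $\PP$ has structurally bounded expansion. The plan is to dispatch the first claim by citation and to reduce the second to the already-quoted \Cref{thm:general-conjecture-sbe}.

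For transduction-closedness there is nothing genuinely new to prove: bounded shrub-depth is transduction-closed by \cite{shrubdepth,shrubdepth-journal}, bounded linear clique-width by \cite{nevsetvril2020linear}, bounded clique-width by \cite{nevsetvril2021rankwidth}, bounded twin-width by \cite{twwI}, and bounded merge-width by \cite{merge-width}. In each case the mechanism is the same — the coloring step of a transduction multiplies the width parameter by a bounded factor, passing to an induced subgraph cannot increase it, and a fixed FO interpretation can be evaluated along the corresponding (tree-like, path-like, or contraction) decomposition at the cost of a bounded blow-up — but I would import these results rather than reprove them.

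For the second claim, let $\CC$ be half-graph-free with bounded $\PP$. Note that one cannot appeal to \Cref{lem:sparse-notions} directly, since a half-graph-free class need not be biclique-free. Instead I would use \Cref{thm:general-conjecture-sbe}, whose ``in particular'' items are stated precisely for half-graph-free classes: the matching item provides a class $\DD$ and a (general, not necessarily \expos) FO transduction $\trans T$ with $\CC \subseteq \trans T(\DD)$, where $\DD$ has bounded tree-depth, path-width, tree-width, sparse twin-width, or expansion, according as $\PP$ is shrub-depth, linear clique-width, clique-width, twin-width, or merge-width. In every case $\DD$ has bounded expansion: bounded tree-depth and bounded path-width both imply bounded tree-width, which in turn implies bounded expansion \cite{bounded_expansion, nevsetvril2012sparsity}; bounded sparse twin-width implies bounded expansion (a standard fact about sparse twin-width, cf.\ \cite{twwII}); and bounded expansion is the target itself. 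Hence $\CC$ is an FO-transduction of a bounded-expansion class, which is exactly the definition of structurally bounded expansion.

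The argument carries no intrinsic difficulty: its substantive inputs — the transduction-closedness of each parameter and the hard ``sparsification'' direction of \Cref{thm:general-conjecture-sbe} — are imported wholesale from the literature. The only point demanding care is the bookkeeping that pairs each dense parameter with its correct sparse counterpart and verifies that every counterpart lies inside bounded expansion; both are read off directly from the structure of \Cref{thm:general-conjecture-sbe}.
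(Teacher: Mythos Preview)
Your proposal is correct and matches the paper's treatment: the theorem is stated there as a compilation of known results, with the citations attached directly to each item and no separate proof given. Your write-up simply makes explicit the two-step reduction (transduction-closedness by citation, then \Cref{thm:general-conjecture-sbe} plus the observation that each sparse counterpart sits inside bounded expansion) that the paper leaves implicit in the references.
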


We will not work with the definition of (structurally) bounded expansion, but instead use the following characterizations through \emph{covers}.

\begin{definition}[$p$-covers]
    Fix a universe $V$.
    A collection of sets $U_1, \ldots, U_s \subseteq V$ forms a \emph{$p$-cover} of $V$ if for every subset $X \subseteq V$ of size at most $p$, there is $i\in[s]$ with $X \subseteq U_i$.
\end{definition}

\begin{definition}[$\PP$-covers]
    For a class property $\PP$ and a graph class $\Cc$, we say $\CC$ has \emph{$\PP$-covers} if for every $p\in \N$ there exists $s\in \N$ such that for every $G\in \Cc$ there is a size-$s$ $p$-cover $U_1,\ldots,U_s$ of~$V(G)$ such that the class $\Dd = \{ G[U_i] \mid G\in \Cc, i\in [s]\}$ has property $\PP$.
\end{definition}

\begin{theorem}\label{lem:sbe-low-fd-covers}
    For every graph class $\Cc$, all of the following hold.
    \begin{enumerate}
        \item $\CC$ has bounded expansion if and only if $\CC$ has bounded-tree-depth-covers.~\cite{bounded_expansion}
        \item $\CC$ has structurally bounded expansion if and only if $\CC$ has bounded-SC-depth-covers.~\cite{gajarsky2020first}
    \end{enumerate}
\end{theorem}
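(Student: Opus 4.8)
The plan is to handle the two items separately, reducing item~2 to item~1 through the transduction correspondence between the sparse parameters and their dense counterparts.

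For item~1 I would recognize the statement as a repackaging of the classical low-tree-depth-coloring theorem of Nešetřil and Ossona de Mendez. The easy direction is ``bounded expansion $\Rightarrow$ bounded-tree-depth-covers'': bounded expansion yields, for every $p$, a number $N=N(p)$ so that every $G\in\CC$ has a coloring $c\colon V(G)\to[N]$ in which any at most $p$ color classes induce a subgraph of tree-depth at most $p$. For each $p$-element $S\subseteq[N]$ set $U_S:=c^{-1}(S)$; since any $\le p$ vertices use at most $p$ colors, the $\binom{N}{p}$ sets $\{U_S\}$ form a $p$-cover and each $G[U_S]$ has tree-depth $\le p$, so $s:=\binom{N}{p}$ works. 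The converse, ``bounded-tree-depth-covers $\Rightarrow$ bounded expansion'', is the deep half, amounting to bounding the edge density of $r$-shallow (topological) minors for every $r$. The $p=2$ cover already bounds the density of $G$ itself: a graph of tree-depth $d$ is $(d-1)$-degenerate, since assigning each edge to its deeper endpoint in an elimination forest charges at most $d-1$ edges per vertex, and every edge of $G$ lies in some cover part; for larger $r$ one argues that branch sets and short subdivision paths are captured inside a single cover part for a suitable $p$ and reuses the degeneracy bound there. I would invoke this combinatorial core from~\cite{bounded_expansion} rather than re-derive it.

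For item~2 I would reduce to item~1 via two transduction correspondences: structurally bounded expansion is by definition the FO-transductions of bounded expansion, and bounded SC-depth is functionally the same as bounded shrub-depth (cf.\ \Cref{rem:flipdepth-SCdepth}), which is exactly the FO-transductions of bounded tree-depth~\cite{shrubdepth,shrubdepth-journal} (the sparse/dense pair underlying \Cref{thm:general-conjecture-sbe} and \Cref{lem:sparse-notions}). For the forward direction, write $\CC\subseteq\trans T(\DD)$ with $\DD$ of bounded expansion; by item~1, $\DD$ has bounded-tree-depth-covers. Since a simple interpretation only restricts the vertex set through its domain formula, the traces $U_i\cap V(G)$ of a $p$-cover of $D$ form a $p$-cover of each $G\in\trans T(D)$, and each $G[U_i]$, being a transduction of the bounded-tree-depth graph $D[U_i]$, has bounded SC-depth. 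The backward direction, ``bounded-SC-depth-covers $\Rightarrow$ structurally bounded expansion'', proceeds per cover part: each $G[U_i]$ has bounded SC-depth, hence is a transduction of a bounded-tree-depth graph $D_i$; assembling these part-wise transductions with \Cref{lem:gluing-transductions} transduces $G=\bigcup_i G[U_i]$ from a single graph that has bounded-tree-depth-covers, which is bounded expansion by item~1.

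The step I expect to be the main obstacle, and where I would be most careful, is the claim in the forward direction of item~2 that $G[U_i]$ is itself a transduction of the preimage $D[U_i]$. This is false for an arbitrary transduction, since the interpreting formula $\phi(x,y)$ may quantify over vertices outside $U_i$. I would resolve it using the sparse Gaifman-locality normal form available over bounded expansion classes: over $\DD$ the formula $\phi$ is equivalent to one whose truth on a pair $x,y$ depends only on a bounded-radius neighborhood of $\{x,y\}$ in $D$, and I would then pass to the cover for a correspondingly larger parameter $p$, closing its parts under bounded-radius neighborhoods, so that all relevant vertices lie inside $U_i$. This locality bookkeeping is precisely the technical content of~\cite{gajarsky2020first}, which I would cite for the delicate details rather than reconstruct.
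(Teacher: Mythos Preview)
The paper does not prove this theorem; it is stated as a cited result from~\cite{bounded_expansion} and~\cite{gajarsky2020first}, with no argument given. Your proposal is thus an attempt to reconstruct the cited proofs. Your sketch for item~1 and for the forward direction of item~2 is reasonable, and you correctly identify the locality issue (that $\phi$ may quantify outside $U_i$) as the crux of the forward direction; closing the cover parts under bounded-radius neighbourhoods is indeed how~\cite{gajarsky2020first} handles it.

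The backward direction of item~2, however, has a genuine gap. You want to assemble the bounded-tree-depth preimages $D_i$ of the parts $G[U_i]$ into a single graph $D$ and then invoke item~1. Two things go wrong. First, \Cref{lem:gluing-transductions} needs $D[V_i]=D_i$ for sets $V_i\subseteq V(D)$; if you take $D=\bigcup_i D_i$ on a common vertex set, then $D[V_i]$ will in general contain stray edges of other $D_j$ lying inside $V_i$, so the hypothesis of the gluing lemma fails. Second, and more fundamentally, your claim that the resulting $D$ ``has bounded-tree-depth-covers'' is unjustified and in fact false in general: a union of boundedly many bounded-tree-depth subgraphs need not have bounded expansion. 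Concretely, the $1$-subdivision of $K_n$ is the union of two disjoint-star forests (connect each subdivision vertex to its lower-indexed endpoint in one forest and to its higher-indexed endpoint in the other), each of tree-depth~$2$, yet the class of $1$-subdivisions of cliques does not have bounded expansion. So even granting the gluing step, item~1 does not apply to your $D$. The construction in~\cite{gajarsky2020first} is more delicate: it uses that $\CC$ admits $p$-covers for \emph{all} $p$, not just a single $2$-cover, and builds the bounded-expansion preimage accordingly.
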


We add to this, and later use, the following equivalence.

\pagebreak
\begin{theorem}\label{thm:subflip-depth-covers}
    For every graph class $\CC$, the following are equivalent.
    \begin{enumerate}
        \item $\CC$ is co-matching-free and has structurally bounded expansion.
        \item $\CC$ has bounded-subflip-depth-covers.
    \end{enumerate}
\end{theorem}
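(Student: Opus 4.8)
The plan is to prove the two implications separately, using the cover-based characterization of structurally bounded expansion (\Cref{lem:sbe-low-fd-covers}), the functional equivalence of flip-depth and SC-depth (\Cref{rem:flipdepth-SCdepth}), and the characterization of bounded subflip-depth (\Cref{thm:characterizations-subflip-depth}). A preliminary remark I would record first is that a class has bounded-SC-depth-covers if and only if it has bounded-flip-depth-covers: since the two depth parameters are tied by the functions in \Cref{rem:flipdepth-SCdepth}, the derived class $\DD=\{G[U_i]\}$ appearing in the definition of $\PP$-covers has bounded SC-depth exactly when it has bounded flip-depth.

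For the direction $(1\Rightarrow 2)$, suppose $\CC$ is co-matching-free and has structurally bounded expansion. By \Cref{lem:sbe-low-fd-covers} it has bounded-SC-depth-covers, hence bounded-flip-depth-covers. I would fix $p$, take the guaranteed size-$s$ $p$-covers, and consider the derived class $\DD=\{G[U_i]\}$, which has bounded flip-depth. The key observation is that co-matching-freeness is hereditary under induced subgraphs: a semi-induced co-matching living inside $G[U_i]$ is also semi-induced in $G$, so each $G[U_i]$ has co-matching-index bounded by the index bound of $\CC$. Thus $\DD$ is co-matching-free and has bounded flip-depth, and \Cref{thm:characterizations-subflip-depth} (implication $1\Rightarrow 2$) upgrades this to bounded subflip-depth. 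Since the same covers work and $p$ was arbitrary, $\CC$ has bounded-subflip-depth-covers.

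For the direction $(2\Rightarrow 1)$, suppose $\CC$ has bounded-subflip-depth-covers. Structurally bounded expansion is immediate: by $\frk_{r,k}\le\sfrk_{r,k}$ (\Cref{lem:flipper-basics}) bounded subflip-depth implies bounded flip-depth, so bounded-subflip-depth-covers implies bounded-flip-depth-covers, hence bounded-SC-depth-covers, hence structurally bounded expansion by \Cref{lem:sbe-low-fd-covers}. The interesting point is co-matching-freeness, and I expect this to be the main obstacle: the co-matching-index bound supplied by the covers is allowed to depend on $p$, so one cannot naively push $p$ to infinity to capture ever-larger co-matchings. The trick I would use is to work with the single instance $p=2$. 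Instantiating the definition at $p=2$ gives a size $s$ and, for each $G$, a $2$-cover whose derived class $\DD$ has bounded subflip-depth; by \Cref{thm:characterizations-subflip-depth} (implication $2\Rightarrow 1$), $\DD$ is co-matching-free, say with every member having co-matching-index below $T$.

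A pigeonhole argument then bounds the co-matching-index of $\CC$ itself. Given $G\in\CC$ carrying a semi-induced co-matching of order $n$ on $a_1,\dots,a_n$ and $b_1,\dots,b_n$ (so $a_ib_j\in E(G)\iff i\ne j$), each diagonal pair $\{a_i,b_i\}$ has size $2$ and is therefore contained in some part $U_{c(i)}$ of the $2$-cover. By pigeonhole some part $U_\ell$ satisfies $c(i)=\ell$ for a set $S$ of at least $n/s$ indices; then $\{a_i:i\in S\}$ and $\{b_i:i\in S\}$ both lie in $U_\ell$ and semi-induce a co-matching of order $|S|$ in $G[U_\ell]\in\DD$. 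Hence $n/s\le|S|<T$, giving the uniform bound $n< sT$ independent of $G$, so $\CC$ is co-matching-free. The only point to check carefully is that the selected sub-co-matching really is semi-induced in $G[U_\ell]$, which holds because $G[U_\ell]$ is an induced subgraph of $G$ and thus preserves all adjacencies among the chosen vertices.
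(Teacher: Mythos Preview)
Your proof is correct and follows essentially the same approach as the paper's. The only cosmetic difference is that the paper packages your pigeonhole argument for lifting co-matching-freeness from a $2$-cover into a separate lemma (\Cref{lem:co-matching-covers}), whereas you write it out inline; the content is identical.
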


In order to prove this theorem it will be useful to observe that co-matching-freeness can be lifted from the cover sets to the whole graph. 

\begin{lemma}\label{lem:co-matching-covers}
    Let $G$ be a graph and $U_1, \ldots, U_s$ be a $2$-cover of $V(G)$.
    If $G$ contains a semi-induced co-matching of size $s \cdot t$, then there is $i \in [s]$ such that $G[U_i]$ contains a semi-induced co-matching of size~$t$.
\end{lemma}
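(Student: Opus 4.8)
The plan is to exploit the fact that a semi-induced co-matching is completely determined by its diagonal \emph{non-edges}, so that it suffices to capture these pairs inside a single cover set rather than the entire co-matching. Note first that a $2$-cover only guarantees that vertex sets of size at most $2$ are contained in some $U_i$; it cannot hope to capture the full vertex set of a large co-matching (which has $2st$ vertices) inside one $U_i$. The key idea is therefore to apply the cover property not to the whole configuration at once, but only to the $st$ diagonal pairs, and then use the fact that adjacencies are inherited by induced subgraphs.

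Concretely, I would start from a semi-induced co-matching of order $st$ on vertices $a_1, \ldots, a_{st}$ and $b_1, \ldots, b_{st}$, so that $a_k b_{k'} \in E(G) \Leftrightarrow k \neq k'$ for all $k, k' \in [st]$. For each $k \in [st]$, consider the pair $\{a_k, b_k\}$ (these are exactly the diagonal non-edges). Since $U_1, \ldots, U_s$ is a $2$-cover, there is an index $i(k) \in [s]$ with $\{a_k, b_k\} \subseteq U_{i(k)}$, which defines a map $k \mapsto i(k)$ from a set of size $st$ into a set of size $s$.

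Next I would apply the pigeonhole principle: some cover set $U_i$ satisfies $\{a_k, b_k\} \subseteq U_i$ for at least $t$ distinct indices $k$. Let $K \subseteq [st]$ collect these indices, so $|K| \geq t$. Then all vertices $\{a_k : k \in K\} \cup \{b_k : k \in K\}$ lie in $U_i$, and since $G[U_i]$ is an induced subgraph, the bipartite graph semi-induced between $\{a_k : k \in K\}$ and $\{b_k : k \in K\}$ in $G[U_i]$ coincides with the one semi-induced in $G$. By construction it satisfies $a_k b_{k'} \in E(G) \Leftrightarrow k \neq k'$ for $k, k' \in K$, which is precisely a semi-induced co-matching of order $|K| \geq t$. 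Hence $G[U_i]$ contains a semi-induced co-matching of size $t$, as required.

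There is no substantial obstacle here: the only points needing care are the counting step and the observation that restricting a co-matching simultaneously to the same index subset $K$ on both sides again yields a co-matching. Both are immediate once one fixes on the diagonal pairs as the objects to feed into the $2$-cover, rather than attempting to enclose the whole co-matching in a single $U_i$.
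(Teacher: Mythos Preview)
Your proof is correct and follows essentially the same approach as the paper: apply the $2$-cover property to the diagonal pairs $\{a_k,b_k\}$, then use the pigeonhole principle to find $t$ such pairs in a single $U_i$, which together semi-induce a co-matching in $G[U_i]$. Your exposition is slightly more detailed about why the restriction to a common index set $K$ on both sides again yields a co-matching, but the argument is the same.
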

\begin{proof}
    Let $a_1,\ldots,a_{s\cdot t}$ and $b_1, \ldots, b_{s\cdot t}$ be the vertices that semi-induce a co-matching in $G$, i.e., 
    \[
        a_ib_j \in E(G) \Leftrightarrow i \neq j \quad \text{for all $i,j \in [s \cdot t]$.}
    \]
    Using the assumption that the $U_i$ form a $2$-cover, we can assign to each index $m\in[s \cdot t]$ a cover set $f(m) \in [s]$ with $\{ a_m, b_m \} \in U_{f(m)}$.
    By the pigeonhole principle there, exists indices 
    $1 \leq i_1 < \ldots < i_t \leq s \cdot t$ which are all assigned to the same cover set $U_i$ for some $i \in [s]$. 
    Hence, $G[U_i]$ contains a semi-induced co-matching of size $t$.
\end{proof}

We can now prove \Cref{thm:subflip-depth-covers}.

\begin{proof}[Proof of \Cref{thm:subflip-depth-covers}]
    To prove $(1. \Rightarrow 2.)$, if $\CC$ has structurally bounded expansion then it has bounded-SC-depth-covers, by
    \Cref{lem:sbe-low-fd-covers}.
    By \Cref{rem:flipdepth-SCdepth}, it has bounded-flip-depth-covers.
    Using co-matching-freeness, by \Cref{thm:characterizations-subflip-depth}, it has bounded-subflip-depth-covers.

    To prove $(2. \Rightarrow 1.)$, if $\CC$ has bounded-subflip-depth-covers, then it also has bounded-flip-depth-covers by definition.
    By \Cref{rem:flipdepth-SCdepth}, it has bounded-SC-depth-covers.
    By \Cref{lem:sbe-low-fd-covers}, $\CC$ has structurally bounded expansion.
    It remains to prove co-matching-freeness.
    We know that the graphs from $\CC$ can be $2$-covered by graphs from a class $\DD$ of bounded subflip-depth.
    By \Cref{thm:characterizations-subflip-depth}, $\DD$ is co-matching-free. By contraposition of \Cref{lem:co-matching-covers}, co-matching-freeness lifts to $\CC$.
\end{proof}

Our goal to use \Cref{thm:subflip-depth-covers} to sparsify graphs of structurally bounded expansion, by covering it with a bounded number of graphs of bounded subflip-depth.
These we can individually sparsify to graphs of bounded tree-depth using \Cref{lem:sparsify-sfrk} from the previous section.
To control the density of the graph that results from taking the union of the bounded tree-depth graphs, we will use the following Ramsey-type lemma. 

\begin{lemma}\label{lem:no-bicliques}
For all $s,t\in\N$ there exists $f=f(s,t)$ such that the following holds.
For every graph $G = \bigcup_{i \in [s]} H_i$,
if $G$ contains a semi-induced $K_{f,f}$, then there is $i\in[s]$ such that $H_i$ contains a semi-induced~$K_{t,t}$.
\end{lemma}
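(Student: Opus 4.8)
The plan is to reduce the statement to the $s$-colour bipartite Ramsey theorem. Since $G=\bigcup_{i\in[s]}H_i$ we have $E(G)=\bigcup_{i\in[s]}E(H_i)$, so every edge of $G$ lies in at least one $H_i$. Fix disjoint sets $A=\{a_1,\dots,a_f\}$ and $B=\{b_1,\dots,b_f\}$ witnessing the semi-induced $K_{f,f}$ in $G$, so that $a_pb_q\in E(G)$ for all $p,q\in[f]$. For each pair $(p,q)$ I would choose an index $c(p,q)\in[s]$ with $a_pb_q\in E(H_{c(p,q)})$; this is a colouring of the edges of a complete bipartite graph by $s$ colours. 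It then suffices to find $A'\subseteq A$, $B'\subseteq B$ with $|A'|=|B'|=t$ and a single colour $i$ such that $c(p,q)=i$ for all $a_p\in A'$ and $b_q\in B'$: then every pair in $A'\times B'$ is an edge of $H_i$, so $A'$ and $B'$ semi-induce a $K_{t,t}$ in $H_i$. Note that a semi-induced biclique constrains only the edges running \emph{between} the two sides, and these all survive in $H_i$ by the definition of the colouring.

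First I would restrict attention to the first $st$ rows $A_1:=\{a_1,\dots,a_{st}\}$ (using $f\ge st$). Each column $b_q$ determines a pattern $\pi_q\colon A_1\to[s]$ defined by $\pi_q(a_p):=c(p,q)$, and there are at most $s^{st}$ such patterns. Choosing $f=f(s,t)$ large enough that the number of columns exceeds $s^{st}(t-1)$ — for instance $f(s,t):=s^{st}(t-1)+st$ — the pigeonhole principle yields a set $B'\subseteq B$ of $t$ columns all sharing one pattern $\pi$. Next, since $|A_1|=st$ and $\pi$ uses only $s$ colours, a second application of pigeonhole gives a colour $i$ with $|\pi^{-1}(i)|\ge t$; I pick any $t$-element subset $A'\subseteq\pi^{-1}(i)$. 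By construction $c(p,q)=\pi(a_p)=i$ for every $a_p\in A'$ and $b_q\in B'$, which is exactly the monochromatic $K_{t,t}$ sought, and it then semi-induces a $K_{t,t}$ in $H_i$ as explained above.

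The argument is essentially routine bipartite Ramsey, so I do not expect a serious obstacle; the only points that require care are (i) ordering the two pigeonhole steps correctly — fixing the rows $A_1$ first, grouping the columns by their colour pattern on $A_1$, and only then extracting a monochromatic block of rows — and (ii) checking that a monochromatic complete bipartite block in the colouring genuinely yields a \emph{semi-induced} $K_{t,t}$ inside the chosen $H_i$, which holds precisely because semi-inducedness of a biclique imposes no condition on edges within either side.
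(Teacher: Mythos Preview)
Your proof is correct and follows essentially the same approach as the paper: colour each edge of the semi-induced biclique by an index $i$ for which the edge lies in $H_i$, then apply bipartite Ramsey to extract a monochromatic $K_{t,t}$. The only difference is that the paper invokes the bipartite Ramsey theorem as a black box, while you additionally spell out an elementary two-step pigeonhole proof of it with an explicit bound $f(s,t)=s^{st}(t-1)+st$.
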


\begin{proof}
Let $A$ and $B$ be the sets semi-inducing a biclique in $G$, i.e., $A\times B \subseteq E(G)$.
For every edge $ab \in A \times B$ choose an index
$i(ab)\in[s]$ such that $ab \in E(H_i)$.
This yields an $s$-coloring of the edges of the complete bipartite graph
$K_{|A|,|B|}$.

By the bipartite Ramsey theorem there exists $f=f(s,t)$ such that if
$|A|,|B|\ge f$, then there are subsets $A'\subseteq A$ and
$B'\subseteq B$ of size $t$ and an index $i\in[s]$ such that
all edges between $A'$ and $B'$ are colored~$i$. 
This shows that $H_i$ contains a semi-induced $K_{t,t}$. 
\end{proof}

We will use the above lemma in combination with the following corollary of \Cref{lem:sbe-low-fd-covers} to obtain a class of bounded expansion.
\begin{corollary}[\cite{gajarsky2020first}]\label{cor:sparse-sbe}
    A biclique-free graph class has bounded expansion if and only if it has structurally bounded expansion.
\end{corollary}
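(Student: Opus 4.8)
The plan is to route everything through the cover characterizations of \Cref{lem:sbe-low-fd-covers}, which translate both notions into statements about the induced cover graphs $G[U_i]$. The forward direction needs no biclique-freeness: by definition a class has structurally bounded expansion if it is FO-transducible from a bounded-expansion class, and applying this to the identity transduction shows immediately that bounded expansion implies structurally bounded expansion.

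For the reverse direction, I would assume $\CC$ is biclique-free with structurally bounded expansion and derive bounded expansion. By \Cref{lem:sbe-low-fd-covers}(2), structurally bounded expansion is equivalent to having bounded-SC-depth-covers: for every $p \in \N$ there is an $s \in \N$ such that each $G \in \CC$ has a size-$s$ $p$-cover whose induced cover graphs form a class $\DD_p := \{G[U_i] : G \in \CC,\ i \in [s]\}$ of bounded SC-depth. The key observation is that biclique-freeness is hereditary under taking induced subgraphs: a semi-induced $K_{n,n}$ inside some $G[U_i]$ would also be semi-induced in $G$, so each $\DD_p$ inherits the biclique bound of $\CC$. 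Thus every $\DD_p$ is a biclique-free class of bounded SC-depth, equivalently of bounded shrub-depth (these parameters are functionally equivalent, \cite{shrubdepth,shrubdepth-journal}). Then \Cref{lem:sparse-notions}(1) applies and gives that each $\DD_p$ has bounded tree-depth. Since this holds for every $p$, the class $\CC$ has bounded-tree-depth-covers, and \Cref{lem:sbe-low-fd-covers}(1) concludes that $\CC$ has bounded expansion.

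The only non-bookkeeping step is the passage from bounded SC-depth to bounded tree-depth on the cover graphs, and this is precisely where biclique-freeness is indispensable: without it bounded SC-depth imposes no sparsity at all (cliques have SC-depth $1$ yet are dense), so the implication would break. The substantive ingredient is therefore \Cref{lem:sparse-notions}(1) together with the standard identification of SC-depth and shrub-depth, and I expect matching these two parameters to be the only point that needs care. One could instead sidestep shrub-depth by using \Cref{rem:flipdepth-SCdepth} to pass from SC-depth to flip-depth and then exploiting that a biclique-free class is automatically co-matching-free (a co-matching of order $2k$ semi-induces a $K_{k,k}$), so that \Cref{thm:characterizations-subflip-depth} converts bounded flip-depth into bounded subflip-depth. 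However, that route only yields sparse \emph{subgraphs} of bounded tree-depth (as in \Cref{lem:sparsify-sfrk}) rather than bounded tree-depth of the cover class itself, which is exactly what the cover framework requires; hence I would keep the shrub-depth argument as the main line.
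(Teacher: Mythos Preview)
Your proof is correct and follows exactly the route the paper suggests by labeling this a corollary of \Cref{lem:sbe-low-fd-covers}: pass to bounded-SC-depth-covers, use hereditary biclique-freeness together with \Cref{lem:sparse-notions}(1) to upgrade the cover class to bounded tree-depth, and then invoke the first equivalence. The paper itself does not spell out a proof (it simply cites \cite{gajarsky2020first}), so there is nothing further to compare.
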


Now, in classes of bounded expansion, we can recover subgraphs from a union of graphs using the following lemma. 

\begin{lemma}[follows from Lemma 34 of \cite{nesetril2020structuralc}]\label{lem:transduce-subgraphs}
    Let $\Cc$ be a class with bounded expansion. Then there is an \expos-transduction from $\Cc$ onto its monotone closure $\{ H : \text{$H$ is a subgraph of a graph in $\CC$} \}$.
\end{lemma}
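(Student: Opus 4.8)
The plan is to reduce the statement to the following self-contained combinatorial task: produce, for each $G \in \Cc$, a vertex-coloring with a \emph{bounded} number of colors together with a single quantifier-free positive formula $\eta(x,y)$ that recovers an \emph{arbitrary} subgraph $H$ of $G$ from the coloring. The obstacle is that colors live on vertices while $H$ is an arbitrary subset of \emph{edges}, and a positive formula cannot test equality of endpoints; so we must be able to name each edge by the colors of its endpoints without collisions. The enabling structure comes from bounded-expansion theory. Since $\Cc$ has bounded expansion it has bounded degeneracy, and by the theory of transitive-fraternal augmentations (the combinatorial core underlying Lemma~34 of~\cite{nesetril2020structuralc}; see also~\cite{bounded_expansion,gajarsky2020first}) there is a constant $D$ such that every $G\in\Cc$ admits an acyclic orientation $\sigma$ of $E(G)$ of out-degree at most $D$ and a proper coloring $c\colon V(G)\to[D]$ of the augmented graph satisfying: \textbf{(A)} any two out-neighbors of a common vertex receive distinct colors (from the fraternal part, which makes them adjacent), and \textbf{(B)} for every directed path $y\to x\to w$ the colors $c(y)$ and $c(w)$ differ (from the transitive part, which makes $y,w$ adjacent); in particular, at each vertex the colors of its in-neighbors are disjoint from the colors of its out-neighbors.

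Given a target subgraph $H\subseteq G$, I would fix $\sigma$ and $c$ as above (they depend only on $G$) and encode $H$ into the coloring as follows. Retain the unary predicates $\mathrm{Col}_b$ marking vertices of color $b$, and for each $b\in[D]$ add a predicate $\mathrm{Keep}_b$ marking those $v$ for which $v$ has an \emph{out}-neighbor $w$ with $c(w)=b$ and $\{v,w\}\in E(H)$. By (A) this out-neighbor is unique, so $\mathrm{Keep}_b(v)$ faithfully records, for each out-edge of $v$, whether it is kept; since $\sigma$ has out-degree $\le D$, only boundedly many colors per vertex are used, so this is a coloring with $2D$ predicates. I then define the \expos-transduction by the domain formula $\nu(x):=\mathrm{In}(x)$ (a fresh predicate marking the vertices to be retained, so that vertex deletions in the monotone closure are realized) and the quantifier-free positive edge formula
\[
\eta(x,y) \;:=\; \bigvee_{b\in[D]}\Big( E(x,y)\wedge \mathrm{Col}_b(y)\wedge \mathrm{Keep}_b(x)\Big)\;\vee\;\Big( E(x,y)\wedge \mathrm{Col}_b(x)\wedge \mathrm{Keep}_b(y)\Big).
\]
This formula is manifestly symmetric, uses no negation and no quantifier, hence is an \expos-formula (in fact quantifier-free positive).

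For correctness, consider an edge $\{x,y\}$ of $G$, oriented say $x\to y$ in $\sigma$. The disjunct with $b=c(y)$ fires exactly when $\mathrm{Keep}_{c(y)}(x)$ holds, i.e.\ when $\{x,y\}\in E(H)$; and no spurious disjunct fires, because the only alternative is $\mathrm{Keep}_{c(x)}(y)$, which by definition would require $x$ to be an out-neighbor of $y$, whereas $x$ is an in-neighbor — and by (B) the color $c(x)$ cannot coincide with the color of any out-neighbor of $y$, so $\mathrm{Keep}_{c(x)}(y)$ is false. Thus $\eta(x,y)$ holds iff $\{x,y\}\in E(H)$, and intersecting with $\nu^2$ discards edges touching deleted vertices; self-loops (when present) are handled by one further predicate and a disjunct $(x=y)\wedge\mathrm{LoopKeep}(x)$, which is still positive since $=$ is allowed. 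Hence the transduction outputs exactly $H$, giving the whole monotone closure. The main obstacle is not the formula — which is essentially forced once (A) and (B) are available — but establishing (A) and (B) with a bounded palette and bounded out-degree; this is precisely where bounded expansion is used, via the fact that transitive-fraternal augmentations preserve bounded expansion and that bounded degeneracy yields bounded chromatic number, which I would import as the cited combinatorial input rather than reprove.
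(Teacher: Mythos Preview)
The paper does not actually prove this lemma; it simply records it as following from Lemma~34 of~\cite{nesetril2020structuralc}. Your proposal therefore supplies strictly more than the paper does, and the argument is correct: given an orientation with bounded out-degree and a coloring satisfying (A) and (B), your quantifier-free positive formula recovers an arbitrary subgraph, and the case analysis you give is sound (including the self-loop addendum, which is needed since the lemma is stated for partially reflexive graphs).

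Two small remarks on the combinatorial input. First, your attribution of (A) to the ``fraternal'' part is slightly off under the standard convention, where fraternal augmentation joins common \emph{in}-neighbors (sinks), not out-neighbors; neither one round of transitive nor of fraternal augmentation directly makes common out-neighbors adjacent. Second --- and this is the cleaner route --- both (A) and (B) drop out immediately from the bounded \emph{weak $2$-coloring number} of $\Cc$: take a linear order witnessing $\mathrm{wcol}_2(G)\le D$, orient every edge toward the smaller endpoint (out-degree $\le D$), and greedily color along the order avoiding the colors of $\mathrm{WReach}_2$. If $a,b\in N^+(v)$ with $a<b$, then the path $b\!-\!v\!-\!a$ has $v>b>a$, so $a\in\mathrm{WReach}_2[b]$ and $c(a)\neq c(b)$, yielding (A). If $x\to y\to w$ then $w<y<x$ and the path $x\!-\!y\!-\!w$ has minimum $w$, so $w\in\mathrm{WReach}_2[x]$ and $c(x)\neq c(w)$, yielding (B). This also gives properness of $c$ on $G$, which you implicitly use to ensure the main disjuncts do not fire at $x=y$. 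With this phrasing the ``cited combinatorial input'' is just the boundedness of $\mathrm{wcol}_2$ on bounded-expansion classes, which is standard.
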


It is now time to combine everything and prove the following lemma, which captures the essence of \Cref{thm:sparsification}.

\begin{lemma}\label{lem:sbe-sparsify}
For every co-matching-free class $\CC$ of reflexive graphs of structurally bounded expansion, there are \expos-transductions $\trans{Sparsify}$ and $\trans{Recover}$ such that every graph $G\in\CC$ contains a reflexive subgraph~$G^*$ such that 
\[
    G^* \in \trans{Sparsify}(G) \quad \text{and }\quad G \in \trans{Recover}(G^*),
\]
and the class $\Dd := \{G^* : G\in\Cc\}$ has bounded expansion.
\end{lemma}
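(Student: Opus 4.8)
The plan is to reduce to the per-piece sparsification of \Cref{lem:sparsify-sfrk} by covering each $G \in \CC$ with boundedly many induced subgraphs of bounded subflip-depth, sparsifying each piece, and gluing the results with \Cref{lem:gluing-transductions}. Since $\CC$ is co-matching-free and has structurally bounded expansion, \Cref{thm:subflip-depth-covers} shows that $\CC$ has bounded-subflip-depth-covers. Applying this for $p = 2$, I fix $s,k,d \in \N$ such that every $G \in \CC$ admits a $2$-cover $U_1,\ldots,U_s$ of $V(G)$ for which every $G[U_i]$ has $k$-subflip-depth at most $d$. Each $G[U_i]$ is reflexive (induced subgraphs preserve self-loops), so \Cref{lem:sparsify-sfrk} yields a reflexive subgraph $G_i^*$ of $G[U_i]$ of tree-depth at most $k\cdot d$ with $G_i^* \in \trans{Sparsify}_{k,d}(G[U_i])$ and $G[U_i] \in \trans{Recover}_{k,d}(G_i^*)$. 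I then set $G^* := G_1^* \cup \cdots \cup G_s^*$. Because the cover is a $2$-cover, $\bigcup_i U_i = V(G)$ and every edge and self-loop of $G$ lies in some $G[U_i]$; hence $G^*$ is a reflexive subgraph of $G$ and in fact $G = \bigcup_i G[U_i]$.

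For the forward direction I would take $\trans{Sparsify} := \Glue(\trans{Sparsify}_{k,d}, s)$. Feeding $G$ with its cover sets $U_1,\ldots,U_s$ marked in the coloring step, \Cref{lem:gluing-transductions} lets me choose the output of the inner transduction on each $G[U_i]$ to be exactly $G_i^*$, so $G^* = \bigcup_i G_i^* \in \trans{Sparsify}(G)$. This is an \expos-transduction since $\trans{Sparsify}_{k,d}$ is and $\Glue$ preserves the \expos fragment. Next I would argue that $\DD = \{G^* : G \in \CC\}$ has bounded expansion. Each $G^*$ is a union of $s$ graphs of tree-depth at most $k\cdot d$; a class of bounded tree-depth excludes long paths and hence contains no semi-induced $K_{t_0,t_0}$ for a suitable $t_0 = t_0(k\cdot d)$, so by \Cref{lem:no-bicliques} the graphs $G^*$ exclude a fixed semi-induced biclique and $\DD$ is biclique-free. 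At the same time $\DD$ is an \expos-transduction (in particular an FO-transduction) of $\CC$, and structurally bounded expansion is closed under transductions (being defined as the FO-transductions of bounded-expansion classes, which compose), so $\DD$ has structurally bounded expansion. Being both biclique-free and structurally of bounded expansion, $\DD$ has bounded expansion by \Cref{cor:sparse-sbe}.

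The recovery direction is the main obstacle, precisely because the cover sets overlap and so $G^*[U_i] \neq G_i^*$ in general: edges contributed by $G_j^*$ for $j \neq i$ may leak into $U_i$, and therefore I cannot simply apply $\Glue$ with $\trans{Recover}_{k,d}$ to the induced subgraphs $G^*[U_i]$. To resolve this I observe that each $G^*[U_i]$ is an induced subgraph of a graph in the bounded-expansion class $\DD$, so the class $\{G^*[U_i]\}$ has bounded expansion, and \Cref{lem:transduce-subgraphs} provides an \expos-transduction $\trans{Sub}$ onto its monotone closure. Since $G_i^*$ is a reflexive subgraph of $G^*[U_i]$, we have $G_i^* \in \trans{Sub}(G^*[U_i])$, and consequently $G[U_i] \in (\trans{Recover}_{k,d} \circ \trans{Sub})(G^*[U_i])$. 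Setting $\trans{Recover} := \Glue(\trans{Recover}_{k,d} \circ \trans{Sub}, s)$ and marking the sets $U_i$ in the coloring step, \Cref{lem:gluing-transductions} produces $\bigcup_i G[U_i] = G$, so $G \in \trans{Recover}(G^*)$. Transitivity of \expos-transductions (\Cref{lem:expos-transitive}) together with the fact that $\Glue$ preserves the \expos fragment ensures that $\trans{Recover}$ is an \expos-transduction, which completes the proof.
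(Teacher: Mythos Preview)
Your proof is correct and follows essentially the same approach as the paper: bounded-subflip-depth covers via \Cref{thm:subflip-depth-covers}, per-piece sparsification via \Cref{lem:sparsify-sfrk}, gluing via \Cref{lem:gluing-transductions}, and biclique-freeness plus \Cref{cor:sparse-sbe} for bounded expansion of $\DD$. The only cosmetic difference is in the recovery step: the paper applies the monotone-closure transduction of \Cref{lem:transduce-subgraphs} directly to $G^*$ and then takes a disjunction over $i\in[s]$, whereas you first restrict to $G^*[U_i]$ via $\Glue$ and apply the subgraph transduction inside; both realize the same idea of stripping away the foreign edges before invoking $\trans{Recover}_{k,d}$.
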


\begin{proof}
By \Cref{thm:subflip-depth-covers}, $\CC$ has bounded-subflip-depth-covers: there are $s,k,d \in \N$ such that for every graph $G \in \CC$, there is a size-$s$ $2$-cover $U_1, \ldots, U_s$ of $V(G)$ such that each $G[U_i]$ has its $k$-subflip-depth bounded by $d$.
For every $G\in\Cc$ fix such a cover.
According to \Cref{lem:sparsify-sfrk} there exist \expos-transductions $\trans {Sparsify}_{k,d}$ and $\trans{Recover}_{k,d}$ such that for every $G_i:=G[U_i]$ there is a reflexive subgraph $G^*_i$ of $G_i$ with tree-depth at most $k\cdot d$ such that 
    \[
        G^*_i \in \trans{Sparsify}_{k,d}(G_i)
        \text{ and } 
        G_i \in \trans{Recover}_{k,d}(G^*_i).
    \]
We define $\trans{Sparsify}$ as the transduction that defines 
\[
G^* := \bigcup_{i=1}^s G_i^*.
\]
This transduction is realized as $\Glue(\trans{Sparsify},s)(G)$ from \cref{lem:gluing-transductions}.
By transduction-closedness of structurally bounded expansion, $\DD$ has structurally bounded expansion.
The graphs in the class $\DD := \{ G^* : G \in \CC \}$ are unions of a bounded number of graphs of bounded tree-depth.
It is well known that bounded tree-depth implies biclique-freeness.
By \Cref{lem:no-bicliques}, also $\DD$ is biclique-free.
By \Cref{cor:sparse-sbe}, $\DD$ has bounded expansion.


\medskip
We now show how to construct $\trans{Recover}$. 
Recall that $G^*=\bigcup_{i=1}^s G_i^*$, where each $G_i^*$ is a (reflexive) subgraph of $G$ on vertex set $U_i$.
It is important to note that in general $G_i^*$ need \emph{not} be an induced subgraph of $G^*$, and in particular not equal to $G^*[U_i]$: $G^*[U_i]$ may contain additional edges coming from some $G_j^*$ with $j\neq i$, in case $U_i$ and $U_j$ overlap. 
Hence, we cannot directly feed $G^*[U_i]$ to $\trans{Recover}_{k,d}$ to recover $G_i$, we must first select the intended edge set of $G_i^*$ as a subgraph of~$G^*$.

Using \Cref{lem:transduce-subgraphs} and the fact that $\DD$ has bounded expansion, there exists an \expos-transduction~$\trans M$ such that
$\trans M(\Dd)$ contains the monotone closure of~$\Dd$, that is, for every $H\in\Dd$, 
\[
    \trans M(H) \supseteq \{H' : H'\subseteq H\text{ is a (reflexive) subgraph}\}.
\]
In particular, for our $G^*\in\Dd$ and every $i\in[s]$ we have $G_i^*\in \trans M(G^*)$.

Let $\trans{Recover}_{k,d}$ be the transduction from \Cref{lem:sparsify-sfrk} (see above). 
For each $i\in[s]$, consider the composition
\[
   \trans R_i \ :=\ \trans{Recover}_{k,d}\circ \trans M.
\]
Then $\trans R_i$ is an \expos-transduction (as a composition of \expos-transductions) and
\[
   G_i \in \trans{Recover}_{k,d}(G_i^*) \subseteq \trans{Recover}_{k,d}(\trans M(G^*)) = \trans R_i(G^*).
\]
Thus, from $G^*$ we can obtain each induced piece $G_i=G[U_i]$ by first using $\trans M$ to guess the correct subgraph~$G_i^*$ and then applying $\trans{Recover}_{k,d}$.
It remains to combine the $s$ recovered graphs $G_1,\dots,G_s$ into~$G$.
Note that $V(G)=U_1\cup\cdots\cup U_s$ and $E(G) =\bigcup_{i=1}^s E(G_i)$, as $U_i$ is a $2$-cover. We therefore define $\trans{Recover}$ as the \expos-transduction that, on input $G^*$ outputs the union of all $\trans R_i$. 

Formally, one can implement this again by a simple product-coloring on the color sets needed for~$\trans M$ and $\trans{Recover}_{k,d}$, indexed by $i\in[s]$, and interpret the final edge predicate as a disjunction over $i\in[s]$ of the edge predicate of the $i$-th recovered copy.
\end{proof}

\Cref{thm:sparsification} now follows by combining \Cref{lem:sbe-sparsify}, \Cref{thm:sbe-and-transclosed}, and \Cref{lem:sparse-notions}.
We will show how this works for the case of bounded clique-width, but the other cases work analogously.

\begin{corollary}
    For every semi-ladder-free class $\CC$ of reflexive graphs of bounded clique-width, there are \expos-transductions $\trans{Sparsify}$ and $\trans{Recover}$ such that every graph $G\in\CC$ contains a reflexive subgraph $G^*$ such that 
    \begin{center}
        $G^* \in \trans{Sparsify}(G)$ and $G \in \trans{Recover}(G^*)$,
    \end{center}
    and the class $\DD := \{G^* : G\in \CC\}$ has bounded tree-width.
\end{corollary}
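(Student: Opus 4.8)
The plan is to run the general pipeline of \Cref{lem:sbe-sparsify} and then upgrade the resulting ``bounded expansion'' conclusion to ``bounded tree-width'' by exploiting that bounded clique-width is itself a transduction-closed property. First I would observe that since $\CC$ is semi-ladder-free, it is simultaneously half-graph-free and co-matching-free (as recalled in the introduction via~\cite{fabianski2019}). Invoking \Cref{thm:sbe-and-transclosed} in the clique-width case, bounded clique-width is a transduction-closed class property, and every half-graph-free class of bounded clique-width has structurally bounded expansion. Hence $\CC$ is a co-matching-free class of reflexive graphs of structurally bounded expansion, which is precisely the hypothesis required by \Cref{lem:sbe-sparsify}.

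Applying \Cref{lem:sbe-sparsify} then yields \expos-transductions $\trans{Sparsify}$ and $\trans{Recover}$ together with, for each $G \in \CC$, a reflexive subgraph $G^*$ satisfying $G^* \in \trans{Sparsify}(G)$ and $G \in \trans{Recover}(G^*)$, such that the class $\DD := \{G^* : G \in \CC\}$ has bounded expansion. This already delivers the two transductions and the subgraph property asserted in the corollary; the only remaining task is to improve the structural conclusion from ``bounded expansion'' to ``bounded tree-width''.

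For this last step I would argue as follows. By construction $\DD \subseteq \trans{Sparsify}(\CC)$, so $\DD$ is a transduction of $\CC$. Since $\CC$ has bounded clique-width and this property is transduction-closed (\Cref{thm:sbe-and-transclosed}), the class $\DD$ also has bounded clique-width. On the other hand, having bounded expansion, $\DD$ is in particular biclique-free. A biclique-free class of bounded clique-width has bounded tree-width by \Cref{lem:sparse-notions}, item~3. Therefore $\DD$ has bounded tree-width, which completes the proof. The remaining cases of \Cref{thm:sparsification} are handled identically, replacing clique-width/tree-width by the corresponding pair from \Cref{thm:sbe-and-transclosed} and \Cref{lem:sparse-notions}.

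I expect no substantial obstacle here, as all of the genuine work is already encapsulated in the cited lemmas. The single point that requires care is the direction of the final deduction: bounded expansion by itself does \emph{not} bound tree-width (planar graphs form a counterexample), so it is essential to re-derive the clique-width bound on $\DD$ through transduction-closedness and to combine it with biclique-freeness via \Cref{lem:sparse-notions}, rather than attempting to read tree-width directly off the bounded-expansion output of \Cref{lem:sbe-sparsify}.
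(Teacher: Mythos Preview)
Your proposal is correct and follows essentially the same approach as the paper's proof: verify the hypotheses of \Cref{lem:sbe-sparsify} via \Cref{thm:sbe-and-transclosed}, apply it to obtain $\DD$ with bounded expansion, then use transduction-closure of clique-width to get bounded clique-width on $\DD$, and combine with biclique-freeness via \Cref{lem:sparse-notions} to conclude bounded tree-width. The paper's argument is a bit terser but identical in substance.
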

\begin{proof}
    As $\CC$ is semi-ladder-free and of bounded clique-width, by \Cref{thm:sbe-and-transclosed} it has structurally bounded expansion. By definition of semi-ladder-freeness, $\CC$ is also co-matching-free.
    This means we can apply \Cref{lem:sbe-sparsify} to obtain the transductions $\trans{Sparsify}$ and $\trans{Recover}$, the subgraphs $G^*$, and the class $\DD := \{G^* : G \in \CC\}$.
    The class $\DD$ has bounded clique-width (by transduction-closure of clique-width, \Cref{thm:sbe-and-transclosed}).
    As $\DD$ has bounded expansion, it is biclique-free.
    Now by \Cref{lem:sparse-notions}, $\DD$ has bounded tree-width.
\end{proof}
\section{Positive MSO}\label{sec:positive-mso}

We now prove the collapse of positive MSO to positive FO. 
We use the following standard monotonicity property of positive formulas, see e.g.~\cite[Lem.\ 10.7]{libkin2004elements}.

\begin{lemma}\label{lem:monotonicity}
    Let $\psi(\bar x,Y,\bar X)$ be a formula that is \emph{positive in $Y$}, i.e., every occurrence of $Y$ appears under an even number of negations.
    Then for every graph $G$, every assignment $\bar a, \bar B$ of $\bar x,\bar X$, and all sets $Y\subseteq Y'\subseteq V(G)$,
    \[
    G\models \psi(\bar a,Y,\bar B)\ \Longrightarrow\ G\models \psi(\bar a,Y',\bar B).
    \]
\end{lemma}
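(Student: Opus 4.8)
The plan is to proceed by structural induction on $\psi$. The difficulty is that the negation connective turns a formula that is positive in $Y$ into one that is negative in $Y$, so the stated implication is not by itself preserved across the inductive steps. I would therefore strengthen the statement to a \emph{simultaneous} (dual) induction, proving at once: (i) if every occurrence of $Y$ in $\psi$ lies under an even number of negations, then $G\models\psi(\bar a,Y,\bar B)$ implies $G\models\psi(\bar a,Y',\bar B)$ whenever $Y\subseteq Y'$; and (ii) if every occurrence of $Y$ lies under an odd number of negations, then the reverse implication $G\models\psi(\bar a,Y',\bar B)\Rightarrow G\models\psi(\bar a,Y,\bar B)$ holds. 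A formula in which $Y$ does not occur satisfies both hypotheses vacuously, and then both implications hold trivially because its truth value is independent of $Y$.

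For the base case, the only atomic formula whose truth depends on $Y$ is a membership atom $z\in Y$; in the MSO signature of graphs the remaining atoms $x=y$ and $E(x,y)$ never mention $Y$. Such a membership atom is positive in $Y$, and monotonicity is immediate from $Y\subseteq Y'$. For the inductive step, conjunction and disjunction preserve the polarity of $Y$ in each conjunct/disjunct, so the implications follow by applying the induction hypothesis componentwise. A negation $\neg\vartheta$ is positive in $Y$ exactly when $\vartheta$ is negative in $Y$; applying hypothesis (ii) to $\vartheta$ and contraposing yields $G\models\neg\vartheta(\dots,Y)\Rightarrow G\models\neg\vartheta(\dots,Y')$, which is (i) for $\neg\vartheta$, and the symmetric argument handles the negative case. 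Finally, the first-order quantifiers $\exists z,\forall z$ and the second-order quantifiers $\exists Z,\forall Z$ all leave the polarity of the free variable $Y$ unchanged: for an existential one transfers a single witness through the induction hypothesis, and for a universal one transfers the property uniformly over all elements (respectively all subsets), so (i) and (ii) are both preserved.

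The only genuine subtlety is the bookkeeping around negation, which is precisely what forces the dual induction; everything else is mechanical, and I expect no real obstacle. As an alternative I would mention a cleaner route: first rewrite $\psi$ into negation normal form by pushing negations to the atoms via De Morgan and the dualities $\neg\exists=\forall\neg$, $\neg\forall=\exists\neg$ (and likewise for set quantifiers). In this form the hypothesis ``$\psi$ is positive in $Y$'' becomes the purely syntactic condition that no negated membership atom $\neg(z\in Y)$ occurs, and then a single (non-dual) induction establishing (i) suffices, since negation no longer appears at the connective level. Either way the lemma is routine.
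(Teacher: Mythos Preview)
Your argument is correct and is precisely the standard structural-induction proof of this monotonicity lemma; the dual induction handling both polarities of $Y$ is exactly what is needed to get past negation, and the NNF alternative you sketch is equally valid. Note, however, that the paper does not actually prove this lemma: it is stated as a standard fact with a reference to Libkin's textbook, so there is no ``paper's own proof'' to compare against---your write-up simply supplies the omitted textbook argument.
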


\begin{theorem}
Every (existential) positive MSO formula $\phi(\bar x,\bar X)$ is equivalent to an
(existential) positive FO formula.
\end{theorem}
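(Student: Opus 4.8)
The plan is to treat the free set variables $\bar X$ as unary predicate symbols (colors) in the signature, in the spirit of the colored graphs used throughout the paper, so that each atom $z \in X_i$ becomes an ordinary positive atomic predicate $X_i(z)$ and only the \emph{bound} set quantifiers remain to be eliminated. I will then induct on the number of second-order quantifiers occurring in $\phi$, peeling off one set quantifier at a time using the monotonicity of positive formulas (\Cref{lem:monotonicity}).

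For the inductive step I consider a subformula $\exists Y\,\psi$ or $\forall Y\,\psi$, where $\psi$ is positive and hence positive in $Y$. Since $\psi$ is monotone increasing in $Y$ by \Cref{lem:monotonicity}, the largest admissible set $Y = V(G)$ is optimal for the existential quantifier and the smallest set $Y = \emptyset$ is optimal for the universal one. This yields the semantic equivalences
\[
\exists Y\,\psi \;\equiv\; \psi[\,z\in Y \mapsto \top\,], \qquad \forall Y\,\psi \;\equiv\; \psi[\,z\in Y \mapsto \bot\,],
\]
where the right-hand sides denote the purely syntactic substitution of every membership atom $z \in Y$ by $\top$, respectively $\bot$. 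The key point is that $\top$ is expressible positively as $z = z$, so the existential substitution stays inside positive FO, and since no negation is ever introduced, positivity is preserved throughout. Iterating this step eliminates all set quantifiers and produces a positive FO formula over the expanded signature that is equivalent to $\phi$, with free first-order variables $\bar x$.

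For the existential positive fragment the argument is even cleaner: an existential positive MSO formula contains neither first-order nor second-order universal quantifiers, so only the case $\exists Y$ ever arises, only the substitution $z\in Y \mapsto (z=z)$ is used, and no universal quantifier is introduced. Hence the resulting FO formula is again existential positive, as required.

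The one point requiring care is the treatment of $\bot$ in the universal case: substituting $Y = \emptyset$ forces the membership atoms to $\bot$, and after the usual simplifications ($\bot \wedge \alpha \equiv \bot$, $\bot \vee \alpha \equiv \alpha$, $\exists z\,\bot \equiv \bot$, $\forall z\,\bot \equiv \bot$) the constant $\bot$ may survive at the top level---for instance $\forall Y\,(z\in Y)$ is unsatisfiable. As $\bot$ uses no negation it is admissible as a positive atom, and this is precisely the (unavoidable) case in which the input positive MSO formula is itself equivalent to $\bot$. This phenomenon never occurs in the existential positive fragment, whose formulas remain satisfiable and for which no $\bot$ is ever produced. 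This is the main subtlety to handle, but it is a matter of bookkeeping rather than a genuine obstacle, which is consistent with the remark that the theorem is not hard to prove.
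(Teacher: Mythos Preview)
Your proposal is correct and follows essentially the same approach as the paper: eliminate set quantifiers one at a time by invoking monotonicity (\Cref{lem:monotonicity}) to replace $\exists Y\,\psi$ by $\psi[Y\mapsto\top]$ and $\forall Y\,\psi$ by $\psi[Y\mapsto\bot]$, noting that the existential fragment only ever needs the former substitution. Your additional remarks on expressing $\top$ as $z=z$ and on the fate of $\bot$ are sound bookkeeping that the paper leaves implicit.
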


\begin{proof}
We eliminate the set quantifiers one by one. 
First assume $\phi=\exists Y\,\psi(\bar x,Y,\bar X)$ where $\psi$ is positive in~$Y$ (this holds since the whole formula is positive). 
Define $\psi^{\top}(\bar x, \bar X)$ as the formula obtained from $\psi(\bar x,Y, \bar X)$ by replacing every atom $Y(t)$ by $\top$ (equivalently, interpreting $Y$ as the full vertex set). 
We claim that for every $G$, every tuple $\bar a$ and tuple $\bar B$ we have
\[
G \models \exists Y\,\psi(\bar a,Y, \bar B)
\quad\Longleftrightarrow\quad
G \models \psi^{\top}(\bar a, \bar B).
\]
To see this, observe that if $G\models \exists Y\,\psi(\bar a,Y, \bar B)$, then by \cref{lem:monotonicity} also $G\models \psi^{\top}(\bar a, \bar B)$. 
Conversely, if $G\models \psi^{\top}(\bar a,\bar B)$, then choosing $Y=V(G)$ witnesses the
existential quantifier, hence $G\models \exists Y\,\psi(\bar a,Y,\bar B)$.

\medskip
Now assume $\phi=\forall Y\,\psi(\bar x,Y, \bar X)$ with $\psi$ positive in~$Y$. 
Let $\psi^{\bot}(\bar x, \bar X)$ be the formula obtained from $\psi(\bar x,Y, \bar X)$ by replacing every atom $Y(t)$ by $\bot$ (equivalently, interpreting $Y$ as the empty set). Then for every $G,\bar a$, and $\bar B$,
\[
G \models \forall Y\,\psi(\bar a,Y, \bar B)
\quad\Longleftrightarrow\quad
G \models \psi^{\bot}(\bar a, \bar B).
\]
For the forward direction, take $Y=\emptyset$. 
For the reverse direction, if $G\models \psi^{\bot}(\bar a, \bar B)$ then by monotonicity (positivity in~$Y$) we have $G\models \psi(\bar a,Y,\bar B)$ for every $Y\supseteq \emptyset$, i.e., for every set~$Y$, hence $G\models \forall Y\,\psi(\bar a,Y)$.

After eliminating all set quantifiers we obtain a positive FO formula equivalent
to the original positive MSO formula. 
The construction preserves the existential restriction (if present), hence the same argument yields the stated existential version.
\end{proof}

\bibliographystyle{plain}
\bibliography{ref}

@book{Hodges,
  author    = {Wilfrid Hodges},
  title     = {A Shorter Model Theory},
  publisher = {Cambridge University Press},
  year      = {1997}
}

@article{shrubdepth-journal,
  author     = {Robert Ganian and Petr Hlin\v{e}n\'{y} and Jaroslav Nesetril and Jan Obdrz{\'{a}}lek and Patrice Ossona de Mendez},
  bibsource  = {dblp computer science bibliography, https://dblp.org},
  biburl     = {https://dblp.org/rec/journals/lmcs/GanianHNOM19.bib},
  doi        = {10.23638/LMCS-15(1:7)2019},
  journal    = {Log. Methods Comput. Sci.},
  number     = {1},
  timestamp  = {Thu, 18 Jun 2020 22:15:48 +0200},
  title      = {Shrub-depth: Capturing Height of Dense Graphs},
  volume     = {15},
  year       = {2019},
  bdsk-url-1 = {https://doi.org/10.23638/LMCS-15(1:7)2019}
}

@book{nevsetvril2012sparsity,
  author    = {Ne{\v{s}}et{\v{r}}il, Jaroslav and de Mendez, Patrice Ossona},
  publisher = {Springer Science \& Business Media},
  title     = {Sparsity: graphs, structures, and algorithms},
  volume    = {28},
  year      = {2012}
}

@article{dvovrak2018induced,
  author    = {Dvo{\v{r}}{\'a}k, Zden{\v{e}}k},
  journal   = {European Journal of Combinatorics},
  pages     = {143--148},
  publisher = {Elsevier},
  title     = {Induced subdivisions and bounded expansion},
  volume    = {69},
  year      = {2018}
}

@inproceedings{nevsetvril2021rankwidth,
  author       = {Ne{\v{s}}et{\v{r}}il, Jaroslav and Mendez, Patrice Ossona de and Pilipczuk, Micha{\l} and Rabinovich, Roman and Siebertz, Sebastian},
  booktitle    = {Proceedings of the 2021 ACM-SIAM Symposium on Discrete Algorithms (SODA)},
  organization = {SIAM},
  pages        = {2014--2033},
  title        = {Rankwidth meets stability},
  year         = {2021}
}

@inproceedings{pilipczuk2018number,
  title={On the number of types in sparse graphs},
  author={Pilipczuk, Micha{\l} and Siebertz, Sebastian and Toru{\'n}czyk, Szymon},
  booktitle={Proceedings of the 33rd Annual ACM/IEEE Symposium on Logic in Computer Science},
  pages={799--808},
  year={2018}
}

@article{gajarsky2020first,
  author    = {Gajarsk{\'y}, Jakub and Kreutzer, Stephan and Ne{\v{s}}et{\v{r}}il, Jaroslav and Mendez, Patrice Ossona De and Pilipczuk, Micha{\l} and Siebertz, Sebastian and Toru{\'n}czyk, Szymon},
  journal   = {ACM Transactions on Computational Logic (TOCL)},
  number    = {4},
  pages     = {1--41},
  publisher = {ACM New York, NY, USA},
  title     = {First-order interpretations of bounded expansion classes},
  volume    = {21},
  year      = {2020}
}

@inproceedings{nevsetvril2020linear,
  author       = {Ne{\v{s}}et{\v{r}}il, Jaroslav and Rabinovich, Roman and de Mendez, Patrice Ossona and Siebertz, Sebastian},
  booktitle    = {Proceedings of the Fourteenth Annual ACM-SIAM Symposium on Discrete Algorithms},
  organization = {SIAM},
  pages        = {1180--1199},
  title        = {Linear rankwidth meets stability},
  year         = {2020}
}

@article{adler2014interpreting,
  author    = {Adler, Hans and Adler, Isolde},
  journal   = {European Journal of Combinatorics},
  pages     = {322--330},
  publisher = {Elsevier},
  title     = {Interpreting nowhere dense graph classes as a classical notion of model theory},
  volume    = {36},
  year      = {2014}
}

@INPROCEEDINGS {dreier2024stablemc,
author = { Dreier, Jan and Eleftheriadis, Ioannis and M{\"a}hlmann, Nikolas and McCarty, Rose and Pilipczuk, Micha{\l} and Toru{\'n}czyk, Szymon },
booktitle = { 2024 IEEE 65th Annual Symposium on Foundations of Computer Science (FOCS) },
title = {{ First-Order Model Checking on Monadically Stable Graph Classes }},
year = {2024},
volume = {},
ISSN = {},
pages = {21-30},
doi = {10.1109/FOCS61266.2024.00012},
url = {https://doi.ieeecomputersociety.org/10.1109/FOCS61266.2024.00012},
publisher = {IEEE Computer Society},
address = {Los Alamitos, CA, USA},
month =Oct}

@article{baldwin1985second,
  author    = {Baldwin, John T. and Shelah, Saharon},
  journal   = {Notre Dame Journal of Formal Logic},
  number    = {3},
  pages     = {229--303},
  publisher = {Duke University Press},
  title     = {Second-order quantifiers and the complexity of theories.},
  volume    = {26},
  year      = {1985}
}

@book{Diestel,
  author    = {Reinhard Diestel},
  publisher = {Springer},
  series    = {Graduate texts in mathematics},
  title     = {Graph Theory, 4th Edition},
  volume    = {173},
  year      = {2012}
}

@article{stable_graphs,
  author     = {Podewski, Klaus-Peter and Ziegler, Martin},
  journal    = {Fundamenta Mathematicae},
  number     = {2},
  pages      = {101-107},
  title      = {Stable graphs},
  url        = {http://eudml.org/doc/210953},
  volume     = {100},
  year       = {1978},
  bdsk-url-1 = {http://eudml.org/doc/210953}
}

@inproceedings{fabianski2019,
  address    = {Dagstuhl, Germany},
  annote     = {Keywords: Dominating Set, Independent Set, nowhere denseness, stability, fixed-parameter tractability},
  author     = {Grzegorz Fabianski and Micha{\l} Pilipczuk and Sebastian Siebertz and Szymon Toru\'{n}czyk},
  booktitle  = {36th International Symposium on Theoretical Aspects of Computer Science (STACS 2019)},
  doi        = {10.4230/LIPIcs.STACS.2019.27},
  editor     = {Rolf Niedermeier and Christophe Paul},
  isbn       = {978-3-95977-100-9},
  issn       = {1868-8969},
  pages      = {27:1--27:16},
  publisher  = {Schloss Dagstuhl--Leibniz-Zentrum fuer Informatik},
  series     = {Leibniz International Proceedings in Informatics (LIPIcs)},
  title      = {{Progressive Algorithms for Domination and Independence}},
  url        = {http://drops.dagstuhl.de/opus/volltexte/2019/10266},
  volume     = {126},
  year       = {2019},
  bdsk-url-1 = {http://drops.dagstuhl.de/opus/volltexte/2019/10266},
  bdsk-url-2 = {https://doi.org/10.4230/LIPIcs.STACS.2019.27}
}

@inproceedings{flipper-game,
  address    = {Dagstuhl, Germany},
  annote     = {Keywords: Stability theory, structural graph theory, games},
  author     = {Gajarsk\'{y}, Jakub and M\"{a}hlmann, Nikolas and McCarty, Rose and Ohlmann, Pierre and Pilipczuk, Micha{\l} and Przybyszewski, Wojciech and Siebertz, Sebastian and Soko{\l}owski, Marek and Toru\'{n}czyk, Szymon},
  booktitle  = {50th International Colloquium on Automata, Languages, and Programming (ICALP 2023)},
  doi        = {10.4230/LIPIcs.ICALP.2023.128},
  editor     = {Etessami, Kousha and Feige, Uriel and Puppis, Gabriele},
  isbn       = {978-3-95977-278-5},
  issn       = {1868-8969},
  pages      = {128:1--128:16},
  publisher  = {Schloss Dagstuhl -- Leibniz-Zentrum f{\"u}r Informatik},
  series     = {Leibniz International Proceedings in Informatics (LIPIcs)},
  title      = {{Flipper Games for Monadically Stable Graph Classes}},
  url        = {https://drops.dagstuhl.de/entities/document/10.4230/LIPIcs.ICALP.2023.128},
  urn        = {urn:nbn:de:0030-drops-181804},
  volume     = {261},
  year       = {2023},
  bdsk-url-1 = {https://drops.dagstuhl.de/entities/document/10.4230/LIPIcs.ICALP.2023.128},
  bdsk-url-2 = {https://doi.org/10.4230/LIPIcs.ICALP.2023.128}
}

@inproceedings{dreier2023ssmc,
  address    = {New York, NY, USA},
  author     = {Dreier, Jan and M\"{a}hlmann, Nikolas and Siebertz, Sebastian},
  booktitle  = {Proceedings of the 55th Annual ACM Symposium on Theory of Computing},
  doi        = {10.1145/3564246.3585186},
  isbn       = {9781450399135},
  keywords   = {First-order model checking, structural graph theory},
  location   = {Orlando, FL, USA},
  numpages   = {14},
  pages      = {567-580},
  publisher  = {Association for Computing Machinery},
  series     = {STOC 2023},
  title      = {First-Order Model Checking on Structurally Sparse Graph Classes},
  url        = {https://doi.org/10.1145/3564246.3585186},
  year       = {2023},
  bdsk-url-1 = {https://doi.org/10.1145/3564246.3585186}
}

@inproceedings{dreier2022indiscernibles,
  author     = {Jan Dreier and Nikolas M{\"{a}}hlmann and Sebastian Siebertz and Szymon Toru{\'n}czyk},
  bibsource  = {dblp computer science bibliography, https://dblp.org},
  biburl     = {https://dblp.org/rec/conf/icalp/DreierMST23.bib},
  booktitle  = {50th International Colloquium on Automata, Languages, and Programming, {ICALP} 2023, July 10-14, 2023, Paderborn, Germany},
  doi        = {10.4230/LIPIcs.ICALP.2023.125},
  editor     = {Kousha Etessami and Uriel Feige and Gabriele Puppis},
  pages      = {125:1--125:18},
  publisher  = {Schloss Dagstuhl - Leibniz-Zentrum f{\"{u}}r Informatik},
  series     = {LIPIcs},
  timestamp  = {Wed, 05 Jul 2023 16:52:15 +0200},
  title      = {Indiscernibles and Flatness in Monadically Stable and Monadically {NIP} Classes},
  url        = {https://doi.org/10.4230/LIPIcs.ICALP.2023.125},
  volume     = {261},
  year       = {2023},
  bdsk-url-1 = {https://doi.org/10.4230/LIPIcs.ICALP.2023.125}
}

@article{twwI,
    author = {Bonnet, \'{E}douard and Kim, Eun Jung and Thomass\'{e}, St\'{e}phan and Watrigant, R\'{e}mi},
    title = {Twin-width I: Tractable FO Model Checking},
    year = {2021},
    issue_date = {February 2022},
    publisher = {Association for Computing Machinery},
    address = {New York, NY, USA},
    volume = {69},
    number = {1},
    issn = {0004-5411},
    url = {https://doi.org/10.1145/3486655},
    doi = {10.1145/3486655},
    abstract = {Inspired by a width invariant defined on permutations by Guillemot and Marx [SODA’14], we introduce the notion of twin-width on graphs and on matrices. Proper minor-closed classes, bounded rank-width graphs, map graphs, Kt-free unit d-dimensional ball graphs, posets with antichains of bounded size, and proper subclasses of dimension-2 posets all have bounded twin-width. On all these classes (except map graphs without geometric embedding) we show how to compute in polynomial time a sequence of d-contractions, witness that the twin-width is at most d. We show that FO model checking, that is deciding if a given first-order formula ϕ evaluates to true for a given binary structure G on a domain D, is FPT in |ϕ| on classes of bounded twin-width, provided the witness is given. More precisely, being given a d-contraction sequence for G, our algorithm runs in time f(d,|ϕ |) · |D| where f is a computable but non-elementary function. We also prove that bounded twin-width is preserved under FO interpretations and transductions (allowing operations such as squaring or complementing a graph). This unifies and significantly extends the knowledge on fixed-parameter tractability of FO model checking on non-monotone classes, such as the FPT algorithm on bounded-width posets by Gajarsk\'{y} et al. [FOCS’15].},
    journal = {J. ACM},
    month = {nov},
    articleno = {3},
    numpages = {46},
    keywords = {Contraction sequence, FO model checking, fixed-parameter tractability}
    }

@article{twwII,
  author     = {Bonnet, {\'E}douard and Geniet, Colin and Kim, Eun Jung and Thomass{\'e}, St{\'e}phan and Watrigant, R{\'e}mi},
  doi        = {10.5070/c62257876},
  journal    = {Combinatorial Theory},
  number     = {2},
  title      = {{Twin-width II: Small classes}},
  volume     = {2},
  year       = {2022},
  bdsk-url-1 = {https://doi.org/10.5070/c62257876}
}

@inproceedings{shrubdepth,
  abstract  = {Recent characterization [9] of those graphs for which coloured MSO2 model checking is fast raised the interest in the graph invariant called tree-depth. Looking for a similar characterization for (coloured) MSO1, we introduce the notion of shrub-depth of a graph class. To prove that MSO1 model checking is fast for classes of bounded shrub-depth, we show that shrub-depth exactly characterizes the graph classes having interpretation in coloured trees of bounded height. We also introduce a common extension of cographs and of graphs with bounded shrub-depth --- m-partite cographs (still of bounded clique-width), which are well quasi-ordered by the relation ``is an induced subgraph of'' and therefore allow polynomial time testing of hereditary properties.},
  address   = {Berlin, Heidelberg},
  author    = {Ganian, Robert and Hlin{\v{e}}n{\'y}, Petr and Ne{\v{s}}et{\v{r}}il, Jaroslav and Obdr{\v{z}}{\'a}lek, Jan and Ossona de Mendez, Patrice and Ramadurai, Reshma},
  booktitle = {Mathematical Foundations of Computer Science 2012},
  editor    = {Rovan, Branislav and Sassone, Vladimiro and Widmayer, Peter},
  isbn      = {978-3-642-32589-2},
  pages     = {419--430},
  publisher = {Springer Berlin Heidelberg},
  title     = {When Trees Grow Low: Shrubs and Fast {MSO1}},
  year      = {2012}
}

@inproceedings{POM21,
  author     = {Patrice {Ossona de Mendez}},
  bibsource  = {dblp computer science bibliography, https://dblp.org},
  biburl     = {https://dblp.org/rec/conf/stacs/Mendez21.bib},
  booktitle  = {38th International Symposium on Theoretical Aspects of Computer Science, {STACS} 2021},
  doi        = {10.4230/LIPIcs.STACS.2021.2},
  pages      = {2:1--2:7},
  publisher  = {Schloss Dagstuhl --- Leibniz-Zentrum f{\"{u}}r Informatik},
  series     = {LIPIcs},
  timestamp  = {Thu, 11 Mar 2021 17:44:44 +0100},
  title      = {First-Order Transductions of Graphs (Invited Talk)},
  url        = {https://doi.org/10.4230/LIPIcs.STACS.2021.2},
  volume     = {187},
  year       = {2021},
  bdsk-url-1 = {https://doi.org/10.4230/LIPIcs.STACS.2021.2}
}

@book{libkin2004elements,
  title     = {Elements of finite model theory},
  author    = {Libkin, Leonid},
  volume    = {41},
  year      = {2004},
  publisher = {Springer}
}

@inproceedings{dreier2024flipbreakability,
author = {Dreier, Jan and M\"{a}hlmann, Nikolas and Toru\'{n}czyk, Szymon},
title = {Flip-Breakability: A Combinatorial Dichotomy for Monadically Dependent Graph Classes},
year = {2024},
isbn = {9798400703836},
publisher = {Association for Computing Machinery},
address = {New York, NY, USA},
url = {https://doi.org/10.1145/3618260.3649739},
doi = {10.1145/3618260.3649739},
abstract = {A conjecture in algorithmic model theory predicts  that the model-checking problem for first-order logic is fixed-parameter tractable on a hereditary graph class if and only if the class is monadically dependent.  Originating in model theory, this notion is defined in terms of logic, and encompasses  nowhere dense classes, monadically stable classes, and classes of bounded twin-width.  Working towards this conjecture, we provide the first two combinatorial characterizations  of monadically dependent graph classes. This yields the following dichotomy.   On the structure side, we characterize  monadic dependence by a Ramsey-theoretic property called flip-breakability.  This notion generalizes the notions of uniform quasi-wideness, flip-flatness, and bounded grid rank, which characterize nowhere denseness, monadic stability, and bounded twin-width,  respectively, and played a key role in their respective model checking algorithms.  Natural restrictions of flip-breakability additionally characterize bounded treewidth and cliquewidth and bounded treedepth and shrubdepth.   On the non-structure side, we characterize  monadic dependence by explicitly listing few families of forbidden induced subgraphs.  This result is analogous to the characterization of nowhere denseness via forbidden subdivided cliques,  and allows us to resolve one half of the motivating conjecture: First-order model checking is AW[*]-hard on every hereditary graph class that is monadically independent.  The result moreover implies that hereditary graph classes which are small, have almost bounded twin-width, or have almost bounded flip-width, are monadically dependent.   Lastly, we lift our result to also obtain a combinatorial dichotomy in the more general setting of monadically dependent classes of binary structures.},
booktitle = {Proceedings of the 56th Annual ACM Symposium on Theory of Computing},
pages = {1550–1560},
numpages = {11},
keywords = {Monadically dependent, algorithmic model theory, first-order model checking, monadically NIP, structural graph theory},
location = {Vancouver, BC, Canada},
series = {STOC 2024}
}

@inproceedings{gajarsky2022stable,
  author    = {Gajarsk{\'y}, Jakub and Pilipczuk, Micha{\l} and Toru{\'n}czyk, Szymon},
  booktitle = {Proceedings of the 37th Annual ACM/IEEE Symposium on Logic in Computer Science},
  pages     = {1--12},
  title     = {Stable graphs of bounded twin-width},
  year      = {2022}
}

@article{nesetril2020structuralc,
  title={Structural properties of the first-order transduction quasiorder},
  author={Nesetril, Jaroslav and de Mendez, Patrice Ossona and Siebertz, Sebastian},
  journal={arXiv preprint arXiv:2010.02607},
  year={2020}
}

@inproceedings{gurski2000tree,
  title={The tree-width of clique-width bounded graphs without K n, n},
  author={Gurski, Frank and Wanke, Egon},
  booktitle={International Workshop on Graph-Theoretic Concepts in Computer Science},
  pages={196--205},
  year={2000},
  organization={Springer}
}

@inproceedings{dreier2022combinatorial,
  author    = {Dreier, Jan and M\"{a}hlmann, Nikolas and Mouawad, Amer E. and Siebertz, Sebastian and Vigny, Alexandre},
  title     = {{Combinatorial and Algorithmic Aspects of Monadic Stability}},
  booktitle = {33rd International Symposium on Algorithms and Computation (ISAAC 2022)},
  pages     = {11:1--11:17},
  series    = {Leibniz International Proceedings in Informatics (LIPIcs)},
  isbn      = {978-3-95977-258-7},
  issn      = {1868-8969},
  year      = {2022},
  volume    = {248},
  editor    = {Bae, Sang Won and Park, Heejin},
  publisher = {Schloss Dagstuhl -- Leibniz-Zentrum f{\"u}r Informatik},
  address   = {Dagstuhl, Germany},
  url       = {https://drops.dagstuhl.de/entities/document/10.4230/LIPIcs.ISAAC.2022.11},
  urn       = {urn:nbn:de:0030-drops-172961},
  doi       = {10.4230/LIPIcs.ISAAC.2022.11},
  annote    = {Keywords: Monadic Stability, Structural Graph Theory, Ramsey Numbers, Regularity, Kernels}
}

@article{bounded_expansion,
title = {Grad and classes with bounded expansion I. Decompositions},
journal = {European Journal of Combinatorics},
volume = {29},
number = {3},
pages = {760-776},
year = {2008},
issn = {0195-6698},
doi = {https://doi.org/10.1016/j.ejc.2006.07.013},
url = {https://www.sciencedirect.com/science/article/pii/S019566980700056X},
author = {Jaroslav Nešetřil and Patrice {Ossona de Mendez}}
}

@phdthesis{maehlmann-thesis,
  author = {Nikolas M\"{a}hlmann},
  school = {University of Bremen},
  title  = {Monadically Stable and Monadically Dependent Graph Classes: Characterizations and Algorithmic Meta-Theorems},
  year   = {2024}
}

@InProceedings{flip-separability,
  author =	{Bonnet, \'{E}douard and Braunfeld, Samuel and Eleftheriadis, Ioannis and Geniet, Colin and M\"{a}hlmann, Nikolas and Pilipczuk, Micha{\l} and Przybyszewski, Wojciech and Toru\'{n}czyk, Szymon},
  title =	{{Separability Properties of Monadically Dependent Graph Classes}},
  booktitle =	{52nd International Colloquium on Automata, Languages, and Programming (ICALP 2025)},
  pages =	{147:1--147:19},
  series =	{Leibniz International Proceedings in Informatics (LIPIcs)},
  ISBN =	{978-3-95977-372-0},
  ISSN =	{1868-8969},
  year =	{2025},
  volume =	{334},
  editor =	{Censor-Hillel, Keren and Grandoni, Fabrizio and Ouaknine, Jo\"{e}l and Puppis, Gabriele},
  publisher =	{Schloss Dagstuhl -- Leibniz-Zentrum f{\"u}r Informatik},
  address =	{Dagstuhl, Germany},
  URL =		{https://drops.dagstuhl.de/entities/document/10.4230/LIPIcs.ICALP.2025.147},
  URN =		{urn:nbn:de:0030-drops-235246},
  doi =		{10.4230/LIPIcs.ICALP.2025.147},
  annote =	{Keywords: Structural graph theory, Monadic dependence}
}

@misc{pilipczuk2025graphclasseslenslogic,
      title={Graph classes through the lens of logic}, 
      author={Michał Pilipczuk},
      year={2025},
      eprint={2501.04166},
      archivePrefix={arXiv},
      primaryClass={math.CO},
      url={https://arxiv.org/abs/2501.04166}, 
}

@article{courcelle2000upper,
title = {Upper bounds to the clique width of graphs},
journal = {Discrete Applied Mathematics},
volume = {101},
number = {1},
pages = {77-114},
year = {2000},
issn = {0166-218X},
doi = {https://doi.org/10.1016/S0166-218X(99)00184-5},
url = {https://www.sciencedirect.com/science/article/pii/S0166218X99001845},
author = {Bruno Courcelle and Stephan Olariu},
keywords = {Hierarchical graph decompositions, Modular decomposition, Tree decompositions, Algorithms, Monadic second-order logic},
abstract = {Hierarchical decompositions of graphs are interesting for algorithmic purposes. Many NP complete problems have linear complexity on graphs with tree-decompositions of bounded width. We investigate alternate hierarchical decompositions that apply to wider classes of graphs and still enjoy good algorithmic properties. These decompositions are motivated and inspired by the study of vertex-replacement context-free graph grammars. The complexity measure of graphs associated with these decompositions is called clique width. In this paper we bound the clique width of a graph in terms of its tree width on the one hand, and of the clique width of its edge complement on the other.}
}

@inproceedings{merge-width,
author = {Dreier, Jan and Toru\'{n}czyk, Szymon},
title = {Merge-Width and First-Order Model Checking},
year = {2025},
isbn = {9798400715105},
publisher = {Association for Computing Machinery},
address = {New York, NY, USA},
url = {https://doi.org/10.1145/3717823.3718259},
doi = {10.1145/3717823.3718259},
booktitle = {Proceedings of the 57th Annual ACM Symposium on Theory of Computing},
pages = {1944–1955},
numpages = {12},
keywords = {bounded expansion, first-order model checking, flip-width, merge-width, monadic dependence, nowhere dense, twin-width},
location = {Prague, Czechia},
series = {STOC '25}
}

@article{dreier2026efficient,
  author = {Dreier, Jan and Gajarsk\'{y}, Jakub and Pilipczuk, Micha{\l}},
  title = {Efficient reversal of transductions of sparse graph classes},
  year = {2026},
  eprint = {2601.14906},
  archiveprefix = {arXiv},
  journal = {arXiv preprint arXiv:2601.14906}
}

\end{document}